\documentclass{iopart}

\usepackage{graphicx,float}
\usepackage{hyperref}
\usepackage{amsthm,amssymb}
\expandafter\let\csname equation*\endcsname\undefined
\expandafter\let\csname endequation*\endcsname\undefined
\usepackage{amsmath}
\usepackage{physics}
\usepackage{enumitem}
\usepackage{mathrsfs}
\usepackage{dsfont}
\usepackage[most]{tcolorbox}
\usepackage[sort&compress,numbers]{natbib}

\definecolor{shadecolor}{gray}{0.9}

\setcounter{tocdepth}{1}

\newcommand{\wt}{\widetilde}
\newcommand{\ol}{\overline}

\newcommand{\ve}{\varepsilon}
\newcommand{\mrm}{\mathrm}
\newcommand{\mbb}{\mathbb}
\newcommand{\mbf}{\mathbf}
\newcommand{\Bve}{\mathscr{B}^{\ve}}
\renewcommand{\op}[2]{|#1\rangle\langle #2|}
\newcommand{\supp}{\mrm{supp}}
\newcommand{\ext}{\mrm{ext}}

\renewcommand{\vec}{\mathrm{vec}}
\newcommand{\cv}{\text{cv}}

\newcommand{\cB}{\mathcal{B}}
\newcommand{\cC}{\mathcal{C}}
\newcommand{\cD}{\mathcal{D}}
\newcommand{\cE}{\mathcal{E}}
\newcommand{\cF}{\mathcal{F}}
\newcommand{\cG}{\mathcal{G}}
\newcommand{\cH}{\mathcal{H}}
\newcommand{\cI}{\mathcal{I}}

\newcommand{\cK}{\mathcal{K}}

\newcommand{\cN}{\mathcal{N}}

\newcommand{\cP}{\mathcal{P}}

\newcommand{\cS}{\mathcal{S}}

\newcommand{\cU}{\mathcal{U}}
\newcommand{\cV}{\mathcal{V}}
\newcommand{\cW}{\mathcal{W}}
\newcommand{\cX}{\mathcal{X}}
\newcommand{\cY}{\mathcal{Y}}
\newcommand{\cZ}{\mathcal{Z}}

\newcommand{\Lin}{\mathrm{L}}
\newcommand{\Trans}{\mathrm{T}}
\newcommand{\Pos}{\mathrm{Pos}}
\newcommand{\Herm}{\mathrm{Herm}}
\newcommand{\Channel}{\mathrm{C}}
\newcommand{\CPU}{\mathrm{CPU}}
\newcommand{\CPTPU}{\mathrm{CPTPU}}
\newcommand{\CU}{\mathrm{CU}}

\newcommand{\Density}{\mathrm{D}}
\newcommand{\CP}{\mathrm{CP}}
\newcommand{\Sep}{\mathrm{Sep}}

\newcommand{\PPT}{\mathrm{PPT}}

\newcommand{\id}{\textrm{id}}

\newcommand{\relint}{\mrm{Relint}}
\newcommand{\Ent}{\mrm{Ent}}

\newcommand{\EntS}{\mbb{E}\mrm{nt}}
\newcommand{\CQ}{\mrm{CQ}}

\newcommand{\I}{I}

\renewcommand{\ip}[2]{\langle #1 , #2\rangle}

\newtheorem{theorem}{Theorem}
\newtheorem{lemma}[theorem]{Lemma}

\newtheorem{definition}{Definition}

\newtheorem{corollary}{Corollary}

\newtheorem{proposition}{Proposition}
\theoremstyle{remark}
\newtheorem{remark}{Remark}
\newtheorem*{summary}{Summary of Main Results}

\newenvironment{miniproblem}[1]{
  \begin{minipage}[t]{#1\textwidth}
    \begin{center}}{
    \end{center}
\end{minipage}}

\begin{document}
\title{Cone-Restricted Information Theory}

\author{Ian George and
Eric Chitambar}

\address{Department of Electrical and Computer Engineering,
University of Illinois at Urbana-Champaign}
\ead{igeorge3@illinois.edu}

\begin{abstract}
The max-relative entropy and the conditional min-entropy it induces have become central to one-shot and zero-error quantum information theory. Both may be expressed in terms of a conic program over the positive semidefinite cone. Recently, it was shown that the same conic program over the separable cone admits an operational interpretation in terms of communicating classical information over a quantum channel. In this work, we generalize this framework to determine which results in quantum information theory rely upon the positive semidefinite cone and which can be generalized. We show that the standard asymptotic equipartition properties break down if the cone never approximates the positive semidefinite cone. This shows that such altered measures characterize different tasks than the traditional quantities even asymptotically. We present parallel results for the extended conditional min-entropy, which requires extending the notion of k-superpositive channels to superchannels. We also show that the near-equivalence of the smoothed max and Hartley entropies breaks down in this setting. However, we show for classical-quantum states, the separable cone is sufficient to re-cover the asymptotic theory, thereby drawing a strong distinction between the fully and partial quantum settings. We also present operational uses of this framework. We first show the cone restricted min-entropy of a Choi operator captures a measure of entanglement-assisted noiseless classical communication using restricted measurements. We show that quantum majorization results naturally generalize to other cones. As a novel example, we introduce a new min-entropy-like quantity that captures the quantum majorization of quantum channels in terms of bistochastic pre-processing. Lastly, we relate this framework to general conic norms and their non-additivity. Throughout this work we emphasize the introduced measures’ relationship to general convex resource theories. In particular, we look at both resource theories that capture locality and resource theories of coherence/Abelian symmetries.
\end{abstract}

\maketitle 

\tableofcontents

\section{Introduction}
\label{sec:introduction}
Quantum information theory is perhaps best set apart from its classical subtheory in the variety of resources it supplies for information processing tasks\cite{Chitambar-2019a}, such as entanglement and coherence.  A number of new information measures also emerge in the quantum setting due to both the non-commutative aspect of quantum mechanics as well as the plethora of ways that quantum resources can be transformed in the one-shot setting \cite{Tomamichel-2015,Entropy-Zoo}.  
As such, the quantum information theorist is always interested in a better understanding of the interplay between the use of a resource for an information processing task, the ability to convert between resources, the measures that capture resourcefulness, and the operational interpretations of said measures. 

At an even more abstract level, the goal of the information theorist is to characterize operational tasks in terms of (entropic) measures. What characterizations can be established depends on the theory, e.g.\ the existence of a single-letter characterization of classical capacity when restricting to classical channels is not, at least currently, known to extend for quantum information theory because it includes quantum channels and entangled states \cite{Holevo-1998a,Schumacher-1997a}. It is therefore interesting to find manners of determining where characterizations may break down or change as the physical resources considered in information theory are restricted. A refined delineation between quantum and classical information theory, which is one way to interpret many of the results of this work, may be viewed as one motivation for the altered measures we consider.

The interplay between resources, operational tasks, and entropic measures may be seen by considering \textit{quantum majorization}.  Broadly speaking, quantum majorization describes when zero-error convertibility of one quantum object to another is feasible using a given family of allowed transformations.  When the two objects belong to some relevant set, this feasibility in conversion induces a partial order on the set.  Quantum majorization is analogous to the well-known notion of vector majorization, which is said to hold when one real vector can be converted to another by some stochastic channel.  More generally, one can consider a majorization relation between classical channels defined by the convertibility of one to another under post-processing conversion, as formulated by Blackwell \cite{Blackwell-1953a,Blackwell-1953b}.
A quantum notion of "Blackwellian" majorization for quantum channels was obtained in \cite{Buscemi-2016a}.  Similarly, a notion of quantum majorization on bipartite states via local processing, $\sigma_{AB} \to \sigma_{AC}$ using $\Phi_{B \to C}$, was obtained in \cite{Gour-2018a}, and a notion of converting bipartite channels was obtained in \cite{Gour-2019a} and subsequently improved in \cite{Gour-2020a,Ji-2021a}. Beyond the obvious operational interest of these quantum versions, it is worth remarking that in each case it is shown that one element of the set majorizes the other if and only if some family of relations involving the conditional min-entropy holds.

The relation of majorization to the min-entropy is interesting for multiple reasons. First, the min-entropy has operational interpretations in terms of singlet fraction and guessing probability \cite{Konig-2009a}. As such, if a form of quantum majorization relates to conditions on the min-entropy, it also relates to conditions on the entanglement or distinguishability. Second, while the min-entropy is a one-shot quantity, when smoothed its regularization is known to converge to the conditional von Neumann entropy for identically and independently distributed (i.i.d.) inputs as the number of copies tends to infinity. This is known as the (fully quantum) asymptotic equipartition property (AEP) and is crucial for recovering asymptotic behaviour from one-shot behaviour. However, the smooth min-entropy AEP may be shown to be in a sense inherited from the smoothed max-relative divergence \cite{Tomamichel-2015}. At the same time, the max-relative divergence itself is of interest not only because it induces the min-entropy, but because the behaviour of its regularization in the limit for i.i.d.\ inputs, analogous to the smooth min-entropy AEP, quantifies the error exponent in quantum Stein's lemma to first-order.\footnote{We stress this work is not directly focused on Stein's lemma and is independent of any recent issues related to the generalized Stein's lemma often considered in resource theories \cite{Brandao-2010a,generalized-stein-gap}.} These relations along with a few others of interest are summarized in Figure \ref{fig:Entropies-and-their Uses}.

\begin{figure}
    \centering
    \includegraphics[width=0.7\linewidth]{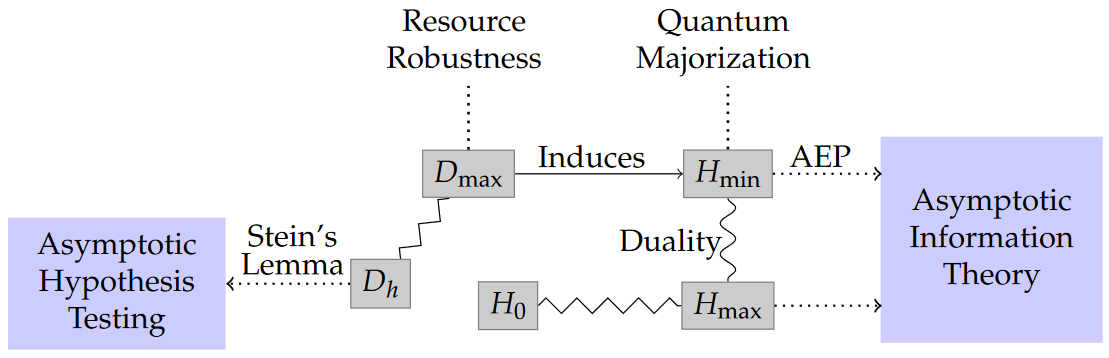}
    \caption{Diagram of common uses of the relative max-entropy, its induced entropy, and related measures. Dotted arrows denote information-theoretic applications achieved via the limit of the regularized smoothed measure. Dotted lines represent uses in resource-theoretic settings. Zig-zag line represented a near equivalence under some choice of smoothing. A wavy line represents duality of measures. The standard arrow shows what measures induce others.\label{fig:Entropies-and-their Uses}}
\end{figure}

As such, entropic measures, resource conversion, and information-theoretic tasks are strongly related at a formal level. These relations have only grown more complicated recently as a variation on the min-entropy was found to admit an operational interpretation in terms of classical communication over a quantum channel \cite{Chitambar-2021a}. This variation is obtained by taking the optimization program corresponding to the min-entropy and restricting it's feasible region to the set of separable operators. For this reason, it is called the \textit{separable min-entropy}. It follows that the separable min-entropy is always an upper bound on the standard min-entropy. Furthermore, it was shown that, unlike the min-entropy, the separable min-entropy is a non-additive measure in general. These properties call into question the behaviour of the regularized separable min-entropy for i.i.d.\ states in the limit of many copies in \textit{both} the smoothed and unsmoothed cases. As the separable min-entropy already admits an operational interpretation, it is not obvious that other cones contained in the positive cone could not induce min-entropies with operational interpretations as well.
Moreover, arguably, it is not clear the positive cone and standard min-entropy should be uniquely central in information theory given the existence of an operational interpretation of the separable min-entropy.

It is this train of thought which this work aims to resolve. First, we introduce the notion of the max-relative entropy defined over closed convex cones which we call \textit{restricted max-relative entropies}. We use these to define the \textit{restricted min-entropies}. We then use this framework to show if we consider sequences of cones that do not asymptotically become equivalent to the positive semidefinite cone, then the smoothed regularized quantities do not converge to the relative and conditional entropy respectively. These results show tasks characterized by the restricted measures do not even asymptotically become equivalent to the unrestricted case. In particular, this means these measures cannot characterize traditional fully quantum results even to first order. We refer to such results as `anti-standard AEP' results as, while they don't tell us what the regularized quantity converges to, they tell us it doesn't converge to the traditional measure. These anti-standard AEPs strongly rely on the continuously growing viewpoint of thinking of min-entropy-like quantities in terms of support functions \cite{Gour-2019a,Jencova-2021a}. Furthermore, as we will see, this approach is not only information-theoretically interesting, but deeply related to resource theories. This is largely because many resource theories can be formulated in terms of convex sets \cite{Chitambar-2016a}, although we note our results also apply to any resource theory whose underlying resource has an \textit{extensive} nature to it \cite{Vijayan-2020a}, as defined more precisely in Section \ref{sec:anti-stein-lemma}.

Moreover, by defining a restricted max-relative entropy in terms of the restricted min-entropy, we are able to show that in the previously stated regime, the restricted max-entropy's smoothed regularization won't converge to the conditional entropy even for classical states. However, we show we can construct a restricted Hartley entropy that does work properly on classical states. This exemplifies a strong gap between the restricted Hartley and restricted max-entropy that is known to not exist for the positive semidefinite cone \cite{Tomamichel-2012a}. All of these points taken together further cement the positive semidefinite cone as central to fully quantum information theory much in the same way that complete positivity being captured by the positive semidefinite cone under the Choi-Jamiolkowski isomorphism does.

Furthermore, by first considering symmetries, we show that if we only consider classical-quantum (CQ) states, then the classical-quantum cone and its ordering is sufficient for establishing known equivalences of entropic and operational definitions at least for point-to-point quantum information theory. Moreover, the separable cone is the minimal cone under the demand that the cone is invariant under local unitaries, and so in some sense captures its own privileged position. Specifically, the L\"{o}wner order is only necessary for establishing certain equivalences of entropic and operational definitions for \textit{fully quantum} information theory.

To exemplify the generality of this method, we extend the anti-standard AEP to the recently introduced extended min-entropy $H^{\ext}_{\min}(B|A)_{\Phi}$ for a bipartite channel $\Phi:A_0 B_0 \to A_1 B_1$ \cite{Gour-2019a}, which has also been recently extended to the general cone case \cite{Ji-2021a}. In doing so, we contribute to the entanglement theory for supermaps by expanding on \cite{Chen-2020a} and introducing the `flipped' extended min-entropy $H^{\ext}_{\min}(A|B)_{\Phi}$, again making use of the characterization of min-entropy-like quantities as support functions. 

Having resolved the general framework and the breakdown of asymptotic results, we turn our attention to operational interpretations of the cone-restricted theory. First, we show that every restricted min-entropy for a convex closed cone that is invariant under local unitaries is captured by an entanglement-assisted communication game with restricted measurements (See Figure \ref{fig:intro-gen-cv}). We refer to these as cone-restricted communication values as they fit into the framework of communication value \cite{Chitambar-2021a}. This resolves an open problem from \cite{Chitambar-2021a}.
\begin{figure}
    \centering
    \includegraphics[width=0.7\linewidth]{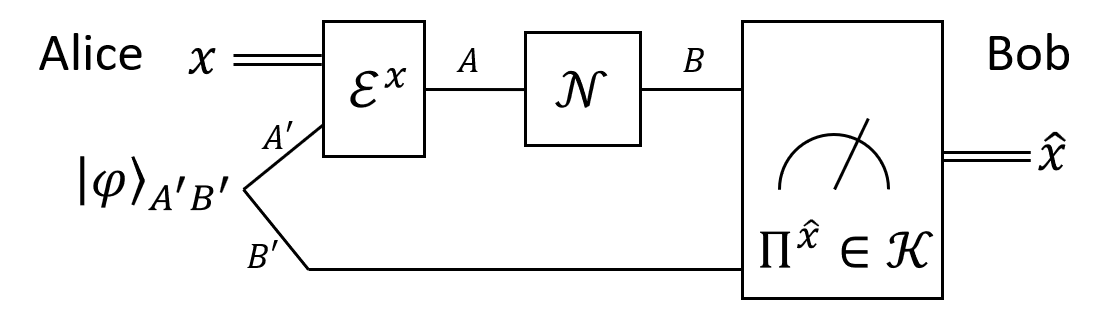}
    \caption{Depiction of the operational interpretation of cone-restricted min-entropy. Alice encodes half of an entangled state and sends it through the channel to Bob. Bob then measures the entire state, but is restricted to measurements in the cone. The unnormalized success probability that Bob gets Alice's message is given by the restricted min-entropy.}
    \label{fig:intro-gen-cv}
\end{figure}

Second, we show a notion of quantum majorization of channels in terms of (possibly restricted) sets of bistochastic pre-processing in terms of a novel restricted min-entropy evaluated on the Choi matrix. In doing so, we stress the connection between majorization results and the viewpoint that min-entropic quantities are related to support functions.

Finally, we discuss further refinements of the behaviour of the restricted max-relative entropy and its relation to general conic norms. Specifically, we show for any cone captured by (an Abelian) symmetry and operators restricted to the cone, one can recover all the standard properties of the max-relative entropy and min-entropy. We however show this does not---at least obviously---extend to the case of cones which capture a notion of entanglement. We also show in what cases one can reduce the determination of the restricted max-relative entropy and min-entropy to a cone norm---a property that always holds when the cone is the positive semidefinite cone \cite{Datta-2009a}. Regardless, we show that even when one can reduce the problem to determining the cone norm of an operator, things are not straightforward by considering the Werner states. As the total work is lengthy, we summarize the results below.

\begin{tcolorbox}[breakable,width=\columnwidth , sharp corners, frame hidden, boxrule = 0pt]
\begin{summary}
~\\[-4mm]
\begin{enumerate}[label=\arabic*.]
    \item The asymptotic equipartition properties (AEPs) for max-relative divergence and conditional min-entropy require the positive semidefinite cone. 
    \begin{itemize} 
        \item This arises from the restricted cone measures losing their proper order, i.e.\ there exist cones $\cK \subset \Pos$ such that $D_{\max}^{\cK} < D$. This also implies $\log$ is not an operator monotone under these generalized cone inequalities.
        \item The first-order characterization of fully quantum Stein's lemma does not correspond to the regularized cone measure unless the sequence of cones asymptotically contains the $n-$fold copy of the maximally coherent state for every basis of $A$. This in a sense establishes coherence as the resource that captures the possibility of a fully quantum Stein's lemma.
        \item The min-entropy AEP does not hold for a sequence of cones parameterized in terms of Schmidt rank unless the cone asymptotically contains all entangled states, i.e.\ the positive semidefinite cone. This establishes entanglement as the natural resource relevant for a fully quantum AEP, which is important for various one-shot task characterizations.
    \end{itemize}
    \item An AEP property for the extended min-entropy, which is a measure on bipartite channels, requires the positive semidefinite cone.
    \begin{itemize}
        \item This makes use of our introduction and characterization of the superchannel equivalent of $k-$superpositive maps.
    \end{itemize}
    \item The separable cone is sufficient for an AEP on CQ states for min-entropy, but the max-entropy defined to be its dual does not have an AEP. However, there exists a natural extension of the Hartley entropy that does satisfy an AEP. This establishes a large gap between the smoothed max-entropy and Hartley entropy that does not hold for the positive semidefinite cone \cite{Tomamichel-2011a}.
    \begin{itemize}
        \item That a restricted max-entropy AEP for CQ states fails may be argued to be because two registers being optimally decoupled can be achieved by only unrestricted local dynamics on the purified space.
        \item That a restricted Hartley entropy does satisfy an AEP follows from the fact there are restricted Hartley entropies that act like the Hartley entropy on CQ states, so the duality between Petz entropies and specific Sandwiched R\'{e}nyi entropies allows an AEP to be established.
    \end{itemize}
    \item The separable cone is the minimal locally-unitarily invariant cone that is sufficient for recovering standard information-theoretic results for CQ states.
    \item We establish the restricted min-entropy for all cones that are invariant under local completely-positive maps. Specifically, it characterizes a form of entanglement-assisted minimal error discrimination using restricted measurements (See. Fig \ref{fig:intro-gen-cv}). This resolves an open problem of \cite{Chitambar-2021a}.
    \item Necessary and sufficient conditions for conversion between channels via noisy pre-processing in terms of a new min-entropic-like quantity that holds for a large class of cones.
    \item The relationship between generalized conic operator norms and the generalized max-relative entropy is established. Strange properties are exemplified using the Werner states.
\end{enumerate}
\end{summary}
\end{tcolorbox}

\section{Notation \& Background}\label{sec:background}
In this section we introduce our notation and review the necessary background. Much of our notation follows that of \cite{Watrous-2018}. Finite alphabets are denoted $\Sigma,\Lambda,$ etc. The probability simplex over a finite alphabet $\Sigma$ is denoted $\cP(\Sigma)$. All results in this work consider finite-dimensional Hilbert spaces, $\mbb{C}^{|\Sigma|}$, for some finite alphabet $\Sigma$. At times it is more natural to talk in terms of the dimension, in which case we will talk of $d$-dimensional Hilbert spaces, $\mbb{C}^{d}$. In these cases we may iterate over the set $[d]:=\{1,...,d\}$. These finite-dimensional Hilbert spaces are labeled by a capital roman alphabet letter, e.g.\ $A \cong \mbb{C}^{d}$. A linear operator from $A$ to $B$ is denoted $\Lin(A,B)$. Given say $X,Y \in \Lin(A,B)$, we write $X \ll Y$ if the kernel of $X$ contains the kernel of $Y$.  The special case of endomorphisms have simplified notation, $\Lin(A) := \Lin(A,A)$. Of particular interest will be the set of Hermitian operators $\Herm(A) := \{X \in \Lin(A),\, X=X^{\ast}\}$, where ${}^{\ast}$ denotes the complex conjugate, the positive semidefinite operators $\Pos(A) := \{Y^{\ast}Y: Y \in \Lin(A)\}$, and the density matrices (quantum states) $\Density(A) := \{\rho \in \Pos(A) : \Tr[\rho] = 1 \}$. It is easy to verify that $\Pos(A)$ is the subset of $\Lin(A)$ with non-negative eigenvalues. If we relax to the positive operators with trace less than or equal to one, we denote it $D_{\leq}(A)$. To avoid confusion, at times we will introduce a subscript to denote what spaces an endomorphism are on, e.g.\ $X \in \Lin(A \otimes B)$, will be denoted $X_{AB}$.

Transformations from $\Lin(A)$ to $\Lin(B)$ are denoted $\Trans(A,B)$. To avoid confusion, at times we will introduce a subscript to denote what space the transformation acts on, e.g. $\Phi_{A \to B} \in \Trans(A,B)$, and the identity channel is denoted $\id_{A}$. We denote the set of quantum channels by $\Channel(A,B)$, where we remind the reader that $\Phi \in \Channel(A,B)$ if and only if $\Phi \in \Trans(A,B)$ is trace-preserving (TP), $\Tr[\Phi(X)] = \Tr[X]$ for all $X \in \Lin(A)$, and completely positive (CP), $(\id_{R} \otimes \Phi)(P) \in \Pos(R\otimes B)$ for all $R$ and all $P \in \Pos(R \otimes A)$. We denote the set of completely-positive (resp. trace-preserving) maps as $\mrm{CP}(A,B) \subset \mrm{Trans}(A,B)$ (resp. $\mrm{TP}(A,B) \subset \Trans(A,B)$).

We use the vec mapping $\vec: \Lin(A,B) \to A \otimes B$ which is defined by $\vec(E_{i,j}) = e_{j} \otimes e_{i}$ where $E_{i,j}$ and $e_{j}$ are standard basis operators and vectors respectively. For those more comofortable with bra-ket notation, $\vec(\op{i}{j}) = \ket{j} \otimes \ket{i}$ in the computational basis. This choice of the vec mapping is known to satisfy the identity
\begin{equation}\label{eqn:vec-map-identity}
    (X_{1}^{\Trans} \otimes X_{0})\vec(Y) = \vec(X_{0}YX_{1})
\end{equation}
where $X_{0} \in \Lin(A_{0},B_{0})$, $X_{1} \in \Lin(A_{1},B_{1})$, and $Y \in \Lin(B_{1},B_{0})$.
One reason for choosing this definition of the vec mapping is that it corresponds to the version of the Choi-Jamiolkowski Isomorphism which we will use throughout this work: Given a map $\Phi \in \Trans(A,B)$, the Choi operator of $\Phi$, is given by
\begin{align} 
J_{\Phi}^{AB} =& (\id^{A} \otimes \Phi)\left(\phi^{+}_{A\wt{A}}\right) 
= \sum_{i=1}^{|\Sigma|} \vec(A_{i})\vec(B_{i})^{\ast}  \label{eqn:choi-in-kraus-terms}\ ,
\end{align}
where $\{A_{i}\}_{i \in \Sigma},\{B_{i}\}_{i \in \Sigma} \subset \Lin(A,B)$ are the Kraus operators, i.e.\ $\Phi(X) = \sum_{i \in \Sigma} A_{i}XB_{i}^{\ast}$ (recall that in the special case that the map is completely positive, $B_{i} = A_{i}$ for all $i \in \Sigma$). Furthermore, $\phi^{+}_{A\wt{A}}$ is the unnormalized maximally entangled state operator, $\phi^{+} = \dyad{\phi^{+}}$ where $\ket{\phi^{+}}_{A\wt{A}} = \sum_{i\in[d_{A}]} \ket{ii}$. The normalized maximally entangled state will be denoted $\ket{\tau}_{A\wt{A}}$. Lastly, we recall the definition of adjoint map, $\Phi^{\ast}$ of a linear map $\Phi: A \to B$: 
$$\langle Y, \Phi(X) \rangle = \langle \Phi^{\ast}(Y) , X \rangle$$
for all $X \in \Lin(A),Y \in \Lin(B)$. One may verify that $J^{\Phi^{\ast}}_{BA} = (\mbb{F}J^{\Phi}_{AB}\mbb{F})^{\Trans}$, where $\mbb{F}$ is the swap operator between spaces $A$ and $B$ and $\Trans$ is the transpose.

\subsection{Convex Geometry for Information Theory \& Resource Theories}
Unsurprisingly, convex geometry is integral to this work, and we briefly review the topic here. While this summary is sufficient, we refer the reader to \cite{WatrousATQILecture01} for further details.

A set $S$ is \textit{convex} if for all $s_{1},s_{2} \in S$ and $\lambda \in [0,1]$, $\lambda s_{1} + (1-\lambda)s_{2}$. The \textit{extreme points} of a convex set $S$ are $s \in S$ such that $s = \lambda s_{1} +(1-\lambda)s_{2}$ if and only if for $\lambda \in (0,1)$, $s = s_{1} = s_{2}$. As an example, it is straightforward to verify that the sets $\Density(A)$ and $\Channel(A,B)$ are closed (in fact, compact) convex sets. Moreover, the extreme points of the density matrices are the pure states, $\op{\psi}{\psi}$ where $\ket{\psi} \in A$.

Given a finite-dimensional real inner product space $V$, a set $\cK \subset V$ is a \textit{cone} if for all $k \in \cK$ and $\lambda \geq 0$, $\lambda k \in \cK$. Given a cone $\cK \subset V$, its \textit{dual cone} is defined as $\cK^{\ast} := \{v: \langle v,k \rangle \,\, \forall k \in \cK\}$, where $\langle \cdot , \cdot \rangle$ is the inner product of $V$. It is straightforward to verify that given cones $\cK_{1} \subset \cK_{2} \subset V$, then $\cK_{2}^{\ast} \subset \cK_{1}^{\ast}$. Given a cone $\cK \subset V$, it induces a partial ordering on $V$ defined by $X \preceq_{\cK} Y \Leftrightarrow Y - X \in \cK$. The relative interior of a cone is given by 
\begin{equation}\label{eqn:relint-defn}
\begin{aligned}
\relint(\cK) := \{x \in \cK : \forall y \in \cK, \exists \varepsilon > 0 : (1+\ve)x-\ve y \in \cK \} \, . 
\end{aligned}
\end{equation}

A particularly nice property of most cones in quantum information theory is that they can be specified by the set of density matrices that satisfy the condition of the cone.
\begin{proposition}\label{prop:cone-to-density}(\cite[Prop. 6.4]{Watrous-2018}) Let $A$ be a finite dimensional Hilbert space. Let $\cK \subseteq \Pos(A)$. If $\cD := \cK \cap \Density(A)$ is non-empty, then $\cK = \mrm{cone}(\cD)$.
\end{proposition}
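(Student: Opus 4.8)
The plan is to establish the two inclusions $\mrm{cone}(\cD)\subseteq\cK$ and $\cK\subseteq\mrm{cone}(\cD)$ separately. The first is immediate from the fact that $\cK$ is a (convex) cone; the second rests on the single elementary observation that a nonzero positive semidefinite operator has strictly positive trace, which is exactly where the hypothesis $\cK\subseteq\Pos(A)$ enters.

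For $\mrm{cone}(\cD)\subseteq\cK$: by definition $\cD=\cK\cap\Density(A)\subseteq\cK$, and since $\cK$ is closed under nonnegative combinations of its elements, every element of the conic hull of $\cD$ lies in $\cK$. (In particular $0\in\cK$: as $\cD\neq\emptyset$ the cone $\cK$ is nonempty, so $0=0\cdot k\in\cK$ for any $k\in\cK$.)

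For the reverse inclusion I would take an arbitrary $X\in\cK$ and argue by cases. If $X=0$, then choosing any $d\in\cD$---which exists by hypothesis---we have $X=0\cdot d\in\mrm{cone}(\cD)$. If $X\neq 0$, then because $\cK\subseteq\Pos(A)$ the operator $X$ is positive semidefinite and nonzero, hence $\Tr[X]>0$. Put $\hat X:=X/\Tr[X]$. Then $\hat X$ is positive semidefinite with unit trace, i.e.\ $\hat X\in\Density(A)$, and $\hat X\in\cK$ since $\cK$ is a cone and $1/\Tr[X]\geq 0$; therefore $\hat X\in\cK\cap\Density(A)=\cD$, and $X=\Tr[X]\,\hat X\in\mrm{cone}(\cD)$. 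Combining the two inclusions yields $\cK=\mrm{cone}(\cD)$.

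There is no substantive obstacle in this argument; the only points needing attention are the treatment of $0$ (which uses $\cD\neq\emptyset$) and the legitimacy of normalizing by the trace. The latter is precisely the role of the assumption $\cK\subseteq\Pos(A)$: for a general cone one cannot rescale a nonzero trace-zero element into $\Density(A)$, and the statement would break down.
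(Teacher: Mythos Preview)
Your argument is correct and is exactly the standard one: normalize a nonzero $X\in\cK\subseteq\Pos(A)$ by its (positive) trace to land in $\cD$, and handle $0$ separately using $\cD\neq\emptyset$. The paper itself does not supply a proof of this proposition---it is simply quoted from \cite[Prop.~6.4]{Watrous-2018}---so there is no alternative approach to compare against; your write-up is the natural proof and would be the one found in the reference.
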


We define a linear conic program and its dual in the same manner as in \cite{WatrousATQILecture01}.
\begin{center}
    \begin{miniproblem}{0.45}
      \emph{Primal}\\[-5mm]
      \begin{equation}
      \begin{aligned}\label{eqn:conicPrimal}
        \text{max.}\quad & \ip{a}{k} \\
        \text{s.t.}\quad & \phi(k) = b \\
        & k \in \cK
      \end{aligned}
      \end{equation}
    \end{miniproblem}
    \begin{miniproblem}{0.45}
      \emph{Dual}\\[-5mm]
      \begin{equation}
      \begin{aligned}\label{eqn:conicDual}
        \text{min.}\quad & \ip{b}{y}  \\
        \text{s.t.}\quad & \phi^{\ast}(y) \succeq_{\cK^{\ast}} a \\
        & y \in W
      \end{aligned}
      \end{equation}
    \end{miniproblem}
 \end{center}
where $W$ is a real inner product space, $\phi: V \to W$ is a linear map, $a \in V$, $b \in W$, and $\phi^{\ast}$ is the adjoint map of $\phi$ and $\cK$ is assumed to be closed and convex. When the two programs achieve the same value this is referred to as strong duality.

As an example of these definitions, it is easy to verify $\Pos(A)$ is a convex cone because if a matrix has non-negative eigenvalues, it will maintain this property under scaling. It may be shown that $\relint(\Pos(A))$ is the space of positive definite operators.
Using that elements of $\Pos(A)$ have non-negative eigenvalues, one can verify that the dual cone is $\Pos(A)^{\ast} = \Pos(A)$, i.e.\ the positive semidefinite cone is self-dual. One can then consider the L\"{o}wner order which is the ordering induced by the positive cone. Given the nice properties just exemplified, we define $X \succeq Y \Leftrightarrow X \succeq_{\Pos} Y$ for $X,Y \in \Lin(A)$. This will allow us to avoid confusion with less common orderings throughout the work. Lastly we note that in this special case that the cone is $\Pos(A)$, conic programming reduces to semidefinite programming which is known to be efficient.

Finally, given a non-empty, closed convex set $\cC \subset \mbb{R}^{n}$, the support function $\supp_{\cC}:\mbb{R}^{n} \to \mbb{R}$ is defined by
\begin{align}\label{eqn:supp-func-defn}
\supp_{\cC}(x) := \sup \{ \langle x , c \rangle : c \in \cC \} \, .
\end{align}
Very roughly speaking, this can be seen as a measure of how well one can decompose $x$ into a single element of $\cC$.

\subsection{Generalized Entropies}
While the von Neumann entropy has been shown to be very successful for capturing the fundamental limits of asymptotically vanishing-error quantum information processing, there has been great progress in expanding beyond this regime by considering generalized entropies. Of particular interest for this work will be the max-relative entropy or max-divergence, which for $P,Q \in \Pos(A)$ is given by
\begin{align}
    D_{\max}(P||Q) & := \inf\{\alpha: P \leq 2^{\alpha} Q \} \ . \label{eqn:Dmax-defn}
\end{align}
This measure has various nice properties, a few of which we now note. The max-divergence is known to be the smallest generalization of the classical max-divergence that is both additive and satisfies contractivity under CPTP maps \cite[Prop. 4.3]{Tomamichel-2015}. It is known to be a limit of the Sandwiched R\'{e}nyi Divergences that have found great use recently: $D_{\max}(P||Q) = \lim_{\alpha \to \infty} \wt{D}_{\alpha}(P||Q)$ where
\begin{align}\label{eqn:SRD-defn}
    \wt{D}_{\alpha}(P||Q) = \frac{1}{\alpha -1} \log \frac{\|Q^{\frac{1-\alpha}{2\alpha}}PQ^{\frac{1-\alpha}{2\alpha}}\|_{\alpha}^{\alpha}}{\Tr(P)} \ .
\end{align}
$D_{\max}$ also induces the min-entropy $H_{\min}(A|B)$ by following the standard rule for inducing an entropy from a divergence \cite{Tomamichel-2015}:
\begin{align}\label{eqn:rule-for-inducing-entropies}
H_{\min}(A|B)_{\rho} := \sup_{\sigma_{B} \in \Density(B)} - D_{\max}(\rho_{AB}||I_{A} \otimes \sigma_{B}) \ , 
\end{align}
where $I_{A}$ is the identity on $A$.

Furthermore, $H_{\min}(A|B)$ is well-known to satisfy a particularly nice duality property: for $\ket{\psi} \in A \otimes B \otimes C$, 
\begin{align}\label{eqn:min-max-duality} H_{\min}(A|B)_{\rho_{AB}} = - H_{\max}(A|C)_{\rho_{AC}} \ ,
\end{align}
where $H_{\max}$ is the entropy induced by $\wt{D}_{\min}(P||Q) := \lim_{n \to \infty} \wt{D}_{\alpha}(P||Q)$. We pre-emptively note there is an alternative max-entropy known as the quantum Hartley entropy, denoted $H_{0}(A|B)$, but under smoothing it is known to be logarithmically (in the smoothing parameter) equivalent \cite{Tomamichel-2011a}.

\subsubsection{Recovering Asymptotic Results - Smoothed Entropies}\label{sec:recovering-asymptotic-results}
These aforementioned measures are useful for the non-asymptotic setting as already noted. One would hope that under some setting, one could recover the asymptotic results. To do this, smoothed entropies were introduced \cite{Renner2005}. In particular,
\begin{align}\label{eq:smoothed-min-ent}
    H^{\ve}_{\min}(A:B)_{\rho} := \max_{\wt{\rho} \in \Bve(\rho)} H_{\min}(A|B)_{\wt{\rho}} \ ,
\end{align}
where $\Bve(\rho) := \{\wt{\rho} : \Tr[\wt{\rho}]\leq 1, P(\rho,\wt{\rho}) \leq \ve \}$ and $P(\cdot,\cdot)$ is the purified distance. The purified distance is a metric on subnormalized states, is contractive for trace-non-increasing (TNI) CP maps, and satisfies the following inequality:
$$ \Delta(\rho,\sigma) \leq P(\rho,\sigma) \leq \sqrt{2\Delta(\rho,\sigma)} \ , $$
where $\rho,\sigma$ are sub-normalized states and $\Delta(\rho,\sigma) = \frac{1}{2}\{\|\rho-\sigma\|_{1} + |\Tr[\rho-\sigma]|\}$\cite{Tomamichel-2015}.

Using these definitions, it has been shown 
\begin{align}\label{eqn:FQAEP}
    \lim_{n\to \infty} \left[\frac{1}{n} H^{\ve}_{\min}(A^{n}|B^{n})_{\rho^{\otimes n}}\right] = H(A|B)_{\rho} \, ,
\end{align}
and the same result for $H_{\max}^{\ve}:= \min_{\wt{\rho} \in \Bve(\rho)} H_{\max}(A|B)$.
This property is known as the fully quantum asymptotic equipartition property (FQAEP, AEP for short or when clear). This is because it allows one to recover asymptotic results in terms of standard entropies from the non-asymptotic setting under the i.i.d. assumption. This in a sense extends the asymptotic equipartition property that gives rise to the typical set used to derive asymptotic results, which is why it is given this name.
The FQAEP was first shown in the partially quantum setting by Renner \cite{Renner2005} and then generalized to the fully quantum case in \cite{Tomamichel-2009a}. Since then, there have been various refinements in terms of second-order corrections and generalizing beyond i.i.d.\ structure. These AEPs have allowed for the unification of one-shot characterizations of many tasks \cite{Tomamichel-2012a,Tomamichel-2013a,Berta-2011a}, and so the AEP property for the min-entropy is of interest in and of itself.

\paragraph{AEP of One-Shot Divergences and Stein's Lemma}
Given the success of one-shot entropies, further smoothed divergence measures were introduced \cite{Datta-2009a,Wang-2012a}. In particular for our purposes, the smoothed max-relative entropy and one-shot hypothesis testing entropies:
\begin{align}
    D^{\ve}_{\max}(\rho||\sigma) := \min_{\wt{\rho} \in \Bve(\rho)} D_{\max}(\wt{\rho}||\sigma) \label{eqn:smoothed-max-rel-ent-defn}
\end{align}
\begin{equation}
\begin{aligned}\label{eqn:hyp-test-defn}
      \exp(-D_{h}^{\ve}(P||Q)) =
        \text{min.}\quad & \ip{P}{X} \\
        \text{s.t.}\quad & \ip{Q}{X} \geq 1 - \ve \\
        & 0 \leq X \leq I \ .
\end{aligned}
\end{equation}

The latter of these two is named the hypothesis testing divergence because when $P$ and $Q$ in \eqref{eqn:hyp-test-defn} are quantum states $\rho$ and $\sigma$, then it determines the minimal  error of the second kind, $\beta_{1}$, under the condition that the error of the first kind is bounded by $\ve$, $\alpha_{1} \leq \ve$, in hypothesis testing. That is, it captures asymmetric one-shot hypothesis testing exactly. Moreover, it may be used to recover the traditional (quantum) Stein's lemma, which we state for completeness.
\begin{proposition}[Stein's Lemma \cite{Chernoff-1952a,Hiai-1991a}]
For hypotheses $\rho^{\otimes n}$ and $\sigma^{\otimes n}$ respectively and tolerated error of the first kind $\ve$, denote the error of the second kind by $\beta^{\ve}_{n}$. It holds that
$$ -\lim_{n \to \infty}\left[ \frac{1}{n} \log(\beta_{n}^{\ve})\right] = D(\rho||\sigma) \ . $$
\end{proposition}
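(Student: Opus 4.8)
The plan is to recast the statement through the hypothesis testing divergence already introduced. By \eqref{eqn:hyp-test-defn} applied to the states $\rho^{\otimes n},\sigma^{\otimes n}$ we have $\beta^{\ve}_{n}=\exp(-D_{h}^{\ve}(\rho^{\otimes n}\|\sigma^{\otimes n}))$, so the claim is equivalent to the asymptotic equipartition property
\[
\lim_{n\to\infty}\frac{1}{n}D_{h}^{\ve}(\rho^{\otimes n}\|\sigma^{\otimes n})=D(\rho\|\sigma).
\]
I would prove this by sandwiching $D_{h}^{\ve}$ between the Sandwiched R\'{e}nyi divergences $\wt{D}_{\alpha}$ of \eqref{eqn:SRD-defn} for $\alpha$ on either side of $1$, and then invoking two facts that survive the i.i.d.\ limit: additivity of $\wt{D}_{\alpha}$ on tensor powers, $\wt{D}_{\alpha}(\rho^{\otimes n}\|\sigma^{\otimes n})=n\,\wt{D}_{\alpha}(\rho\|\sigma)$, and the limit $\lim_{\alpha\to 1}\wt{D}_{\alpha}(\rho\|\sigma)=D(\rho\|\sigma)$. (The degenerate case $D(\rho\|\sigma)=+\infty$ is handled separately and easily, so assume $D(\rho\|\sigma)$ is finite.)

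For the lower bound --- the achievability or ``direct'' part --- I would establish, for each fixed $\alpha\in(0,1)$, a one-shot inequality of the form
\[
D_{h}^{\ve}(\rho\|\sigma)\ \geq\ \wt{D}_{\alpha}(\rho\|\sigma)+\frac{\alpha}{\alpha-1}\log\frac{1}{1-\ve},
\]
obtained by feeding any feasible $X$ of the program \eqref{eqn:hyp-test-defn} into a H\"{o}lder-type inequality relating $\ip{\rho}{X}$, $\ip{\sigma}{X}$, and $\|\sigma^{\frac{1-\alpha}{2\alpha}}\rho\,\sigma^{\frac{1-\alpha}{2\alpha}}\|_{\alpha}$. (One could instead use the closely related Petz R\'{e}nyi divergences here; they share the additivity and the $\alpha\to 1$ limit.) Evaluating at $\rho^{\otimes n},\sigma^{\otimes n}$, dividing by $n$ and using additivity yields $\liminf_{n}\frac1n D_{h}^{\ve}(\rho^{\otimes n}\|\sigma^{\otimes n})\geq\wt{D}_{\alpha}(\rho\|\sigma)$ for every $\alpha<1$; sending $\alpha\uparrow 1$ gives $\geq D(\rho\|\sigma)$.

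For the upper bound I would first record the \emph{weak converse}: by monotonicity of the relative entropy under the binary measurement channel associated with an optimal test, one obtains $nD(\rho\|\sigma)\geq(1-\ve)\log\tfrac{1}{\beta_{n}}-O(1)$, hence $\limsup_{n}\frac1n D_{h}^{\ve}(\rho^{\otimes n}\|\sigma^{\otimes n})\leq D(\rho\|\sigma)/(1-\ve)$. Since this is not tight for fixed $\ve$, I would upgrade it to the \emph{strong converse} by proving the complementary one-shot inequality for $\alpha>1$,
\[
D_{h}^{\ve}(\rho\|\sigma)\ \leq\ \wt{D}_{\alpha}(\rho\|\sigma)+\frac{\alpha}{\alpha-1}\log\frac{1}{1-\ve},
\]
again via a non-commutative H\"{o}lder / Araki--Lieb--Thirring estimate showing that any $X$ with $\ip{\sigma}{X}\geq 1-\ve$ has $\ip{\rho}{X}$ bounded below in terms of $\wt{D}_{\alpha}$. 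Dividing by $n$, using additivity and sending $\alpha\downarrow 1$ gives $\limsup_{n}\frac1n D_{h}^{\ve}(\rho^{\otimes n}\|\sigma^{\otimes n})\leq D(\rho\|\sigma)$, which closes the gap.

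The main obstacle is the achievability inequality, i.e.\ the lower bound on $D_{h}^{\ve}$ for $\alpha<1$. In the commuting (classical) case it reduces to the weak law of large numbers for the random variable $\log\frac{d\rho}{d\sigma}$ together with a Cram\'{e}r-type tail bound, but when $\rho$ and $\sigma$ fail to commute there is no single such random variable, so one must run the H\"{o}lder estimate directly or first apply Hayashi's pinching inequality to reduce to the commuting case at the cost of a $\mathrm{poly}(n)$ factor that is washed out by $\tfrac1n\log(\cdot)$. The strong-converse inequality for $\alpha>1$ is the analogous non-commutative heart of the upper bound; everything else --- the reduction to $D_{h}^{\ve}$ via \eqref{eqn:hyp-test-defn}, additivity and the $\alpha\to 1$ limit of $\wt{D}_{\alpha}$, and the weak converse --- is routine given the tools recalled above.
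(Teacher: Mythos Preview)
The paper does not prove this proposition; it is stated as a classical result with citations to \cite{Chernoff-1952a,Hiai-1991a}, and the only argument the paper supplies is the remark immediately afterward that Stein's lemma is equivalent to the AEP for $D_{h}^{\ve}$ (which it also cites rather than proves). Your first paragraph makes exactly that reduction, so at the level of what the paper actually contains, your proposal agrees with it.

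Everything after your first paragraph goes well beyond the paper: you sketch an actual proof of the $D_{h}^{\ve}$ AEP via sandwiching by R\'{e}nyi divergences, additivity on tensor powers, and $\alpha\to 1$. This is a correct and standard route (Ogawa--Nagaoka for the strong converse side, Hiai--Petz/Hayashi for the direct side). One caution: the specific one-shot lower bound you display for $\alpha\in(0,1)$ with the \emph{sandwiched} divergence is not the form in which the direct part is usually established; the clean inequality of that shape is for the Petz divergence $\bar{D}_{\alpha}$ (which you correctly flag as an alternative), or else one proceeds via the pinching reduction you mention. Since $\bar{D}_{\alpha}\geq\wt{D}_{\alpha}$ for $\alpha\leq 1$, the substitution does not harm the $\liminf$ conclusion, but as written the displayed inequality with $\wt{D}_{\alpha}$ would need its own justification.
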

As noted in \cite{Wang-2012a}, we can recover Stein's lemma directly from the operational interpretation of $\exp(-D^{\ve}_{h}(\rho||\sigma))$ along with its AEP:
\begin{align*} \lim_{n \to \infty} \left[ \frac{1}{n} D^{\ve}_{h}(\rho^{\otimes n}||\sigma^{\otimes n}) \right] = D(\rho||\sigma) \quad \forall \ve \in (0,1) \ .
\end{align*}
In fact, a second-order of Stein's lemma follows from a second-order AEP for $D^{\ve}_{h}$ as determined in \cite{Tomamichel-2013a}.

While the smoothed max-divergence does not directly characterize hypothesis testing like the hypothesis testing divergence, it still can characterize Stein's lemma \textit{to first-order}. A simple way of seeing this is that the smooth max divergence's AEP has the same behaviour:
\begin{align}\label{eqn:Dmax-AEP} \lim_{n \to \infty} \left[ \frac{1}{n} D^{\ve}_{\max}(\rho^{\otimes n}||\sigma^{\otimes n}) \right] = D(\rho||\sigma) \quad \forall \ve \in (0,1) \ . 
\end{align}
That $D^{\ve}_{\max}$ does not characterize hypothesis testing to second-order follows from $D^{\ve}_{h} \neq D_{\max}^{\ve}$. 

An alternative way of seeing why $D^{\ve}_{\max}$ characterizes hypothesis testing to first-order follows from being able to bound the hypothesis testing divergence by the smooth max divergence. Specifically, \cite{Tomamichel-2013a} shows that for $0 < \delta < \ve < 1$ and any quantum states $\rho,\sigma$, 
\begin{align*}
 D^{\sqrt{1-\ve+\delta}}_{\max}(\rho^{\otimes n}||\sigma^{\otimes n}) + O(\log(\mrm{poly}(n)))\leq & D^{\ve-\delta}_{h}(\rho^{\otimes n}||\sigma^{\otimes n}) \\
& \hspace{0.25cm}  \leq D^{\sqrt{1-\ve}}_{\max}(\rho^{\otimes n}||\sigma^{\otimes n}) +O(\log(\mrm{poly}(n))) \ .
\end{align*}

We can in fact abstract this above point. Any smoothed-divergence-like quantity $\cD^{\ve}$ whose regularized limit converges to the relative entropy must to first-order characterize hypothesis testing. This is first because it's regularized quantity converges to the same result as the Stein's lemma. Second it is because it implies there are functions that are asymptotically sublinear in the copies of the state $n$ for bounding the distance from $D^{\ve}_{h}(\rho^{\otimes n}||\sigma^{\otimes n})$. We highlight this point as it will give us an information-theoretic operational lens for considering our cone-restricted max-divergence. 
\begin{proposition}\label{prop:stein-first-order-AEP}
Any smoothed divergence-like quantity $\cD^{\ve}(\cdot||\cdot)$ that for all quantum states $\rho,\sigma$ and $\ve \in (0,1)$ satisfies
$$\lim_{n \to \infty} \left[ \frac{1}{n} \cD^{\ve}(\rho^{\otimes n}||\sigma^{\otimes n}) \right] = D(\rho||\sigma)  $$
characterizes hypothesis testing to first-order, i.e. characterizes Stein's Lemma to first order. 
\end{proposition}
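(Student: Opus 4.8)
The plan is to reduce the claim to two facts already recorded above: the operational identity, following from the interpretation of \eqref{eqn:hyp-test-defn}, that $D_{h}^{\ve}(\rho^{\otimes n}||\sigma^{\otimes n}) = -\log \beta_{n}^{\ve}$, and the first-order AEP for the hypothesis testing divergence, $\lim_{n\to\infty}\tfrac{1}{n}D_{h}^{\ve}(\rho^{\otimes n}||\sigma^{\otimes n}) = D(\rho||\sigma)$ for every $\ve \in (0,1)$ (equivalently, Stein's lemma applied to $\beta_{n}^{\ve}$, as in \cite{Wang-2012a}). Since by hypothesis the regularization of $\cD^{\ve}$ converges to the same limit, both $\tfrac{1}{n}\cD^{\ve}(\rho^{\otimes n}||\sigma^{\otimes n})$ and $\tfrac{1}{n}D_{h}^{\ve}(\rho^{\otimes n}||\sigma^{\otimes n})$ tend to $D(\rho||\sigma)$, so subtracting gives
\[
\lim_{n\to\infty}\frac{1}{n}\big[\cD^{\ve}(\rho^{\otimes n}||\sigma^{\otimes n}) - D_{h}^{\ve}(\rho^{\otimes n}||\sigma^{\otimes n})\big] = 0 \ .
\]

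First I would make this subtraction step precise: setting $g_{\ve}(n) := \cD^{\ve}(\rho^{\otimes n}||\sigma^{\otimes n}) - D_{h}^{\ve}(\rho^{\otimes n}||\sigma^{\otimes n})$, the display says exactly $g_{\ve}(n) = o(n)$, i.e.\ $\cD^{\ve}(\rho^{\otimes n}||\sigma^{\otimes n}) = D_{h}^{\ve}(\rho^{\otimes n}||\sigma^{\otimes n}) + o(n)$. Since $D_{h}^{\ve}$ captures one-shot asymmetric hypothesis testing exactly, this means the value of $\cD^{\ve}$ on i.i.d.\ inputs agrees with (the negative logarithm of) the optimal type-II error up to a correction sublinear in the number of copies, which is precisely the sense in which $\cD^{\ve}$ ``characterizes hypothesis testing, and hence Stein's lemma, to first order.'' If a quantitative phrasing is preferred, I would restate the conclusion in the sandwich form of \cite{Tomamichel-2013a}: there exist functions $f_{1},f_{2}$ with $f_{i}(n)/n \to 0$ such that $D_{h}^{\ve}(\rho^{\otimes n}||\sigma^{\otimes n}) - f_{1}(n) \leq \cD^{\ve}(\rho^{\otimes n}||\sigma^{\otimes n}) \leq D_{h}^{\ve}(\rho^{\otimes n}||\sigma^{\otimes n}) + f_{2}(n)$, the point being existence of such sublinear $f_{i}$ rather than control of their rate.

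There is essentially no computation here, so the main obstacle is not analytic but definitional: pinning down what ``characterizes $\ldots$ to first order'' should mean, since the proposition is stated as a heuristic principle rather than a fixed inequality. The resolution is to adopt the $o(n)$ statement above as the definition, which is the weakest reading consistent with the explicit second-order sandwiching bounds quoted just before the proposition. Once that is fixed, the argument is the two-line subtraction, and it is worth remarking in the proof that it uses no property of $\cD^{\ve}$ beyond the assumed AEP and no structure of the states beyond the i.i.d.\ form — which is exactly why it applies verbatim to the cone-restricted max-divergence whenever the latter has a standard AEP.
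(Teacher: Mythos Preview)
Your proposal is correct and matches the paper's own reasoning, which is given informally in the paragraph preceding the proposition rather than as a formal proof: both converge to $D(\rho\|\sigma)$, so their difference is $o(n)$, yielding sublinear bounding functions between $\cD^{\ve}$ and $D_h^{\ve}$. Your write-up is in fact more careful than the paper's, since you make explicit that ``characterizes to first order'' is being taken to mean the $o(n)$ sandwich, which the paper leaves implicit.
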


In summary, the limits given in Eqns. \eqref{eqn:FQAEP} and \eqref{eqn:Dmax-AEP} are clearly of great import to quantum information theory. One major result of this work is to establish under what settings these limits cease to hold as we modify the measure and how this relates to characterizing information-theoretic tasks and resource theories.

\subsection{Bounded Schmidt Number and Coherence Resource Theories}
The final piece we will need are the specific resources that we will consider throughout this work. In both cases we are predominantly interested in the cone relevant to the resource rather than the set of free states and so we primarily discuss the algebraic structure of the free operators in the resource theory. We refer the reader to \cite{Chitambar-2019a} for further background in resource theories.

\subsubsection{Resource Theory of Bounded Schmidt Number}
Recall that given a pure state $\ket{\psi} \in A \otimes B$, the Schmidt rank of $\ket{\psi}$, denoted $\mrm{SR}(\ket{\psi})$ is the size of the smallest finite alphabet $\Sigma$ such that there exist sets of pure state $\{\ket{\phi}_{i}\}_{i \in \Sigma} \subset A$, $\{\ket{\gamma}_{i}\}_{i\in \Sigma} \subset B$ and probability distribution $p \in \cP(\Sigma)$, such that $\ket{\psi} = \sum_{i=1}^{|\Sigma|} \sqrt{p(i)} \ket{\phi_{i}} \otimes \ket{\gamma_{i}}$. The Schmidt number is the generalization to mixed states. That is, given $\rho \in \Density(A \otimes B)$, the Schmidt number of $\rho$, denoted $\mrm{SN}(\rho)$ is the smallest $k$ such that there exists a finite alphabet $\Sigma$ such that $\rho = \sum_{i=1}^{|\Sigma|} p(i) \op{\psi_{i}}{\psi_{i}}$ where $p \in \cP(\Sigma)$ and $\mrm{SR}(\psi_{i}) \leq k$ for all $i \in \Sigma$. As we will need cones, it would be useful to have the relevant cone which we now define. Following \cite{Watrous-2018}, we refer to this cone as the `Entanglement Rank $r$ Cone'. 
\begin{definition}\label{defn:ent-rank}
An operator $X \in \Pos(A \otimes B)$ is contained in the Entanglement Rank $r$ Cone, $\mrm{Ent}_{r}(A:B)$, if and only if there exists finite alphabet $\Sigma$ and $\{Z_{i}\}_{i=1}^{|\Sigma|} \subset \Lin(A,B)$ such that
$$ X = \sum_{i \in \Sigma} \vec(Z_{i}) \vec(Z_{i})^{\ast} $$
where $\rank(Z_{i}) \leq r$ for all $i \in \Sigma$.
\end{definition}
A few remarks on the definition are in order. First, note that $\mrm{Ent}_{r}(A:B)$ is clearly a cone as this property won't change under scaling by a non-negative value. Second, this is the cone which pertains to density matrices with Schmidt number bounded by $r$. There are many ways to see this, but one is to recall that the rank of a matrix is the tensor rank of the matrix, so by linearity of the vec mapping, (the renormalized) $\vec(Z_{i})$ has Schmidt rank of at most $r$. This observation further implies $\mrm{Ent}_{r}(A:B) \supset \mrm{Ent}_{r-1}(A:B)$ for $r \geq 2$ with extreme cases $\mrm{Ent}_{n}(A:B) = \Pos(A \otimes B)$ where $n = \min\{d_{A},d_{B}\}$ and $\mrm{Ent}_{1} = \Sep(A:B)$ where we recall $X \in \Sep(A:B)$ if and only if $X = \sum_{i=1}^{|\Sigma|} P_{i} \otimes Q_{i}$ where $\{P_{i}\}_{i=1}^{|\Sigma|} \subset \Pos(A)$, $\{Q_{i}\}_{i=1}^{|\Sigma|} \subset \Pos(B)$. Therefore, we can see the entanglement rank cones parameterized by $r$ as interpolating between the separable cone and the positive semidefinite cone in terms of the amount of allowed entanglement. Lastly, we note that $\mrm{Ent}_{r}(A:B)$ is closed as follows. The set of pure states with Schmidt number $\leq r$ is known to be compact \cite{Terhal-2020a} and it is clear that $\Ent_{r}(A:B) \cap \Density$ is the convex hull of the set of pure states with Schmidt number $\leq r$. As the convex hull of a compact set is compact, $\Ent_{r}(A:B) \cap \Density$ is compact. As the cone generated by a compact set not including zero is closed, by Proposition \ref{prop:cone-to-density}, we conclude $\Ent_{r}(A:B)$ is closed.

In terms of a proper resource theory, we only need a few general properties. First recall that the maximally entangled state $\ket{\tau}_{A\wt{A}} := \frac{1}{\sqrt{d}} \sum_{i\in[d]} \ket{ii}_{A\wt{A}}$ is considered the maximal resource in bipartite entanglement theory. We will denote $\tau_{A\wt{A}} \equiv \op{\tau}{\tau}_{A\wt{A}}$. However, if we only care about its dimension, we will denote it $\tau_{d} := \tau_{A\wt{A}}$ where $A \cong \mbb{C}^{d}$. Then we can note the two following properties pertaining to entanglement rank operators, both of which can be found in \cite[Chapter 6]{Watrous-2018}:
\begin{enumerate}
    \item (Distance from $\tau_{d}$) Let $X \in \mrm{Ent}_{r}(A:B)$, then
    \begin{align}\label{eqn:distance-from-max-ent}
        \supp_{\tau_{d}}(X) \leq \frac{r}{d} \Tr[X] \ .
    \end{align}
    Moreover, this holds for any state that is local-unitarily equivalent to $\tau_{d}$ as $(U_{A} \otimes W_{A'})\vec(I_{AA'}) = \vec(WU^{T})$ and $\|WU^{T}\| = \|I_{AA'}\| = 1$.
    \item (Monotonicity Under Separable CP Maps) Given $X \in \mrm{Ent}_{r}(A_{0}:A_{1})$ and a CP map $\Xi = \sum_{i \in \Sigma} \Phi_{i} \otimes \Psi_{i}$ where $\{\Phi_{i}\}_{i \in \Sigma} \subset \mrm{CP}(A_{0},B_{0})$, $\{\Psi_{i}\}_{i \in \Sigma} \subset \mrm{CP}(A_{1},B_{1})$, $\Xi(X) \in \mrm{Ent}_{r}(B_{0},B_{1})$.
\end{enumerate}

Lastly, we recall the following result of Schmidt Number for dynamic resources (channels) \cite{Skowronek-2009a}.
\begin{proposition}\label{prop:ent-rank-channel-property}
Let $\Phi \in \mrm{CP}(A,B)$. The following are equivalent:
\begin{enumerate}
    \item $\Phi$  is $k-$superpositive
    \item $\mrm{SN}((\id_{R} \otimes \Phi)(\rho)) \leq k$ for all $\rho \in \Density(R \otimes A)$
    \item $\mrm{SN}(J_{\Phi}^{AB}) \leq k$
    \item $\Phi$ admits Kraus operators $\{A_{i}\}_{i \in \Sigma}$ such that $\rank(A_{i}) \leq k$ for all $i \in \Sigma$. 
\end{enumerate}
\end{proposition}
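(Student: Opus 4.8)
The plan is to show that conditions $(2)$, $(3)$, $(4)$ are all equivalent by establishing $(3)\Leftrightarrow(4)$, $(4)\Rightarrow(2)$, and $(2)\Rightarrow(3)$, and then to match $(1)$ to this list by unwinding the definition of $k$-superpositivity. The essential input beyond bookkeeping is the observation recorded after Definition~\ref{defn:ent-rank}: for $Z\in\Lin(A,B)$ the matrix rank of $Z$ equals the Schmidt rank of the (renormalised) vector $\vec(Z)$, since matrix rank coincides with tensor rank. Granting this, $(4)\Rightarrow(3)$ is immediate: if $\Phi(X)=\sum_i A_iXA_i^{\ast}$ with $\mrm{rank}(A_i)\le k$, then by \eqref{eqn:choi-in-kraus-terms} $J_\Phi^{AB}=\sum_i\vec(A_i)\vec(A_i)^{\ast}$ is exactly of the form in Definition~\ref{defn:ent-rank}, so $J_\Phi^{AB}\in\Ent_k(A:B)$. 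Conversely, for $(3)\Rightarrow(4)$, write $J_\Phi^{AB}=\sum_i\vec(Z_i)\vec(Z_i)^{\ast}$ with $\mrm{rank}(Z_i)\le k$ and set $\Psi(X):=\sum_i Z_iXZ_i^{\ast}$; then $J_\Psi=J_\Phi$, and since $\Phi\mapsto J_\Phi$ is a linear bijection on $\Trans(A,B)$ this forces $\Psi=\Phi$, exhibiting $\{Z_i\}$ as a rank-$\le k$ Kraus representation of $\Phi$. (Injectivity of the Choi map removes any need to appeal to the unitary freedom among Kraus representations.)

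For $(4)\Rightarrow(2)$, I would fix $\rho\in\Density(R\otimes A)$ together with a pure-state ensemble $\rho=\sum_j q_j\op{\phi_j}{\phi_j}$. For each index $i$ the operator $I_R\otimes A_i$ maps $\ket{\phi_j}$ into $R\otimes\mrm{range}(A_i)$, so $(I_R\otimes A_i)\ket{\phi_j}$ has Schmidt rank across the $R:B$ cut at most $\dim\mrm{range}(A_i)=\mrm{rank}(A_i)\le k$. Hence $(\id_R\otimes\Phi)(\rho)=\sum_{i,j}q_j\,(I_R\otimes A_i)\op{\phi_j}{\phi_j}(I_R\otimes A_i)^{\ast}$ is a conic combination of (sub-normalised) pure states of Schmidt rank $\le k$, so it lies in $\Ent_k(R:B)$; by the characterisation of $\Ent_k$ as the cone of Schmidt-number-$\le k$ operators (again following Definition~\ref{defn:ent-rank}) this is precisely $\mrm{SN}((\id_R\otimes\Phi)(\rho))\le k$. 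For $(2)\Rightarrow(3)$ one needs only a single test input: take $R\cong A$ and $\rho=\tau$ the (normalised) maximally entangled state. By \eqref{eqn:choi-in-kraus-terms} one has $(\id\otimes\Phi)(\tau)=\tfrac1d\,J_\Phi^{AB}$ up to relabelling the reference register, and $\Ent_k$ is a cone, so $\mrm{SN}(J_\Phi^{AB})\le k$ follows directly from $(2)$.

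Finally, $(1)$: $k$-superpositivity of $\Phi$ is, by the standard definition, exactly the existence of a decomposition $\Phi=\sum_i\mathrm{Ad}_{V_i}$ with $\mrm{rank}(V_i)\le k$ (equivalently, $J_\Phi$ lies in the dual cone of the $k$-positive maps, which one verifies coincides with $\Ent_k(A:B)$), so $(1)$ is literally $(4)$ (resp.\ $(3)$) and nothing further is required. I expect the only step with genuine content to be $(4)\Rightarrow(2)$: the crux is that applying a rank-$\le k$ operator to one tensor factor of a pure state cannot raise its Schmidt rank above $k$, and that this property survives both the sum over Kraus indices and the convex mixture defining $\rho$. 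A minor but real point to handle carefully is that $\Phi$ is only assumed CP, not trace-preserving, so the operators in play are sub-normalised and the Schmidt-number statements should be phrased throughout as membership in $\Ent_k$ rather than in terms of normalised density matrices.
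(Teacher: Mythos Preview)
The paper does not supply its own proof of this proposition; it is quoted as a known result with the citation to Skowronek (2009) immediately preceding the statement. Your argument is correct and is essentially the standard one found in that literature: the Choi--Kraus correspondence gives $(3)\Leftrightarrow(4)$ directly, the maximally entangled test state yields $(2)\Rightarrow(3)$, the range-dimension bound on $(I_R\otimes A_i)\ket{\phi_j}$ gives $(4)\Rightarrow(2)$, and $(1)$ is definitional (in Skowronek--St{\o}rmer--\.Zyczkowski, $k$-superpositivity is \emph{defined} via the rank-$\le k$ Kraus decomposition, so $(1)=(4)$ on the nose). Your caveat about sub-normalisation and phrasing everything as membership in the cone $\Ent_k$ is well taken and handled correctly.
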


\paragraph*{Resource Theory of Positive Partial Transpose}
The cone of positive partial transpose (PPT) operators has a very similar framework that will allow us to derive similar results.
\begin{definition}\label{defn:PPT-operator}
An operator $X \in \PPT(A:B)$ if and only if $X \in \Pos(A \otimes B)$ and $(\id_{A} \otimes \Trans_{B})(X) \in \Pos(A \otimes B)$ where $\Trans_{B}$ is the transpose map on the $\Lin(B)$.
\end{definition}
The PPT operators satisfy the following properties, which may be found in \cite[Chapter 6]{Watrous-2018}:
\begin{enumerate}
    \item (Distance from $\tau_{d}$) Let $X \in \PPT(A:B)$, then
    \begin{align}\label{eqn:PPT-max-ent-supp}
        \supp_{\tau_{d}}(X) \leq d^{-1} \Tr[X]
    \end{align}
    \item (Preservation Under Separable CP Maps) Given $X \in \PPT(A_{0}:A_{1})$ and a CP map $\Xi = \sum_{i \in \Sigma} \Phi_{i} \otimes \Psi_{i}$ where $\{\Phi_{i}\}_{i \in \Sigma} \subset \mrm{CP}(A_{0},B_{0})$, $\{\Psi_{i}\}_{i \in \Sigma} \subset \mrm{CP}(A_{1},B_{1})$, $\Xi(X) \in \PPT(B_{0},B_{1})$.
\end{enumerate}
Moreover, the PPT cone has the relevant equivalence in terms of dynamic resources which are referred to as co-positive maps.
\begin{proposition}\label{prop:PPT-channel-property}
Let $\Phi \in \mrm{CP}(A,B)$. The following are equivalent:
\begin{enumerate}
    \item $J_{\Phi}^{AB} \in \PPT(A:B)$
    \item $(\id_{R} \otimes \Phi)(\rho) \in \PPT(R:B)$ for all $\rho \in \Density(R \otimes A)$.
    \item $T \circ \Phi \in \CP(A,B)$. 
\end{enumerate}
\end{proposition}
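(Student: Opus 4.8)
The plan is to run the cycle of implications $(1)\Rightarrow(3)\Rightarrow(2)\Rightarrow(1)$, each step exploiting the Choi--Jamio{\l}kowski correspondence of \eqref{eqn:choi-in-kraus-terms} together with the elementary observation that post-composing a map with the transpose amounts, at the level of Choi operators, to taking a partial transpose. The single identity doing most of the work is
\[
J^{AB}_{T\circ\Phi} \;=\; (\id^{A}\otimes T_{B})\big(J^{AB}_{\Phi}\big)\,,
\]
obtained by writing $T\circ\Phi = T_{B}\circ\Phi$, pulling $\id^{A}$ through the composition, and invoking the definition of the Choi operator. Since $\PPT(A:B)$ (Definition \ref{defn:PPT-operator}) demands exactly that $J^{AB}_{\Phi}$ and its $B$-partial transpose $(\id^{A}\otimes T_{B})(J^{AB}_{\Phi})$ both be positive semidefinite, and since a linear map is CP iff its Choi operator is positive semidefinite (Choi's theorem), the equivalence $(1)\Leftrightarrow(3)$ is essentially immediate: $\Phi$ is CP by hypothesis so $J_{\Phi}\geq 0$ always, and $J_{\Phi}\in\PPT(A:B)$ iff $J_{T\circ\Phi} = (\id^{A}\otimes T_{B})(J_{\Phi})\geq 0$ iff $T\circ\Phi$ is CP.

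For $(3)\Rightarrow(2)$ I would apply the same transpose/composition trick directly to states. For any $\rho\in\Density(R\otimes A)$, the $B$-partial transpose of $(\id_{R}\otimes\Phi)(\rho)$ is $(\id_{R}\otimes T_{B})\big((\id_{R}\otimes\Phi)(\rho)\big) = \big(\id_{R}\otimes(T_{B}\circ\Phi)\big)(\rho)$. Assuming $(3)$, the map $T_{B}\circ\Phi = T\circ\Phi$ is CP, so this operator is positive semidefinite, while $(\id_{R}\otimes\Phi)(\rho)\geq 0$ because $\Phi$ itself is CP; hence $(\id_{R}\otimes\Phi)(\rho)\in\PPT(R:B)$. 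This is the natural state-level counterpart of the ``preservation under separable CP maps'' property listed just before the proposition, with $T\circ\Phi$ being CP playing the role of the relevant hereditary structure.

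Finally, for $(2)\Rightarrow(1)$ I would specialize $(2)$ to a maximally entangled input: take $R$ to be a copy of the input space and $\rho$ the normalized maximally entangled state between $R$ and $A$, so that $(\id_{R}\otimes\Phi)(\rho) = \tfrac{1}{d_{A}}J^{AB}_{\Phi}$ after identifying $R$ with the first tensor factor of the Choi operator. Then $(2)$ gives $\tfrac{1}{d_{A}}J_{\Phi}\in\PPT(R:B)=\PPT(A:B)$, and since $\PPT(A:B)$ is a cone (closed under multiplication by non-negative scalars), $J^{AB}_{\Phi}\in\PPT(A:B)$, which is $(1)$. I do not expect a genuine obstacle anywhere: the only points requiring care are the index bookkeeping in $(2)\Rightarrow(1)$ -- matching the reference register of $(2)$ to the correct (first) tensor factor of $J_{\Phi}$ -- and the verification that partial transposition on $B$ commutes with $\id_{R}\otimes\Phi$, both of which are routine. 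The argument is structurally parallel to Proposition \ref{prop:ent-rank-channel-property}, with ``$k$-superpositive / bounded-rank Kraus operators'' replaced by ``co-positive'' and with no Kraus-operator reformulation available on the co-positive side.
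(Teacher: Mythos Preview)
The paper does not actually supply a proof of this proposition: it is stated in the background section as a known characterization (parallel to Proposition~\ref{prop:ent-rank-channel-property}, which is cited to \cite{Skowronek-2009a}) and used without argument. Your proof is correct and is the standard one; the cycle $(1)\Leftrightarrow(3)$ via the identity $J_{T\circ\Phi}=(\id_A\otimes T_B)(J_\Phi)$, $(3)\Rightarrow(2)$ by complete positivity of $T\circ\Phi$, and $(2)\Rightarrow(1)$ by specializing to the maximally entangled input is exactly how this folklore equivalence is established.
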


\subsubsection{Resource Theory of Coherence}
A resource theory with somewhat similar structure to that of entanglement theory is that of coherence theory. While a very rich theory, we will only need the following results.
\begin{definition}\label{defn:incoherent-cone}
Let $A \cong \mbb{C}^{d}$. An operator $X \in \cI(A)$,i.e.\ incoherent with respect to a preferred orthonormal basis $\{\ket{i}\}_{i \in [d]}$, if $X = \sum_{i \in [d]} \alpha_{i} \op{i}{i}$ where $\alpha_{i} \geq 0$ for all $i \in [d]$.
\end{definition}
We define the maximally coherent state for $A \cong \mbb{C}^{|\Sigma|}$ by $\ket{\varsigma_{d}}= \frac{1}{\sqrt{d}} \sum_{i \in [d]} \ket{i}$. If we wish to refer to the density matrix form, we denote it $\varsigma_{d} \equiv \op{\varsigma_{d}}{\varsigma_{d}}$.

It would be convenient for our purposes to be able to interpolate between the positive semidefinite cone and the fully incoherent cone $\cI$ in the same manner that the set of cones $\Ent_{r}$ do for entanglement. For these purposes, we introduce the following cones.
\begin{definition}\label{def:incoherence-rank}
Let $A \cong \mbb{C}^{d}$. An operator $X \in \cI_{r}(A)$, i.e. have incoherence rank $r$ with respect to a preferred orthonormal basis $\{\ket{i}\}_{i \in [d]}$, if there exists finite alphabet $\Sigma$ such that $\sum_{j \in \Sigma} \alpha_{j} \Pi_{j}$ where $\{\Pi_{j}\}_{j \in \Sigma}$ are incoherent projectors, $\Pi_{j} = \sum_{i \in S_{j}} \dyad{i}$, and $\max_{j} \rank(\Pi_{j}) \leq r$.
\end{definition}
Note these cones are clearly convex and that $\cI(A) = \cI_{1}(A)$, $\cI_{d} = \Pos(A)$, and $\cI_{r}(A) \supset \cI_{r-1}(A)$ for $r > 1$, so we get the interpolation we wanted. It is also easy to see they are closed. consider $\cI_{r} \cap \Density$, which it is straightforward to see is the convex hull of the incoherent projectors excluding the zero matrix. For $A \cong \mbb{C}^{d}$, there are a total of $2^{d}-1$ non-zero incoherent projectors as there are $2^{d}-1$ non-empty subsets of $d$ elements. As this set is finite, its closed. It's also trivially bounded, so it is compact. It follows its convex hull, which is $\cI_{r} \cap \Density$, is compact. As this set does not include the zero matrix, the cone generated by it is closed. Thus $\cI_{r}$ is closed. 

We do note the above set of cones doesn't restrict what subspaces are incoherent. We may conclude the cones restrict the strength of the coherence, not the subspace size.

\begin{proposition}(Distance from $\varsigma_{d}$)\label{prop:distance-from-maximally-coherent}
Let $A \cong \mbb{C}^{d}$, $X \in \cI_{r}(A)$. Then $\supp_{\varsigma_{d}}(X) \leq \frac{r}{d}\Tr[X]$. Likewise for $\cI_{P_{[d]}}$ where the partitioning of $[d]$ is such that $\max_{i}\rank\{\Pi_{i}\} \leq r$.
\end{proposition}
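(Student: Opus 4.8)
The plan is to run exactly the argument behind the analogous bounds \eqref{eqn:distance-from-max-ent} and \eqref{eqn:PPT-max-ent-supp}, with the maximally coherent state $\varsigma_{d}$ playing the role of $\tau_{d}$ and coordinate subspaces playing the role of low-rank matrices. Since $\varsigma_{d}$ is a single operator, $\supp_{\varsigma_{d}}(X) = \langle \varsigma_{d}, X\rangle = \Tr[\varsigma_{d}X]$, which is linear in $X$; hence, by Definition \ref{def:incoherence-rank}, it is enough to prove the bound on a single term of the defining decomposition, i.e.\ on a positive operator $Y$ of the form $\alpha_{j}\Pi_{j}$ (more generally, any $Y\in\Pos(A)$ whose support lies in a coordinate subspace $\mrm{span}\{\ket{i}:i\in S\}$ with $|S|\le r$), and then sum over $j$ using $\alpha_{j}\ge 0$ and $\max_{j}\rank(\Pi_{j})\le r$.

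The core estimate I would isolate is: if $Y\in\Pos(A)$ is supported on the coordinate subspace indexed by $S$ with $|S|\le r$, then $\Tr[\varsigma_{d}Y]\le \tfrac{r}{d}\Tr[Y]$. Letting $\Pi_{S}=\sum_{i\in S}\op{i}{i}$ and using $Y=\Pi_{S}Y\Pi_{S}$, cyclicity of the trace gives $\Tr[\varsigma_{d}Y]=\Tr[(\Pi_{S}\varsigma_{d}\Pi_{S})Y]\le \norm{\Pi_{S}\varsigma_{d}\Pi_{S}}_{\infty}\Tr[Y]$, and $\Pi_{S}\varsigma_{d}\Pi_{S}=\op{v}{v}$ with $\ket{v}=\Pi_{S}\ket{\varsigma_{d}}=\tfrac{1}{\sqrt{d}}\sum_{i\in S}\ket{i}$, so $\norm{\Pi_{S}\varsigma_{d}\Pi_{S}}_{\infty}=\braket{v}{v}=|S|/d\le r/d$. (Equivalently, for pure $\ket{\psi}$ supported on $S$ one may use Cauchy--Schwarz directly: $|\braket{\varsigma_{d}}{\psi}|^{2}=\tfrac{1}{d}\,|\sum_{i\in S}\braket{i}{\psi}|^{2}\le \tfrac{|S|}{d}\braket{\psi}{\psi}$.) Applying this term by term and summing yields $\Tr[\varsigma_{d}X]\le \tfrac{r}{d}\Tr[X]$.

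For the $\cI_{P_{[d]}}$ statement the same proof works: every element $X$ of that cone is block-diagonal with respect to the fixed partition $P_{[d]}=\{S_{i}\}_{i}$, so $X=\sum_{i}\Pi_{i}X\Pi_{i}$ with each summand positive and supported on the coordinate subspace of $S_{i}$, and the hypothesis $\max_{i}\rank(\Pi_{i})\le r$ says each such subspace has dimension at most $r$; the core estimate then applies block by block. I do not anticipate a genuine obstacle: the only points needing care are that the support function of a singleton collapses to an inner product (so the supremum is trivial and passes through the convex decomposition by linearity) and that block-diagonality is precisely what lets one split $X$ into positive pieces supported on small coordinate subspaces before invoking the estimate.
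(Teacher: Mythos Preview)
Your proof is correct and follows essentially the same approach as the paper's: both reduce to the overlap of $\varsigma_{d}$ with a coordinate subspace of size $|S|\le r$, observe that this overlap is $|S|/d$, and then sum over the decomposition using linearity. The paper is slightly more direct, simply computing $\langle\varsigma_{d},\Pi_{j}\rangle=|S_{j}|/d$ for each incoherent projector and summing, whereas you phrase the same computation via the operator-norm bound $\|\Pi_{S}\varsigma_{d}\Pi_{S}\|_{\infty}=|S|/d$; the extra generality you build in (allowing arbitrary positive $Y$ supported on a coordinate subspace rather than just projectors) is not needed here since the defining decomposition already uses incoherent projectors.
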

\begin{proof}
Simply using the definitions,
Note 
$$ \langle \varsigma_{d}, \Pi_{j} \rangle = d^{-1} \sum_{i \in S_{j}} |\bra{j}\ket{\varsigma_{d}}|^{2} = d_{1}^{-1}|S_{j}| \ ,$$
from which it follows
\begin{align*}
    \langle \varsigma_{d}, X \rangle = \sum_{j} \alpha_{j} \langle \varsigma_{d}, \Pi_{j} \rangle  \leq d^{-1} \sum_{j} \alpha_{j} |S_{j}|
    \leq \frac{r}{d} \Tr[X] \ ,
\end{align*}
where the bound is by assumption on the bound on the rank of $\Pi_{i}$.
\end{proof}

There are many options for the dynamic resources in coherence theory \cite{Chitambar-2016a}. We will consider the physically incoherent operations. Physically, these are the local CPTP maps $\Phi \in \Channel(A,A)$ that are marginals of channels $\Phi' \in \Channel(A \otimes B, A \otimes B)$ where $\Phi'$ takes products of incoherent states (with respect to their respective local preferred bases) to a state that remains incoherent (with respect to the same preferred bases). We define them by the algebraic structure of their Kraus operators as was determined in \cite{Chitambar-2016a}.
\begin{definition}(\cite[Prop. 1]{Chitambar-2016a})
A quantum channel $\Phi \in \Channel(\mbb{C}^{d},\mbb{C}^{d})$ is a physically incoherent operation (PIO) if all of its Kraus operators are of the form $K_{i} = U_{i}\Pi_{i} = \sum_{x \in [d]} e^{i\theta_{x}}\op{\sigma_{i}(x)}{x}\Pi_{i}$ where $\{\sigma_{i}\}_{i}$ are permutations and $\{\Pi_{i}\}_{i}$ are an orthonormal and complete set of incoherent projectors. That is, $\Pi_{i} = \sum_{j \in S_{i}} \dyad{j}$ where $\sum_{i} S_{i} = [d]$ and $S_{i} \cap S_{i'} = \emptyset$.
\end{definition}

\paragraph*{Resource Theory of (Abelian) Asymmetry}
An alternative approach to interpolating would be to simply think of sequences of cones with bounded subspace sizes. To do so, we can introduce the following definition.
\begin{definition}\label{def:incoherence-partition-rank}
Let $A \cong \mbb{C}^{d}$. Let $\mbb{P}:=\{\Pi_{i}\}_{i}$ be a set of mutually orthogonal projectors that sum to the identity on $\mbb{C}^{d}$. Then $X \in \cI_{\mbb{P}}$ if and only if $[X,\Pi_{i}] = 0$ for all $\Pi_{i} \in \mbb{P}$.
\end{definition}
\noindent $\cI_{\mbb{P}}$ is clearly a convex cone and for the trivial choice $\mbb{P} = \{I_{\mbb{C}^{d}}\}$, one will recover $\Pos(A)$. It's also closed as it is the cone generated by the convex hull of the $\{\Pi_{i} = \sum_{j \in P_{[d]}(i)} \dyad{j}\}$ which is compact.

In comparison to $\cI_{r}$, it has the disadvantage of needing to fix a set of subspaces first, but we can argue that it is in fact more operationally relevant as such cones include the cones that are closed under some symmetry. Recall that a symmetry is captured by a set of unitaries $\{U_{g}\}_{g \in G}$ where $G$ is a group and the state space invariant under $\cG(\rho) = \int U_{g}(\rho)U_{g}^{\ast} \, dg $ is clearly convex. Therefore by Proposition \ref{prop:cone-to-density}, the cone generated by these operators define a cone. Moreover, letting $k$ be the dimension of the commutant of $G$, recall any invariant state may be written as $\sum_{k} \alpha_{k} \Pi_{k}$ where $\Pi_{k}$ are the minimal projectors of the commutant. It follows by letting $\{\Pi_{k}\}_{k}$ be the relevant projectors in Definition \ref{def:incoherence-partition-rank}, one obtains the cone relevant to an Abelian symmetry. While we won't elaborate on the application of such resource theories, we note they are quite often relevant \cite{Marvian-2016a} and so it is important we see our general framework can consider restricting according to such cones.

\section{Cone-Restricted Entropies}
With all of this addressed, we now introduce our framework. At a high level the idea is to use that many one-shot entropies may be expressed as semidefinite programs and then consider sufficiently well-behaved closed convex cones that are subsets of the positive semidefinite cone. Of primary interest for this work will be the max-relative entropy and the min-entropy. We note there have been works that have had a similar approach, but these works are ultimately distinct in both form and what results are proven from them. We elaborate on this once we have introduced the restricted min-entropy.
\subsection{The Restricted Max Relative Entropy}
In the case of $P,Q \in \Pos(A)$ such that $P \ll Q$,  we can express $D_{\max}$ in terms of a semidefinite program:
$$ D_{\max}(P||Q) =  \log \left( \min \{\gamma : P \leq \gamma Q \} \right) \ , $$
where we have just used that $\gamma = 2^{\alpha}$ from the original definition \eqref{eqn:Dmax-defn}. This can be seen as the dual program of a semidefinite program by considering the standard form of \eqref{eqn:conicDual} for the choices $\cK^{\ast} \to \Pos(A)$, $W \to \mbb{R}$, $\phi^{\ast}(\gamma) = \gamma Q$, $y = \gamma$, $a = P$, and $b = 1$.
We can then immediately abstract this idea to alternative cones.
\begin{definition}
For $P,Q \in \Pos(A)$ and $\cK \subseteq \Pos(A)$, the $\cK$-restricted max-relative entropy is
$$ D^{\cK}_{\max}(P||Q) := \log \left(\inf\{\gamma : P \preceq_{\cK^{\ast}} \gamma Q \}\right) \ .$$
\end{definition}
Clearly this recovers $D_{\max}$ for the choice of $\cK = \Pos$. Moreover, we have the following straightforward relation between various cones, which implies $D_{\max}$ achieves the highest value of all choices $\cK \subset \Pos(A)$.
\begin{proposition}
Consider cones $\cK_{1} \subseteq \cK_{2} \subseteq \Pos(A)$. Let $P,Q \in \Pos(A)$. Then $D_{\max}^{\cK_{1}}(P||Q) \leq D_{\max}^{\cK_{2}}(P||Q)$.
\end{proposition}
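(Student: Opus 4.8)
The plan is to reduce the claim to the order-reversing behaviour of cone duality combined with the monotonicity of $\inf$ and of $\log$. First I would unfold the definition: writing $\gamma Q - P \in \cK_i^{\ast}$ for the condition $P \preceq_{\cK_i^{\ast}} \gamma Q$, we have
$D_{\max}^{\cK_i}(P\|Q) = \log\bigl(\inf F_i\bigr)$ with $F_i := \{\gamma \geq 0 : \gamma Q - P \in \cK_i^{\ast}\}$ for $i \in \{1,2\}$. So the proposition is equivalent to $\inf F_1 \leq \inf F_2$.

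Next I would invoke the fact recorded in Section~\ref{sec:background}, that $\cK_1 \subseteq \cK_2$ implies $\cK_2^{\ast} \subseteq \cK_1^{\ast}$. Consequently every $\gamma$ with $\gamma Q - P \in \cK_2^{\ast}$ also has $\gamma Q - P \in \cK_1^{\ast}$, i.e.\ $F_2 \subseteq F_1$. The infimum over a larger set is no larger, so $\inf F_1 \leq \inf F_2$, and applying the monotone increasing function $\log$ yields $D_{\max}^{\cK_1}(P\|Q) \leq D_{\max}^{\cK_2}(P\|Q)$.

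The only points requiring a moment's care are the boundary conventions: when $P \not\ll Q$ the set $F_2$ (and possibly $F_1$) may be empty, in which case the value is $+\infty$ and the inequality is vacuous or immediate under $\inf\emptyset = +\infty$; and one should note $\log$ is order-preserving on $(0,\infty]$. There is no genuine obstacle here—the entire content is the containment $\cK_2^{\ast} \subseteq \cK_1^{\ast}$, which itself follows directly from the definition of the dual cone—so the proof is essentially a single line once that containment is cited. As a consistency check, taking $\cK_2 = \Pos(A)$ recovers the earlier remark that $D_{\max}^{\cK_1} \leq D_{\max}$ for any $\cK_1 \subseteq \Pos(A)$.
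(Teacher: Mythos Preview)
Your proof is correct and follows essentially the same approach as the paper's: both arguments use the order-reversal of dual cones, $\cK_1 \subseteq \cK_2 \Rightarrow \cK_2^{\ast} \subseteq \cK_1^{\ast}$, to conclude that any $\gamma$ feasible for the $\cK_2$ problem is also feasible for the $\cK_1$ problem, from which the inequality of infima follows immediately. Your version is slightly more detailed in handling the boundary conventions, but the core one-line argument is identical.
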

\begin{proof}
As $\cK_{1} \subseteq \cK_{2}$, $\cK^{\ast}_{1} \supseteq \cK^{\ast}_{2}$. As $P \preceq_{\cK^{\ast}} \gamma Q \Leftrightarrow \gamma Q-P \in \cK^{\ast}$, it follows the optimal $\gamma$ for $D^{\cK_{2}}(P||Q)$ is always feasible for $D^{\cK_{1}}_{\max}(P||Q)$. This completes the proof.
\end{proof}
It also allows us to express it in terms of a conic program when operator dominance is satisfied.
\begin{proposition}
Let $\cK$ be a closed, convex cone. Let $P \ll Q$ where $P,Q \in \Pos(A)$. Then $\exp(D_{\max}^{\cK}(P||Q))$ has the following primal and dual conic programs
\begin{align}
    \emph{Primal:} &\hspace{1mm} \max\{\ip{P}{X}\;:\;\ip{Q}{X} \leq 1 \;\&\; X \in \cK \} \label{eqn:DmaxKPrimal} \\
    \emph{Dual:} &\hspace{1mm} \min\{\gamma: \gamma Q \succeq_{\cK^{\ast}} P \;\&\; \gamma \in \mbb{R} \} \label{eqn:DmaxKDual}
\end{align}
 where we have trivially relaxed the equality in the primal problem.
\end{proposition}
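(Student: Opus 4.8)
The plan is to read the displayed \emph{dual} program as a rewriting of the definition of $D^{\cK}_{\max}$, to identify the displayed \emph{primal} program as the conic-programming dual of it in the sense of \eqref{eqn:conicPrimal}--\eqref{eqn:conicDual}, and then to close the gap between the two values by strong duality. By definition $\exp(D^{\cK}_{\max}(P||Q)) = \inf\{\gamma : \gamma Q - P \in \cK^{\ast}\}$, and the feasible set $\{\gamma \in \mbb{R} : \gamma Q - P \in \cK^{\ast}\}$ is the preimage of the closed cone $\cK^{\ast}$ under the continuous affine map $\gamma \mapsto \gamma Q - P$, hence closed; so (once we know it is nonempty and bounded below, which it is whenever some $X \in \cK$ has positive overlap with $Q$) the infimum is attained and may be written $\min$, giving exactly \eqref{eqn:DmaxKDual}. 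To obtain the primal I would match \eqref{eqn:DmaxKDual} to the template dual \eqref{eqn:conicDual} with the cone taken to be $\cK$ (whose dual is $\cK^{\ast}$, as $\cK$ is closed and convex), $W = \mbb{R}$, $y = \gamma$, $b = 1$, $a = P$, and $\phi^{\ast}(\gamma) = \gamma Q$; from $\ip{\phi^{\ast}(\gamma)}{X} = \gamma\ip{Q}{X}$ the adjoint is $\phi(X) = \ip{Q}{X}$, so the associated primal \eqref{eqn:conicPrimal} is $\max\{\ip{P}{X} : \ip{Q}{X} = 1,\ X \in \cK\}$.

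That this equality-constrained program has the same value as the stated one, where the constraint is relaxed to $\ip{Q}{X} \leq 1$, is the ``trivial relaxation'': by positive homogeneity an $X \in \cK$ with $0 < \ip{Q}{X} \leq 1$ can be rescaled to $X/\ip{Q}{X} \in \cK$ without decreasing $\ip{P}{X}$, while an $X \in \cK$ with $\ip{Q}{X} = 0$, being positive semidefinite, has support inside $\ker Q \subseteq \ker P$ (here $P \ll Q$ is used), so $\ip{P}{X} = 0$ and such an $X$ cannot beat the value attained at $X = 0$; the only place the two can genuinely differ is the degenerate case treated below.

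What remains --- and where the real obstacle lies --- is strong duality, i.e.\ that the primal and dual values coincide, both then equalling $\exp(D^{\cK}_{\max}(P||Q))$. Weak duality is free, since for primal-feasible $X$ with $\ip{Q}{X} = 1$ and dual-feasible $\gamma$ one has $0 \leq \ip{\gamma Q - P}{X} = \gamma - \ip{P}{X}$ (as $X \in \cK$ and $\gamma Q - P \in \cK^{\ast}$). The reverse inequality I would get from the conic strong-duality (Slater) theorem of \cite{WatrousATQILecture01}, but its hypotheses are not directly in view because $Q$ need not be positive definite and $\cK^{\ast}$ may be an awkward cone. The fix is to reduce to the case $Q \succ 0$: writing $A'$ for the support of $Q$ and $\Pi$ for the projector onto it, one has $\Pi P \Pi = P$ and $\Pi Q \Pi = Q$ by $P \ll Q$, and the compression $X \mapsto \Pi X \Pi$ maps $\cK$ onto a cone $\widehat{\cK} \subseteq \Pos(A')$ without changing the primal objective or constraint and (because $\gamma Q - P = \Pi(\gamma Q - P)\Pi$) without changing the dual feasible set, while $Q$ becomes positive definite on $A'$. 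After this reduction the dual is \emph{strictly} feasible --- $\gamma Q - P \succ 0$ for all large $\gamma$, so it lies in the interior of $\Pos(A') \subseteq \cK^{\ast}$, and that interior equals $\relint(\cK^{\ast})$ since $\cK^{\ast}$ is full-dimensional --- and the primal is feasible, since any nonzero $X_{0} \in \cK$ has $\ip{Q}{X_{0}} > 0$ and can be rescaled; Slater then yields equality of the optimal values and attainment of the primal optimum. The degenerate case in which no element of $\cK$ has positive overlap with $Q$ (in particular $\cK = \{0\}$) is checked directly: there $P \ll Q$ forces $\ip{P}{X} = 0$ for all $X \in \cK$, so both programs evaluate to $0 = \exp(-\infty)$.
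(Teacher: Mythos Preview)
Your derivation of the primal--dual pair is exactly the paper's: the dual \eqref{eqn:DmaxKDual} is the definition of $D^{\cK}_{\max}$, and the primal comes from identifying $\phi^{\ast}(\gamma)=\gamma Q$ so that $\phi(X)=\ip{Q}{X}$. The paper's proof stops there; it is three lines. The proposition does \emph{not} claim the two programs have the same value --- strong duality is a separate proposition immediately after, proved under the \emph{additional} hypothesis $I\in\relint(\cK)$ via primal strict feasibility.

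So most of your effort is aimed at something this proposition does not assert. That said, your strong-duality argument is interesting because it uses strict feasibility of the \emph{dual} (via the reduction to $Q\succ 0$ on $A'$) rather than of the primal, and therefore does not need $I\in\relint(\cK)$; you get strong duality from $P\ll Q$ alone, which is genuinely weaker than the paper's later hypothesis. Two small wrinkles: after compressing you should be writing $\widehat{\cK}^{\ast}$ rather than $\cK^{\ast}$ when invoking Slater (the argument still works, since $\widehat{\cK}^{\ast}=\cK^{\ast}\cap\Herm(A')$ is full-dimensional in $\Herm(A')$); and in the degenerate case $\cK=\{0\}$ the dual value is $-\infty$, not $0$, so the relaxed primal and the dual really do differ there --- your sentence ``both programs evaluate to $0$'' is not right for the dual.
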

\begin{proof}
That they are maximizations/minimizations follows from the cone being closed and convex. The dual problem follows from the definition of $D^{\cK}_{\max}$. The primal problem follows from noting that if $\phi^{*}(\lambda) := \lambda Q$, then $\phi(M) = \langle M, Q \rangle$.
\end{proof}
It is standard to prove strong duality for conic programs. Indeed, the above programs satisfy this, but it is easiest to see by first showing that $D^{\cK}_{\max}$ will in general satisfy the property of normalization.
\begin{proposition}[Normalization] \label{prop:normalization}
Let $\cK \subseteq \Pos$ be closed and convex. Then 
$$D^{\cK}_{\max}(aP||bQ) = D^{\cK}_{\max}(P||Q) + \log(a) - \log(b) \ . $$
\end{proposition}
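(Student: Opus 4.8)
The plan is to unfold the definition $D^{\cK}_{\max}(P||Q) = \log\bigl(\inf\{\gamma : P \preceq_{\cK^{\ast}} \gamma Q\}\bigr)$ and perform a change of variables in the optimization over $\gamma$, using only that $\cK^{\ast}$ is a cone and that $a,b > 0$.

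First I would write
\[
D^{\cK}_{\max}(aP||bQ) = \log\Bigl(\inf\{\gamma : aP \preceq_{\cK^{\ast}} \gamma bQ\}\Bigr),
\]
and rewrite the constraint $aP \preceq_{\cK^{\ast}} \gamma bQ$, i.e.\ $\gamma bQ - aP \in \cK^{\ast}$, by factoring out $a>0$: $\gamma bQ - aP = a\bigl(\tfrac{\gamma b}{a}Q - P\bigr)$. Since $\cK^{\ast}$ is closed under multiplication by nonnegative scalars, this element lies in $\cK^{\ast}$ if and only if $\tfrac{\gamma b}{a}Q - P \in \cK^{\ast}$, that is, $P \preceq_{\cK^{\ast}} \tfrac{b}{a}\gamma\, Q$.

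Next I would substitute $\gamma' := \tfrac{b}{a}\gamma$. Because $a,b > 0$, the map $\gamma \mapsto \gamma'$ is an order-preserving bijection of $\mbb{R}$ onto itself, and it carries the feasible set $\{\gamma : aP \preceq_{\cK^{\ast}} \gamma bQ\}$ exactly onto $\{\gamma' : P \preceq_{\cK^{\ast}} \gamma' Q\}$, with $\gamma = \tfrac{a}{b}\gamma'$. Hence
\[
\inf\{\gamma : aP \preceq_{\cK^{\ast}} \gamma bQ\} = \frac{a}{b}\,\inf\{\gamma' : P \preceq_{\cK^{\ast}} \gamma' Q\},
\]
where the two infima are simultaneously $+\infty$ if the feasible set is empty (the bijection preserves emptiness). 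Taking logarithms and using $\log(a/b) = \log a - \log b$ gives $D^{\cK}_{\max}(aP||bQ) = D^{\cK}_{\max}(P||Q) + \log a - \log b$.

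I do not expect a genuine obstacle here: the only care needed is that $a,b$ are strictly positive, so that the change of variables is invertible and preserves cone membership, and that the degenerate case of an empty feasible region is handled consistently on both sides, which it is. The argument isolates cone-homogeneity of $\cK^{\ast}$ as the sole structural ingredient, so the statement holds verbatim for every closed convex $\cK \subseteq \Pos$, and in particular it will be the step that makes strong duality of \eqref{eqn:DmaxKPrimal}--\eqref{eqn:DmaxKDual} easy to establish afterwards.
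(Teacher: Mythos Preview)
Your proof is correct and works directly from the definition via the dual optimization over $\gamma$. The paper takes a different but equally valid route: it uses the \emph{primal} conic program \eqref{eqn:DmaxKPrimal}, observing that if $X^{\star}$ optimizes $\max\{\langle P,X\rangle : \langle Q,X\rangle \le 1,\ X\in\cK\}$ with value $\alpha^{*}$, then $b^{-1}X^{\star}$ is feasible and optimal for the scaled problem with value $\tfrac{a}{b}\alpha^{*}$. Your approach has the modest advantage that it works straight from the definition without invoking the primal characterization (which in the paper is stated under the hypothesis $P\ll Q$) and without assuming an optimizer is attained; it also makes explicit that only the cone property of $\cK^{\ast}$ is used. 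The paper's primal argument is perhaps slightly more concrete, since one exhibits the rescaled optimizer directly. Either way the content is the same homogeneity observation, just viewed from opposite sides of the duality.
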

\begin{proof}
To see this, just consider the primal problem. The objective function and constraint are both linear. Therefore to obtain the optimum of $D^{\cK}(aP||bQ)$, one can solve for the optimizer of the conic program for $D^{\cK}(P||Q)$, $X^{\star}$. Let the optimal value of the conic program be $\alpha^{*}$. Then by linearity the optimizer of the conic program for $D^{\cK}(aP||bQ)$ is $b^{-1}X^{\star}$ which obtains an optimal value $\frac{a}{b}\alpha^{*}$. As the relative entropy is the $\log$ of the conic program, we have
\begin{align*}
D^{\cK}_{\max}(aP||bQ) = \log(\frac{a}{b}\alpha^{*}) = D^{\cK}_{\max}(P||Q) + \log(a) - \log(b) \ , 
\end{align*}
which completes the proof.
\end{proof}
Using this it is now easy to prove strong duality under the assumption that the identity is contained in the relative interior of the cone. 
\begin{proposition}[Strong Duality]
Let $P,Q \in \Pos(A)$ such that $P \ll Q$. Let $\cK \subseteq \Pos(A)$ such that $I^{A} \in \relint(\cK)$. Let $\alpha, \beta$ be the values obtained by \eqref{eqn:DmaxKPrimal}, \eqref{eqn:DmaxKDual} respectively. Then $\alpha = \beta$, i.e.\ strong duality holds.
\end{proposition}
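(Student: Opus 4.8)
The plan is to reduce this to the standard Slater-type strong-duality theorem for conic programs (see \cite{WatrousATQILecture01}), which applies once one exhibits a strictly feasible point and checks that the two optimal values are finite. First I would record weak duality $\alpha \le \beta$, which is direct: if $X \in \cK$ is primal-feasible, so $\ip{Q}{X} \le 1$, and $\gamma$ is dual-feasible, so $\gamma Q - P \in \cK^{\ast}$, then $\ip{\gamma Q - P}{X} \ge 0$ gives $\ip{P}{X} \le \gamma \ip{Q}{X} \le \gamma$; here the last step uses $\gamma \ge 0$, which holds because $\gamma Q = (\gamma Q - P) + P \in \cK^{\ast}$ (using $P \in \Pos(A) \subseteq \cK^{\ast}$ and that $\cK^{\ast}$ is a convex cone), hence $\gamma \Tr Q = \ip{\gamma Q}{I} \ge 0$ since $I \in \relint(\cK) \subseteq \cK$ (and $Q \neq 0$, else the statement is trivial). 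Taking a supremum over $X$ and an infimum over $\gamma$ gives $\alpha \le \beta$.

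Next I would exhibit a strictly feasible primal point using the hypothesis $I \in \relint(\cK)$. Since relative interiors of cones are invariant under multiplication by a positive scalar, $X_{0} := I / \Tr Q \in \relint(\cK)$, and it satisfies $\ip{Q}{X_{0}} = 1$; this is precisely a Slater point for the standard-form primal in which the constraint is the equality $\ip{Q}{X} = 1$ of \eqref{eqn:conicPrimal}. One may first invoke Proposition~\ref{prop:normalization} to rescale so that $\Tr Q = 1$, in which case $I$ itself is the Slater point and both sides merely shift by an additive constant; this is plausibly the intended role of the normalization lemma referenced just above. Replacing $\le 1$ by $= 1$ in \eqref{eqn:DmaxKPrimal} does not change the value $\alpha$: the objective $\ip{P}{\cdot}$ is nonnegative and homogeneous, so a feasible $X$ with $0 < \ip{Q}{X} < 1$ can be rescaled up to the boundary without decreasing the objective, while a feasible $X$ with $\ip{Q}{X} = 0$ has $\supp(X) \subseteq \ker Q \subseteq \ker P$ (by $P \ll Q$) and hence $\ip{P}{X} = 0$. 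Finiteness then follows: the dual is feasible because $P \ll Q$ implies, on $\supp(Q)$, that $Q \succeq \lambda \Pi_{\supp(Q)}$ for $\lambda$ the least nonzero eigenvalue of $Q$ while $P \preceq \|P\|\,\Pi_{\supp(P)} \preceq \|P\|\,\Pi_{\supp(Q)}$, so $P \preceq (\|P\|/\lambda)Q$, which lies in $\cK^{\ast}$ since $\Pos(A) \subseteq \cK^{\ast}$; combined with weak duality and the primal point $X_{0}$, both $\alpha$ and $\beta$ are finite.

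With a strictly feasible primal and finite values, the conic strong-duality theorem yields $\alpha = \beta$ (and attainment of the dual optimum). I expect the only genuinely delicate point to be the bookkeeping between the inequality form of \eqref{eqn:DmaxKPrimal} and the equality form needed to match the dual \eqref{eqn:DmaxKDual}, together with the verification that the Slater point truly lies in $\relint(\cK)$ and not merely in $\cK$; everything else is routine, and the degenerate cases ($P = 0$, or $Q$ not of full rank) are absorbed by the same support/kernel considerations.
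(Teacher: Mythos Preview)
Your proposal is correct and follows essentially the same approach as the paper: both invoke Slater's condition by using $I$ (or its rescaling $I/\Tr Q$, equivalently via the normalization property) as a strictly feasible primal point, and both obtain dual feasibility from $P \ll Q$ together with $\Pos(A) \subseteq \cK^{\ast}$. Your version is simply more explicit about weak duality, the equality-versus-inequality bookkeeping, and the degenerate cases, whereas the paper's proof is terser and relies on the reader supplying these routine details.
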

\begin{proof}
By the normalization property, it suffices to prove strong duality in the case $\hat{P},\hat{Q} \in \Density(A)$ where $\hat{P} = \Tr[P]^{-1}P$ and likewise for $\hat{Q}$. We also assume we have not trivially relaxed the primal constraint. Then the choice $X = I^{A}$ in the primal is strictly feasible by assumption. We know $\Pos \subseteq \cK^{*}$ by definition of dual cone and the assumption $\cK \subseteq \Pos$. As $P \ll Q$, there must exist $\gamma$ such that $\gamma Q - P \in \Pos \subset \cK^{*}$. Thus the primal is strictly feasible and the dual is feasible, so strong duality holds by Slater's condition for strong duality \cite{WatrousATQILecture01}.
\end{proof}
So far in this section we have stayed abstract by talking only in terms of cones. Recall that a primary motivation for focusing on general cones is that a convex resource theory will generate a cone (Proposition \ref{prop:cone-to-density}). It is therefore best to check that the assumption that $I \in \mrm{relint}(\cK)$ is not too demanding for a resource theory. Indeed it is not. While the mathematical definition of relative interior \eqref{eqn:relint-defn} is not physical, very roughly, one may think that $I \in \mrm{relint}(\cK)$ asks that \textit{with respect to the free states} there is a ball of free states about the maximally mixed state contained in the free states.{\footnote{That the ball is with respect to the free states is crucial. For example, there are coherent states arbitrarily close to the maximally mixed state and so there is no ball of coherent operators around the maximally mixed state. However, as shown in Corollary \ref{corr:strong-duality-for-resources}, when restricting the operators to the relevant incoherent cone, there is a ball.}} The relative interior definition captures this containment by, roughly speaking, asking that you can take the \textit{difference} between a free state and the maximally mixed state and scale it so adding this difference to the maximally mixed state is (an unnormalized) free state. In general, you would expect this to be true. The maximally mixed state intuitively is resource-less, and so you would expect there is a small enough scaling such that the difference between it and any free state will perturb the maximally mixed state so that it remains a (unnormalized) free state. We now verify this for the canonical resource theories of entanglement and coherence.
\begin{corollary}\label{corr:strong-duality-for-resources}
For $\cK \in \{\mrm{Ent}_{r \in \mbb{N}}$, $\cI_{r}, \cI_{\mbb{P}}\}$, $D^{\cK}_{\max}$ satisfies strong duality.
\end{corollary}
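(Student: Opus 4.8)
The plan is to deduce the corollary directly from the Strong Duality proposition above, whose only nontrivial hypothesis (given that $P \ll Q$ is the standing assumption and that $\mrm{Ent}_{r}$, $\cI_{r}$, and $\cI_{\mbb{P}}$ are closed convex subsets of $\Pos$, as already established) is the membership $I^{A} \in \relint(\cK)$. So the whole task reduces to certifying that the identity lies in the relative interior of each of the three cones; once that is in hand, strong duality is immediate.

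Two of the three cases are routine. For $\cI_{\mbb{P}}$, the real span of the cone is exactly the set of Hermitian operators commuting with every $\Pi_{i}$, i.e.\ the operators block-diagonal with respect to the partition $\mbb{P}$; the identity lies in this subspace, and for any $Y \in \cI_{\mbb{P}}$ the operator $(1+\ve)I - \ve Y$ again commutes with every $\Pi_{i}$ and is positive semidefinite once $\ve$ is small enough (any $\ve > 0$ if $\|Y\|_{\infty} \le 1$), so $I \in \relint(\cI_{\mbb{P}})$ straight from \eqref{eqn:relint-defn}. For $\cI_{r}$, every incoherent projector $\sum_{i \in S}\dyad{i}$ — hence every element of $\cI_{r}$ — is diagonal in the preferred basis, while $I = \sum_{i}\dyad{i} \in \cI_{1} \subseteq \cI_{r}$ is diagonal and positive definite; for diagonal $Y \in \cI_{r}$ the operator $(1+\ve)I - \ve Y$ is diagonal and positive semidefinite for small $\ve$, hence lies in $\cI_{1} \subseteq \cI_{r}$, giving $I \in \relint(\cI_{r})$.

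The substantive case is $\mrm{Ent}_{r}(A:B)$. A quick route is to invoke that $\Sep(A:B) = \mrm{Ent}_{1}(A:B)$ contains a ball about the normalized identity (the Gurvits--Barnum separable ball) and that $\mrm{Ent}_{r} \supseteq \Sep$ is full-dimensional, whence $I \in \mathrm{int}(\Sep) \subseteq \mathrm{int}(\mrm{Ent}_{r}) = \relint(\mrm{Ent}_{r})$; the same reasoning covers $\PPT(A:B) \supseteq \Sep$ should one want strong duality there as well. For a self-contained argument I would instead verify \eqref{eqn:relint-defn} by hand: write an arbitrary $Y \in \mrm{Ent}_{r}$ as $Y = \sum_{l} c_{l}\dyad{\psi_{l}}$ with $\mrm{SR}(\psi_{l}) \le r$ (possible by Definition \ref{defn:ent-rank}), set $C = \sum_{l} c_{l}$, and note that by convexity of the cone it suffices to show each $(1+\ve)I - \ve C\dyad{\psi_{l}} \in \mrm{Ent}_{r}$ for small $\ve$. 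Using local-unitary invariance of $\mrm{Ent}_{r}$ (a special case of the stated monotonicity under separable CP maps), put $\ket{\psi_{l}}$ in Schmidt form supported on $\mathrm{span}\{\ket{kk} : k \le r\}$, let $\Pi$ project onto the ``corner'' $\mathrm{span}\{\ket{k}\otimes\ket{k'} : k,k' \le r\} \cong \mbb{C}^{r} \otimes \mbb{C}^{r}$, and split
\begin{align*}
(1+\ve)I - \ve C\dyad{\psi_{l}} = (1+\ve)(I - \Pi) + \big[(1+\ve)\Pi - \ve C\dyad{\psi_{l}}\big] \, .
\end{align*}
Here $I - \Pi$ is a sum of product projectors $\dyad{k}\otimes\dyad{k'}$, hence lies in $\Sep \subseteq \mrm{Ent}_{r}$, while the bracket, viewed on the corner, has eigenvalue $1 + \ve - \ve C$ on $\ket{\psi_{l}}$ and $1+\ve$ on the orthogonal directions, so it is positive semidefinite whenever $\ve(C-1) \le 1$ and therefore lies in $\Pos(\mbb{C}^{r}\otimes\mbb{C}^{r}) = \mrm{Ent}_{r}(\mbb{C}^{r}:\mbb{C}^{r}) \subseteq \mrm{Ent}_{r}(A:B)$. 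Hence $(1+\ve)I - \ve Y \in \mrm{Ent}_{r}$ for all small $\ve$, so $I \in \relint(\mrm{Ent}_{r})$.

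I expect the $\mrm{Ent}_{r}$ case — and within it, the single point that the perturbed identity stays inside $\mrm{Ent}_{r}$ rather than merely inside $\Pos$ — to be the only real obstacle; the corner decomposition above (equivalently, the separable-ball fact) is precisely what settles it, and the $\cI_{r}$, $\cI_{\mbb{P}}$ cases are easy by comparison. With $I \in \relint(\cK)$ verified for $\cK \in \{\mrm{Ent}_{r}, \cI_{r}, \cI_{\mbb{P}}\}$, the corollary follows at once from the Strong Duality proposition.
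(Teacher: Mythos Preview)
Your proposal is correct and, on its main thread, essentially matches the paper: both reduce the corollary to verifying $I \in \relint(\cK)$ and then invoke the Strong Duality proposition; for $\mrm{Ent}_{r}$ both appeal to the Gurvits--Barnum separable ball around $I$ combined with $\Sep \subseteq \mrm{Ent}_{r}$, and for the incoherent cones both observe that $(1+\ve)I - \ve Y$ remains diagonal (resp.\ block-diagonal) and positive for small $\ve$.

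Where you go beyond the paper is your second, self-contained route for $\mrm{Ent}_{r}$ via the corner decomposition: reducing to a single rank-$\le r$ pure perturbation by convexity, rotating it into a fixed $\mbb{C}^{r}\otimes\mbb{C}^{r}$ block by local unitaries, and using $\Pos(\mbb{C}^{r}\otimes\mbb{C}^{r}) = \mrm{Ent}_{r}(\mbb{C}^{r}:\mbb{C}^{r})$ there while the complement $I-\Pi$ is manifestly separable. This avoids importing the Gurvits--Barnum bound entirely and makes the argument internal to the $\mrm{Ent}_{r}$ structure, which is a nice conceptual gain; the paper's approach is shorter but leans on an external result. Your handling of $\cI_{\mbb{P}}$ is also slightly more explicit than the paper's, which dispatches both incoherent cones at once by arguing for $\cI_{1}$ and then citing containment.
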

\begin{proof}
By definition of relative interior \eqref{eqn:relint-defn}, we want $\forall Y \in \cK$, there to exist $\varepsilon > 0$ such that $(1+\ve)I - \ve Y \in \cK$. As $\cK$ is a cone, this is equivalent to $I - \frac{\ve}{1+\ve}Y \in \cK$. It is known that $\I^{AB} + \Delta \in \Sep(A:B)$ for $\|\Delta\|_{2} \leq 1$ where $\Delta \in \Herm(A \otimes B)$\cite{Gurvits-02a}. Therefore, there exists $\varepsilon > 0$ such that $I^{AB} + \frac{\varepsilon}{1+\varepsilon}\Delta \in \Sep(A:B)$ for any $\Delta \in \Herm(A \otimes B)$ by choosing it so $\|\frac{\ve}{1+\ve}\Delta\|_{2} \leq 1$. In particular this holds for $\Delta \in \cK$. This proves $\I \in \relint(\Ent_{r}(A:B))$ as $\Sep(A:B) \subseteq \mrm{Ent}_{r}$. This proves strong duality for $\cK = \mrm{Ent}_{r}$ by the previous proposition.

For the cones pertaining for incoherent operators, let $Y \in \cI$. Then $Y = \sum_{i} \alpha_{i} \dyad{i}$. If $Y = 0$, then for $\ve > 0$, $(1+\ve)I - \ve Y \in \cI$ trivially. Otherwise, let $\ve = (\max_{i}\alpha_{i})^{-1}$, so $(1+\ve)I - \ve Y \in \cI$. So by \eqref{eqn:relint-defn}, $I \in \relint(\cI)$ and by containments this holds for any resource theory of coherence that contains a full basis. This completes the proof.
\end{proof}
\begin{corollary}\label{corr:rel-int-identity}
For $\alpha > 0$, $\alpha I^{AB} \in \relint(\Ent_{r}(A:B))$ and $\alpha I^{A} \in \cI(A)$.
\end{corollary}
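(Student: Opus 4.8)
This is a direct consequence of Corollary~\ref{corr:strong-duality-for-resources}, whose proof already establishes $\I^{AB} \in \relint(\Ent_r(A:B))$ and $\I^{A} \in \relint(\cI(A))$, combined with one elementary general fact: the relative interior of a cone is invariant under strictly positive rescaling. So the plan is to prove that auxiliary fact and then just apply it.

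First I would record the lemma: if $\cK$ is a cone and $x \in \relint(\cK)$, then $\alpha x \in \relint(\cK)$ for every $\alpha > 0$. To verify the defining condition \eqref{eqn:relint-defn} for $\alpha x$, fix $y \in \cK$; since $\cK$ is a cone, $y/\alpha \in \cK$, so there is some $\ve > 0$ with $(1+\ve)x - \ve (y/\alpha) \in \cK$. Multiplying this element by $\alpha \geq 0$ and using closure of $\cK$ under nonnegative scaling once more yields $(1+\ve)(\alpha x) - \ve y \in \cK$, which is exactly what is needed. Note that the coefficients $(1+\ve)$ and $\ve$ are not symmetric in \eqref{eqn:relint-defn}, which is why the argument routes through $y/\alpha$ rather than a naive substitution; this is the only point requiring any care, and it is a minor one.

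Then I would apply the lemma twice: with $x = \I^{AB}$ and $\cK = \Ent_r(A:B)$, and with $x = \I^{A}$ and $\cK = \cI(A)$, invoking Corollary~\ref{corr:strong-duality-for-resources} for the base cases. In the coherence case this gives the (stronger, evidently intended) statement $\alpha \I^{A} \in \relint(\cI(A))$, which in particular implies $\alpha \I^{A} \in \cI(A)$. I expect no real obstacle here — the entire content is the cone-rescaling bookkeeping above, everything else being quoted from the preceding corollary.
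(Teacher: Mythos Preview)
Your proposal is correct and essentially follows the same approach as the paper: both reduce the claim to the observation that membership in $\relint(\cK)$ for a cone $\cK$ is invariant under positive scaling, via the manipulation $(1+\ve)(\alpha x) - \ve y = \alpha\bigl[(1+\ve)x - \ve(y/\alpha)\bigr]$. The only cosmetic difference is that the paper inlines this computation by re-invoking the Gurvits separable ball directly (writing $I - \tfrac{\ve}{\alpha(1+\ve)}\Delta \in \Sep$), whereas you cleanly abstract it into a general lemma and then cite Corollary~\ref{corr:strong-duality-for-resources}; your version is slightly more modular but the content is identical.
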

\begin{proof}
This follows from noting for $\alpha > 0$, there exists $\varepsilon > 0$ such that $I-\frac{\varepsilon}{\alpha(1+\varepsilon)} \Delta \in \Sep(A:B)$ for $\Delta \in \Herm(A \otimes B)$. The coherent operator result is straightforward.
\end{proof}

Lastly, following \eqref{eqn:smoothed-max-rel-ent-defn}, we define the smoothed restricted max-relative entropy:
\begin{align}\label{eq:smoothed-conic-max-rel-ent} D_{\max}^{\cK,\ve}(\rho||\sigma) = \min_{\wt{\rho} \in \Bve(\rho)} D_{\max}^{\cK}(\wt{\rho}||\sigma) \ .
\end{align}
Note this is the necessary definition as otherwise we do not recover the smooth max-relative entropy when we let $\cK = \Pos$.

\subsection{Restricted Min-Entropy}
We now use the restricted max-relative entropy to derive the restricted min-entropies. We note that all of this had been worked out for the positive semidefinite cone (see e.g.\ \cite{Tomamichel-2015}) and the separable cone \cite{Chitambar-2021a}. The contribution in this section is simply the generalized choice of cone, which has been similarly but distinctly considered in \cite{Jencova-2021a,Gour-2020a,Ji-2021a} as we discuss in the subsequent section.

Following the general rule for inducing entropies \eqref{eqn:rule-for-inducing-entropies}, it follows that for closed convex cone $\cK$ and $P \in \Pos(A \otimes B)$, we would define 
\begin{align}\label{eqn:K-min-entropy-simple}
H_{\min}^{\cK}(A|B)_{P} 
:= - \inf_{\sigma_{B} \in \Density(B)} D_{\max}^{\cK}(P_{AB}||I_{A} \otimes \sigma_{B})
\end{align}
where we have used $\sup -f(x) = - \inf f(x)$. Recalling the definition of $D_{\max}^{\cK}$ \eqref{eqn:Dmax-defn},
\begin{align*}
     & H_{\min}^{\cK}(A|B)_{P} \\
     =& - \log\left( \inf_{\sigma_{B} \in \Density(B)}\inf\{\gamma \in \mbb{R} : P \preceq_{\cK^{\ast}} \gamma I_{A} \otimes \sigma_{B} \} \right) \ .
\end{align*}
Noting that any feasible $\gamma \sigma_{B} \in \Pos(B)$ for $P \in \Pos(A \otimes B)$, we can combine the two optimizations and simplify to the following conic programs with strong duality under roughly the same condition as before.

\begin{proposition}
Let $P \in \Pos(A \otimes B)$. Consider closed, convex cone $\cK \subset \Pos(A \otimes B)$ such that $d^{-1}_{A}\I_{AB} \in \mrm{\relint}(\cK)$. Then $H^{\cK}_{\min}(A|B)_{\rho} = -\log(\alpha)$ where $\alpha$ is obtained by both the primal and dual:
\begin{center}
    \begin{miniproblem}{0.45}
      \emph{Primal}\\[-5mm]
      \begin{equation}
      \begin{aligned}\label{eqn:HminKPrimal}
        \text{maximize:}\quad & \ip{P}{X} \\
        \text{subject to:}\quad & \Tr_{A}{X} = I_{B} \\
        & X \in \cK
      \end{aligned}
      \end{equation}
    \end{miniproblem}
    \begin{miniproblem}{0.45}
      \emph{Dual}\\[-5mm]
      \begin{equation}
      \begin{aligned}\label{eqn:HminKDual}
        \text{minimize:}\quad & \Tr[Y]  \\
        \text{subject to:}\quad & I_{A} \otimes Y \succeq_{\cK^{\ast}} P \\
        & Y \in \Herm(B) \ .
      \end{aligned}
      \end{equation}
    \end{miniproblem}
 \end{center}
\end{proposition}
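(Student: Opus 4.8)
The plan is to run, for $H^{\cK}_{\min}$, the same three-step recipe already used for $D^{\cK}_{\max}$: collapse the two infima into a single conic program, read off its dual from the template \eqref{eqn:conicPrimal}--\eqref{eqn:conicDual}, and then invoke Slater's condition for strong duality. Starting from the nested-infimum expansion of $H^{\cK}_{\min}(A|B)_{P}$ displayed just above the statement, I would make the substitution $Y := \gamma\sigma_{B}$, which puts the pair $(\gamma,\sigma_{B}) \in (0,\infty)\times\Density(B)$ in bijection with $Y\in\Pos(B)\setminus\{0\}$ via $\gamma = \Tr[Y]$ and $\gamma\,I_{A}\otimes\sigma_{B} = I_{A}\otimes Y$. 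Any feasible $\gamma$ is strictly positive (pair the $\cK^{\ast}$-membership relation with $d_{A}^{-1}\I_{AB}\in\cK$ to get $\gamma d_{A}\geq\Tr[P]$, treating $P=0$ separately with the convention $\log 0 := -\infty$), so the merge gives
\begin{equation*}
\exp\!\left(-H^{\cK}_{\min}(A|B)_{P}\right) = \inf\left\{\Tr[Y] : Y\in\Pos(B),\ I_{A}\otimes Y \succeq_{\cK^{\ast}} P\right\}\, .
\end{equation*}

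Next I would recognise the right-hand side as a conic program in the standard dual form \eqref{eqn:conicDual} with $W = \Herm(B)$, linear map $\phi = \Tr_{A}$ (so that $\phi^{\ast}(Y) = I_{A}\otimes Y$ by the partial-trace/tensor adjoint), $a = P$, $b = I_{B}$; its associated primal \eqref{eqn:conicPrimal} is exactly \eqref{eqn:HminKPrimal} and its dual is \eqref{eqn:HminKDual}. The one place that needs care is that the merged infimum carries the constraint $Y\in\Pos(B)$ while \eqref{eqn:HminKDual} only demands $Y\in\Herm(B)$; I would justify this relaxation by observing that any $Y\in\Herm(B)$ with $I_{A}\otimes Y - P\in\cK^{\ast}$ satisfies $\langle I_{A}\otimes Y,X\rangle\geq\langle P,X\rangle\geq 0$ for all $X\in\cK$, hence $I_{A}\otimes Y\in\cK^{\ast}$ and $\Tr[Y]\geq 0$, and that an optimal dual solution may therefore be taken positive semidefinite, so the two formulations agree. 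This parallels the standard case ($\cK=\Pos$), where $I_{A}\otimes Y\succeq P\succeq 0$ forces $Y\succeq 0$ outright.

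Strong duality is then immediate from Slater's condition, just as in the preceding proposition. The primal \eqref{eqn:HminKPrimal} is strictly feasible, since $X = d_{A}^{-1}\I_{AB}$ lies in $\relint(\cK)$ by hypothesis and $\Tr_{A}(d_{A}^{-1}\I_{AB}) = I_{B}$; the dual \eqref{eqn:HminKDual} is feasible, since $Y = \|P\|_{\infty}I_{B}$ gives $I_{A}\otimes Y - P = \|P\|_{\infty}\I_{AB}-P\succeq 0 \in \Pos(A\otimes B)\subseteq\cK^{\ast}$. The primal feasible set $\{X\in\cK : \Tr_{A}X = I_{B}\}$ is closed and has fixed trace $\Tr[X]=\Tr[I_{B}]=d_{B}$, hence compact, so the primal maximum is attained; Slater then yields strong duality with the dual minimum attained as well. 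Writing $\alpha$ for the common optimal value and chaining with the display above gives $H^{\cK}_{\min}(A|B)_{P} = -\log\alpha$.

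The main (and rather minor) obstacle is the bookkeeping in the first two steps: verifying that the two infima genuinely merge --- the degenerate cases in which no $\sigma_{B}$ achieves $P\ll I_{A}\otimes\sigma_{B}$ in the L\"{o}wner sense are harmless because $\cK^{\ast}\supseteq\Pos$ makes $\succeq_{\cK^{\ast}}$ weaker than L\"{o}wner dominance --- and confirming that passing from $Y\in\Pos(B)$ to $Y\in\Herm(B)$ in the dual does not change the optimal value. The strong-duality half is a verbatim reprise of the argument already given for $D^{\cK}_{\max}$.
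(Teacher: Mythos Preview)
Your approach mirrors the paper's proof: substitute $Y := \gamma\sigma_B$ to collapse the nested infima, identify the primal--dual pair via $\phi = \Tr_A$, $\phi^*(Y) = I_A \otimes Y$ in the template \eqref{eqn:conicPrimal}--\eqref{eqn:conicDual}, and invoke Slater with $d_A^{-1}I_{AB}$ as the strictly feasible primal point and a multiple of $I_B$ as a dual feasible point. The paper dispatches all of this in three sentences; you supply considerably more detail and, to your credit, explicitly flag the one nontrivial step --- relaxing $Y \in \Pos(B)$ to $Y \in \Herm(B)$ --- which the paper passes over in silence.

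However, your justification for that relaxation has a gap. You correctly deduce that any dual-feasible $Y$ satisfies $I_A \otimes Y \in \cK^*$ and hence $\Tr[Y] \geq 0$, but nonnegativity of the trace does \emph{not} imply an optimal $Y$ may be taken positive semidefinite, and under only the stated hypotheses the two formulations can genuinely differ. Concretely, with $A = B = \mathbb{C}^2$, $\cK = \{cI_{AB} + d\dyad{00} : c \geq 0,\ |d| \leq c\epsilon\}$ (closed, convex, contained in $\Pos$, with $d_A^{-1}I_{AB}\in\relint(\cK)$), and $P = \dyad{00}$, the dual over $\Herm(B)$ equals $1/2$ (attained at $Y = \text{diag}(1,-1/2)$) while $\exp(-H^{\cK}_{\min}) = (1+\epsilon)/(2+\epsilon)$. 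The relaxation \emph{is} harmless whenever $\Sep(A:B) \subseteq \cK$ --- in particular for $\Ent_r$ and $\PPT$, the cones the paper actually uses --- since pairing $I_A \otimes Y - P \in \cK^* \subseteq \Sep^*$ with any $\dyad{a}_A \otimes \dyad{b}_B$ yields $\langle b|Y|b\rangle \geq \langle a\otimes b|P|a\otimes b\rangle \geq 0$ for all $\ket{b}$, forcing $Y \succeq 0$ outright. So the plan is sound for the cones of interest, but the ``therefore PSD'' inference fails at the stated level of generality.
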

\begin{proof}
We get the dual from using $Y:= \gamma \sigma_{B} \in \Pos(B)$ and $\Tr[Y] = \gamma$. The primal then follows using the well known fact that the map $\phi^{\ast}(Y) = \I_{A} \otimes Y$ has the adjoint map $\Tr_{A}$. Strong duality then follows because by assumption $d_{A}^{-1}I_{AB} \in \relint(\cK)$ and it's feasible as $\Tr_{A}[d_{A}^{-1}I^{AB}] = I_{B}$. The dual is always feasible because $P \in \Pos(A \otimes B)$, there exists $\gamma \geq 0$ such that $I_{A} \otimes \gamma I_{B} \geq P$. So by Slater's criterion for strong duality, strong duality holds.
\end{proof}
\begin{corollary}
By Corollary \ref{corr:rel-int-identity} and the previous proposition, for $\cK = \mrm{Ent}_{r}(A:B)$, $H_{\min}^{\cK}(A:B)$ satisfies strong duality.
\end{corollary}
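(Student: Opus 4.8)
The plan is to simply verify the hypothesis of the preceding proposition for the specific cone $\cK = \mrm{Ent}_{r}(A:B)$ and then quote that proposition's conclusion. The preceding proposition establishes that $H_{\min}^{\cK}(A|B)_{\rho}$ is given by a pair of conic programs with strong duality whenever $d_{A}^{-1}\I_{AB} \in \relint(\cK)$; the entanglement rank cone is closed and convex (as established in the discussion following Definition \ref{defn:ent-rank}), so the only thing left to check is the relative-interior condition on the scaled identity.

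First I would invoke Corollary \ref{corr:rel-int-identity}, which asserts precisely that $\alpha \I^{AB} \in \relint(\Ent_{r}(A:B))$ for every $\alpha > 0$. Taking $\alpha = d_{A}^{-1}$ gives $d_{A}^{-1}\I_{AB} \in \relint(\Ent_{r}(A:B))$, which is exactly the Slater-type condition required. With that in hand, the previous proposition applies verbatim with $\cK \to \mrm{Ent}_{r}(A:B)$ and $\cK^{\ast} \to \mrm{Ent}_{r}(A:B)^{\ast}$, yielding that the primal \eqref{eqn:HminKPrimal} and dual \eqref{eqn:HminKDual} attain a common value $\alpha$ with $H_{\min}^{\cK}(A|B)_{\rho} = -\log(\alpha)$.

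There is essentially no obstacle here: the corollary is a one-line consequence of the fact that $\Sep(A:B) \subseteq \mrm{Ent}_{r}(A:B)$ together with Gurvits' ball result $\I^{AB} + \Delta \in \Sep(A:B)$ for $\|\Delta\|_{2} \le 1$, both of which are already recorded in the proof of Corollary \ref{corr:strong-duality-for-resources}. If anything needed emphasis, it would be that closedness and convexity of $\mrm{Ent}_{r}(A:B)$ — needed so that the programs are genuine maximizations/minimizations and so Slater's condition can be applied — were indeed verified earlier via Proposition \ref{prop:cone-to-density} and the compactness of the set of bounded-Schmidt-rank pure states. So the proof amounts to: $\mrm{Ent}_{r}(A:B)$ is closed and convex, $d_{A}^{-1}\I_{AB} \in \relint(\mrm{Ent}_{r}(A:B))$ by Corollary \ref{corr:rel-int-identity}, hence strong duality holds by the preceding proposition.
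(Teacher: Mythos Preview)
Your proposal is correct and matches the paper's approach exactly: the corollary in the paper has no separate proof beyond its own statement, which simply cites Corollary~\ref{corr:rel-int-identity} (giving $d_A^{-1}\I_{AB}\in\relint(\Ent_r(A:B))$) and the preceding strong-duality proposition. Your write-up just unpacks those two citations, which is precisely what is intended.
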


The final piece of the framework we will need is an equivalent version of the primal problem \eqref{eqn:HminKPrimal}. This was first proven in \cite{Konig-2009a} in the case $\cK = \Pos$. It was noted you could do this for other cones in \cite{Chitambar-2021a}. Our presentation follows that of \cite{Tomamichel-2015} closely.
\begin{proposition}\label{prop:min-entropy-singlet-fraction}
The conic program \eqref{eqn:HminKPrimal} for a state $P \in \Pos(A \otimes B)$ is equivalent to 
\begin{align}\label{eq:min-entropy-singlet-fraction}
d_{A} \max_{\substack{\Psi \in \mrm{CPTP}(B,A'): \\ J_{\Psi^{\ast}} \in \cK}} \langle (\id_{A} \otimes \Psi)(P), \tau^{AA'} \rangle 
\end{align}
\end{proposition}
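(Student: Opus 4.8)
The plan is to establish the equivalence by a change of variables that turns the feasible operator $X \in \cK$ with $\Tr_A[X] = I_B$ into the Choi operator of a channel. First I would recall the standard correspondence (as in \cite{Tomamichel-2015}) used for $\cK = \Pos$: given $X \in \Pos(A \otimes B)$ with $\Tr_A[X] = I_B$, one defines a linear map $\Psi \in \Trans(B, A')$ (where $A' \cong A$) whose Choi operator is essentially $X$ up to the appropriate transpose/relabeling, and the constraint $\Tr_A[X] = I_B$ is exactly the condition that the \emph{adjoint} map is trace-preserving, equivalently that $\Psi$ is trace-preserving after the relabeling. Concretely, one sets $J_{\Psi^*}^{BA'} = X$ (with $A' \cong A$), so that trace-preservation of $\Psi$ corresponds to $\Tr_A[X] = I_B$ and complete positivity of $\Psi$ corresponds to $X \in \Pos$. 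The key new point is simply that replacing $\Pos$ by $\cK$ changes ``$X \in \Pos$'' (i.e. $\Psi$ is CP) into ``$X \in \cK$'' (i.e. $J_{\Psi^*} \in \cK$), which is precisely the constraint appearing in \eqref{eq:min-entropy-singlet-fraction}; since $\cK \subseteq \Pos$, such $\Psi$ is automatically CP, hence the optimization is over genuine channels $\Psi \in \mrm{CPTP}(B,A')$ with the extra restriction $J_{\Psi^*} \in \cK$.

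Next I would compute the objective under this substitution. Using the vec-map identity \eqref{eqn:vec-map-identity} and the definition of the Choi operator, I would show that
\begin{align*}
\langle (\id_A \otimes \Psi)(P), \tau^{AA'} \rangle = \frac{1}{d_A} \langle P, J_{\Psi^*}^{\text{(appropriately placed)}} \rangle = \frac{1}{d_A} \langle P, X \rangle \ ,
\end{align*}
where the middle equality is the standard ``transpose trick'' identity relating the overlap of $(\id \otimes \Psi)(P)$ with the maximally entangled state to the Hilbert--Schmidt inner product of $P$ with the Choi operator of $\Psi^*$, and the placement of tensor factors is arranged so that $\tau^{AA'}$ pairs the first system of $P$ with the output of $\Psi$. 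Multiplying by $d_A$ then matches the objective $\ip{P}{X}$ of \eqref{eqn:HminKPrimal} exactly. Because the map $X \mapsto \Psi$ is a linear bijection carrying the feasible set $\{X \in \cK : \Tr_A[X] = I_B\}$ onto $\{\Psi \in \mrm{CPTP}(B,A') : J_{\Psi^*} \in \cK\}$, and carrying the objective $\ip{P}{X}$ onto $d_A \langle (\id_A \otimes \Psi)(P), \tau^{AA'}\rangle$, the two optimization problems have the same optimal value.

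The main obstacle I expect is purely bookkeeping: getting the Choi conventions, the placement of the primed system, and the swap/transpose in the relation $J^{\Phi^*}_{BA} = (\mbb{F} J^{\Phi}_{AB} \mbb{F})^{\Trans}$ all consistent, so that the constraint ``$J_{\Psi^*} \in \cK$'' lands on the right operator and the identity $\Tr_A[X] = I_B \Leftrightarrow \Psi \in \mrm{TP}$ comes out correctly rather than with $\Psi^*$ trace-preserving. One has to be careful that $\cK$ is a cone in $\Pos(A \otimes B)$ and that after the relabeling $A' \cong A$ the Choi operator $J_{\Psi^*}$ lives in the same space where $\cK$ is defined. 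Once the conventions are pinned down, everything else is linear algebra: injectivity and surjectivity of the Choi--Jamiolkowski correspondence restricted to the affine slice $\Tr_A[X] = I_B$, and the inner-product computation via \eqref{eqn:vec-map-identity}. I would also remark that nothing about $\cK$ beyond $\cK \subseteq \Pos$ is used here, so the statement holds for every closed convex $\cK$; the strong-duality and relative-interior hypotheses are not needed for this particular equivalence.
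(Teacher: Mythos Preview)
Your proposal is correct and follows essentially the same route as the paper. The paper identifies $X$ with $J_{\Phi}$ for a CP unital map $\Phi:A'\to B$, then substitutes $\Psi=\Phi^{\ast}$ so that the CPU constraint becomes $\Psi\in\mrm{CPTP}(B,A')$ with $J_{\Psi^{\ast}}\in\cK$, and uses the Choi--operator identity $\langle P,J_{\Phi}\rangle=d_{A}\langle(\id_{A}\otimes\Phi^{\ast})(P),\tau_{AA'}\rangle$; you do the same thing but skip naming the intermediate unital map, setting $X=J_{\Psi^{\ast}}$ directly and reading off TP of $\Psi$ from $\Tr_{A}X=I_{B}$. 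Your bookkeeping caveats are apt but, as you anticipated, once the conventions are fixed there is no further content.
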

\begin{proof}
Consider the primal problem \eqref{eqn:HminKPrimal}. Note that, $\Tr_{A}(X) = \I_{A}$ implies that $X = J_{\Phi}$ for some unital CP map $\Phi:A' \to B$. Thus, we have the optimal value is given by 
\begin{align*}
    \underset{\Phi \in \mrm{CPU}(A,B):\;J_{\Phi} \in \cK }{\max} \langle P, J_{\Phi} \rangle 
    = &\max_{\Phi \in \mrm{CPU}(A,B):\; J_{\Phi} \in \cK} d_{A} \langle (\id_{A} \otimes \Phi^{\ast})(P), \tau_{A\wt{A}} \rangle \\
    = &d_{A} \max_{\Psi \in \Channel(B,A):\; J_{\Psi^{\ast}} \in \cK} \langle (\id_{A} \otimes \Psi)(P), \tau_{A\wt{A}} \rangle \ ,
\end{align*}
where the second equality is by definition of Choi operator and re-normalizing the maximally entangled state, and the third equality is using that the adjoint of a CP unital (CPU) map is a CPTP map.
\end{proof}
Lastly, following \eqref{eq:smoothed-min-ent}, we define the smoothed restricted min-entropy:
\begin{align}\label{eq:smoothed-conic-min-entropy}
    H_{\min}^{\cK,\ve}(A|B) = \max_{\wt{\rho} \in \Bve(\rho)} H_{\min}(A|B)_{\wt{\rho}} \ .
\end{align}
Note this must be the definition if we are to recover the standard smooth min-entropy when $\cK = \Pos$.

\subsection{Other Frameworks and the Support Function}\label{sec:other-frameworks}
With the basic framework introduced, it is worth mentioning how this relates to other works as it will in particular clarify a technical distinction running throughout this work. We begin by discussing \cite{Jencova-2021a}, which, while presented in category-theoretic terms, investigates min-entropic quantities restricted to various cones. Interestingly, while the basic idea is quite similar, the approach obtains distinctly different results. In effect, the crucial difference in their framework is that the way they construct the min-entropy results in always considering $H_{\min}^{\cK}(B|A)$ for $\rho_{AB}$. One way to see this distinction could be important is the following. For the case that $\cK$ such that $J_{\Phi} \in \cK$ if and only if $J_{\Phi^{\ast}} \in \cK^{\ast}$, one can use Proposition \ref{prop:min-entropy-singlet-fraction} to generalize \cite[Eqn. 37]{Gour-2019a} to show
\begin{equation}\label{eqn:min-ent-cptp-support}
\supp_{\Channel^{\cK}(A,B)}(P) = \exp(-H_{\min}^{\cK}(B|A)_{P}) \ ,
\end{equation}
where $\Channel^{\cK}(A,B) := \{J_{\Lambda} \in \cK : \Lambda \in \Channel(A,B)\}$ (see also \cite{Gour-2020a}).
In contrast, under the same assumption on $\cK$, what Proposition \ref{prop:min-entropy-singlet-fraction} tells us is that
\begin{equation}\label{eqn:min-ent-unital-support}
 \supp_{\CPU^{\cK}(A,B)}(P) = \exp(-H_{\min}^{\cK}(A|B)_{P}) \ ,
\end{equation}
where $\CPU^{\cK} := \{J_{\Lambda} \in \cK : \Lambda \in \mrm{CPU}(A,B)\}$. 
One would not in general suspect the support on the space of unital maps and quantum channels to be the same or capture the same operational meaning. We believe this is why our results are distinct. Beyond this technical point, in effect due to the generality of the category theory, the authors did not address the transition beyond the one-shot setting, which is a primary focus of this work. In doing so, we partially resolve a suggested future direction of \cite{Jencova-2021a}.

Similarly, in \cite{Ji-2021a}, the authors do introduce the restricted min-entropy for convex cones, but again the ordering is flipped with respect to this work, and they don't consider moving from one-shot to asymptotic results. Lastly, in \cite{Matthews-2014a}, the authors considered one-shot hypothesis testing \eqref{eqn:hyp-test-defn} where the optimizer is restricted to alternative sets. While this is a similar methodology, the present work only briefly considers such a measure. 

\section{The Positive Cone Is Necessary for Fully Quantum Information Theory}\label{sec:anti-stein-lemma}
With the general framework constructed, we now use this to show that the ability to recover asymptotic results from one-shot entropies requires the L\"{o}wner order, i.e.\ the positive cone. The intuition for this is as follows:
\begin{enumerate}
\item Under specific conditions, $D^{\cK}_{\max}$ and $H^{\cK}_{\min}$ behave as support functions for the set density matrices in $\cK$.
\item For a pure state, its support must change from one when it is in the (closed) cone to strictly less than one when it is not.
\item Combining these points, we expect that for the choice of the right pure state $D_{\max}^{\cK},H^{\cK}_{\min}$ can behave sufficiently different on the same state by varying the cone.
\end{enumerate}
This intuition is depicted for $D_{\max}^{\cK}$ in Fig. \ref{fig:Geometric-Interpretation-of-Dmax}. 

Furthermore, note that this intuition also is relevant to resource theories. Specifically, since a convex resource theory has a set of free states $\cF$ which induce a cone $\cK_{\cF}$, we could consider the $\cK_{\cF}$-max relative entropy. Then this is saying under certain conditions we expect $D_{\max}$ and $D^{\cK_{\cF}}_{\max}$ to behave the same on free states but differ for resourceful states. We further this connection in subsequent sections.

\begin{figure}[H]
    \centering
    \includegraphics[width=0.7\columnwidth]{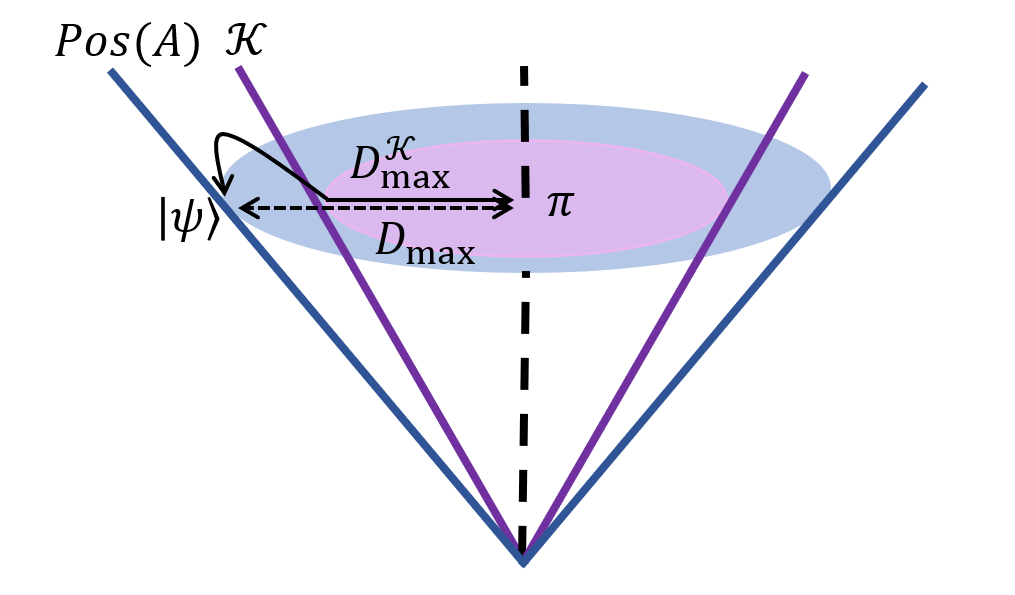}
    \caption{Geometric intuition of the restricted max-relative entropy measuring the distance from the maximally mixed state to a pure state. Viewing the path distance as representing the value of the measure, we see that as long as $\rho \in \cK$, $D_{\max}^{\cK}(\rho||\pi) = D_{\max}(\rho||\pi)$ as they are both flat lines. However, if $\rho \not \in \cK$, the path distance outside of the cone is altered. The black dotted line represents that $\alpha\I \in \relint(\cK)$ for all $\alpha > 0$.}
    \label{fig:Geometric-Interpretation-of-Dmax}
\end{figure}

\subsection{The Anti-Standard Max-Divergence AEP}
In this subsection we show sufficient conditions for breaking that the AEP for the cone-restricted max-divergence converges to the relative entropy. This allows us to give sufficient conditions for when the measure cannot characterize hypothesis testing of quantum states even to first order given Proposition \ref{prop:stein-first-order-AEP}. In doing so, we reveal necessary conditions for fully quantum hypothesis testing. At the same time, we can instead think of the intuition as arising from resource theories. Specifically, what will be crucial in this section will be that the support of pure states within the cone is bounded from a target state outside of the cone. Interpreting the cone as induced by free states, we can think of this as the requirement that the free states are sufficiently bounded away from a maximally resourceful state under the inner product. We can then think of the intuition as being that when the free states are sufficiently bounded away from a maximal resource, the induced measure can't behave like the un-restricted measure i.e.\ the max-relative entropy $D_{\max}$. This in turn means the fundamental limits of fully quantum information theory cannot be achieved in the restricted theory. We begin by capturing these ideas in the one-shot zero-error setting.

\begin{proposition}\label{prop:Dkmax-pure-state-outside-cone}
    Let $\cK \subsetneq \Pos(A)$ be a closed, convex cone such that $\I \in \relint(\cK)$. Let $\psi \in \Density(A)$ be a pure state (i.e.\ $\Tr[\psi^{2}] = 1$) and $\psi \not \in \cK$. It follows
    $$ D^{\cK}_{\max}(\psi || \pi) < D(\psi||\pi) = D_{\max}(\psi||\pi) \ , $$
    where $\pi := d_{A}^{-1}I_{A}$.
\end{proposition}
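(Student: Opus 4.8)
The plan is to compute the two quantities explicitly using the support-function interpretation. First I would recall that since $\pi = d_A^{-1} I_A$, the constraint $\psi \preceq_{\cK^*} \gamma\pi$ in the dual program for $D^{\cK}_{\max}$ reads $\gamma d_A^{-1} I_A - \psi \in \cK^*$, so after writing $\gamma = d_A \mu$ we obtain
\begin{align*}
\exp\!\big(D^{\cK}_{\max}(\psi\|\pi)\big) = d_A \inf\{\mu : \mu I_A - \psi \in \cK^*\}.
\end{align*}
By the primal program \eqref{eqn:DmaxKPrimal} together with strong duality (which holds here because $I \in \relint(\cK)$), this equals $d_A \sup\{\ip{\psi}{X} : \ip{\pi}{X}\le 1,\ X\in\cK\}$, i.e.\ $d_A$ times the support function $\supp_{\cD}(\psi)$ over $\cD := \{X \in \cK : \Tr[X] \le d_A\}$. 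In particular, normalizing, $\exp(D^{\cK}_{\max}(\psi\|\pi)) = d_A \sup\{\ip{\psi}{\rho} : \rho \in \cK\cap\Density(A)\}$. On the other side, when $\cK = \Pos(A)$ the same computation gives $\exp(D_{\max}(\psi\|\pi)) = d_A \sup\{\ip{\psi}{\rho} : \rho\in\Density(A)\} = d_A$, the supremum being attained at $\rho = \psi$ itself since $\psi$ is pure; and $D(\psi\|\pi) = \log d_A$ as well, since $\psi$ is pure (the von Neumann entropy term vanishes). So it remains to show $\sup\{\ip{\psi}{\rho} : \rho \in \cK\cap\Density(A)\} < 1$.

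The key step is the following: I claim $\sup\{\ip{\psi}{\rho} : \rho \in \cK\cap\Density(A)\} = 1$ would force $\psi \in \cK$, contradicting the hypothesis. Here is why. The set $\cK\cap\Density(A)$ is compact — it is closed (intersection of the closed cone $\cK$ with the closed set $\Density(A)$) and bounded — so the supremum is attained at some $\rho^\star \in \cK\cap\Density(A)$. Now $\ip{\psi}{\rho^\star} = \Tr[\psi\rho^\star] \le \|\rho^\star\|_\infty \le \Tr[\rho^\star] = 1$, with the first inequality following from $\psi = \dyad{\psi}$ being a rank-one projector, so $\Tr[\psi\rho^\star] = \bra{\psi}\rho^\star\ket{\psi} \le \|\rho^\star\|_\infty$. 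Equality $\ip{\psi}{\rho^\star} = 1$ then forces $\bra{\psi}\rho^\star\ket{\psi} = \|\rho^\star\|_\infty = 1 = \Tr[\rho^\star]$, which (since $\rho^\star \succeq 0$ has unit trace and operator norm $1$) means $\rho^\star = \dyad{\psi} = \psi$. But $\rho^\star \in \cK$, so this gives $\psi\in\cK$, a contradiction. Hence $\sup\{\ip{\psi}{\rho} : \rho\in\cK\cap\Density(A)\} < 1$ strictly, and therefore
\begin{align*}
D^{\cK}_{\max}(\psi\|\pi) = \log d_A + \log\sup\{\ip{\psi}{\rho}:\rho\in\cK\cap\Density(A)\} < \log d_A = D(\psi\|\pi) = D_{\max}(\psi\|\pi).
\end{align*}

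The main obstacle I anticipate is purely bookkeeping rather than conceptual: making sure the chain of reductions from the dual program to the support-function form is airtight, in particular handling the $P \ll Q$ condition needed to invoke the conic-program form (here $Q = \pi$ is full rank so $P \ll Q$ is automatic) and confirming that strong duality from the earlier proposition applies verbatim. One subtlety worth a sentence of care is that I should argue the optimal $\rho^\star$ can be taken to be a genuine density matrix (not merely subnormalized) — this is fine because scaling up any subnormalized optimizer to unit trace only increases $\ip{\psi}{\rho}$ while keeping membership in the cone $\cK$. The rank-one structure of $\psi$ is what makes the strict inequality work, and it is exactly the place where the hypothesis "$\psi$ pure" is used; for mixed states outside the cone one would only get $\le$, not $<$, in general.
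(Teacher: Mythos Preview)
Your proposal is correct and follows essentially the same approach as the paper: both reduce $D^{\cK}_{\max}(\psi\|\pi)$ to $\log d_A + \log\sup\{\langle\psi,\rho\rangle:\rho\in\cK\cap\Density(A)\}$ via the primal program and then argue the supremum is strictly below $1$ because $\psi\notin\cK$ and $\cK$ is closed. The only cosmetic difference is that you detour through the dual and invoke strong duality to reach the primal form, whereas the paper starts directly from the primal \eqref{eqn:DmaxKPrimal}; your more explicit justification that $\langle\psi,\rho^\star\rangle=1$ forces $\rho^\star=\psi$ is a nice touch the paper leaves implicit.
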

\begin{proof}
    Then 
    $$D(\psi||\pi) = \Tr[\psi \log \psi] - \Tr[\psi \log \pi] = \log(d_{A}) \ . $$
    As $\I \in \relint(\cK)$, we can use the primal problem \eqref{eqn:DmaxKPrimal}:
    \begin{align*}
        D^{\cK}_{\max}(\psi||\pi)
         =& \log(\max\{ \bra{\psi}X\ket{\psi} : \Tr[X] \leq d_{A} \, , \, X \in \cK \}) \\
        =&\log(d_{A} \max\{ \bra{\psi}\widehat{X}\ket{\psi} : \widehat{X} \in \Density \cap \cK \}) \\
        =& \log(d_{A}) + \log(\max\{\bra{\psi}\widehat{X}\ket{\psi} :  \widehat{X} \in \Density \cap \cK \}) \\
        <& \log(d_{A}) \ ,
    \end{align*}
    where the first equality is using the conic program \eqref{eqn:DmaxKPrimal} with the observations $\Tr[\psi X] = \bra{\psi}X\ket{\psi}$ and $\Tr[\pi X] = d_{A}^{-1}\Tr[X]$, the second is because the optimal $X^{\star}$ has trace of $d_{A}$ since $\cK$ is a cone, and if $X^{\star}$ is the optimizer, then $X^{\star} = d_{A} \widehat{X}$ where $\widehat{X} \in \Density \cap \cK$, the third equality is a standard property of logarithms, and the strict inequality is because, by assumption $\psi \not \in \cK$ and is a closed cone, so $\max\{\bra{\psi}\widehat{X}\ket{\psi} :  \widehat{X} \in \Density \cap \cK \} < 1$ and the logarithm must be strictly negative. In the case $\cK = \Pos$, then the above derivation shows $D_{\max}(\psi||\pi) = \log(d_{A})$, so this completes the proof.
\end{proof}
First, we again stress that if we think of $\cK$ as being induced by free states $\cF$, the above result shows the measure of a resourceful state from the maximally mixed (resource-less) state will differ between $D^{\cK}_{\max}$ and $D_{\max}$. Moreover, this is because the resourceful state's overlap with any free state is necessarily lower than overlap over the unrestricted set of density matrices, which contains the resourceful state itself. Furthermore, note that the above result shows that for pure states outside of the cone $\cK$, not only is $D^{\cK}_{\max} < D_{\max}$, but it is less than the standard relative entropy $D$. This is a strong subversion as $D < D_{\max}$ always holds, and this property may be viewed as the reason why the measure cannot characterize hypothesis testing even to first-order. Moreover, we note a standard way to prove that $D \leq D_{\max}$ is using that $\log$ is an operator monotone. It follows that the previous proposition implies, as we might expect, $\log$ does not have an equivalent property under any further restricted conic ordering. We can in fact further show that for restricted cones, we cannot satisfy the canonical requirements of a divergence.

\begin{corollary}\label{corr:DPI-breaks}
Let $\cK \subsetneq \Pos(A)$ be a closed, convex cone such that $\I \in \relint(\cK)$.  
Then $D^{\cK}_{\max}$ neither satisfies isometric invariance nor contractivity under CPTP maps (data-processing). That is, $D^{\cK}_{\max}$ is not a quantum R\'{e}nyi entropy.
\end{corollary}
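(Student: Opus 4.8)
The plan is to extract an explicit violation directly from Proposition~\ref{prop:Dkmax-pure-state-outside-cone}. The first step is to observe that the hypothesis $\cK\subsetneq\Pos(A)$ forces the existence of a \emph{pure} state lying outside $\cK$: if every pure state were in $\cK$, then $\cK\cap\Density(A)$ would be a convex set containing all extreme points of $\Density(A)$, hence equal to $\Density(A)$, and Proposition~\ref{prop:cone-to-density} would give $\cK=\mrm{cone}(\Density(A))=\Pos(A)$, a contradiction. Fix such a pure $\psi\in\Density(A)$ with $\psi\notin\cK$, and write $\pi:=d_A^{-1}I_A$. Proposition~\ref{prop:Dkmax-pure-state-outside-cone} then supplies the strict gap
\[ D^{\cK}_{\max}(\psi\|\pi) \;<\; \log d_A \;=\; D(\psi\|\pi) \;=\; D_{\max}(\psi\|\pi) \, . \]

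The heart of the argument is to exhibit a CPTP map that \emph{raises} the value of $D^{\cK}_{\max}$ on the pair $(\psi,\pi)$ back up to $\log d_A$. The cleanest such map—available whenever $\cK$ contains at least one pure state, which is the case for every resource cone in this paper ($\Sep$, $\mrm{Ent}_{r}$, $\cI_{r}$, $\cI_{\mbb{P}}$ all contain pure free states)—is the unitary conjugation $\mathcal{U}(\cdot)=U(\cdot)U^{\ast}$ chosen so that $U\psi U^{\ast}=\phi$ for a fixed pure $\phi\in\cK$. Since $\phi\in\cK\cap\Density(A)$, the primal program used in the proof of Proposition~\ref{prop:Dkmax-pure-state-outside-cone} has optimum $\max\{\bra{\phi}\widehat{X}\ket{\phi}:\widehat{X}\in\Density(A)\cap\cK\}=1$ (attained at $\widehat{X}=\phi$, and bounded by $1$ because $\bra{\phi}\widehat{X}\ket{\phi}\le\Tr[\widehat{X}]=1$), so that $D^{\cK}_{\max}(\phi\|\pi)=\log d_A$. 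Because $\mathcal{U}(\pi)=\pi$, this single computation shows $D^{\cK}_{\max}(\mathcal{U}(\psi)\|\mathcal{U}(\pi))=\log d_A> D^{\cK}_{\max}(\psi\|\pi)$: contractivity under the CPTP map $\mathcal{U}$ fails, and—since $\mathcal{U}$ is a unitary—isometric invariance fails along with it. As a quantum R\'{e}nyi divergence must satisfy both properties, $D^{\cK}_{\max}$ is not one.

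The step I expect to require the most care is the remaining case of a cone $\cK\subsetneq\Pos(A)$ with no pure element at all (e.g.\ a unitarily invariant cone strictly inside $\Pos$), where the rotation trick has no target. There the plan is to use an isometric \emph{embedding} $V\colon A\to A'$ into a strictly larger space: one checks from the primal program that $V\pi V^{\ast}=d_A^{-1}\Pi$ with $\Pi=VV^{\ast}$ a rank-$d_A$ projector, and that the extra dimensions of $A'$ leave enough room outside $\mrm{supp}(V\psi V^{\ast})$ to satisfy the constraint $\preceq_{\cK^{\ast}}$ more cheaply, so that $D^{\cK}_{\max}(V\psi V^{\ast}\|V\pi V^{\ast})>D^{\cK}_{\max}(\psi\|\pi)$; this again breaks both axioms simultaneously. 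The delicate point in that variant is pinning down the cone on the enlarged space and verifying this last inequality quantitatively, whereas for the concrete resource theories studied here the elementary unitary argument of the previous paragraph already closes the proof.
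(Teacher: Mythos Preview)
Your main argument (paragraphs~1--2) is exactly the paper's proof: choose a pure $\psi\notin\cK$, rotate it by a unitary onto a pure $\phi\in\cK$, and read off from the primal program that $D^{\cK}_{\max}(\phi\|\pi)=\log d_A>D^{\cK}_{\max}(\psi\|\pi)$, which breaks unitary invariance and DPI simultaneously.

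Your third paragraph goes beyond the paper by flagging the edge case of a cone with no pure element (the paper silently assumes a pure $\phi\in\cK$ exists). The patch you sketch via isometric embedding is not well-posed, however: $D^{\cK}_{\max}$ is defined only for the fixed cone $\cK$ on $A$, so there is no canonical cone on the larger space $A'$ against which to test invariance. Worse, if $\cK$ is unitarily invariant (your own example), the change of variables $X\mapsto U^{\ast}XU$ in the primal shows that $D^{\cK}_{\max}$ \emph{does} satisfy unitary invariance on $A$, so that half of the corollary cannot be rescued in that regime. This is a gap in the hypothesis of the corollary rather than in your reasoning; for every concrete cone the paper actually studies ($\Sep$, $\mrm{Ent}_{r}$, $\cI_{r}$, $\cI_{\mbb{P}}$, $\PPT$) a pure free state exists and your second paragraph already closes the proof.
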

\begin{proof}
As $\cK \subsetneq \Pos(A)$ is closed and convex, there must be a pure state $\psi \not \in \cK$ as cone of the convex hull of all pure states generates $\Pos(A)$. Therefore, let $\phi \in \Density(A)$ be pure and $\psi \not \in \cK$. Let $U$ be any unitary such that $U:\ket{\psi} \to \ket{\phi}$, where such a unitary will always exist.
\begin{align*} 
D_{\max}^{\cK}(\psi||I)<2\log(d) =& D^{\cK}_{\max}(\phi||I) 
= D(U\psi U^{\ast}||UIU^{\ast}) \ , 
\end{align*}
where the first inequality is by the previous proposition.
\end{proof}
Given these two previous points, we might now ask if the max-divergence AEP \eqref{eqn:Dmax-AEP} can hold for alternative cones, where, following \eqref{eqn:smoothed-max-rel-ent-defn}, $D_{\max}^{\ve,\cK}(\rho||\sigma) = \min_{\wt{\rho} \in \Bve(\rho)} D_{\max}^{\cK}(\wt{\rho}||\sigma)$. Here we first show this in our most general mathematical form where all we are concerned about is the support of the cone on some target state. The advantage of this generality is that it will allow us to show how strongly dependent Stein's lemma is on the positive cone by considering the measure. However, the general mathematical result again has meaning in terms of resource theories. In a resource theory you have some property that defines your free states $\cF$ on some Hilbert space $A$. Generally this property is well-defined on $n-$fold copies of the space. To be a physically well-defined theory, you would want that no matter how many copies you have, the free states are bounded away from the maximal resourceful state. Then taking the resource-theoretic interpretation of the cone as induced by the free states, and viewing the aforementioned `target state' as a maximally resourceful state, the following theorem says if your resource theory is stable under copies, then the AEP does not converge to the relative entropy and thus cannot characterize hypothesis testing given Stein's lemma.

\begin{theorem}\label{thm:anti-stein-lemma-support-version}
   Consider a sequence of closed convex cones $\cK^{(n)} \subset \Pos(A^{\otimes n})$. Let $\psi^{\otimes n} \not \in \cK^{(n)}$ for all $n \geq n_{0}$ for some $n_{0} \in \mbb{N}$, and $\I^{A} \in \relint(\cK^{(n')})$ for all $n' \geq n_{0}'$. Then for all $\varepsilon \in [0,1]$, $$\lim_{n \to \infty} \left \{ \frac{1}{n} D^{\ve,\cK}_{\max}(\psi^{\otimes n}||\pi^{\otimes n}) \right \}~<~D(\psi||\pi) \ $$
   if and only if $\supp_{\cK^{(n)} \cap \Density(A^{\otimes n})}(\psi^{\otimes n}) \leq f(n) = O(c^{n})$ where $c < 1$.
\end{theorem}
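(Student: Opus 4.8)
The plan is to translate both sides into statements about the support function $\supp_{\cK^{(n)}\cap\Density(A^{\otimes n})}$ and then exploit that the quantifier ``for all $\ve\in[0,1]$'' makes $\ve=0$ the binding case, so that smoothing is dealt with merely through the monotonicity of the balls $\Bve$ rather than by any delicate estimate on the smoothed quantity. First I would record the identity, valid for every subnormalized $\wt\rho\neq0$ on $A^{\otimes n}$ as long as $I_{A^{\otimes n}}\in\relint(\cK^{(n)})$ (the hypothesis, for $n\ge n_{0}'$),
\begin{equation*}
\exp\!\big(D^{\cK^{(n)}}_{\max}(\wt\rho||\pi^{\otimes n})\big)\;=\;d_{A}^{\,n}\,\supp_{\cK^{(n)}\cap\Density(A^{\otimes n})}(\wt\rho)\,.
\end{equation*}
This comes from the primal program \eqref{eqn:DmaxKPrimal} --- strong duality holds since $I_{A^{\otimes n}}\in\relint(\cK^{(n)})$ and $\wt\rho\ll\pi^{\otimes n}$ --- together with the observation that $\langle\pi^{\otimes n},X\rangle=d_{A}^{-n}\Tr[X]$, so an optimal $X\in\cK^{(n)}$ may be rescaled (using conicity) to $d_{A}^{\,n}\widehat X$ with $\widehat X\in\cK^{(n)}\cap\Density(A^{\otimes n})$; this is exactly the calculation in the proof of Proposition \ref{prop:Dkmax-pure-state-outside-cone}, now for subnormalized states. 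Writing $g_{n}^{(\ve)}:=\min_{\wt\rho\in\Bve(\psi^{\otimes n})}\supp_{\cK^{(n)}\cap\Density(A^{\otimes n})}(\wt\rho)$ and recalling \eqref{eq:smoothed-conic-max-rel-ent}, this gives $\tfrac1n D^{\ve,\cK}_{\max}(\psi^{\otimes n}||\pi^{\otimes n})=\log d_{A}+\tfrac1n\log g_{n}^{(\ve)}$, while $D(\psi||\pi)=\log d_{A}$, so the displayed inequality at smoothing level $\ve$ is equivalent to $g_{n}^{(\ve)}=O(c^{n})$ for some $c<1$.

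Second, I would settle $\ve=0$. Because $\psi^{\otimes n}$ is pure and normalized, the generalized fidelity is $F_{\ast}(\psi^{\otimes n},\wt\rho)=\sqrt{\bra{\psi^{\otimes n}}\wt\rho\ket{\psi^{\otimes n}}}$, whence $\mathscr{B}^{0}(\psi^{\otimes n})=\{\psi^{\otimes n}\}$ and $g_{n}^{(0)}=\supp_{\cK^{(n)}\cap\Density(A^{\otimes n})}(\psi^{\otimes n})$; so the inequality at $\ve=0$ is verbatim the right-hand side of the theorem. (Equivalently, invoke Proposition \ref{prop:Dkmax-pure-state-outside-cone}, which applies since $\psi^{\otimes n}\notin\cK^{(n)}$ for $n\ge n_{0}$.) Third, I would tie the smoothed levels back to $\ve=0$: since $\psi^{\otimes n}\in\Bve(\psi^{\otimes n})\subseteq\mathscr{B}^{\ve'}(\psi^{\otimes n})$ for $0\le\ve\le\ve'\le1$, the numbers $g_{n}^{(\ve)}$ are nonincreasing in $\ve$ with $g_{n}^{(\ve)}\le g_{n}^{(0)}$. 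Hence if the right-hand side holds then $g_{n}^{(\ve)}\le g_{n}^{(0)}=O(c^{n})$ for every $\ve$, so the inequality holds for all $\ve\in[0,1]$; conversely, if it holds for all $\ve$ it holds at $\ve=0$, which is the right-hand side. For the ``$\lim$'' to be well-behaved (no spurious $-\infty$), I would also note that for $\ve<1$ any $\wt\rho\in\Bve(\psi^{\otimes n})$ has $\Tr[\wt\rho]\ge\bra{\psi^{\otimes n}}\wt\rho\ket{\psi^{\otimes n}}\ge1-\ve^{2}$, so $\supp_{\cK^{(n)}\cap\Density(A^{\otimes n})}(\wt\rho)\ge\langle\wt\rho,d_{A}^{-n}I_{A^{\otimes n}}\rangle\ge(1-\ve^{2})d_{A}^{-n}$, while $\supp_{\cK^{(n)}\cap\Density(A^{\otimes n})}(\wt\rho)\le\lambda_{\max}(\wt\rho)\le1$; thus $\tfrac1n D^{\ve,\cK}_{\max}(\psi^{\otimes n}||\pi^{\otimes n})\in[\tfrac1n\log(1-\ve^{2}),\,\log d_{A}]$.

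The main obstacle is really the reduction itself rather than any hard estimate: one must verify the support-function identity of the first step for \emph{all} the subnormalized $\wt\rho$ produced by smoothing (strong duality plus the rescaling argument), and one must be careful to keep the quantifier ``for all $\ve$''. A single fixed $\ve>0$ would not suffice, since smoothing a pure state lying outside $\cK^{(n)}$ toward a nearby pure state with small cone overlap can depress the regularized quantity below $D(\psi||\pi)$ even when $\supp_{\cK^{(n)}\cap\Density(A^{\otimes n})}(\psi^{\otimes n})$ stays bounded away from $0$; it is the joint quantification over $\ve$ that, via monotonicity of $\Bve$, collapses the claim to the $\ve=0$ case and Proposition \ref{prop:Dkmax-pure-state-outside-cone}, making the biconditional correct with a short proof.
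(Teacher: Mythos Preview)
Your proposal is correct and follows essentially the same route as the paper: both reduce to the $\ve=0$ case via the monotonicity $D^{\ve,\cK}_{\max}\le D^{0,\cK}_{\max}$, then identify the unsmoothed quantity with $\log d_{A}+\tfrac1n\log\supp_{\cK^{(n)}\cap\Density}(\psi^{\otimes n})$ via the primal program, making the biconditional immediate. Your presentation is in fact cleaner on the converse direction (the paper's ``smoothing would preserve the $f(n)$ bound'' is vaguer than your explicit observation that the ``for all $\ve$'' quantifier makes $\ve=0$ the binding case), and your added remarks on $\mathscr{B}^{0}(\psi^{\otimes n})=\{\psi^{\otimes n}\}$ and finiteness of the limit are welcome technical clarifications not spelled out in the paper.
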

\begin{proof}
     We first prove that the support condition implies the traditional AEP for max-divergence breaks down for these states. We will show $\lim_{n \to \infty} \frac{1}{n} D^{\cK^{(n)}}_{\max}(\psi^{\otimes n}||\pi^{\otimes n}) < D(\psi||\pi)$. Since smoothing the conic max-relative entropy only decreases it \eqref{eq:smoothed-conic-max-rel-ent}, it suffices to prove for $\varepsilon = 0$ as it would imply it for any larger $\varepsilon$. Then we consider
    \begin{align*}
        & D^{\cK^{(n)}}_{\max}(\psi^{\otimes n}||\pi^{\otimes n})\\
        =& \log(\max\{ \bra{\psi}X\ket{\psi} : \Tr[X] \leq d^{n}_{A} \, , \, X \in \cK \}) \\
        =& \log(d^{n}_{A} \max\{ \bra{\psi^{\otimes n}}\widehat{X}\ket{\psi^{\otimes n}} : \widehat{X} \in \Density \cap \cK^{(n)} \}) \\
        =& n\log(d_{A}) + \log(\max\{\bra{\psi^{\otimes n}}\widehat{X}\ket{\psi^{\otimes n}} :  \widehat{X} \in \Density \cap \cK^{(n)} \}) \\
        =& n \log(d_{A}) + \log(f(n)) \ .
    \end{align*}
    Therefore,
    \begin{align*}
         \lim_{n \to \infty} \left[ \frac{1}{n}D^{\cK^{(n)}}_{\max}(\psi^{\otimes n}) \right]
        = \log(d_{A}) + \lim_{n \to \infty} \frac{\log(f(n))}{n} 
        = & \log(d_{A}) + \log(c)\\
        <& \log(d_{A}) = D(\psi||\pi) \ , 
    \end{align*}
    where we used the assumption $f(n) = O(c^{n})$ where $c < 1$. Note that smoothing only aggravates this problem.
    This completes the first direction.
    
    To prove the second direction, we prove the contrapositive. If $f(n) = 2^{o(n)}$, i.e.\ is sub-exponential, then $\lim_{n \to \infty} 1/n \log f(n) \to 0$, so the AEP between $\psi$ and $\pi$ would hold in this setting. Note that smoothing would preserve the $f(n)$ bound. 
    
    As $f(n) \leq 1$ for all $n$, the only option yet considered is $\lim_{n \to \infty} f(n) = 1$, in which case $\lim_{n \to \infty} 1/n \log f(n) = 0$ and the AEP holds. In the case of smoothing, if this limiting behaviour holds, the AEP holds. This completes the proof.
\end{proof}
Before applying Theorem \ref{thm:anti-stein-lemma-support-version} to specific resource theories, we note how general the demand on the overlap is. In particular, it more than captures all quantum resource theories that are \textit{extensive} in the sense that resourcefulness scales in the number of copies \cite{Vijayan-2020a}. Formally \cite[Property 1 rewritten]{Vijayan-2020a}, a resource theory is \textit{extensive} if for every pure state $\varphi \in \Density(A)$, there exists a non-negative constant $\kappa(\varphi)$ such that $\sup_{\gamma \in \cF} \langle \varphi^{\otimes m}, \gamma \rangle \leq \exp(-m \kappa(\varphi))$, where $\cF$ is the set of free states. Effectively this means either the state $\varphi$ is free, $\kappa(\varphi) = 0$, or it is exponentially quickly bounded away from the free states in the number of copies. In the language of Theorem \ref{thm:anti-stein-lemma-support-version}, an extensive resource theory is such that $\cK^{(n)} := \mrm{cone}(\cF(\cH^{\otimes n}))$ is such that $f(n)$ goes to zero at least exponentially fast for any pure state not in $\cF$. It is worth noting that the resource theories of entanglement and coherence are extensive, and thus some of the power in applying Theorem \ref{thm:anti-stein-lemma-support-version} is that it applies to general sequences of cones to handle cases beyond extensive quantum resource theories.

Having established Theorem \ref{thm:anti-stein-lemma-support-version} and addressed its generality, we now apply it to specific resource theories. This first captures a notion of the necessary cone for a max-divergence-like quantity to characterize the fully quantum Stein's lemma. It also makes it clear that the the above result applies to resource theories whose free states are properly bounded away from their maximally resourceful state.

We now show that the Stein's lemma is only characterized to first-order by cone-restricted max-divergences where $\cK$ asymptotically approximate the fully coherent cone (i.e. $\Pos(A^{\otimes n})$), or, in the bipartite setting, the cone with full Entanglement rank (i.e. $\Pos((A \otimes B)^{\otimes n})$). 
\begin{corollary}\label{corr:coherence-Stein's-lemma}
Fix a basis of Hilbert space $A$. Consider any sequence of cones $\{\cK^{(n)}(A^{\otimes n})\}$ such that there exists $n_{0} \in \mbb{N}$ such that for all $n \geq n_{0}$, $\cK^{(n)} \subseteq \cI_{f(n)}(A^{\otimes n})$ where $\cI_{f(n)}$ is defined using the same fixed basis of $A$. If there exists $n_{1} \in \mbb{N}$ and $\delta \in [0,1)$ such that for all $n \geq n_{1}$, $f(n) \leq  (\delta d_{A})^{n}$, then the sequence of measures cannot characterize Stein's lemma to first-order. In other words, in the unipartite setting, the Quantum Stein lemma  only is characterized by $D^{\cK_{n}}_{\max}$ where the sequence of cones asymptotically contains the $n-$fold copy of all maximally coherent states on $A$, i.e.\ asymptotically approximates the $n-$fold  positive semidefinite cone on $A$.
\end{corollary}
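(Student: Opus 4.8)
The plan is to produce a single explicit i.i.d.\ pair of states on which the regularized smoothed $\cK^{(n)}$-restricted max-divergence provably fails to converge to the relative entropy, and then appeal to Proposition~\ref{prop:stein-first-order-AEP}. The test pair is $\psi := \varsigma_{d_A}$, the maximally coherent state for the fixed basis of $A$, against $\pi := d_A^{-1}I_A$; recall from the computation inside Proposition~\ref{prop:Dkmax-pure-state-outside-cone} that $D(\psi\|\pi) = \log d_A$. The one identity needed at the outset is that, under the product basis on $A^{\otimes n}$, one has $\psi^{\otimes n} = \varsigma_{d_A^n}$, so that the filtration $\cI_{f(n)}(A^{\otimes n})$ (built from that product basis) and Proposition~\ref{prop:distance-from-maximally-coherent} refer to the intended objects with $d_{A^{\otimes n}} = d_A^n$.

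Because the cones $\cK^{(n)}$ themselves need not satisfy the relative-interior condition that Theorem~\ref{thm:anti-stein-lemma-support-version} invokes through the primal program~\eqref{eqn:DmaxKPrimal}, I would first pass to the majorizing sequence $\cI_{f(n)}(A^{\otimes n})$. From $\cK^{(n)} \subseteq \cI_{f(n)}(A^{\otimes n})$ (for $n \geq n_0$) and monotonicity of $D^{\cK}_{\max}$ in the cone one gets $D^{\cK^{(n)}}_{\max} \leq D^{\cI_{f(n)}}_{\max}$, and this inequality survives the $\Bve(\cdot)$-minimization defining the smoothed quantity in \eqref{eq:smoothed-conic-max-rel-ent}. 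For the sequence $\cI_{f(n)}(A^{\otimes n})$ the three hypotheses of Theorem~\ref{thm:anti-stein-lemma-support-version} all hold for large $n$: (i) $I_{A^{\otimes n}} \in \relint(\cI_{f(n)}(A^{\otimes n}))$, since any diagonal perturbation of the identity remains diagonal and hence incoherent (in the spirit of Corollary~\ref{corr:rel-int-identity}); (ii) $\psi^{\otimes n} = \varsigma_{d_A^n} \notin \cI_{f(n)}(A^{\otimes n})$, for otherwise Proposition~\ref{prop:distance-from-maximally-coherent} would force the contradiction $1 = \supp_{\varsigma_{d_A^n}}(\varsigma_{d_A^n}) \leq f(n)/d_A^n < 1$, using $f(n) \leq (\delta d_A)^n < d_A^n$; and (iii) again by Proposition~\ref{prop:distance-from-maximally-coherent}, any $X \in \cI_{f(n)}(A^{\otimes n}) \cap \Density$ satisfies $\langle \varsigma_{d_A^n}, X\rangle \leq f(n)/d_A^n \leq \delta^n$, so $\supp_{\cI_{f(n)}(A^{\otimes n}) \cap \Density}(\psi^{\otimes n}) \leq \delta^n = O(\delta^n)$ with $\delta < 1$. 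Theorem~\ref{thm:anti-stein-lemma-support-version} then yields $\lim_n \tfrac1n D^{\ve,\cI_{f(n)}}_{\max}(\psi^{\otimes n}\|\pi^{\otimes n}) < \log d_A$ for every $\ve \in [0,1]$, and by the monotonicity step the same strict gap holds with $\cK^{(n)}$ in place of $\cI_{f(n)}$.

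It remains to convert this failure of the i.i.d.\ limit into the stated conclusion about Stein's lemma. By Proposition~\ref{prop:stein-first-order-AEP}, a smoothed divergence-like quantity characterizes Stein's lemma to first order only if $\tfrac1n\cD^\ve(\rho^{\otimes n}\|\sigma^{\otimes n}) \to D(\rho\|\sigma)$ for \emph{all} pairs of states and all $\ve$; since we have exhibited the explicit pair $(\varsigma_{d_A},\pi)$ on which this fails for $\cD^\ve = D^{\ve,\cK}_{\max}$, the sequence of measures cannot characterize Stein's lemma to first order. For the informal reformulation, I would note that nothing in the argument used the \emph{particular} fixed basis: it applies verbatim to the maximally coherent state $\varsigma_B$ of any orthonormal basis $B$ and its corresponding incoherence-rank filtration. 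Hence, to characterize Stein's lemma, the cones must not be asymptotically contained in any $\cI_{f(n)}$ for any basis with $\limsup_n f(n)^{1/n} < d_A$; since every pure state of $A$ is $\varsigma_B$ for a suitable basis $B$ (take a unitary sending the fixed maximally coherent vector to it) and the convex hull of all pure states is $\Density(A)$, this is the sense in which the $\cK^{(n)}$ must asymptotically contain all $n$-fold maximally coherent states, i.e.\ asymptotically approximate $\Pos(A^{\otimes n})$.

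The step demanding the most care is the passage to the majorizing sequence: one must check both that $\cI_{f(n)}(A^{\otimes n})$ genuinely satisfies the relative-interior hypothesis of Theorem~\ref{thm:anti-stein-lemma-support-version} (not merely that it contains the identity) and that the monotonicity inequality commutes with the smoothing minimization; this is exactly what lets us avoid assuming strong duality for the possibly small cones $\cK^{(n)}$. Everything else reduces to dimension bookkeeping and the support-function bound of Proposition~\ref{prop:distance-from-maximally-coherent}.
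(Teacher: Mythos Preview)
Your proposal is correct and follows essentially the same route as the paper: both arguments plug the pair $(\varsigma_{d_A},\pi)$ into Theorem~\ref{thm:anti-stein-lemma-support-version} via the support bound of Proposition~\ref{prop:distance-from-maximally-coherent}, observe $f(n)/d_A^n \leq \delta^n$, and then invoke Proposition~\ref{prop:stein-first-order-AEP}. Your treatment is in fact slightly more careful than the paper's, since you make explicit the monotonicity step $D^{\cK^{(n)}}_{\max} \leq D^{\cI_{f(n)}}_{\max}$ (and its compatibility with smoothing) that lets you avoid assuming $I \in \relint(\cK^{(n)})$ for the possibly smaller cones, whereas the paper works directly with $\cK^{(n)} = \cI_{f(n)}$ and only gestures at the containment extension afterwards.
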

\begin{proof}
We focus on the set of cones $\cI_{r}$, the same idea holds for $\cI_{\mbb{P}}$ with appropriate choice of partition. Consider the sequence of cones $\{\cK^{(n)} = \cI_{f(n)}(A^{\otimes n})\}$ for some fixed basis of $A$. Assume there is a fixed maximally coherent state $\varsigma_{d}$ such that $\varsigma_{d} \not \in \cK^{(n)}$ for all $n \geq n' \in \mbb{N}$. By Proposition \ref{prop:distance-from-maximally-coherent}, 
\begin{align*}
\supp_{\cI_{f(n)}\cap\Density(A^{\otimes n})} \langle \varsigma_{d}^{\otimes n} , \sigma_{K} \rangle \leq \frac{f(n)}{d^{n}_{A}} \ . 
\end{align*}
As long as $\limsup f(n)/d_{A}^{n} := c<1$, we satisfy the requirements of Theorem \ref{thm:anti-stein-lemma-support-version}. Note that for any $\delta < 1$, $\lfloor(\delta d_{A})^{n}\rfloor/d_{A}^{n} < 1$ for all $n$. This shows you can in effect exponentially increase any non-trivial fraction of the total initial space and allow for that much coherence, without the regularized quantity converging to the relative entropy. Thus, the sequence of cone-restricted max-divergences converge to the relative entropy, unless you had considered the entire space, i.e.\ the positive cone. It follows by Proposition \ref{prop:stein-first-order-AEP}, that Stein's lemma is not characterized even to first-order by these measures unless the entire initial space was considered.

This shows that if a sequence of cones $\{\cK^{(n)}\}_{n}$ can be contained by incoherent cones that don't sufficiently contain their maximally coherent state, Stein's lemma cannot be characterized  by $\{\cK^{(n)}\}$. As this argument holds for any choice of basis for $A$, $\{\cK^{(n)}\}_{n}$ can only satisfy the Stein's lemma if for all choices of bases of $A$, the corresponding $n-$fold maximally coherent state is contained. This completes the proof.
\end{proof}

We now prove the equivalent for the entanglement rank cones.

\begin{corollary}
Without loss of generality, assume $d_{A} \geq d_{B}$. Consider any sequence of cones $\{\cK^{(n)}(A^{n}:B^{n})\}$ such that there exists $n_{0} \in \mbb{N}$ such that for all $n \geq n_{0}$, $\cK^{(n)} \subseteq \mrm{Ent}_{f(n)}$. If there exists $n_{1} \in \mbb{N}$ and $\delta \in [0,1)$ such that for all $n \geq n_{1}$,such that $f(n) < (\delta d_{A})^{n}$, then the regularized sequence does not converge to the relative entropy. In particular, this means it cannot characterize hypothesis testing to first order. That is, the Quantum Stein lemma cannot be characterized by a sequence of cone-restricted max-divergence measures unless they asymptotically contain all $n$-fold maximally entangled states.
\end{corollary}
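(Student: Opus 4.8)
The plan is to mirror the proof of Corollary \ref{corr:coherence-Stein's-lemma} verbatim, replacing the maximally coherent state $\varsigma_d$ by a maximally entangled state and Proposition \ref{prop:distance-from-maximally-coherent} by the ``distance from $\tau_d$'' bound \eqref{eqn:distance-from-max-ent}. Set $d:=\min\{d_A,d_B\}=d_B$ and let $\tau_d\in\Density(A\otimes B)$ be a maximally entangled state supported on a $d$-dimensional subspace of $A$. First I would record the two structural facts I need: (i) $\tau_d^{\otimes n}$ is a maximally entangled state on $A^{\otimes n}\otimes B^{\otimes n}$ of Schmidt rank $d^n$ across the bipartition $A^n:B^n$, so that the hypothesis $f(n)<(\delta d_A)^n$ with $\delta<1$ (read together with the tacit restriction $f(n)<d^n$ under which $\mrm{Ent}_{f(n)}(A^n:B^n)$ is a proper sub-cone of $\Pos$) forces $\tau_d^{\otimes n}\notin\mrm{Ent}_{f(n)}(A^n:B^n)\supseteq\cK^{(n)}$ for $n$ large; and (ii) $I^{AB}\in\relint(\mrm{Ent}_{f(n)}(A^n:B^n))$ by Corollary \ref{corr:rel-int-identity}. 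These supply the analogues of ``$\psi^{\otimes n}\notin\cK^{(n)}$'' and ``$I\in\relint(\cK^{(n)})$'' required by Theorem \ref{thm:anti-stein-lemma-support-version}.

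Next I would invoke \eqref{eqn:distance-from-max-ent} at the $n$-fold level: for every $\widehat X\in\Density(A^n\otimes B^n)\cap\mrm{Ent}_{f(n)}(A^n:B^n)$ one has $\langle\tau_d^{\otimes n},\widehat X\rangle\leq f(n)/d^n$, hence $\supp_{\mrm{Ent}_{f(n)}\cap\Density}(\tau_d^{\otimes n})\leq f(n)/d^n=O(c^n)$ with $c=\delta d_A/d<1$ (in the square case $c=\delta<1$). Thus the support hypothesis of Theorem \ref{thm:anti-stein-lemma-support-version} holds for the sequence $\{\mrm{Ent}_{f(n)}\}$, and the theorem gives $\lim_n\frac1n D_{\max}^{\ve,\mrm{Ent}_{f(n)}}(\tau_d^{\otimes n}\|\pi^{\otimes n})<D(\tau_d\|\pi)$ for every $\ve\in[0,1]$, where $\pi:=(d_Ad_B)^{-1}I_{AB}$ and $D(\tau_d\|\pi)=\log(d_Ad_B)$. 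Since $\cK^{(n)}\subseteq\mrm{Ent}_{f(n)}$ implies $D_{\max}^{\ve,\cK^{(n)}}\leq D_{\max}^{\ve,\mrm{Ent}_{f(n)}}$ by monotonicity of the restricted max-divergence under cone inclusion, the same strict bound holds for $\{\cK^{(n)}\}$; that is, the regularized smoothed restricted max-divergence does not converge to the relative entropy on the input $(\tau_d,\pi)$. Invoking Proposition \ref{prop:stein-first-order-AEP} in contrapositive form then yields that $\{D_{\max}^{\ve,\cK^{(n)}}\}$ cannot characterize hypothesis testing, hence Stein's lemma, even to first order. Finally, since \eqref{eqn:distance-from-max-ent} is local-unitary invariant, this argument applies equally to every maximally entangled state (every local-unitary rotation of $\tau_d$); so if even one such family fails to be asymptotically absorbed by the cones the conclusion already holds, and therefore the $\{\cK^{(n)}\}$ must asymptotically contain all $n$-fold maximally entangled states to have any chance of characterizing Stein's lemma.

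The step I expect to be the main obstacle is the dimension bookkeeping in the rectangular case $d_A>d_B$: one must be careful that the maximally entangled states being tested necessarily have Schmidt rank at most $d_B^n$, so the ``$d_A$'' appearing in the hypothesis is only meaningful in conjunction with $f(n)<d_B^n$ (beyond which $\mrm{Ent}_{f(n)}$ collapses to $\Pos$ and the statement is vacuous), and then that $f(n)/d_B^n$ is genuinely exponentially small so that Theorem \ref{thm:anti-stein-lemma-support-version}'s hypothesis $f(n)=O(c^n)$, $c<1$, is actually met. Everything else is a routine transcription of the coherence case, so no further computation is needed beyond citing Theorem \ref{thm:anti-stein-lemma-support-version} and Proposition \ref{prop:stein-first-order-AEP}.
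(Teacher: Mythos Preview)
Your proposal is correct and follows essentially the same route as the paper: use the distance-from-$\tau_d$ bound \eqref{eqn:distance-from-max-ent} on the $n$-fold maximally entangled state to verify the support hypothesis of Theorem~\ref{thm:anti-stein-lemma-support-version}, then invoke Proposition~\ref{prop:stein-first-order-AEP}. The paper's proof is terser (it does not explicitly check $I\in\relint$ or pass through the cone-inclusion monotonicity $D_{\max}^{\cK^{(n)}}\leq D_{\max}^{\Ent_{f(n)}}$), but the skeleton is identical.

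Your worry about the rectangular case is well-placed and in fact reflects an inconsistency in the paper itself: the statement assumes $d_A\geq d_B$, but the paper's proof opens with ``we assume $d_A\leq d_B$'' and then uses $\tau_{d_A}$ with the bound $f(n)/d_A^n$. With $d_A$ the \emph{smaller} dimension the argument is clean and $c=\delta<1$ directly; with $d_A$ the larger dimension the quantity $\delta d_A/d_B$ need not be $<1$, exactly as you flagged. So the ``obstacle'' you identified is not a gap in your reasoning but a typo in the corollary's hypothesis that the paper's own proof silently corrects.
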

\begin{proof}
For clarity, we assume $d_{A} \leq d_{B}$, but the same proof holds in the other case using $d_{B}$ instead. Consider the sequence of cones $\{\cK^{(n)} = \mrm{Ent}_{f(n)}(A^{\otimes n}: B^{\otimes n})\}$. Then by \eqref{eqn:distance-from-max-ent}, 
$$\supp_{\sigma_{K} \in \cK^{(n)} \cap \Density((A\otimes B)^{\otimes n})}(\tau_{d_{A}}) \leq \frac{f(n)}{d^{n}_{A}} \ . $$
As long as $\limsup f(n)/d_{A}^{n} := c<1$, the requirements of Theorem \ref{thm:anti-stein-lemma-support-version} are satisfied. Note that for any $\delta < 1$, $\lfloor(\delta d_{A})^{n}\rfloor/d_{A}^{n} < 1$ for all $n$. Therefore any exponential growth in entanglement that is a strict fraction of the original $d_{A}$ will break the convergence to the relative entropy. Thus by Proposition \ref{prop:stein-first-order-AEP}, it cannot characterize Stein's lemma. By our choice of using the entanglement rank cones, under these conditions it cannot contain any of $n$-fold copies of the class of maximally entangled states in $\Density(A \otimes B)$. Therefore, the cone defining the measure would need to asymptotically contain all the $n-$fold maximally entangled states to characterize Quantum Stein's lemma to first-order. This completes the proof.
\end{proof}

Note the entanglement rank cones have been chosen as they are physically interesting and capture the rate at which we increase entanglement, but they treat all maximally entangled states on equal footing. We can in fact construct a similar result to prioritize an arbitrary choice of maximally entangled.
\begin{corollary}
A sequence of cones $\{\cK^{(n)}(A^{n}:B^{n})\}$ only characterizes the Quantum Stein lemma to first-order if asymptotically it contains all states that are local-unitary equivalent to the canonical maximally entangled state.
\end{corollary}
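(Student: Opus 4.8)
\emph{Proof proposal.} The plan is to argue the contrapositive by exhibiting a single pair of states on which the regularized smoothed measure fails to recover the relative entropy, and then invoke Proposition~\ref{prop:stein-first-order-AEP}; the mechanism is exactly that of the preceding $\mrm{Ent}_r$ corollary, with the only new ingredient being the local-unitary invariance of the distance estimate \eqref{eqn:distance-from-max-ent}. Assume WLOG $d_A \leq d_B$ and let $\tau_{d_A}$ be the canonical maximally entangled state on $A:B$. Suppose $\{\cK^{(n)}\}$ does \emph{not} asymptotically contain every state local-unitarily equivalent to $\tau_{d_A}$; then there are fixed unitaries $U \in \mrm{U}(A)$, $W \in \mrm{U}(B)$ such that, writing $\rho := (U_A \otimes W_B)\tau_{d_A}(U_A \otimes W_B)^{\ast}$ and $V_n := U_A^{\otimes n}\otimes W_B^{\otimes n}$ (a local unitary for the cut $A^{\otimes n}:B^{\otimes n}$), the state $\rho^{\otimes n} = V_n \tau_{d_A}^{\otimes n} V_n^{\ast}$ is absent from $\cK^{(n)}$ for infinitely many $n$. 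Note $\rho^{\otimes n}$ is itself a maximally entangled state of full Schmidt rank $d_A^{\,n}$ on that cut, so this absence is automatic once $\cK^{(n)}$ is constrained to have bounded entanglement rank below $d_A^{\,n}$.

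First I would make the non-containment quantitative. The working hypothesis (the ``prioritize one maximally entangled state'' analogue of the previous corollary, alluded to in the remark after it) is that $\cK^{(n)} \subseteq \mrm{Ent}_{f(n)}(A^{\otimes n}:B^{\otimes n})$ with $f(n) < (\delta d_A)^{n}$ for some $\delta \in [0,1)$, together with $\I \in \relint(\cK^{(n)})$ for all large $n$ (inherited, e.g., from Corollary~\ref{corr:strong-duality-for-resources} when $\cK^{(n)}=\mrm{Ent}_{f(n)}$). Since local unitaries preserve Schmidt rank, the ``Moreover'' clause of \eqref{eqn:distance-from-max-ent} applies with $\tau_{d}$ replaced by any state local-unitarily equivalent to it; hence for every $\sigma \in \cK^{(n)}\cap\Density(A^{\otimes n}\otimes B^{\otimes n})$,
$$ \langle \rho^{\otimes n},\sigma\rangle \;=\; \big\langle \tau_{d_A}^{\otimes n},\, V_n^{\ast}\sigma V_n \big\rangle \;\leq\; \frac{f(n)}{d_A^{\,n}} \;\leq\; \delta^{\,n}, $$
so $\supp_{\cK^{(n)}\cap\Density}(\rho^{\otimes n}) = O(c^{n})$ with $c=\delta<1$. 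Applying Theorem~\ref{thm:anti-stein-lemma-support-version} with $\psi = \rho$ and $\pi := (d_A d_B)^{-1}\I$ gives $\lim_{n\to\infty}\tfrac1n D^{\ve,\cK}_{\max}(\rho^{\otimes n}||\pi^{\otimes n}) < D(\rho||\pi)$ for every $\ve \in [0,1]$ (when the absence of $\rho^{\otimes n}$ holds only along a subsequence, run the same estimate along it to get $\liminf_n \tfrac1n D^{\ve,\cK}_{\max} < D(\rho||\pi)$, which already suffices). By Proposition~\ref{prop:stein-first-order-AEP}, a measure whose regularization misses $D(\rho||\pi)$ for this pair cannot characterize Stein's lemma to first order. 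Taking the contrapositive over all choices of $(U_A,W_B)$: a sequence of cones characterizing Stein's lemma to first order must, for every local unitary, eventually contain the corresponding rotated $n$-fold maximally entangled state, i.e.\ must asymptotically contain all states local-unitarily equivalent to the canonical maximally entangled state.

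The step I expect to be the main obstacle is precisely this quantitative reduction: for a genuinely arbitrary sequence $\{\cK^{(n)}\}$, merely omitting the single pure operator $\rho^{\otimes n}$ from a \emph{closed} cone only forces $\sup_{\sigma\in\cK^{(n)}\cap\Density}\langle\rho^{\otimes n},\sigma\rangle < 1$ (as in Proposition~\ref{prop:Dkmax-pure-state-outside-cone}), which is too weak for Theorem~\ref{thm:anti-stein-lemma-support-version} --- one needs \emph{exponential} decay of this overlap, and the entanglement-rank containment at a strict-fraction rate is exactly what \eqref{eqn:distance-from-max-ent} converts into that decay. A secondary subtlety is that one cannot ``rotate the whole problem'' by $V_n^{\ast}$ and quote the $\mrm{Ent}_{f(n)}$ corollary verbatim, since $D^{\cK}_{\max}$ is not invariant under local unitaries (Corollary~\ref{corr:DPI-breaks}); the local-unitary equivalence must be pushed all the way down to the scalar bound on $\langle\rho^{\otimes n},\sigma\rangle$, where it genuinely holds, before Theorem~\ref{thm:anti-stein-lemma-support-version} is invoked.
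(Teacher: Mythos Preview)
Your proposal is correct for what it proves, but it collapses this corollary back into the previous one. By taking as a ``working hypothesis'' that $\cK^{(n)} \subseteq \mrm{Ent}_{f(n)}$ with $f(n) < (\delta d_A)^n$, you are re-running the $\mrm{Ent}_r$ argument with the single addition that the overlap bound \eqref{eqn:distance-from-max-ent} is local-unitary invariant --- which is true, and which you handle correctly, but which yields nothing beyond the statement already established. You identify this gap yourself in the final paragraph: without the entanglement-rank containment, closedness alone gives $\sup_{\sigma\in\cK^{(n)}\cap\Density}\langle \rho^{\otimes n},\sigma\rangle < 1$, not the exponential decay Theorem~\ref{thm:anti-stein-lemma-support-version} needs.

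The paper takes a different route that supplies exactly the missing piece. Rather than bounding by Schmidt rank, it fixes one rotated state $\wt{\tau}_d := (U_A\otimes W_B)\tau_d(U_A\otimes W_B)^{\ast}$ and \emph{builds the cone directly from the overlap constraint}: set $\wt{\cF}_n := \{\rho \in \Density((A\otimes B)^{\otimes n}) : \langle \rho, \wt{\tau}_d^{\otimes n}\rangle \leq 1-\delta_n\}$ and $\wt{\cK}^{(n)} := \mrm{cone}(\wt{\cF}_n)$. One checks $\wt{\cK}^{(n)}$ is closed and convex, and that $I\in\relint(\wt{\cK}^{(n)})$ once $\delta_n \leq 1 - d^{-n}$ (since then $\wt{\cK}^{(n)} \supseteq \Sep$ by \eqref{eqn:distance-from-max-ent} and the separable ball around $I$ applies). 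Theorem~\ref{thm:anti-stein-lemma-support-version} is then applied to $\{\wt{\cK}^{(n)}\}$ itself, and the containment argument for a general sequence $\{\cK^{(n)}\}$ goes through these overlap cones rather than through $\mrm{Ent}_{f(n)}$.

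The payoff is precisely what distinguishes this corollary from its predecessor: $\wt{\cK}^{(n)}$ is \emph{not} contained in any proper $\mrm{Ent}_r$ cone --- for instance it contains every maximally entangled state orthogonal to $\wt{\tau}_d^{\otimes n}$ (overlap zero), so it has full Schmidt number. The paper's construction thus isolates the failure of Stein's lemma to the omission of a \emph{single} local-unitary rotate of $\tau_d$, whereas your route through $\mrm{Ent}_{f(n)}$ necessarily excludes all maximally entangled states simultaneously and hence cannot separate the two results.
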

\begin{proof}
Fix a state that is local-unitarily equivalent to the maximally entangled state and denote it 
$$\wt{\tau}_{d} := (U^{A} \otimes W^{B}) \tau_{d} (U^{A} \otimes W^{B})^{\ast} \ .$$ 
For any $n \in \mbb{N}$ define the set of density matrices $\wt{\cF}_{n} := \{\rho \in \Density(A \otimes B) : \langle \rho , \wt{\tau}_{d}^{\otimes n} \rangle \leq 1 - \delta_{n} \}$. This set is clearly convex for any $n$ by construction. We can think of a function $g_{n} := \langle \cdot, \wt{\tau}_{d}^{\otimes n} \rangle$ which is continuous for every $n$. As $\wt{\tau}_{d}$ and every density matrix is positive, $\wt{\cF}_{n}$ is the pre-image of the continuous function $g_{n}$ over the closed interval $[0,1-\delta_{n}]$ and thus $\wt{\cF}_{n}$ is closed. It's also trivially bounded, so $\wt{\cF}_{n}$ is a convex, compact set for each $n$. Define the closed convex cone $\wt{\cK}^{(n)} = \mrm{cone}(\wt{\cF}_{n})$, where that it is closed follows from the fact $\wt{\cF}_{n}$ is compact and does not include $0$. Moreover, $\cI$ is in it's relative interior so long as $\delta_{n} \leq 1 - d^{-n}$ for all $n$ as that is small enough that $\cK^{(n)} \supseteq \Ent_{1}(A:B)$ by \eqref{eqn:distance-from-max-ent}. This is because an equivalent definition of the relative interior is
$$\relint(\cK) := \{x \in \cK : \exists \ve > 0 : \cN_{\ve}(x) \cap \mrm{Aff}(\cK) \subset \cK \} \ , $$
we know there is an epsilon ball of separable states around the identity \cite{Gurvits-02a}, and we have $\cK^{(n)}$ contains the separable states, so the appropriate choice of epsilon results in the intersection being contained in $\cK^{(n)}$. Thus we have cones that satisfy constraints to apply Theorem \ref{thm:anti-stein-lemma-support-version}. Therefore as long as $\limsup_{n} \delta_{n} < 1$, the regularized quantity does not converge to the relative entropy and so by Proposition \ref{prop:stein-first-order-AEP}, Quantum Stein's lemma cannot be characterized by this sequence of cones.

Now first note this only considered a single choice of maximally entangled state, so it's much stronger than the previous entanglement rank result as, for example, the set will include the rest of the maximally entangled state basis under local unitaries $(U \otimes W)$ as they are mutually orthogonal. Finally, the above shows that if a sequence of cones $\{\cK^{(n)}\}_{n}$ can be contained by $\{\wt{\cK}^{(n)}\}_{n}$ for a choice of $\wt{\tau}_{d}$ and appropriate choices of $\delta_{n}$, then the Stein's lemma cannot be characterized by this sequence of measures. It follows for Stein's lemma to hold, that must not be the case for any choice of $\wt{\tau}_{d}$ which completes the proof.
\end{proof}

\subsection{Anti-Standard Min-Entropy AEP}
In the previous section we showed general conditions such that the convergence of the regularized cone-restricted max-divergence is not the relative entropy and thus cannot characterize hypothesis testing even to first-order. In particular, recalling the normalization property (Proposition \ref{prop:normalization}), we have shown this by showing the regularized $D_{\max}^{\cK}(\rho||\I)$ does not converge to $D(\rho||\I)$ when $\rho$ is maximally coherent and the cone is $\cI_{r}$.
Noting that $H(A)_{\rho} = -D(\rho||\I)$ and $H_{\min}(A)_{\rho} = -\log(\supp_{\Density(A)}(\rho))$ \cite{Gour-2019a}, one might suspect that we have inadvertently also shown that $H_{\min}^{\cK}(A)$ does not satisfy an AEP in a general setting. Indeed, this follows from \eqref{eqn:K-min-entropy-simple} where one lets $B = \mbb{C}$ so that the cone $\cK(A \otimes B)$ is really $\cK(A)$. The focus of this section is showing this to be the case of the restricted min-entropy in the general conditional case. While this may seem uninteresting as we have broken the unconditional case, the conditional case is fundamental in terms of entanglement theory, whereas in the unconditional case the coherence resource seems the fundamental choice. Following the same presentation as the previous section, we first show how the one-shot case breaks down for general cones, then we present sufficient conditions for an AEP to breakdown.

We note as explained in Section \ref{sec:other-frameworks}, there is an asymmetry for min-entropy that is captured in difference of the type of support function. Formally, by the same proof as for Proposition \ref{prop:min-entropy-singlet-fraction}, we have for $P \in \Pos(A \otimes B)$,
\begin{align*}
\exp(-H_{\min}^{\cK}(B|A)_{P})
 =d_{B} \underset{\Psi^{\ast} \in \CPU(B,A): J_{\Psi} \in \cK}{\max} \langle (\Psi \otimes \id_{B})(P), \tau_{d_{B}} \rangle \ ,
\end{align*}
which is less natural to work with. However, we can use the support function definition of $H_{\min}^{\cK}(B|A)$ directly, to see how the AEP function behaves. This partially resolves an open direction noted in \cite{Jencova-2021a}, though the cones we focus on are more specific.

We first present explicit examples of where there exists a gap between $H_{\min}(A|B)$,$H(A|B)$ and $H^{\cK}_{\min}$ in the same fashion as for $D_{\max}^{\cK}$. This will hold under the rather minimal assumption that $\cK$ does not contain all pure states $(I^{A} \otimes U)\ket{\tau_{d}}$ where $U$ is a unitary. It also makes clear the general idea of when the AEP would fail.
\begin{proposition}
Let $\cK \subsetneq \Pos(A \otimes B)$ be closed and convex and that $d_{A}^{-1}\I^{AB} \in \relint(\cK)$. Let $\psi:=(\I^{A} \otimes U)\tau_{d}(\I^{A} \otimes U^{\ast}) \not \in \cK \subset \Pos(A:B)$ where $U$ is a unitary. Then there exists $\rho$ such that
$$H^{\cK}_{\min}(A|B)_{\rho} > H(A|B)_{\rho} = H_{\min}(A|B)_{\rho} \ . $$
\end{proposition}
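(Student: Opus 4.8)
The plan is to take $\rho=\psi$ itself; this is the conditional analogue of Proposition~\ref{prop:Dkmax-pure-state-outside-cone}, with a maximally entangled state now playing the role that the maximally coherent state played there. First I would pin down the two unrestricted quantities. Since $\psi$ is local-unitarily equivalent to $\tau_d$, the global state $\psi_{AB}$ is pure and $\psi_B = U(\Tr_A\tau_d)U^{\ast} = I_B/d$ is maximally mixed, so $H(A|B)_\psi = -\log d$. For the min-entropy I would use the primal program \eqref{eqn:HminKPrimal} with $\cK=\Pos(A\otimes B)$: the operator $X=d\,\psi$ is feasible, since $\Tr_A(d\psi)=d\,\psi_B=I_B$, and attains objective value $\langle\psi,d\psi\rangle=d$, while any feasible $X$ obeys $\langle\psi,X\rangle=\bra{\phi}X\ket{\phi}\le\Tr X=d$; hence $\exp(-H_{\min}(A|B)_\psi)=d$, i.e.\ $H_{\min}(A|B)_\psi=-\log d=H(A|B)_\psi$, which secures the equality in the statement.

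Next I would bound $H_{\min}^{\cK}(A|B)_\psi$ from below. The hypothesis $d^{-1}I^{AB}\in\relint(\cK)$ gives strong duality, so $\exp(-H_{\min}^{\cK}(A|B)_\psi)$ equals the optimal value of \eqref{eqn:HminKPrimal}, namely $\max\{\langle\psi,X\rangle : \Tr_A X = I_B,\ X\in\cK\}$. The key observation is that every feasible $X$ satisfies $\Tr X = \Tr(\Tr_A X) = \Tr I_B = d$, so $X=d\sigma$ for some $\sigma\in\cK\cap\Density(A\otimes B)$; relaxing the constraint $\Tr_A X = I_B$ to the weaker $\Tr X = d$ then yields $\exp(-H_{\min}^{\cK}(A|B)_\psi)\le d\cdot\supp_{\cK\cap\Density}(\psi)$. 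Finally I would argue $\supp_{\cK\cap\Density}(\psi)<1$: the set $\cK\cap\Density(A\otimes B)$ is non-empty (it contains $(d_Ad_B)^{-1}I^{AB}$), closed and bounded, hence compact, so the supremum is a maximum; and writing $\psi=\dyad{\phi}$, for any density operator $\sigma$ one has $\langle\psi,\sigma\rangle=\bra{\phi}\sigma\ket{\phi}\le 1$ with equality only at $\sigma=\psi$, which is excluded because $\psi\notin\cK$. Thus $\exp(-H_{\min}^{\cK}(A|B)_\psi)\le d\cdot c$ with $c<1$, i.e.\ $H_{\min}^{\cK}(A|B)_\psi>-\log d$, completing the argument when combined with the first step.

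The only delicate point is the strict inequality $\supp_{\cK\cap\Density}(\psi)<1$: passing from ``$\langle\psi,\sigma\rangle<1$ for each $\sigma$'' to ``$\sup_\sigma\langle\psi,\sigma\rangle<1$'' genuinely uses compactness of $\cK\cap\Density$ (which holds because $\cK$ is closed), together with the rank-one structure of $\psi$. I would also keep the dimension bookkeeping straight: writing $\psi$ as a maximally entangled state presupposes $d_A=d_B=d$, and it is precisely this that makes the $d=\Tr I_B$ coming out of the trace constraint line up with the $d$ in $H(A|B)_\psi=-\log d$. Everything else is the same bookkeeping as in the proof of Proposition~\ref{prop:Dkmax-pure-state-outside-cone}.
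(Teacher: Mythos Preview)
Your proof is correct and takes the same state $\rho=\psi$ as the paper, but the route to the strict inequality is different. The paper works through the singlet-fraction reformulation (Proposition~\ref{prop:min-entropy-singlet-fraction}): it argues that the unique channel $\Psi$ achieving $\langle(\id_A\otimes\Psi)(\psi),\tau_d\rangle=1$ is $\Psi=U^{\ast}(\cdot)U$, computes that $J_{\Psi^{\ast}}=d\psi$, and then uses $\psi\notin\cK$ together with closedness to conclude the maximum is strictly below $1$. You instead stay with the primal program~\eqref{eqn:HminKPrimal} directly, relax the marginal constraint $\Tr_A X=I_B$ to the scalar constraint $\Tr X=d$, and reduce the problem to $\supp_{\cK\cap\Density}(\psi)<1$ via compactness. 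Your argument is more elementary in that it never needs to pass through the channel/adjoint picture or identify the optimal map; it is exactly the conditional analogue of the support-function argument in Proposition~\ref{prop:Dkmax-pure-state-outside-cone}, as you note. The paper's argument, on the other hand, makes the operational content (which decoding map is being excluded) more transparent and connects more directly to the later AEP results that are phrased in channel language.
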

\begin{proof}
We are going to show this for $\rho = \psi$. By Proposition \ref{prop:min-entropy-singlet-fraction}, we want to maximize $\langle \id_{A} \otimes \Psi(\psi), \tau_{d} \rangle$ over CPTNI $\Psi$ such that $J_{\Psi^{*}} \in \cK$. Clearly the unique optimal choice of map is $\Psi := U^{\ast}(\cdot)U$ as this will achieve the value of $1$ and mapping $\psi$ to any other state will achieve $<1$, though possibly to arbitrary precision. We therefore focus on this map and the arbitrary precision will be resolved by the closed-ness of the cone.
The dual map of this is $U(\cdot)U^{\ast}$ whose Choi operator is $d(\I_{A} \otimes U)(\tau_{d})(\I_{A} \otimes U)^{\ast} = d\psi$.  By assumption, $\psi$ is not in the cone so neither is $d\psi$. Therefore the Choi operator of the dual map of the optimal choice is not in the cone. Since the cone is closed one also cannot be arbitrarily approaching this point. We therefore may conclude the optimal $\Xi \in \mrm{CPTP}$ such that $J_{\Xi^{\ast}} \in \cK$ is such that $\langle (\id^{A} \otimes \Xi)(\rho_{AB}), \tau_{d} \rangle < 1$. As $H^{\cK}_{\min}$ is the negative log of this optimized inner product scaled by $d_{A}$ (Proposition \ref{prop:min-entropy-singlet-fraction}), we have $H^{\cK}_{\min}(A|B)_{\psi} > \log(d_{A}) = H(A|B)_{\psi} = H_{\min}(A|B)_{\psi}$ where we have used the strict monotonicity of the logarithm and that a straightforward calculation will show $H(A|B)_{\psi} = H(A)_{\pi} = \log(d_{A})$.
\end{proof}

We now show the same property for $H_{\min}^{\cK}(B|A)$.
\begin{proposition}
Let $\cK \subsetneq \Pos(A \otimes B)$ be closed and convex and that $d_{A}^{-1}\I^{AB} \in \relint(\cK)$. Let $\psi:=(\I^{A} \otimes U)\tau_{d}(\I^{A} \otimes U^{\ast}) \not \in \cK \subset \Pos(A:B)$ where $U$ is a unitary. Then there exists $\rho$ such that $H_{\min}^{\cK}(B|A) > H(B|A) = H_{\min}(B|A).$
\end{proposition}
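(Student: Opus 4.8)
\emph{Proof sketch / plan.} The plan is to mirror the proof of the preceding proposition, now invoking the $B|A$ analogue of the singlet-fraction characterisation of Proposition~\ref{prop:min-entropy-singlet-fraction} recorded just above: writing $d:=d_{A}=d_{B}$ (forced by $\psi$ being a maximally entangled operator), strong duality (valid because $d^{-1}\I^{AB}\in\relint(\cK)$) gives
\begin{align*}
\exp\!\left(-H_{\min}^{\cK}(B|A)_{\rho}\right) \;=\; d \max_{\substack{\Psi\in\Channel(A,B):\\ J_{\Psi}\in\cK}} \langle (\Psi\otimes\id_{B})(\rho),\tau_{d}\rangle \, .
\end{align*}
The one new feature relative to the $A|B$ case is a complex conjugation: here the optimal channel acts on the $A$-system, and for a maximally entangled input $(\I^{A}\otimes W)\tau_{d}(\I^{A}\otimes W^{\ast})$ the unitary $V$ whose conjugation $V(\cdot)V^{\ast}$ carries it to $\tau_{d}$ satisfies $W V^{T}=c\I$ for a phase $c$. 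So that the optimum of the conic program lands exactly on the forbidden operator $\psi$, I would therefore take not $\rho=\psi$ but
\[\rho := (\I^{A}\otimes\ol{U})\,\tau_{d}\,(\I^{A}\otimes\ol{U}^{\ast}) \, ,\]
which is again a pure maximally entangled state ($\ol{U}$ being unitary); consequently $H(B|A)_{\rho}=H_{\min}(B|A)_{\rho}=-\log d$, by local-unitary invariance and the same computation as in the preceding proposition applied to $\tau_{d}$.

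The core of the argument is then three steps. First, I would show that the unrestricted maximum of $\langle(\Psi\otimes\id_{B})(\rho),\tau_{d}\rangle$ over all channels $\Psi:A\to B$ equals $1$ and is attained uniquely, by $\Psi_{\star}=U(\cdot)U^{\ast}$: the overlap of a state with the pure state $\tau_{d}$ is at most $1$, and any $\Psi$ attaining it must send the pure maximally entangled $\rho$ exactly to $\tau_{d}$; since $\rho$ has full-rank marginal on $A$ the linear map $K\mapsto(K\otimes\I_{B})\ket{\rho}$ is injective, so the Kraus operators of such a $\Psi$ are pairwise proportional, forcing $\Psi$ to be conjugation by a single operator that is an isometry by trace preservation and hence (as $d_{A}=d_{B}$) a unitary, and $W V^{T}=c\I$ with $W=\ol{U}$ then fixes it as $\Psi_{\star}$, which indeed achieves $1$. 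Second, the Choi operator of $\Psi_{\star}$ is $J_{\Psi_{\star}}=d(\I^{A}\otimes U)\tau_{d}(\I^{A}\otimes U^{\ast})=d\psi$, which is not in $\cK$ since $\psi\notin\cK$ and $\cK$ is a cone; hence $\Psi_{\star}$ is infeasible for the restricted program. Third, $\{J_{\Psi}\in\cK:\Psi\in\Channel(A,B)\}$ is the intersection of the compact set of channel Choi operators with the closed cone $\cK$, hence compact, and the objective is continuous, so the restricted maximum is attained; were it equal to $1$ it would be attained at $\Psi_{\star}$, contradicting $J_{\Psi_{\star}}\notin\cK$. Therefore the restricted maximum is strictly below $1$, so $\exp(-H_{\min}^{\cK}(B|A)_{\rho})<d$, i.e.\ $H_{\min}^{\cK}(B|A)_{\rho}>-\log d=H(B|A)_{\rho}=H_{\min}(B|A)_{\rho}$.

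I expect the main obstacle to be bookkeeping rather than anything conceptual: getting the transpose/conjugate exactly right so that the Choi operator of the unrestricted optimum coincides with $d\psi$, the one operator the hypothesis excludes, which is precisely why $\rho$ is conjugated by $\I\otimes\ol{U}$ rather than $\I\otimes U$; and writing down cleanly the uniqueness of the unrestricted optimiser, since the strict gap $H_{\min}^{\cK}(B|A)_{\rho}>H_{\min}(B|A)_{\rho}$ rests entirely on that optimiser being both unique and infeasible. If one prefers to sidestep the conjugation issue, one may instead note that $\{(\I^{A}\otimes W)\tau_{d}(\I^{A}\otimes W^{\ast}):W\ \text{unitary}\}$ is stable under $W\mapsto\ol{W}$, so the hypothesis $\psi\notin\cK$ already furnishes a state of exactly the needed form and the proof of the preceding proposition transfers essentially verbatim.
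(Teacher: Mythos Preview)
Your argument is correct and follows the same strategy as the paper: pick a pure maximally entangled $\rho$, identify the unique unrestricted optimiser as a unitary channel whose Choi operator is $d\psi\notin\cK$, and conclude strict inequality from closedness. The paper's realisation is slightly more direct: it works with the bare support-function form \eqref{eqn:min-ent-cptp-support}, $\exp(-H_{\min}^{\cK}(B|A)_{P})=\max_{\Psi\in\Channel^{\cK}}\langle J_{\Psi},P\rangle$, and takes $P=J_{\cU}$ with $\cU$ the unitary conjugation whose Choi is $d\psi$; then $\langle J_{\Psi},J_{\cU}\rangle=d^{2}\langle(\id\otimes\cU^{\ast}\circ\Psi)(\tau),\tau\rangle$ is maximised uniquely at $\Psi=\cU$, whose Choi is $d\psi\notin\cK$, and one is done. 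This is exactly the ``sidestep the conjugation'' alternative you mention at the end: by putting the map on the $B$-side (via the inner product with $J_{\cU}$) rather than on the $A$-side via the singlet-fraction form, the unrestricted optimiser's Choi lands on $d\psi$ without any $U\mapsto\overline{U}$ bookkeeping.
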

\begin{proof}
Let $\Psi \in \Channel(A,B)^{\cK}$ and $\cU = U^{\ast} \cdot U$. Then,
\begin{align*}
    \langle J_{\Psi}, J_{\cU} \rangle = d^{2} \langle (\id_{A} \otimes \cU^{\ast} \circ \Psi)(\tau), \tau \rangle < d^{2} \ ,
\end{align*}
where we used the definition of the Choi operator, that the identity map is self-adjoint, $\cU^{\ast} \circ \Psi$ cannot be the identity map by assumption. Noting $H_{\min}(B|A)_{J_{\id}} = H(B|A)_{J_{\id}} = -2\log(d)$ completes the proof.
\end{proof}
\begin{remark}
If we had taken \eqref{eqn:min-ent-cptp-support} as the definition of min-entropy, this would hold without the requirement on the relative interior.
\end{remark}
We note the proposition and corollary above imply that for any cone to achieve the same value as the min-entropy in even simple cases, 
it must include the set of states that are equivalent to the maximally entangled state up to a local unitary. Noting that the proof method for this was to show that the optimal channel in the min-entropy case was not available, it's clear that the gap holds whenever there does not exist an optimal $\Psi$ for $H_{\min}$ such that $J_{\Psi^{\ast}} \in \cK$. Although in general this is hard to determine, we can use this intuition quantitatively to show the AEP requires the positive semidefinite cone. We do this by first proving a general mathematical theorem and then show how this can be applied to cones induced by resource theories that are either closed under separable maps or an Abelian symmetry.

\begin{theorem}\label{thm:anti-AEP}
Consider a sequence of cones $\cK^{(n)}((A\otimes A')^{\otimes n})$ such that $J_{\Psi^{\ast}} \in \cK^{(n)}((A\otimes A')^{\otimes n})$ iff $J_{\Psi} \in \cK^{(n)}((A\otimes A')^{\otimes n})$ where $A \cong A'$. Let $\tau_{AB}$ be $\tau_{CC'}$ embedded into $A \otimes B$ where $C$ is the smaller of $A$ and $B$ Hilbert spaces. Then
\begin{align*}
    \lim_{n \to \infty} \left[ \frac{1}{n} H^{\ve,\cK}_{\min}(A|B)_{\tau_{AB}^{\otimes n}} \right] \neq H(A|B)_{\tau_{AB}}\\
\end{align*}
if and only if $\supp_{\cK^{(n)} \cap \wt{\Density}}(\tau^{\otimes n}_{AA'}) \leq f(n)=O(c^{n})$ where $c < d_{C}/d_{A}$ where and $\wt{\Density}(A^{n} \otimes {A'}^{n}) := \{\sigma \in \Density(A^{n} \otimes {A'}^{n}) : \Tr_{{A'}^{n}}(\sigma) = \pi_{A^{n}}\}$.
\end{theorem}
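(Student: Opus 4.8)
The plan is to reduce the left-hand side to a support-function computation and then run the same exponential-rate dichotomy as in the proof of Theorem~\ref{thm:anti-stein-lemma-support-version}. Working first at $\ve=0$, Proposition~\ref{prop:min-entropy-singlet-fraction} gives
$$\exp\!\big(-H^{\cK}_{\min}(A|B)_{\tau_{AB}^{\otimes n}}\big)=d_A^{\,n}\max_{\substack{\Psi\in\mrm{CPTP}(B^{\otimes n},{A'}^{\otimes n})\\ J_{\Psi^{\ast}}\in\cK^{(n)}}}\big\langle(\id_{A^n}\otimes\Psi)(\tau_{AB}^{\otimes n}),\tau^{A^n{A'}^n}\big\rangle .$$
Using the hypothesis $J_{\Psi^{\ast}}\in\cK^{(n)}\Leftrightarrow J_{\Psi}\in\cK^{(n)}$ I would rewrite the constraint as $J_{\Psi}\in\cK^{(n)}$; then, using that $\tau_{AB}$ is the embedded $\tau_{CC'}$, the ``entanglement-fidelity'' identity $\langle(\id\otimes\Phi)(\tau_d),\tau_d\rangle=\tfrac1d\langle\tau_d,J_{\Phi}\rangle$ (the computation already behind Proposition~\ref{prop:min-entropy-singlet-fraction}), and the correspondence $\Psi\mapsto d_A^{-n}J_{\Psi}$ between channels and the marginal-$\pi$ density operators constituting $\wt{\Density}$, I would obtain
$$\exp\!\big(-H^{\cK}_{\min}(A|B)_{\tau_{AB}^{\otimes n}}\big)=d_A^{\,n}\,\supp_{\cK^{(n)}\cap\wt{\Density}}\!\big(\tau_{AA'}^{\otimes n}\big)=:d_A^{\,n}f(n),$$
so $\tfrac1n H^{\cK}_{\min}(A|B)_{\tau_{AB}^{\otimes n}}=-\log d_A-\tfrac1n\log f(n)$. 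Dropping the cone constraint (or a direct estimate exploiting the maximally entangled structure of $\tau_{AA'}^{\otimes n}$ against the marginal constraint defining $\wt{\Density}$) gives the a-priori bound $f(n)\le(d_C/d_A)^{n}$, hence $\tfrac1n H^{\cK}_{\min}(A|B)_{\tau_{AB}^{\otimes n}}\ge-\log d_C=H(A|B)_{\tau_{AB}}$.

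Given this identity the dichotomy is immediate. If $f(n)=O(c^{n})$ with $c<d_C/d_A$, then $\liminf_n\tfrac1n H^{\cK}_{\min}(A|B)_{\tau_{AB}^{\otimes n}}\ge-\log d_A-\log c>-\log d_A-\log(d_C/d_A)=-\log d_C$; since smoothing only increases the min-entropy, $H^{\ve,\cK}_{\min}\ge H^{\cK}_{\min}$, this strict gap survives, so the regularized smoothed quantity cannot converge to $H(A|B)_{\tau_{AB}}$. Conversely, if the smoothed regularized limit equals $-\log d_C$, then from $-\log d_C\le\tfrac1n H^{\cK}_{\min}\le\tfrac1n H^{\ve,\cK}_{\min}\to-\log d_C$ a squeeze forces $\tfrac1n H^{\cK}_{\min}(A|B)_{\tau_{AB}^{\otimes n}}\to-\log d_C$, i.e.\ $f(n)^{1/n}\to d_C/d_A$, so $f(n)$ is not $O(c^{n})$ for any $c<d_C/d_A$; the contrapositive closes the ``if and only if''.

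I expect the reduction in the first paragraph to be the main obstacle: extracting the clean prefactor $d_A^{\,n}$ and especially the sharp bound $f(n)\le(d_C/d_A)^{n}$ means carefully tracking the embedding $\tau_{CC'}\hookrightarrow A\otimes B$, since when $d_A\neq d_B$ only the ``active'' subspace of $B$ (resp.\ of $A$) together with the corresponding restriction of $\Psi$ is what $\cK^{(n)}$ actually constrains. A secondary point: the converse should be read under the usual assumption that the regularized limit exists (or proved for cone sequences supermultiplicative under tensor products, where $\lim_n f(n)^{1/n}$ exists by Fekete), and it tacitly needs that smoothing does not inflate the exponential rate; since $H^{\cK}_{\min}\ge H_{\min}$ one cannot dominate $H^{\ve,\cK}_{\min}$ by the ordinary smoothed min-entropy, so instead one argues that an $O(\ve)$ trace-norm perturbation of $\tau_{AB}^{\otimes n}$ changes $\langle\cdot,X\rangle$ by an amount uniform over the bounded feasible set $\{X\in\cK^{(n)}:\Tr_{A^n}X=I_{B^n}\}$, affecting only subexponential corrections. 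With these in place the remainder is identical to the proof of Theorem~\ref{thm:anti-stein-lemma-support-version}.
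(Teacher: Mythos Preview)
Your approach is essentially the paper's: both convert via Proposition~\ref{prop:min-entropy-singlet-fraction} to the singlet-fraction form, bound the objective by the support function $f(n)$, work at $\ve=0$, and note that smoothing only increases $H^{\cK}_{\min}$. The paper, however, only uses the one-sided bound $H^{\cK^{(n)}}_{\min}\ge -n\log d_A-\log f(n)$ (which already suffices for the forward direction), while you claim the sharper equality $\exp(-H^{\cK}_{\min})=d_A^{\,n}f(n)$; your squeeze-based converse is tidier than the paper's brief ``sub-exponential $f$ implies the AEP holds'', but it hinges on that equality together with the a-priori bound $f(n)\le(d_C/d_A)^n$. Both are clean when $A\cong B$, since then the correspondence $\Psi\leftrightarrow d_A^{-n}J_\Psi$ is a bijection between feasible maps and $\cK^{(n)}\cap\wt{\Density}$ via the hypothesis $J_{\Psi^\ast}\in\cK\Leftrightarrow J_\Psi\in\cK$; when $d_A\neq d_B$ the $A$-marginal of $(\id_A\otimes\Psi)(\tau_{AB})$ is no longer $\pi_A$, so the identification with $\wt{\Density}$ breaks---a subtlety you rightly flag, and one the paper's own step ``$(\id_A\otimes\Psi)(\tau_{AB})\propto J_\Psi$'' glosses over as well.
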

\begin{proof}
We first prove the support condition implies the AEP breaks down for these states. We will show $\lim_{n \to \infty} \frac{1}{n}H^{\cK^{(n)}}_{\min}(A^{n}|B^{n})_{\tau^{\otimes n}} > H(A|B)_{\tau}$. Since smoothing the min-entropy only increases it by definition \eqref{eq:smoothed-conic-min-entropy}, it suffices to prove the unsmoothed result as it would imply it for any larger $\varepsilon$. Let $\tau_{AB}$ be the maximally entangled state $\tau_{CC'}$ embedded into $A \otimes B$ where $C$ is isomorphic to the smaller of $A$ and $B$. Then we consider
    \begin{align*}
        H^{\cK^{(n)}}_{\min}(A^{n}|B^{n})_{\tau^{\otimes n}}
        & = -\log( d_{A}^{n} \, \underset{\substack{\Psi \in \Channel(B^{n},{A'}^{n}): \\ J_{\Psi^{\ast}} \in \cK}}{\max} \langle
    (\id_{A} \otimes \Psi)(\tau_{AB}^{\otimes n}), \tau_{AA'}^{\otimes n} \rangle) \\
    & \geq -n\log(d_{A}) - \log(f(n)) \ ,
    \end{align*}
    where we used that $(\id_{A} \otimes \Psi)(\tau_{AB}) \propto J_{\Psi} \in \cK^{(n)}$ and the constraint on the support to get the inequality. It follows that, 
    \begin{align*}
         \lim_{n \to \infty} \left[ \frac{1}{n}H_{\min}^{\cK^{(n)}}(A^{n}|B^{n})_{\tau^{\otimes n}} \right]
        &\geq -\log(d_{A}) - \lim_{n \to \infty} \frac{\log(f(n))}{n} \\
        & = -\log(d_{A}) - \log(c) \\
        & = -\log(cd_{A}) \\
        & > -\log(d_{C}) = H(A|B)_{\tau} \ , 
    \end{align*}
    where we used the assumption $f(n) = O(c^{n})$ in the second and third inequalties. The final equality follows from our assumption $c < d_{C}/d_{A}$ and that it is quick to verify $H(A|B) = -D(\rho_{AB}||\I_{A} \otimes \rho_{B}) = 0 -\log(d_{C})$ for $\tau_{AB}$ where we remind the reader $C$ is a copy of the smaller of $A,B$. Again, as smoothing only aggravates this problem, this completes the first direction. \\
    
    For the other direction, just note if $f(n) = 2^{o(n)}$, i.e.\ is sub-exponential, then the second term in the lower bound is zero and the AEP holds and smoothing will continue this. The only other case is the limiting behaviour is $\lim_{n \to \infty} \langle (\id_{A} \otimes \Psi)(\wt{\rho}^{n}), \tau_{AA'}^{\otimes n} \rangle = 1$, where $\wt{\rho}^{n}$ is in the smoothing ball of $\tau_{AA'}^{\otimes n}$, but if this holds, then the AEP holds by definition. 
\end{proof}
The flipped case is almost the same except the cones now depend on $A \otimes B$ rather than using a copy of $A'$.
\begin{theorem}\label{thm:anti-AEP-flipped}
Consider a sequence of cones $\cK^{(n)}((A\otimes B)^{\otimes n})$ such that $J_{\Psi} \in \cK^{(n)}((A\otimes B)^{\otimes n})$. Then
\begin{align*}
    \lim_{n \to \infty} \left[ \frac{1}{n} H^{\ve,\cK}_{\min}(B|A)_{\tau_{AB}^{\otimes n}} \right] \neq H(B|A)_{\tau_{AB}}\\
\end{align*}
if and only if $\supp_{\cK^{(n)} \cap \wt{\Density}}(\tau^{\otimes n}_{AB}) \leq f(n)=O(c^{n})$ where $c < 1$ and $\wt{\Density}(A^{n} \otimes B^{n}) := \{\sigma \in \Density(A^{n} \otimes B^{n}) : \Tr_{B^{n}} = \pi_{A^{n}}\}$.
\end{theorem}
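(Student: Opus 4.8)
The plan is to mirror the proof of Theorem~\ref{thm:anti-AEP} essentially verbatim, the one structural change being which support-function representation of the restricted min-entropy is substituted. The first step is to record the correct identity: under the stated hypothesis on $\cK^{(n)}$ --- the condition making \eqref{eqn:min-ent-cptp-support} applicable, i.e.\ $J_\Psi\in\cK^{(n)}$ iff $J_{\Psi^{\ast}}\in(\cK^{(n)})^{\ast}$ --- one has
\begin{align*}
\exp\!\big(-H^{\cK^{(n)}}_{\min}(B|A)_{P}\big)=\supp_{\Channel^{\cK^{(n)}}(A^{\otimes n},B^{\otimes n})}(P)\ ,
\end{align*}
the support over Choi operators of channels $A^{\otimes n}\to B^{\otimes n}$ lying in $\cK^{(n)}$. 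I would then observe that $X\mapsto d_A^{-n}X$ restricts to a bijection between that set of Choi operators and $\cK^{(n)}\cap\wt{\Density}(A^{n}\otimes B^{n})$: the Choi operator of a channel $\Lambda$ satisfies $\Tr_{B^{n}}J_\Lambda=I_{A^{n}}$, so $d_A^{-n}J_\Lambda$ is a density operator with $\Tr_{B^{n}}(d_A^{-n}J_\Lambda)=\pi_{A^{n}}$, and conversely any $\sigma\in\wt{\Density}\cap\cK^{(n)}$ has $d_A^{n}\sigma\in\Pos(A^{n}\otimes B^{n})$ with $\Tr_{B^{n}}(d_A^{n}\sigma)=I_{A^{n}}$ and is therefore a channel Choi operator, still lying in $\cK^{(n)}$. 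By homogeneity of the support function this yields the clean identity $H^{\cK^{(n)}}_{\min}(B|A)_{\tau_{AB}^{\otimes n}}=-n\log d_A-\log\supp_{\cK^{(n)}\cap\wt{\Density}}(\tau_{AB}^{\otimes n})$, the flipped analogue of the chain of equalities in the proof of Theorem~\ref{thm:anti-AEP}; crucially, the operator inside the support function is now $\tau_{AB}^{\otimes n}$ itself, so there is no auxiliary copy $A'$ and no need to identify $(\id\otimes\Psi)(\tau)$ with a Choi operator --- which is exactly why ``the flipped case is almost the same except the cones now depend on $A\otimes B$.''

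With this identity in hand, both implications follow as in Theorem~\ref{thm:anti-AEP}. For the forward direction, assume $\supp_{\cK^{(n)}\cap\wt{\Density}}(\tau_{AB}^{\otimes n})\le f(n)=O(c^{n})$ with $c<1$. Smoothing only increases the restricted min-entropy (\eqref{eq:smoothed-conic-min-entropy}), so it suffices to take $\ve=0$; the identity then gives $\tfrac1n H^{\cK^{(n)}}_{\min}(B|A)_{\tau_{AB}^{\otimes n}}\ge-\log d_A-\tfrac1n\log f(n)$, whose $\liminf$ is at least $-\log d_A-\log c>-\log d_A=H(B|A)_{\tau_{AB}}$. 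Hence the (liminf of the) smoothed quantity stays bounded away from $H(B|A)_{\tau_{AB}}$, so the limit cannot equal it. For the converse I would prove the contrapositive: $\supp_{\cK^{(n)}\cap\wt{\Density}}(\tau_{AB}^{\otimes n})$ is always at most $1$, being an overlap of density operators, so if it fails to be $O(c^{n})$ for any $c<1$ then $\tfrac1n\log\supp_{\cK^{(n)}\cap\wt{\Density}}(\tau_{AB}^{\otimes n})\to0$; feeding this into the identity gives unsmoothed limit $-\log d_A=H(B|A)_{\tau_{AB}}$, while in the smoothed case the lower bound $H^{\ve,\cK}_{\min}\ge H_{\min}$ together with the ordinary AEP \eqref{eqn:FQAEP} already forces $\liminf\tfrac1n H^{\ve,\cK}_{\min}\ge H(B|A)_{\tau_{AB}}$, and the matching upper bound survives because staying inside the smoothing ball cannot create exponential decay of the restricted support (the only remaining possibility, that the optimal overlap tends to $1$, itself delivers the AEP). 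This reproduces the terseness of the converse in Theorem~\ref{thm:anti-AEP}.

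The step I expect to be the main obstacle is the first one --- obtaining the correct support-function representation of $H^{\cK}_{\min}(B|A)$. This is the asymmetry emphasised in Section~\ref{sec:other-frameworks}: the $(A|B)$ min-entropy is a support function over \emph{unital} CP Choi operators, $\CPU^{\cK}$, while the $(B|A)$ min-entropy is a support function over \emph{channel} Choi operators, $\Channel^{\cK}$, and the equivalence of the two requires the self-dual-under-Choi-adjoint condition on $\cK^{(n)}$; one must also track the normalization ($d_A^{n}$ here, versus the output copy $A'$ in Theorem~\ref{thm:anti-AEP}) and the value $H(B|A)_{\tau_{AB}}$ carefully, so that the stated threshold $c<1$ is sharp --- strictly it is $c<d_{C}/d_A$ with $C$ the smaller of $A,B$, which collapses to $c<1$ precisely when $d_A\le d_B$. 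Once that identity is secured, everything downstream is a transcription of the proof of Theorem~\ref{thm:anti-AEP}; a secondary, cosmetic point is to phrase the converse cleanly, noting that failure of ``$O(c^{n})$ for some $c<1$'' is equivalent to $\limsup\tfrac1n\log(\text{support})=0$ because the support is bounded by $1$.
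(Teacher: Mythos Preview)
Your proposal is correct and follows essentially the same route as the paper's proof: invoke the channel-support-function representation \eqref{eqn:min-ent-cptp-support} for $H^{\cK}_{\min}(B|A)$, normalise Choi operators by $d_A^{-n}$ to land in $\cK^{(n)}\cap\wt{\Density}$, and read off the dichotomy on $\tfrac1n\log\supp$. Your write-up is in fact more careful than the paper's (which is extremely terse and contains a couple of typos), in particular in spelling out the bijection with $\wt{\Density}$, the role of the self-dual-under-Choi-adjoint hypothesis on $\cK^{(n)}$, and the treatment of smoothing in the converse direction.
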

\begin{proof}
By \eqref{eqn:min-ent-cptp-support}, it's clear that
\begin{align*}
    &  \frac{1}{n} H^{\ve,\cK}_{\min}(A|B)_{\tau_{AB}^{\otimes n}}
    = -\frac{1}{n} \log \underset{\Phi \in \Channel^{\cK}}{\max} \langle J_{\Phi}, \tau^{\otimes n} \rangle 
    = -\frac{1}{n} \left[ n \log d_{A} + \log(\supp_{\cK^{(n)} \cap \wt{\Density}}(\tau^{\otimes n}_{AB})) \right] \ ,
\end{align*}
where the last line we used that $\Tr_{B^{n}}J_{\Phi} = I^{A^{n}}$. Then if $\log(\supp_{\cK^{(n)} \cap \wt{\Density}}(\tau^{\otimes n}_{AB})) \leq nc$ for $c < 1$ then the limit is greater than $-\log(d_{A})$, otherwise it isn't. Again, smoothing only aggravates the problem. 
\end{proof}

We now show how the anti-standard min-entropy AEP applies to entanglement. To do this, 
note that the adjoint map can be written in Kraus form as $\Phi^{\ast}(\cdot) = \sum_{i} A_{i}^{\ast} \cdot A_{i}$, so by Proposition \ref{prop:ent-rank-channel-property}, $\mrm{SN}(J_{\Phi}) = \mrm{SN}(J_{\Phi^{\ast}})$. 

\begin{corollary}\label{corr:ent-rank-anti-AEP}
An AEP cannot hold in general for any sequence of cones $\{\cK^{(n)}\}$ such that $\cK^{(n)} \subseteq \mrm{Ent}_{f(n)}$ and there exists $n_{0} \in \mbb{N}$ such that $f(n) < d^{n}$ for all $n > n_{0}$ where $d = \min\{d_{A},d_{B}\}$. In particular, if $f(n) = \lfloor(d-\ve)^{n}\rfloor$, where $\ve > 0$, the AEP doesn't hold. In other words, any exponential increase in allowed entanglement as a fraction of the original local space that does not recover the entire positive semidefinite cone cannot admit an AEP.
\end{corollary}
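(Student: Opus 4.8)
The plan is to obtain the corollary as a direct application of Theorem~\ref{thm:anti-AEP}. Fix $A\cong A'$ and realise $H^{\ve,\cK}_{\min}(A|B)_{\tau_{AB}^{\otimes n}}$ through Proposition~\ref{prop:min-entropy-singlet-fraction}, so that the relevant cones are the $\cK^{(n)}\bigl((A\otimes A')^{\otimes n}\bigr)$ with $\cK^{(n)}\subseteq\mrm{Ent}_{f(n)}$. The first thing I would check is the structural hypothesis of Theorem~\ref{thm:anti-AEP}, namely the adjoint-symmetry $J_{\Psi^{\ast}}\in\cK^{(n)}\Leftrightarrow J_{\Psi}\in\cK^{(n)}$. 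For $\cK^{(n)}=\mrm{Ent}_{f(n)}$ this is exactly the remark recorded just above the statement: writing an adjoint in Kraus form gives $\mrm{SN}(J_{\Phi})=\mrm{SN}(J_{\Phi^{\ast}})$ via Proposition~\ref{prop:ent-rank-channel-property}, so membership in an entanglement-rank cone is insensitive to taking adjoints. For a proper sub-cone $\cK^{(n)}\subsetneq\mrm{Ent}_{f(n)}$ one reads the statement as additionally carrying this symmetry (or simply takes $\cK^{(n)}=\mrm{Ent}_{f(n)}$, which is anyway the worst case for the estimate below, since enlarging the cone only enlarges the support).

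Next I would supply the support bound that feeds Theorem~\ref{thm:anti-AEP}. Since $\cK^{(n)}\subseteq\mrm{Ent}_{f(n)}$, every density matrix in $\cK^{(n)}$ has entanglement rank at most $f(n)$, so the distance-from-$\tau_d$ inequality \eqref{eqn:distance-from-max-ent}, applied on $(A\otimes A')^{\otimes n}$ with the $n$-fold maximally entangled state $\tau_{AA'}^{\otimes n}$ (which is $\tau_{d_A^{n}}$), gives $\langle\tau_{AA'}^{\otimes n},\sigma\rangle\le f(n)/d_A^{n}$ for all such $\sigma$. Intersecting with $\wt{\Density}$ only shrinks the feasible set, hence $\supp_{\cK^{(n)}\cap\wt{\Density}}(\tau^{\otimes n})\le f(n)/d_A^{n}$, which is precisely the type of bound Theorem~\ref{thm:anti-AEP} requires, with rate $c:=\limsup_n f(n)/d_A^{n}$.

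It then remains to verify $c<d_C/d_A$, where $d_C=d=\min\{d_A,d_B\}$ is the rank of the embedded $\tau_{CC'}$. When $d_A=d_B$ the threshold is $1$ and $f(n)<d^{n}$ forces $f(n)/d^{n}<1$; when $d_A>d_B$ one even has $f(n)/d_A^{n}<(d_B/d_A)^{n}\to 0<d_B/d_A$. In the explicit case $f(n)=\lfloor(d-\ve)^{n}\rfloor$ one has $f(n)/d_A^{n}\le(1-\ve/d_A)^{n}\to 0$, so $c=0$ unconditionally. Theorem~\ref{thm:anti-AEP} now yields
\[
\lim_{n\to\infty}\Bigl[\frac1n\,H^{\ve,\cK}_{\min}(A|B)_{\tau_{AB}^{\otimes n}}\Bigr]\ \ge\ -\log(c\,d_A)\ >\ -\log d_C\ =\ H(A|B)_{\tau_{AB}},
\]
so the min-entropy AEP fails, and smoothing only widens the gap by \eqref{eq:smoothed-conic-min-entropy}.

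The step I expect to be the main obstacle is this last matching of rates: $f$ is normalised by $d=\min\{d_A,d_B\}$ in the statement, whereas the estimate from \eqref{eqn:distance-from-max-ent} on the larger space $(A\otimes A')^{\otimes n}$ naturally carries $d_A$ in the denominator while Theorem~\ref{thm:anti-AEP} compares against $d_C/d_A$. Keeping the exponential rate \emph{strictly} below $d_C/d_A$ is immediate in the square case and in the strictly rectangular case, but a fully honest statement wants $f(n)/d^{n}$ bounded away from its limiting value rather than merely $f(n)<d^{n}$ (which for integer-valued $f$ only gives $f(n)/d^{n}\le 1-d^{-n}$); this is exactly why the ``in particular'' clause, which forces a genuine exponential gap $(d-\ve)^{n}$, is the version to lean on, and why the clean formulation of Theorem~\ref{thm:anti-AEP} is phrased through the embedded $\tau_{CC'}$.
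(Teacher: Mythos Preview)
Your approach is essentially the paper's own: invoke the adjoint-symmetry of the entanglement-rank cones via Proposition~\ref{prop:ent-rank-channel-property}, bound the support through \eqref{eqn:distance-from-max-ent} to get $f(n)/d_A^{n}$, and feed this into Theorem~\ref{thm:anti-AEP}. The paper's argument is terser (and in fact cites the wrong equation label and writes $d_A$ rather than $d$), while you are more careful in tracking the $d_C/d_A$ threshold and in flagging that the bare hypothesis $f(n)<d^{n}$ need not by itself force $\limsup f(n)/d^{n}<1$ --- which is precisely why the ``in particular'' clause with $f(n)=\lfloor(d-\ve)^n\rfloor$ is the version both you and the paper actually establish.
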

\begin{proof}
If $\cK^{(n)} = \mrm{Ent}_{r}$ then $J_{\Phi} \in \cK^{(n)}(B:A')$ and $J_{\Phi} \in \cK^{(n)}(A:B)$. By Eqn. \eqref{eqn:PPT-max-ent-supp}, the support of $\tau^{\otimes n}$ on the space $\mrm{Ent}_{r}$ is less than or equal to $r/d_{A}^{n}$, so letting $f(n) = \lfloor d_{A}-\ve)^{n} \rfloor$ for $\ve >0$ allows Theorem 6 to apply. This completes the proof.
\end{proof}

\paragraph*{Local CP Map invariance of Free Resources Sufficient for anti-standard min-entropy AEP---}
While the above is useful for exemplifying that entanglement is the property mathematically captured by a fully quantum AEP, if we wished to interpret this in terms of resource theories, this is not as insightful. This is because normally a resource theory is defined on a space and then one considers many copies of states on that space. In other words, the cones in a generalized resource theory might be best thought of as probabilistic mixtures of many copies of some free resource, as we will soon formalize. It is a natural question to ask what property must a generalized resource theory satisfy for these induced cones to lead to an anti-standard AEP. Here we will show it suffices for the free resources to be closed under local CP maps and not contain the maximally entangled state. To do this, we will need the notion of completely free-resource-forcing channel.
\begin{definition}
Given a set of free states $\cF$ that can be well-defined for any Hilbert space, a channel $\Phi \in \Channel(A,B)$ is completely resource-breaking (CRB) if $(\id_{A} \otimes \Phi)(\rho_{RA}) \in \cF(RB)$ for all $\rho \in \Density(RA)$.
\end{definition}
First we remark that CRB channels are not the same as resource-destroying maps \cite{Liu-2017a}. First, resource-destroying maps do not consider an ancillary space. Second, a map is resource-destroying only if any free state is mapped to itself. In contrast, the map that throws out the input and prepares a fixed free state is likely to be CRB, e.g.\ if one thinks of separable states as being the free states. Having established this, we show that for resource theories that capture locality, the CRB channels play an important algebraic role in the resource theory.
\begin{proposition}\label{prop:choi-c-forcing}
Consider a resource theory with a set of free states $\cF$ that is well-defined for any Hilbert space and stable under local CP maps. Then $\Phi \in \Channel(A,B)$ is a completely resource-breaking channel if and only if $J_{\Phi} \in \cK_{\cF}$.
\end{proposition}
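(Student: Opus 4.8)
The plan is to observe that ``applying $\Phi$ to one half of a bipartite state'' is, at the level of Choi operators, nothing but a local (separable) completely positive transformation, so that the statement becomes an instance of the assumed stability of $\cF$ under local CP maps. I will read ``stable under local CP maps'' in the cone sense: any separable CP map $\sum_{i}\Lambda_{i}\otimes\Xi_{i}$ carries $\cK_{\cF}$ into $\cK_{\cF}$ (the $\mrm{Ent}_{r}$ monotonicity property quoted earlier is the model case), and recall that $\cK_{\cF}(X:Y)=\mathrm{cone}(\cF(X:Y))$ by Proposition~\ref{prop:cone-to-density}, so a unit-trace element of $\cK_{\cF}$ is exactly a free state.

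For the forward implication, assume $\Phi$ is CRB. Taking the reference system $R\cong A$ and $\rho_{RA}=\tau_{RA}$ the normalized maximally entangled state, the definition of CRB together with the Choi--Jamiolkowski isomorphism~\eqref{eqn:choi-in-kraus-terms} gives $(\id_{R}\otimes\Phi)(\tau_{RA})=\tfrac{1}{d_{A}}J_{\Phi}^{AB}\in\cF(A:B)$, whence $J_{\Phi}^{AB}\in\cK_{\cF}(A:B)$. No further input is needed for this direction.

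For the converse, assume $J_{\Phi}^{AB}\in\cK_{\cF}(A:B)$ and fix an arbitrary reference $R$ and $\rho_{RA}\in\Density(R\otimes A)$. The key step is the Choi--reconstruction identity $(\id_{R}\otimes\Phi)(\rho_{RA})=(\Lambda_{\rho}\otimes\id_{B})(J_{\Phi}^{AB})$, where $\Lambda_{\rho}\colon A\to R$ is the linear map $\Lambda_{\rho}(X):=\sum_{m,n}\langle m|X|n\rangle\,\rho^{(mn)}_{R}$ with $\rho^{(mn)}_{R}:=\langle m|_{A}\,\rho_{RA}\,|n\rangle_{A}$. A one-line computation identifies the Choi operator of $\Lambda_{\rho}$ with $\rho_{RA}$ (up to reordering the two registers), so $\Lambda_{\rho}$ is completely positive; and expanding $J_{\Phi}^{AB}=\sum_{m,n}|m\rangle\langle n|_{A}\otimes\Phi(|m\rangle\langle n|)_{B}$ verifies the identity. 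Since $\Lambda_{\rho}\otimes\id_{B}$ is a separable CP map and $J_{\Phi}^{AB}\in\cK_{\cF}(A:B)$, stability under local CP maps gives $(\id_{R}\otimes\Phi)(\rho_{RA})\in\cK_{\cF}(R:B)$; as $\Phi$ is trace preserving this operator has unit trace and is therefore a free state in $\cF(R:B)$. Since $R$ and $\rho_{RA}$ were arbitrary, $\Phi$ is CRB.

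I expect the only genuine obstacle to be bookkeeping rather than anything conceptual: checking the reconstruction identity against the $\vec$/Choi conventions of Section~\ref{sec:background}, noting that $\Lambda_{\rho}$ — although not trace preserving — is still an admissible local CP map so that the cone-preservation statement applies, and confirming that the intended meaning of ``stable under local CP maps'' is the cone-preserving one, which is the case for the $\mrm{Ent}_{r}$, $\cI_{r}$, and $\cI_{\mbb{P}}$ cones. All the content sits in recognizing $(\id_{R}\otimes\Phi)(\rho_{RA})$ as a local image of the Choi operator; once that is in hand both directions are short.
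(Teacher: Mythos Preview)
Your proof is correct and uses the same essential mechanism as the paper—pushing a local CP map onto the Choi operator and invoking stability of $\cK_{\cF}$—but the packaging of the converse direction differs. The paper first writes an arbitrary purification on $A\otimes A'$ as $d_{A}^{-1}\sqrt{\sigma_{A}}\,\phi^{+}\sqrt{\sigma_{A}}$, observes that conjugation by $\sqrt{\sigma_{A}}$ is a local CP map on $J_{\Phi}$, extends to mixed states on $A\otimes A'$ by linearity, and then reaches an arbitrary reference $R$ via the isometric equivalence of purifications $V_{A\to R}$. Your route is more direct: you produce a single CP map $\Lambda_{\rho}\colon A\to R$ with $(\id_{R}\otimes\Phi)(\rho_{RA})=(\Lambda_{\rho}\otimes\id_{B})(J_{\Phi})$ in one stroke, so there is no separate purification step, no linearity argument, and no isometric-equivalence step. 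What you gain is concision and the avoidance of a small wrinkle in the paper's argument (the claim that ``any pure state $\psi_{AA'}$'' arises as $d_{A}^{-1}\sqrt{\sigma_{A}}\phi^{+}\sqrt{\sigma_{A}}$ tacitly uses an additional local unitary on $A'$); what the paper's version buys is a slightly more concrete, purification-based picture that some readers may find more physically transparent.
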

\begin{proof}
($\Rightarrow$) Let $\Phi \in \Channel(A,B)$ be a CRB channel. Then $J_{\Phi} = d_{A} (\id_{A} \otimes \Phi)(\tau_{AA'}) = d_{A} \sigma_{AB} \in \cK_{\cF}$ where we have used the definition of CRB channel. \\
($\Leftarrow$) Let $J_{\Phi} \in \cK_{\cF}$. Then define 
$$d_{A}\rho^{AB} := (\id_{A} \otimes \Phi)(\Phi^{+}) = J_{\Phi} \in \cK_{\cF} \ . $$
Recall that given $\sigma_{A}$, $\psi_{AA'} = d_{A}^{-1} \sqrt{\sigma_{A}}\Phi^{+}\sqrt{\sigma_{A}}$ is a purification of $\sigma_{A}$. It follows that for any $\sigma_{A}$,
$$ d_{A} (\id_{A} \otimes \Phi)(\psi_{AA'}) = d_{A}\sqrt{\sigma_{A}} J_{\Phi} \sqrt{\sigma_{A}} \in \cK_{\cF} \ , $$
where the containment follows from the assumption $\cF$ is local CP map invariant. As there is no restriction on $\sigma_{A}$, we have for any pure state $\psi_{AA'}$, the above equation holds. Therefore, by linearity, $(\id_{A} \otimes \Phi)(\rho_{AA'}) \in \cF$ for all $\rho_{AA'}$. 

To extend to arbitrary $\id_{R}$, note that for any $R$ it suffices to prove it for pure states $\psi_{RA}$ due to linearity. By isometric equivalence of purifications, $\ket{\psi}_{RA'} = (V_{A\to R} \otimes \id_{A})\ket{\psi}_{AA'}$, which again results in a local CP map. This completes the proof.
\end{proof}
As noted, in a resource theory, you normally define the free states on a single copy of the (multipartite) space and then use many copies, e.g. your resource is $\sigma^{\otimes n}$ where $\sigma \in \cF \subset \Density(A \otimes B)$. You could then in principle use probabilistic mixtures of such a state, i.e. your states are the convex hull . Thus for $n$ copies the natural set to consider is
$$ \cF_{n} := \mrm{conv}(\{\sigma^{\otimes n}:\sigma \in \cF \}) \ .$$
Then we can generate the cone from this. We now combine the above result with Theorem \ref{thm:anti-stein-lemma-support-version} to get the following, which in effect says you won't satisfy an AEP with the cone unless the maximal resource is contained in the free states.
\begin{corollary}
For a resource theory with free states $\cF \subset \Density(A \otimes B)$ stable under local CP maps, the AEP does not hold unless $\tau \in \cF$, i.e. the maximally entangled state is included.
\end{corollary}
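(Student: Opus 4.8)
The approach I would take is to lower-bound the (smoothed) conditional min-entropy $H^{\cK^{(n)}}_{\min}(A^n|B^n)_{\tau^{\otimes n}}$ directly, via an explicitly exhibited dual-feasible point, rather than checking the relative-interior and Choi-symmetry hypotheses needed to cite Theorem~\ref{thm:anti-AEP} off the shelf. Write $\cK^{(n)} := \mrm{cone}(\cF_n)$ with $\cF_n := \mrm{conv}\{\sigma^{\otimes n} : \sigma\in\cF\}$, so its dual cone is $(\cK^{(n)})^{\ast} = \{Z : \langle Z,\sigma^{\otimes n}\rangle \geq 0 \text{ for all }\sigma\in\cF\}$, it being enough to test the generators. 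Put $c := \max_{\sigma\in\cF}\langle\tau,\sigma\rangle$, attained since $\cF$ (as is standard) is compact. Because $\tau$ is pure, $\langle\tau,\sigma\rangle = 1$ forces $\sigma = \tau$, so $c<1$ exactly when $\tau\notin\cF$. Note also that the relevant support function factorizes, $\langle\tau^{\otimes n},\sigma^{\otimes n}\rangle = \langle\tau,\sigma\rangle^n$, because both $\tau^{\otimes n}$ and the extreme points $\sigma^{\otimes n}$ of $\cF_n$ are product states.

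For the main step, recall that by the rewriting leading to \eqref{eqn:HminKDual} (no duality theorem required), $\exp(-H^{\cK^{(n)}}_{\min}(A^n|B^n)_{\tau^{\otimes n}})$ equals $\min\{\Tr Y : I_{A^n}\otimes Y \succeq_{(\cK^{(n)})^{\ast}} \tau^{\otimes n},\; Y\in\Herm(B^n)\}$. I would plug in $Y := c^{n} I_{B^n}$: by the description of $(\cK^{(n)})^{\ast}$, feasibility amounts to $c^{n} = \langle c^n I_{A^nB^n},\sigma^{\otimes n}\rangle \geq \langle\tau^{\otimes n},\sigma^{\otimes n}\rangle = \langle\tau,\sigma\rangle^{n}$ for all $\sigma\in\cF$, which is the defining property of $c$. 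Since $\Tr Y = c^{n} d_B^{n}$, this yields $H^{\cK^{(n)}}_{\min}(A^n|B^n)_{\tau^{\otimes n}} \geq -n\log(c\, d_B)$, and since smoothing only increases the min-entropy \eqref{eq:smoothed-conic-min-entropy}, the same bound holds for $H^{\ve,\cK}_{\min}$.

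To conclude, assume $\tau\notin\cF$, so $c<1$ and $-\log c > 0$. Dividing through by $n$, $\tfrac1n H^{\ve,\cK}_{\min}(A^n|B^n)_{\tau^{\otimes n}} \geq -\log c - \log d_B$ for every $n$ and $\ve$, so the limit is at least $-\log c - \log d_B > -\log d_B = H(A|B)_{\tau}$ (here $d_A = d_B$, as a maximally entangled state requires; the unequal-dimension case is identical with $d_B$ replaced by $d_C := \min\{d_A,d_B\}$, the only point where any dimension bookkeeping enters). Hence the min-entropy AEP fails, and contrapositively it can hold only if $\tau\in\cF$. The main thing to get right is the decision to argue through the dual program with an explicit feasible point, which sidesteps having to verify the relative-interior and Choi-symmetry hypotheses of Theorem~\ref{thm:anti-AEP} for a cone $\cK^{(n)}$ that is only a low-dimensional slice of $\Pos$; the genuine content is merely the factorization of the support together with compactness of $\cF$. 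The local-CP-stability hypothesis is not needed for this argument as such; it is what lets one reinterpret $\cK^{(n)}$, through Proposition~\ref{prop:choi-c-forcing}, as the cone of Choi operators of completely resource-breaking channels, making the statement a clean assertion about resource theories, and it is what lets one alternatively route the (unconditional) version through the normalization property (Proposition~\ref{prop:normalization}) and the max-divergence result Theorem~\ref{thm:anti-stein-lemma-support-version}, as the surrounding text suggests.
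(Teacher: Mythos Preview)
Your argument is correct and takes a genuinely different route from the paper. The paper's proof works through the primal (singlet-fraction) characterization of $H_{\min}^{\cK}$: it invokes Proposition~\ref{prop:choi-c-forcing} to identify $\mrm{cone}(\cF_n)$ with the Choi operators of completely resource-breaking channels, constructs the swapped--transposed family $\wt{\cF}_n$ so that the $J_{\Psi^{\ast}}\!\in\!\cK$ hypothesis of Theorem~\ref{thm:anti-AEP} is met, and then cites Theorems~\ref{thm:anti-AEP} and~\ref{thm:anti-AEP-flipped}. Your approach instead stays entirely on the definitional (``dual'') side \eqref{eqn:HminKDual}: the single feasible point $Y=c^{n}I_{B^{n}}$ already gives the regularized lower bound $-\log(c\,d_B)>-\log d_B=H(A|B)_{\tau}$, with no appeal to strong duality, the relative-interior condition, or the Choi-symmetry assumption. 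What your route buys is a cleaner and strictly more elementary argument that, as you correctly note, does not actually consume the local-CP-stability hypothesis; what the paper's route buys is that it slots the corollary into the machinery already built (Proposition~\ref{prop:choi-c-forcing} and the anti-AEP theorems), and by also invoking Theorem~\ref{thm:anti-AEP-flipped} it simultaneously breaks the $H_{\min}(B|A)$ AEP. One small caveat: in the unequal-dimension case your specific choice $Y=c^{n}I_{B^{n}}$ only yields $-\log(c\,d_B)$, which need not exceed $-\log d_C$ when $d_B>d_A$; the paper handles the asymmetric cases by using whichever of Theorems~\ref{thm:anti-AEP} or~\ref{thm:anti-AEP-flipped} applies, so your parenthetical ``replace $d_B$ by $d_C$'' would need a slightly different feasible $Y$ (or the same swap to the flipped entropy) to be made precise.
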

\begin{proof}
If $\tau \not \in \cF$, there exists $\delta \in (0,1)$ such that $\max_{\sigma \in \cF} \langle \sigma , \tau \rangle \leq \delta$. Note this will also be true if you swap the spaces and take the transpose as $\tau$ is invariant under both of these operations. This is useful for our purposes as these operations take $J_{\Psi}$ to $J_{\Psi^{\ast}}$. Define $\wt{\cF}_{n} := \{{\mrm{SWAP}(\sigma^{\otimes n})}^{\Trans} : \sigma \in  \cF\}$.
It follows $$\max_{\sigma^{n} \in \wt{\cF}_{n}} \langle \sigma^{n},\tau^{\otimes n}\rangle = \max_{\nu} \int_{\sigma \in \wt{\cF}} \langle \sigma^{\otimes n}, \tau^{\otimes n} \rangle  d\nu(\sigma) \leq \delta^{n} \ , $$ 
where $\nu$ is optimized over the space of probability measures over $\wt{\cF}$ and the final equality follows by linearity. Therefore, by Proposition \ref{prop:choi-c-forcing}, $\supp_{\sigma \in \mrm{cone}(\wt{\cF}_{n}) \cap \widetilde{\Density}}\langle \sigma, \tau^{\otimes n} \rangle \leq \delta^{n}$. As $\wt{\cF}_{n}$ corresponds to swapping the spaces and taking the complex conjugate, this is the space of adjoint maps. Thus we can apply Theorem \ref{thm:anti-AEP} without needing to assume $J_{\Psi^{\ast}} \in \cK$ to break the AEP as $\delta^{n} \to 0$ since $\delta < 1$. We can do an identical argument for $\cF_{n}$ to apply Theorem \ref{thm:anti-AEP-flipped}. This completes the proof.
\end{proof}

\paragraph*{Coherence anti-standard min-entropy AEP}
While in this setting entanglement does seem to be the fundamental property, it may be of interest that we can present a coherence version of the anti-standard AEP, since we were able to present an entanglement version of breaking the cone-restricted max-divergence AEP. To do this, we do note that certain incoherent operations have algebraic structure that imply properties of their Choi matrices in terms of entanglement rank.
\begin{proposition}\label{prop:classical-to-classical-Choi}
A quantum channel $\Psi \in \Channel(A,B)$ is classical-to-classical in a preferred basis if and only if $J_{\Psi} \in \cI(A \otimes B)$. Moreover, this implies $J_{\Psi^{\ast}} \in \cI(A \otimes B)$.
\end{proposition}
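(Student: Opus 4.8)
The plan is to pass back and forth between $\Psi$ and its Choi operator using the conventions fixed in Section~\ref{sec:background}, working throughout in the product basis $\{\ket{i}_A \otimes \ket{j}_B\}$, which is the preferred basis of $A \otimes B$ relative to which $\cI(A\otimes B)$ is defined in Definition~\ref{defn:incoherent-cone}. Recall that ``classical-to-classical in a preferred basis'' means $\Psi = \Delta_B \circ \Psi \circ \Delta_A$ with $\Delta$ the complete dephasing channel; equivalently, there is a column-stochastic matrix $(p(j|i))$ such that $\Psi(\op{i}{i'}) = \delta_{i,i'}\sum_j p(j|i)\op{j}{j}$.

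First I would prove the forward implication by direct computation. Writing $\phi^{+}_{A\wt A} = \sum_{i,i'}\op{i}{i'}_A \otimes \op{i}{i'}_{\wt A}$ and applying $\id_A \otimes \Psi$, the off-diagonal terms $\op{i}{i'}$ with $i \neq i'$ are annihilated by the $\delta_{i,i'}$, leaving $J_\Psi = \sum_{i,j} p(j|i)\,\op{i}{i}_A \otimes \op{j}{j}_B = \sum_{i,j} p(j|i)\,\op{ij}{ij}$, a non-negative combination of rank-one incoherent projectors in the product basis, hence an element of $\cI(A\otimes B)$.

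For the converse I would use the recovery identity $\Psi(X) = \Tr_A[(X^{\Trans}\otimes I_B)J_\Psi]$, which is consistent with the Choi convention $J_\Psi = (\id_A\otimes\Psi)(\phi^{+})$ adopted here (one verifies it on $\Psi = \id_A$). Assuming $J_\Psi \in \cI(A\otimes B)$, write $J_\Psi = \sum_{i,j}c_{ij}\op{ij}{ij}$ with $c_{ij}\geq 0$; substituting $X = \op{i}{i'}$ and tracing out $A$ gives $\Psi(\op{i}{i'}) = \delta_{i,i'}\sum_j c_{ij}\op{j}{j}$, so $\Psi$ is classical-to-classical, and trace preservation (apply $\Tr$ to $\Psi(\op{i}{i})$) forces $\sum_j c_{ij} = 1$ for each $i$. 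This closes the equivalence.

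Finally, for the ``moreover'' clause I would invoke the identity $J^{\Psi^{\ast}}_{BA} = (\mbb{F}J^{\Psi}_{AB}\mbb{F})^{\Trans}$ recorded in Section~\ref{sec:background}: since $J_\Psi$ is diagonal in the product basis, conjugation by the swap $\mbb{F}$ only permutes its non-negative diagonal entries and the transpose acts trivially, so $J_{\Psi^{\ast}} \in \cI(B\otimes A)$; alternatively, one checks directly that $\Psi^{\ast}(\op{j}{j'}) = \delta_{j,j'}\sum_i p(j|i)\op{i}{i}$ is again classical-to-classical and applies the forward direction. No step presents a genuine obstacle; the only points requiring care are matching the paper's $\vec$/Choi conventions in the recovery formula and being explicit that ``incoherent'' on $A\otimes B$ refers to the tensor product of the preferred bases.
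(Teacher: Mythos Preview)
Your proof is correct and follows essentially the same strategy as the paper's: both pass between $\Psi$ and $J_\Psi$ via the Choi isomorphism and verify diagonality by direct computation, and both handle the adjoint via the swap--transpose identity $J_{\Psi^{\ast}} = (\mbb{F}J_\Psi\mbb{F})^{\Trans}$. The only cosmetic difference is in the converse: the paper extracts rank-one Kraus operators $\sqrt{\alpha_{ij}}\op{j}{i}$ from the diagonal $J_\Psi$ via $\vec^{-1}$ and recognizes them as the Kraus operators of a classical-to-classical map, whereas you use the action-from-Choi recovery formula $\Psi(X) = \Tr_A[(X^{\Trans}\otimes I_B)J_\Psi]$ directly; both are equally valid and equally elementary.
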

Therefore, as $\cI(A \otimes B) \subset \Sep(A:B)$, we can view Corollary \ref{corr:ent-rank-anti-AEP} as capturing this case. More generally, the maximally allowed coherent subspace of a PIO is captured by the maximal rank over its complete set of incoherent projectors. The maximal rank of the PIO incoherent projectors can then be shown to be equivalent to the Schmidt number of the PIO's Choi operator as the following proposition states.
\begin{proposition}\label{prop:PIO-Entanglement-Rank}
Let $\Phi \in \Channel(A,A)$ be a PIO with Kraus operators $K_{i} = U_{i}\Pi_{i}$. Then $\mrm{SN}(J_{\Phi}) = \max_{i}\{\rank(\Pi_{i})\}$. Operationally, for a given PIO $\Phi$, the Schmidt number of $J_{\Phi}$ is the size of the largest subspace in which coherence may be preserved.
\end{proposition}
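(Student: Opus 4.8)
The plan is to prove the two inequalities $\mrm{SN}(J_\Phi)\le r$ and $\mrm{SN}(J_\Phi)\ge r$ separately, where $r:=\max_i\rank(\Pi_i)$, and then read off the operational statement from the Kraus operator that attains the maximum. For the upper bound, observe that $U_i=\sum_{x\in[d]}e^{i\theta_x}\dyad{\sigma_i(x)}{x}$ is a genuine unitary on $\mbb{C}^d$ (a permutation dressed by phases), so $\rank(K_i)=\rank(U_i\Pi_i)=\rank(\Pi_i)\le r$ for every $i$. Thus $\{K_i\}$ is a Kraus representation of $\Phi$ with all Kraus ranks at most $r$, and the equivalence of items 3 and 4 in Proposition \ref{prop:ent-rank-channel-property} gives $\mrm{SN}(J_\Phi)\le r$.

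For the lower bound I would first make the structure of $J_\Phi$ explicit. A direct computation with the vec map gives $\vec(K_i)=\sum_{x\in S_i}e^{i\theta_x}\ket{x}\otimes\ket{\sigma_i(x)}$; since $\sigma_i$ is a bijection, both $\{\ket{x}:x\in S_i\}$ and $\{\ket{\sigma_i(x)}:x\in S_i\}$ are orthonormal, so this expression is already a Schmidt decomposition and $\vec(K_i)$ has Schmidt rank exactly $|S_i|=\rank(\Pi_i)$. Because the blocks $S_i$ partition $[d]$, the operators $\vec(K_i)\vec(K_i)^{\ast}$ are supported on the pairwise orthogonal subspaces $\mrm{span}\{\ket{x}:x\in S_i\}\otimes B$, so $J_\Phi=\sum_i\vec(K_i)\vec(K_i)^{\ast}$ is block-diagonal with respect to the corresponding decomposition of the first tensor factor. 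Now pick $i^\star$ with $|S_{i^\star}|=r$ and apply the local (hence separable) CP map $X\mapsto(\Pi_{i^\star}\otimes I)X(\Pi_{i^\star}\otimes I)$: by the orthogonality just noted it annihilates $\vec(K_i)$ for $i\neq i^\star$ and fixes $\vec(K_{i^\star})$, so the image of $J_\Phi$ equals $\vec(K_{i^\star})\vec(K_{i^\star})^{\ast}$, a pure state of Schmidt rank $r$. Since $\Ent_{r-1}$ is preserved under separable CP maps (the monotonicity property recorded after Definition \ref{defn:ent-rank}), $J_\Phi\in\Ent_{r-1}$ would force this image into $\Ent_{r-1}$, which is impossible. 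Hence $J_\Phi\notin\Ent_{r-1}$, i.e.\ $\mrm{SN}(J_\Phi)\ge r$, and together with the upper bound $\mrm{SN}(J_\Phi)=r$. The operational claim is then immediate: $K_{i^\star}$ restricts to a unitary isomorphism from the incoherent subspace $\mrm{span}\{\ket{x}:x\in S_{i^\star}\}$ onto $\mrm{span}\{\ket{\sigma_{i^\star}(x)}:x\in S_{i^\star}\}$, so every coherence internal to this $r$-dimensional subspace is transmitted intact, while no larger coherent subspace can survive because $r$ is the maximal Kraus rank.

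I expect the lower bound to be the only real work, and the subtlety to flag is that $J_\Phi$ is in general unnormalized, so the Schmidt-number lower bound should be phrased as the cone statement $J_\Phi\in\Ent_r\setminus\Ent_{r-1}$ (which is exactly what Definition \ref{defn:ent-rank} supports) rather than passing to a density matrix. If one prefers to avoid invoking the separable-CP-map monotonicity, the same conclusion follows by taking an arbitrary decomposition $J_\Phi=\sum_j\vec(Z_j)\vec(Z_j)^{\ast}$, projecting with $\Pi_{i^\star}\otimes I$ (using $(\Pi_{i^\star}\otimes I)\vec(Z_j)=\vec(Z_j\Pi_{i^\star})$ from \eqref{eqn:vec-map-identity} and $\rank(Z_j\Pi_{i^\star})\le\rank(Z_j)$), and observing that since $\vec(K_{i^\star})\vec(K_{i^\star})^{\ast}$ is a pure state of Schmidt rank $r$, every nonzero surviving term is proportional to $\vec(K_{i^\star})$, forcing at least one $Z_j$ to have rank $\ge r$.
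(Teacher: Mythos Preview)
Your proof is correct and follows essentially the same strategy as the paper: compute $\vec(K_i)=\sum_{x\in S_i}e^{i\theta_x}\ket{x}\otimes\ket{\sigma_i(x)}$ explicitly, read off its Schmidt rank as $|S_i|$, and exploit the mutual orthogonality of the blocks $S_i$ to prevent any alternative decomposition from lowering the maximal Schmidt rank. The differences are packaging rather than substance. For the upper bound you invoke Proposition~\ref{prop:ent-rank-channel-property} directly on the Kraus ranks, whereas the paper recomputes $\vec(K_i)$ via the identity $\vec(U_i\Pi_i)=(\Pi_i\otimes U_i)\vec(I_A)$ and bounds each summand of $J_\Phi$ by hand; your route is shorter. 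For the lower bound the paper argues somewhat informally that the orthogonal summands ``share no simple tensors,'' while you make this rigorous by applying the local CP map $(\Pi_{i^\star}\otimes I)\cdot(\Pi_{i^\star}\otimes I)$ and citing the monotonicity of $\Ent_{r-1}$ under separable CP maps (and you also sketch the direct rank argument as an alternative). Both reach the same conclusion; your version just leans more on results already stated in the paper.
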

Therefore, we can view Corollary \ref{corr:ent-rank-anti-AEP} as capturing the coherence case as well. Proofs of both propositions are given in the appendix.

\section{Restricted Min- and Max-Entropy Duality}
One well known property of the standard min- and max-entropy mentioned in the background is that they satisfy a special duality: for pure states $\rho_{ABC}$, $H_{\min}(A|B) = -H_{\max}(A|C)$ \cite{Konig-2009a}. One might hope that we can generalize this notion for these generalized min-entropies, and indeed we can as we now show. As we will need to use purifications which are local unitarily invariant, we will require that the cone $\cK(A\otimes B)$ be invariant under local unitaries under both spaces. We will call such a cone \textit{physical} in the sense we haven't restricted the local frame of reference of Alice (resp. Bob) if you imagine they have access to the $A$ (resp. $B$) space of the joint state. Rather than define the restricted max-entropy as the dual of the restricted min-entropy directly, we present a generalization of the max-entropy and then verify it satisfies duality. We believe this provides better intuition further on.

\begin{definition}[Restricted Max-Entropy]\label{defn:restricted-max-entropy}
Consider $\rho_{AC} \in \Density(A\otimes C)$ and an arbitrary purification $\rho_{ABC}$ of it. Let $\cK \subseteq \Pos(A \otimes B)$ be physical. Define $\Channel^{\cK}(B,A') := \{\Phi \in \Channel(B,A') : J_{\Phi^{\ast}} \in \cK \}$. Then define
\begin{equation}
\begin{aligned}
d_{A}^{-1}\exp(H^{\cK}_{\max}(A|C))  := \max_{\substack{\Phi \in \Channel^{\cK}(B,A') \\ \omega \in \Density(C)}} \, F^{2}(\Phi(\rho_{ABC}), \tau_{AA'} \otimes \omega_{C}) \ .
\end{aligned}
\end{equation}
\end{definition}
We first note that the definition is global in the sense that to even define the measure required reference to the purified space $B$. 
This differs than in the standard min-entropy, i.e.\ the $\cK = \Pos$ case. We also note that the R.H.S.\ of the definition measures how decoupled the $AB$ space can be from the $C$ space. This aligns with the operational interpretation of $H_{\max}$ in terms of decoupling accuracy \cite{Konig-2009a}. We will return to these points once we prove this is the correct definition, which will require Uhlmann's theorem which we state first for completeness.
\begin{proposition}[Uhlmann's Theorem + Some More]\label{prop:Uhlmann+}
Let $P_0 , P_1 \in \Pos(A)$. Let $\Tr_{R}(Q_0 ) = P_0$. Then,
$$F(P_0,P_1) = \max_{Q_1: \Tr_{R}(Q_1) = P_1} F(Q_0,Q_1) \ . $$ 
Moreover, if $Q_{0}$ is a purification of $P_0$, one can restrict $Q_1$ to being a purification of $P_1$.
\end{proposition}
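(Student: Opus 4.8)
\emph{Proof proposal.} I would prove the two inequalities ``$\max \le F(P_0,P_1)$'' and ``$\max \ge F(P_0,P_1)$'' separately, using only two ingredients: monotonicity of the fidelity under trace-preserving maps, $F(\Lambda(X),\Lambda(Y)) \ge F(X,Y)$, and the pure-state form of Uhlmann's theorem, i.e.\ that for any fixed purification $\ket{\psi_0}$ of $P_0$ one has $\sup\{\,|\langle \psi_0|\psi_1\rangle| : \ket{\psi_1}\text{ a purification of }P_1\,\} = \|\sqrt{P_0}\sqrt{P_1}\|_{1} = F(P_0,P_1)$. The latter I would take as the standard Uhlmann theorem (cf.\ \cite{Watrous-2018}); if one wants it in-house, it follows by writing purifications of $P_i$ in the form $(I\otimes U_i)\vec(\sqrt{P_i})$ and applying \eqref{eqn:vec-map-identity}, so that $\langle\psi_0|\psi_1\rangle$ becomes $\Tr(\sqrt{P_1}\sqrt{P_0}\,U)$ for a (partial) isometry $U$, together with the variational identity $\|X\|_{1} = \max_{U}|\Tr(XU)|$.

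The upper bound is immediate: for any $Q_1 \in \Pos(A\otimes R)$ with $\Tr_R(Q_1) = P_1$, monotonicity of $F$ under the channel $\Tr_R$ gives $F(Q_0,Q_1) \le F(\Tr_R Q_0,\Tr_R Q_1) = F(P_0,P_1)$, hence the maximum over such $Q_1$ is at most $F(P_0,P_1)$.

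For achievability I would enlarge the reference by an auxiliary system $R'$ (taking it large enough to purify both $P_0$ and $P_1$) and fix a purification $\ket{\psi_0} \in A\otimes R\otimes R'$ of $Q_0$; since $\Tr_{RR'}\op{\psi_0}{\psi_0} = \Tr_R Q_0 = P_0$, this $\ket{\psi_0}$ is also a purification of $P_0$. Uhlmann's theorem applied in $A\otimes(R\otimes R')$ then yields a purification $\ket{\psi_1}$ of $P_1$ with $|\langle\psi_0|\psi_1\rangle| = F(P_0,P_1)$. Setting $Q_1 := \Tr_{R'}\op{\psi_1}{\psi_1} \in \Pos(A\otimes R)$ gives $\Tr_R Q_1 = \Tr_{RR'}\op{\psi_1}{\psi_1} = P_1$, and monotonicity of $F$ under $\Tr_{R'}$ (using $\Tr_{R'}\op{\psi_0}{\psi_0} = Q_0$) gives
\[
F(Q_0,Q_1) \ge F\big(\op{\psi_0}{\psi_0},\op{\psi_1}{\psi_1}\big) = |\langle\psi_0|\psi_1\rangle| = F(P_0,P_1),
\]
which combined with the upper bound forces equality and shows the maximum is attained at this $Q_1$. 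For the ``moreover'' clause: if $Q_0$ is already a (rank-one) purification of $P_0$, then $R'$ may be taken trivial, $\ket{\psi_0}$ is a unit vector representing $Q_0$, and the constructed optimizer $Q_1 = \op{\psi_1}{\psi_1}$ is pure, i.e.\ a purification of $P_1$; so the optimization may be restricted to purifications of $P_1$ without changing the value.

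The only nontrivial step — and hence the main obstacle — is the pure-state Uhlmann theorem itself: parametrizing all purifications of a fixed positive operator (they differ by an isometry on the reference) and reducing the overlap maximization to the trace norm $\|\sqrt{P_0}\sqrt{P_1}\|_{1}$. Everything else is routine bookkeeping with partial traces and the data-processing inequality for fidelity. The one technical point to keep track of is dimension: in the achievability step $R\otimes R'$ must be large enough to purify both operators, which is always arrangeable by enlarging $R'$, and the ``moreover'' clause implicitly uses that $R$ by itself suffices to purify $P_1$, which holds in the settings where the lemma is invoked.
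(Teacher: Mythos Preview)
The paper states this proposition without proof, treating it as a standard result (the title ``Uhlmann's Theorem + Some More'' signals this). Your proof is correct and is exactly the standard argument: monotonicity of fidelity under partial trace for the upper bound, and the pure-state Uhlmann theorem on an enlarged reference for achievability. You also correctly flag the one genuine caveat in the ``moreover'' clause, namely that $R$ must be large enough to purify $P_1$; this is indeed an implicit assumption in the statement as written.
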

We now prove we have the correct definition for $H^{\cK}_{\max}$. We note that the proof is in effect the proof of duality between standard min- and max-entropy from \cite{Tomamichel-2012a} but generalized to more cones.
\begin{corollary}
For pure $\rho_{ABC}$ and $\cK \subseteq \Pos(A \otimes B)$ which is physical,
$$ H^{\cK}_{\max}(A|C) = -H_{\min}^{\cK}(A|B) \ . $$
\end{corollary}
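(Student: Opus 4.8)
The plan is to turn both sides of the claimed identity into maximizations over the same set of channels and then match them term-by-term using Uhlmann's theorem. Starting from the right-hand side, Proposition \ref{prop:min-entropy-singlet-fraction} gives $\exp(-H^{\cK}_{\min}(A|B)_{\rho_{AB}}) = d_A \max_{\Psi \in \Channel^{\cK}(B,A')} \langle (\id_A \otimes \Psi)(\rho_{AB}), \tau^{AA'}\rangle$, and since $\tau_{AA'}$ is a pure state this inner product is exactly $F^2\big((\id_A\otimes\Psi)(\rho_{AB}),\tau_{AA'}\big)$. By Definition \ref{defn:restricted-max-entropy}, the left-hand side is $d_A^{-1}\exp(H^{\cK}_{\max}(A|C)) = \max_{\Phi\in\Channel^{\cK}(B,A'),\,\omega\in\Density(C)} F^2\big((\id_{AC}\otimes\Phi)(\rho_{ABC}),\tau_{AA'}\otimes\omega_C\big)$. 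The two outer maximizations run over the \emph{same} constraint set $\Channel^{\cK}(B,A')$, so the whole corollary reduces to proving, for each fixed $\Phi=\Psi$ in that set,
\begin{equation}
\max_{\omega\in\Density(C)} F^2\!\big((\id_{AC}\otimes\Phi)(\rho_{ABC}),\, \tau_{AA'}\otimes\omega_C\big) \;=\; F^2\!\big((\id_A\otimes\Phi)(\rho_{AB}),\, \tau_{AA'}\big)\,,
\end{equation}
after which I take the maximum over $\Phi\in\Channel^{\cK}(B,A')$ and reinsert the common $d_A$ prefactor, and strict monotonicity of $\log$ delivers $H^{\cK}_{\max}(A|C) = -H^{\cK}_{\min}(A|B)$.

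For the per-channel identity, fix $\Phi$, pick a Stinespring dilation $V\colon B\to A'\otimes E$ (with $E$ large enough, see below), and set $\ket{\theta}_{ACA'E} := (\id_{AC}\otimes V)\ket{\rho}_{ABC}$. This single pure vector purifies $(\id_{AC}\otimes\Phi)(\rho_{ABC})$ with purifying system $E$, and also purifies $(\id_A\otimes\Phi)(\rho_{AB})$ with purifying system $CE$. Now apply Proposition \ref{prop:Uhlmann+} on each side, using $\theta$ as the known purification. On the right, every purification of the \emph{pure} state $\tau_{AA'}$ on $CE$ has the product form $\ket{\tau}_{AA'}\otimes\ket{\eta}_{CE}$ with $\ket{\eta}$ an arbitrary unit vector, so $F\big((\id_A\otimes\Phi)(\rho_{AB}),\tau_{AA'}\big) = \max_{\ket{\eta}_{CE}} \lvert\langle\theta\,|\,\tau_{AA'}\otimes\eta_{CE}\rangle\rvert$. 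On the left, every purification of $\tau_{AA'}\otimes\omega_C$ on $E$ has the form $\ket{\tau}_{AA'}\otimes\ket{\omega}_{CE}$ with $\ket{\omega}_{CE}$ a purification of $\omega_C$; maximizing first over purifications of a fixed $\omega_C$ and then over $\omega_C$ ranges exactly over all unit vectors $\ket{\eta}_{CE}$, producing the identical expression. Hence the two fidelities coincide, which is what was needed.

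Two points deserve care. First, $H^{\cK}_{\max}$ was defined through an \emph{arbitrary} purification $\rho_{ABC}$, so one must check the value is purification-independent: changing the purifying system $B$ by a unitary $U_B$ conjugates each Choi operator $J_{\Phi^*}$ by the local unitary $I_{A'}\otimes U_B$, so the constraint $J_{\Phi^*}\in\cK$ is preserved precisely because $\cK$ is physical, while the fidelities are manifestly invariant; enlarging $B$ is handled similarly. Second, and this is the only genuine friction I anticipate, the ancilla dimension $E$ must simultaneously be large enough to carry the Stinespring dilation of $\Phi$ and to host a purification of every $\omega_C$, and one must confirm that with $E$ that large the product vectors $\{\ket{\tau}_{AA'}\otimes\ket{\eta}_{CE}\}$ do exhaust all purifications of $\tau_{AA'}$ (respectively of $\tau_{AA'}\otimes\omega_C$) on the relevant system, so that the ``moreover'' clause of Proposition \ref{prop:Uhlmann+} applies cleanly. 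Everything else — tracing $C$ or $E$ out of $\theta$ to identify the two marginals, and the identity $F^2(\sigma,\tau_{AA'}) = \langle\sigma,\tau^{AA'}\rangle$ for the pure target — is routine bookkeeping.
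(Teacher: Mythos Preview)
Your proof is correct and rests on the same ingredients as the paper's — Stinespring dilation, Uhlmann's theorem, and the fact that any purification of $\tau_{AA'}\otimes(\cdot)$ factors as $\ket{\tau}_{AA'}\otimes(\cdot)$ because $\tau$ is pure. The only organizational difference is that you fix the channel and prove the per-channel fidelity identity directly by matching both Uhlmann maximizations to the same unconstrained optimum over unit vectors $\ket{\eta}_{CE}$, whereas the paper first passes to the isometric form of $H^{\cK}_{\min}$ and then establishes the two inequalities separately via fidelity DPI; your packaging is slightly cleaner but mathematically equivalent.
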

\begin{proof}
We first need to convert $H_{\min}^{\cK}$ into appropriate form. Let $V_{\Psi}$ denote the isometric representation of a map $\Psi$. We are going to consider the maps in the optimization for $H_{\min}^{\cK}$, we note these maps will be $V_{\Psi}:B \to A'R$. By applying Proposition \ref{prop:Uhlmann+}, we have
\begin{align*}
F^{2}((\id_{A} \otimes \Psi)(\rho), \tau) = \max_{\sigma : \Tr_{R}(\sigma) = \tau} F^{2}(\rho_{AA'R},\sigma) \ , 
\end{align*}
where $\rho_{AA'R} := (\id_{A} \otimes V)(\rho)(\id_{A} \otimes V^{\ast})$. Noting that any extension of $\tau$ is of the form $\tau \otimes \omega$, using Proposition \ref{prop:min-entropy-singlet-fraction}, we may conclude:
\begin{equation}\label{eqn:isometric_Hmin} 
\begin{aligned}
 d_{A}^{-1} \exp(-H^{\cK}_{\min}(A|B)) 
:= \max_{\substack{V_{\Psi} \in \mrm{Iso}(B,A'R): \\ J_{\Psi^{\ast}} \in \cK}} \, \max_{\omega \in \Density(R)} \, F^{2}(\rho_{AA'R}, \tau_{AA'} \otimes \omega) \ .
\end{aligned}
\end{equation}

We prove this in two directions.
\begin{align*}
    \exp(H^{\cK}_{\max}(A|C))
    =& |A| \max_{\Phi \in \Channel^{\cK}(B,A')} \max_{\omega \in \Density(C)} \, F^{2}(\Phi(\rho_{ABC}), \tau_{AA'} \otimes \omega_{C}) \\
    =& |A| \max_{\cV \in \mrm{Iso}^{\cK}} \max_{\omega \in \Density(RC)} \, F^{2}(\cV(\rho_{ABC}), \tau_{AA'} \otimes \omega_{RC}) \\
    \leq& |A| \max_{\cV \in \mrm{Iso}^{\cK}} \max_{\omega \in \Density(R)} \, F^{2}(\Tr_{C} \circ \cV(\rho_{ABC}), \tau_{AA'} \otimes \omega_{R}) \\
    =& \exp(-H^{\cK}_{\min}(A|B)) \ ,
\end{align*}
where first line is definition, second line is Uhlmann's theorem and the definition
\begin{align*}
\mrm{Iso}^{\cK} := 
&\{V \in \mrm{Iso}(B,A'R) : J_{\Psi^{\ast}} \in \cK \, , \, \Psi(\cdot) := \Tr_{R} \circ (V \cdot V^\ast) \} \ ,
\end{align*}
the third line is fidelity's DPI, fourth line is Eqn. \eqref{eqn:isometric_Hmin}.
\begin{align*}
    \exp(-H^{\cK}_{\min}(A|B))
    =& \max_{\cV \in \mrm{Iso}^{\cK}} \max_{\omega \in \Density(R)} \, F^{2}(\cV(\rho_{AB}), \tau_{AA'} \otimes \omega_{R}) \\ 
    =& \max_{\cV \in \mrm{Iso}^{\cK}} \max_{\omega \in \Density(RC)} \, F^{2}(\cV(\rho_{ABC}), \tau_{AA'} \otimes \omega_{RC}) \\
    \leq& \max_{\cV \in \mrm{Iso}^{\cK}} \max_{\omega \in \Density(C)} \, F^{2}(\Tr_{R} \circ \cV(\rho_{ABC}), \tau_{AA'} \otimes \omega_{C}) \\
    =& \max_{\Phi \in \Channel^{\cK}(B,A')} \max_{\omega \in \Density(C)} \, F^{2}(\Phi(\rho_{ABC}), \tau_{AA'} \otimes \omega_{C}) \\
    =&\exp(H^{\cK}_{\max}(A|C)) \ ,
\end{align*}
where the first line is Eqn. \eqref{eqn:isometric_Hmin}, the second is Uhlmann's theorem, the third is fidelity's DPI, and the fourth is the definition.
\end{proof}
This verifies that we have chosen the correct definition of $H^{\cK}_{\max}$ if we are interested in the duality property holding. As noted earlier, this is defined on the tripartite space, but the standard $H_{\max}$ does not need to consider the purified space. In effect this is a property of the fact that without the positive semidefinite cone, one cannot consider all physically realizable dynamics on the purified space and so the purified space can no longer be ignored. For clarity, we now present the proof that makes this explicit.
\begin{proposition}
If $\cK = \Pos(A \otimes B)$, then $H^{\cK}_{\max}(A|C)_{\rho} = d_{A} \max_{\omega_{C} \in \Density(C)} F^{2}(\rho_{AC},\pi_{A} \otimes \omega_{C})$
\end{proposition}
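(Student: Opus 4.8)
The plan is to unfold Definition~\ref{defn:restricted-max-entropy} for the cone $\cK=\Pos(A\otimes B)$ and then match two inequalities, one from data processing and one from Uhlmann's theorem. First I would note that when $\cK=\Pos(A\otimes B)$ the Choi constraint $J_{\Phi^{\ast}}\in\cK$ defining $\Channel^{\cK}(B,A')$ is automatic, since the adjoint of a channel is completely positive, so $\Channel^{\Pos}(B,A')=\Channel(B,A')$. Reading the claimed identity after exponentiating (consistent with Definition~\ref{defn:restricted-max-entropy}), it then suffices to prove
\begin{equation*}
\max_{\substack{\Phi\in\Channel(B,A')\\ \omega\in\Density(C)}} F^{2}\big((\id_{AC}\otimes\Phi)(\rho_{ABC}),\,\tau_{AA'}\otimes\omega_{C}\big)\;=\;\max_{\omega\in\Density(C)} F^{2}\big(\rho_{AC},\,\pi_{A}\otimes\omega_{C}\big)\ .
\end{equation*}

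For ``$\leq$'' I would apply the partial trace $\Tr_{A'}$ inside the fidelity. Because $\Phi$ is trace preserving and $\rho_{ABC}$ purifies $\rho_{AC}$, one has $\Tr_{A'}\big[(\id_{AC}\otimes\Phi)(\rho_{ABC})\big]=\Tr_{B}[\rho_{ABC}]=\rho_{AC}$, while $\Tr_{A'}[\tau_{AA'}\otimes\omega_{C}]=\pi_{A}\otimes\omega_{C}$ since the marginal of $\tau_{AA'}$ is $\pi_{A}$. Monotonicity of fidelity under CPTP maps then gives $F^{2}\big((\id_{AC}\otimes\Phi)(\rho_{ABC}),\tau_{AA'}\otimes\omega_{C}\big)\leq F^{2}(\rho_{AC},\pi_{A}\otimes\omega_{C})$ for every $\Phi$ and $\omega_{C}$, and maximizing yields the bound.

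For ``$\geq$'' I would fix an optimal $\omega_{C}^{\star}$ on the right and build a single channel attaining the value. Observe that $\tau_{AA'}\otimes\omega_{C}^{\star}$ is an extension of $\pi_{A}\otimes\omega_{C}^{\star}$ to the system $A'$. Applying Proposition~\ref{prop:Uhlmann+} with $P_{0}=\pi_{A}\otimes\omega_{C}^{\star}$, $P_{1}=\rho_{AC}$, $R=A'$ and the fixed extension $Q_{0}=\tau_{AA'}\otimes\omega_{C}^{\star}$, and using symmetry of the fidelity, produces an extension $\sigma_{AA'C}$ of $\rho_{AC}$ (i.e.\ $\Tr_{A'}\sigma_{AA'C}=\rho_{AC}$) with $F(\sigma_{AA'C},\tau_{AA'}\otimes\omega_{C}^{\star})=F(\rho_{AC},\pi_{A}\otimes\omega_{C}^{\star})$. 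It then remains to realize $\sigma_{AA'C}$ as $(\id_{AC}\otimes\Phi)(\rho_{ABC})$ for some $\Phi\in\Channel(B,A')$: purify $\sigma_{AA'C}$ to some $\phi_{AA'CE}$, enlarge $E$ if necessary so that $\dim(A'E)\geq\dim(B)$, invoke uniqueness of purification of $\rho_{AC}$ to obtain an isometry $W\colon B\to A'E$ with $(\id_{AC}\otimes W)\rho_{ABC}(\id_{AC}\otimes W)^{\ast}=\phi_{AA'CE}$, and set $\Phi:=\Tr_{E}\circ(W\cdot W^{\ast})$. This $\Phi$ together with $\omega_{C}^{\star}$ achieves $F^{2}(\rho_{AC},\pi_{A}\otimes\omega_{C}^{\star})$ in the left-hand maximum, which gives ``$\geq$'' and hence equality.

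The main obstacle is the reverse direction: one must invoke Uhlmann in the form that returns a \emph{mixed} extension of $\rho_{AC}$ (not a purification, since $\tau_{AA'}\otimes\omega_{C}$ is generally mixed), and then argue---uniformly in $\dim B$---that every such extension is the image of the purification $\rho_{ABC}$ under a channel on $B$; the dimension bookkeeping is absorbed by padding the purifying environment $E$ of the target before applying uniqueness of purification. The ``$\leq$'' half and the reduction $\Channel^{\Pos}(B,A')=\Channel(B,A')$ are routine. (One could alternatively derive the statement from the duality $H^{\Pos}_{\max}(A|C)=-H^{\Pos}_{\min}(A|B)$ established above together with the known closed form for the standard $H_{\max}$, but the direct argument is what makes the disappearance of the $B$-dependence transparent.)
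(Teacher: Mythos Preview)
Your proposal is correct and follows essentially the same route as the paper's proof: both directions use fidelity data processing under $\Tr_{A'}$ for ``$\leq$'', and for ``$\geq$'' apply Proposition~\ref{prop:Uhlmann+} with the fixed extension $\tau_{AA'}\otimes\omega_{C}$ of $\pi_{A}\otimes\omega_{C}$ to obtain an extension $\sigma_{AA'C}$ of $\rho_{AC}$, then realize that extension as the output of a channel on $B$ via purification and isometric equivalence of purifications. The only cosmetic difference is that the paper establishes the equality for each fixed $\omega_{C}$ before maximizing, whereas you fix an optimal $\omega_{C}^{\star}$ at the outset; your dimension bookkeeping (padding $E$ so that $\dim(A'E)\geq\dim B$) is also made slightly more explicit than in the paper.
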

\begin{proof}
First note that 
$$ \max_{\Phi \in \Channel(B,A)} F(\Phi(\rho_{ABC}),\tau_{AA'} \otimes \omega_{C}) \leq F(\rho_{AC},\pi_{A} \otimes \omega_{C}) \ , $$
which holds by fidelity DPI under partial trace. Next
use Proposition \ref{prop:Uhlmann+} to say that
\begin{equation*}
F(\rho_{AC}, \pi_{A} \otimes \omega_{C}) = \max_{\substack{\sigma \in \Density(AA'C) \\ \Tr_{A'}(\sigma) = \rho_{AC}}} F(\sigma, \tau_{AA'} \otimes \omega) \ .
\end{equation*}
Note that given pure $\rho_{ABC}$, any extension of $\rho_{AC}$ can be obtained via a channel $\cN$ acting on $B$. This is because for extension $\sigma_{ACD}$, one can consider the purification $\widehat{\sigma}_{ACDE}$. By the isometric equivalence of purifications, there is $V_{B \to DE}$ to take $\rho_{ABC}$ to $\widehat{\sigma}_{ACDE}$ and then just trace off $E$, which constructs such a $\cN_{B \to D}$. Therefore, for $\sigma^{\star}$ that optimizes the R.H.S. above, there exists $\cN_{B \to A'}$ such that $\cN(\rho_{ABC}) = \sigma^{\star}.$ Therefore 
$$ F(\rho_{AC}, \pi_{A} \otimes \omega_{C}) \leq \underset{\Phi \in \Channel(B,A')}{\max} F(\Phi(\rho_{ABC}),\tau_{AA'} \otimes \omega_{C}) \ . $$
Therefore for any $\omega_{C}$, $$F(\rho_{AC},\pi_{A} \otimes \omega_{C}) = \max_{\Phi \in \Channel(B,A')} F(\Phi(\rho_{ABC}),\tau_{AA'} \otimes \omega_{C}) \ ,$$
so optimizing over $\omega_{C}$ completes the proof.
\end{proof}
In effect, one might view the result that we can only consider the local spaces alone when all CPTP dynamics are considered as a notion of physicality. Indeed, one might view this as a notion somewhat similar to that of \cite[Theorem 1]{Buscemi-2014a}.  Reference \cite{Buscemi-2014a} shows that given a tripartite distribution $\rho_{ABC}$, there exist local CPTP dynamics on purified space $B$ for all global dynamics on $BC$ if and only if $\rho_{AC}$ is not entangled.\footnote{This seems to not be stated explicitly in \cite{Buscemi-2014a}, but it is equivalent as $I(A:C|B)_{\rho} = 0$ if and only if $\rho_{AC}$ is separable by the algebraic decomposition of a Markov chain.} In contrast, what the previous proposition suggests is that if and only if one considers all local dynamics on $B$ can $A$ and $C$ be made as decoupled as possible. In other words, \cite{Buscemi-2014a} considers what can happen to $B$ given $AC$ is sufficiently decoupled and the previous proposition asks how decoupled $AC$ can become given $B$. We will make this intuition rigorous by showing for restricted max-entropy there are completely decoupled states for which we cannot achieve the decoupling accuracy.

\begin{proposition}
Let the sequence of cones $\cK^{(n)}$ be such that $\Phi \in \cK(B^{n}:A^{n})$ iff $\Phi^{\ast} \in \cK(A^{n}:B^{n})$. Let $p_{XZ} = \pi_{X} \otimes \dyad{z}$. Then, 
$$ \frac{1}{n} \left[H^{\cK}_{\max}(X^{n}|Z^{n})_{p_{XZ}^{\otimes n}}\right] \neq H(X^{n}|Z^{n})_{p_{XZ}^{\otimes n}} $$
if and only if $\supp_{\cK^{(n)} \cap \Density}(\tau^{\otimes n}) \leq f(n) = O(c^{n})$ where $c < 1.$ 
\end{proposition}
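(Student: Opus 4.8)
The plan is to obtain this as a corollary of the restricted min--max entropy duality together with the anti-standard min-entropy AEP, Theorem~\ref{thm:anti-AEP}. First I would record the two relevant von Neumann quantities. Since $p_{XZ}^{\otimes n} = \pi_{X}^{\otimes n}\otimes\dyad{z}^{\otimes n}$ is a product state whose $Z^{n}$ marginal is a fixed pure state, $H(X^{n}|Z^{n})_{p_{XZ}^{\otimes n}} = H(X^{n})_{\pi^{\otimes n}} = n\log d_{X}$. Second, I would fix a purification of $p_{XZ}^{\otimes n}$: because $\dyad{z}$ is already pure it suffices to purify $\pi_{X}$, and the maximally entangled state gives the purification $\rho_{XBZ} = \tau_{XB}^{\otimes n}\otimes\dyad{z}_{Z}^{\otimes n}$ with $B\cong X$. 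Here $\cK^{(n)}$ is taken physical, as is required for $H^{\cK}_{\max}$ to be well-defined via Definition~\ref{defn:restricted-max-entropy}, so this is a legitimate choice of purification.

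Applying the restricted min--max entropy duality to this purification gives $H^{\cK}_{\max}(X^{n}|Z^{n})_{p_{XZ}^{\otimes n}} = -H^{\cK}_{\min}(X^{n}|B^{n})_{\tau_{XB}^{\otimes n}}$, where $\cK^{(n)}$ is now regarded as a cone on $(X\otimes B)^{\otimes n}$. A direct calculation gives $H(X^{n}|B^{n})_{\tau_{XB}^{\otimes n}} = -n\log d_{X}$, so the asserted inequality $\tfrac{1}{n}H^{\cK}_{\max}(X^{n}|Z^{n})_{p_{XZ}^{\otimes n}}\neq\log d_{X}$ is equivalent to $\tfrac{1}{n}H^{\cK}_{\min}(X^{n}|B^{n})_{\tau_{XB}^{\otimes n}}\neq -\log d_{X} = H(X^{n}|B^{n})_{\tau_{XB}^{\otimes n}}$. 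This last statement is precisely the conclusion of Theorem~\ref{thm:anti-AEP} applied with $A=X$, $B=B$, and $C\cong X$, so that the embedding of $\tau_{CC'}$ into $X\otimes B$ is trivial and $d_{C}=d_{A}=d_{X}$. The hypothesis of that theorem --- that $J_{\Psi^{\ast}}\in\cK^{(n)}$ iff $J_{\Psi}\in\cK^{(n)}$ --- is exactly the adjoint-symmetry assumption $\Phi\in\cK^{(n)}(B^{n}:A^{n})\Leftrightarrow\Phi^{\ast}\in\cK^{(n)}(A^{n}:B^{n})$ imposed in the statement, and its support condition $\supp_{\cK^{(n)}\cap\wt{\Density}}(\tau^{\otimes n})\leq f(n)=O(c^{n})$ with $c<d_{C}/d_{A}=1$ is the claimed one.

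I do not expect a genuine obstacle: the content is inherited wholesale from Theorem~\ref{thm:anti-AEP} and the duality corollary. The only care needed is bookkeeping --- verifying that the chosen purification together with physicality of $\cK^{(n)}$ makes the reduction $H^{\cK}_{\max}(X^{n}|Z^{n})_{p_{XZ}^{\otimes n}} = -H^{\cK}_{\min}(X^{n}|B^{n})_{\tau_{XB}^{\otimes n}}$ valid with the cone on the correct space, computing the two von Neumann entropies, and reconciling the set $\wt{\Density}$ of channel Choi operators appearing in Theorem~\ref{thm:anti-AEP} with the $\Density$ written in the statement. For the direction ``support small $\Rightarrow$ AEP fails'' the latter is immediate since $\wt{\Density}\subseteq\Density$, and for resource cones of the kind studied earlier the support bounds hold over all of $\Density$ anyway (e.g.\ \eqref{eqn:distance-from-max-ent}, as used in Corollary~\ref{corr:ent-rank-anti-AEP}), via the fact that $\exp(-H^{\cK}_{\min})$ is itself a support function over channel Choi operators by Proposition~\ref{prop:min-entropy-singlet-fraction}.
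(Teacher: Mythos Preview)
Your reduction is correct and tracks the paper's argument closely. The paper does not invoke the duality corollary and Theorem~\ref{thm:anti-AEP} by name; instead it computes $H^{\cK}_{\max}(X^{n}|Z^{n})$ directly from Definition~\ref{defn:restricted-max-entropy} with the same purification $\tau_{XX'}^{\otimes n}\otimes\dyad{z}^{\otimes n}$, observes that the optimal $\omega_{Z^{n}}$ is $\dyad{z}^{\otimes n}$ so the fidelity collapses to the inner product $\langle(\id\otimes\Phi)(\tau^{\otimes n}),\tau^{\otimes n}\rangle$, and then bounds this by the support hypothesis --- which is exactly what unrolling duality plus the proof of Theorem~\ref{thm:anti-AEP} gives. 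Your version is simply the more modular packaging of the same computation; what it buys is that the $\wt{\Density}$ versus $\Density$ bookkeeping and the converse direction are inherited verbatim from Theorem~\ref{thm:anti-AEP} rather than being reargued, while the paper's inline version makes the mechanism (fidelity $\to$ inner product $\to$ support bound) visible without reference-chasing.
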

\begin{proof}
Let $p_{XZ} = \pi_{X} \otimes \dyad{z}$ where $z \in \cZ$, i.e.\ the standard basis (note this could be trivial if $Z \cong \mbb{C}$). Then a purification of this is $\rho_{XX'Z} = \tau_{XX'} \otimes \dyad{z}$. Then we have 
\begin{align*}
    H^{\cK}_{\max}(X^{n}|Z^{n})_{p_{XZ}^{\otimes n}} 
    =& \log(d_{X}^{n}) +  2\underset{\Phi: J_{\Phi^{\ast}} \in \cK^{(n)}}{\max} \log[F\{(\id_{X^{n}} \otimes \Phi)(\tau_{XX'}^{\otimes n}), \tau_{XX'}^{\otimes n}\}] \\ 
    =& \log(d_{X}^{n}) + 2 \underset{\Phi: J_{\Phi^{\ast}} \in \cK^{(n)}}{\max} \log[ \langle (\id_{X} \otimes \Phi)(\tau_{XX'}^{\otimes n}), \tau_{XX'}^{\otimes n}\rangle ] \\
    \leq & n\log(d_{X}) + \log(f(n)) \ ,
\end{align*}
where the first line is just simplifying Definition \ref{defn:restricted-max-entropy} by making the optimal choice $\omega_{Z^{n}} = \dyad{z}^{\otimes n}$, the second is using for a pure state fidelity reduces to inner product, the third is by assumption. Then regularizing,
\begin{align*}
\frac{1}{n} H^{\cK}_{\max}(X^{n}|Z^{n})
= \log(d_{x}) + \log(c) 
< \log(d_{x}) 
= H(X|Z)_{\pi_{X} \otimes \dyad{z}} \ .
\end{align*}
Note that smoothing can only aggravate this inequality as $H_{\max}^{\cK,\ve}$ minimizes over the smoothed ball. This completes the if direction.

For the only if direction, it is clear that if $f(n) = 2^{o(n)}$, i.e.\ is sub-exponential,
then the correction term goes away and the AEP will hold. The only other case is $\lim_{n \to \infty} f(n) = 1$, in which case the AEP also holds.
\end{proof}
This then holds in the entanglement case. As we have seen the proof multiple times, we omit the proof.
\begin{corollary}\label{corr:HmaxK-anti-AEP}
The $H^{\cK^{(n)}}_{\max}$ does not satisfy an AEP on classical states for a sequence of cones $\{\cK^{(n)}\}$ such that for all $n \geq n_{0}$ $\cK^(n) \subset \Ent_{f(n)}(A^{n}:B^{n})$ where $f(n) \leq O(c^{n})$ where $c<1$. 
\end{corollary}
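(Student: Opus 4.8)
The plan is to read this off the preceding (unlabeled) proposition applied to the sequence $\cK^{(n)}=\Ent_{f(n)}(A^{\otimes n}:B^{\otimes n})$, with the general case $\cK^{(n)}\subseteq\Ent_{f(n)}$ then following by monotonicity of the support function under set inclusion. First I would check the adjoint-closure hypothesis of that proposition, i.e.\ that $\Phi\in\cK^{(n)}(B^{\otimes n}:A^{\otimes n})$ iff $\Phi^{\ast}\in\cK^{(n)}(A^{\otimes n}:B^{\otimes n})$: writing $\Phi^{\ast}(\cdot)=\sum_i A_i^{\ast}(\cdot)A_i$ when $\Phi(\cdot)=\sum_i A_i(\cdot)A_i^{\ast}$, Proposition \ref{prop:ent-rank-channel-property} gives $\mathrm{SN}(J_\Phi)=\mathrm{SN}(J_{\Phi^{\ast}})$ (as already noted just before Corollary \ref{corr:ent-rank-anti-AEP}), so $J_\Phi$ lies in $\Ent_{f(n)}$ exactly when $J_{\Phi^{\ast}}$ does.

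Next I would bound the support appearing in the preceding proposition. By the distance-from-$\tau_d$ property of the entanglement-rank cones, Eq.~\eqref{eqn:distance-from-max-ent}, every $X\in\Ent_{f(n)}(A^{\otimes n}:B^{\otimes n})\cap\Density$ obeys $\langle\tau^{\otimes n},X\rangle\le f(n)/d^{n}$ with $d=\min\{d_A,d_B\}$, so $\supp_{\Ent_{f(n)}\cap\Density}(\tau^{\otimes n})\le f(n)/d^{n}\le f(n)=O(c^{n})$ with $c<1$. This is exactly the ``only if'' hypothesis of the preceding proposition (the bounding sequence being $f(n)/d^{n}$, still $O(c^{n})$ for the same $c<1$), so applying it to $p_{XZ}=\pi_X\otimes\dyad{z}$ yields $\tfrac1n H^{\cK^{(n)}}_{\max}(X^{\otimes n}|Z^{\otimes n})_{p_{XZ}^{\otimes n}}\not\to H(X|Z)_{p_{XZ}}$, i.e.\ the restricted max-entropy has no AEP on classical states. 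For a general sequence $\cK^{(n)}\subseteq\Ent_{f(n)}$ that is also closed under adjoints, $\cK^{(n)}\cap\Density\subseteq\Ent_{f(n)}\cap\Density$ makes the support no larger, so the conclusion is unchanged.

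The argument is essentially bookkeeping, which is why the paper omits it. The only points needing a little care are that the support estimate must be taken in the bipartition and dimension in which the $\tau$ of the preceding proposition is maximally entangled (the purifying register of the classical state $p_{XZ}$ playing the role of $B$), and that the adjoint-closure hypothesis is genuinely used --- it is what lets the constraint $J_{\Phi^{\ast}}\in\cK^{(n)}$ be compared with $J_\Phi$ when the fidelity in Definition \ref{defn:restricted-max-entropy} is expanded into an inner product --- so one should either fix $\cK^{(n)}=\Ent_{f(n)}$ or carry this hypothesis along for general $\cK^{(n)}$. Neither is a real obstacle.
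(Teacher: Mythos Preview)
Your proposal is correct and matches the paper's intended approach: the paper explicitly omits the proof (``As we have seen the proof multiple times, we omit the proof''), and what you have written is precisely the bookkeeping it has in mind---verify the adjoint-closure of $\Ent_{f(n)}$ via Proposition~\ref{prop:ent-rank-channel-property}, feed the distance bound \eqref{eqn:distance-from-max-ent} into the support hypothesis of the preceding proposition, and pass to subcones by monotonicity of the support function. Your caveat that a general $\cK^{(n)}\subseteq\Ent_{f(n)}$ must also be assumed adjoint-closed (or else one works directly with $\Ent_{f(n)}$) is well taken and is implicit in the paper's statement.
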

We stress the importance that $H_{\max}^{\cK}$ breaks down on classical states as we will next develop the theory necessary to show we can recover partially quantum information theory using the separable cone which means we must do this without using the restricted max entropy $H_{\max}^{\cK}$.

\section{Symmetries, Data-Processing Inequalities, and Special Cases}
To prove that the separable cone is sufficient for partially quantum information theory, it will be useful to consider the general theory of symmetries and data processing inequality for smoothed and unsmoothed restricted min-entropy, where we recall $H_{\min}^{\cK,\ve}(A|B)_{\rho} = \underset{\wt{\rho} \in \cB^{\ve}(\rho)}{\max} H^{\cK}_{\min}(A|B)_{\wt{\rho}}$.

Recalling Corollary \ref{corr:DPI-breaks} shows restricted relative entropy does not satisfy a data-processing inequality, we generalize \cite[Proposition 3]{Chitambar-2021a} in the straightforward manner to get the general conic data-processing settings.

\begin{proposition}\label{prop:DPI}
Consider $\cK \subseteq \Pos$. Given $\Phi:\Pos(A \otimes B) \to \Pos(A' \otimes B')$ such that $\Phi^{\ast}:\cK(A':B') \to \cK(A:B)$, then $D_{\max}^{\cK}(\Phi(P)||\Phi(Q)) \leq D_{\max}^{\cK}(P||Q)$ for all $P,Q \in \Pos(A \otimes B)$. Moreover, if there exists $\phi: \Pos(B) \to \Pos(B')$ that is trace-preserving such that $\Phi(\I_{A} \otimes \sigma_{B}) \preceq_{\cK{\ast}} \I_{A'} \otimes \phi(\sigma_{B})$ for all $\sigma_{B} \in \Density(B)$. Then $H^{\cK}_{\min}(A|B)_{P} \leq H^{\cK}_{\min}(A|B)_{\Phi(P)}$ for all $P \in \Pos(A \otimes B)$
\end{proposition}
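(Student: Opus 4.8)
The plan is to reduce everything to the feasibility description of $D^{\cK}_{\max}$ coming from the dual program \eqref{eqn:DmaxKDual}: a positive real $\gamma$ is feasible for $D^{\cK}_{\max}(P\|Q)$ precisely when $\gamma Q - P \in \cK^{\ast}$, and $\exp(D^{\cK}_{\max}(P\|Q))$ is the infimum of all such $\gamma$ (the case $D^{\cK}_{\max}(P\|Q)=\infty$ being vacuous, and $\gamma$ being forced positive once $P\neq 0$).

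First I would record the elementary dual-cone transfer: the hypothesis $\Phi^{\ast}(\cK(A':B'))\subseteq\cK(A:B)$ is equivalent to $\Phi(\cK^{\ast}(A:B))\subseteq\cK^{\ast}(A':B')$, since for $X\in\cK^{\ast}(A:B)$ and $Y\in\cK(A':B')$ one has $\ip{\Phi(X)}{Y}=\ip{X}{\Phi^{\ast}(Y)}\ge 0$, and the converse is symmetric. With this in hand the first inequality drops out by linearity: $\gamma Q-P\in\cK^{\ast}(A:B)$ forces $\gamma\Phi(Q)-\Phi(P)=\Phi(\gamma Q-P)\in\cK^{\ast}(A':B')$, so every $\gamma$ feasible for $D^{\cK}_{\max}(P\|Q)$ is feasible for $D^{\cK}_{\max}(\Phi(P)\|\Phi(Q))$; taking infima and then $\log$ yields $D^{\cK}_{\max}(\Phi(P)\|\Phi(Q))\le D^{\cK}_{\max}(P\|Q)$.

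For the min-entropy statement I would use $H^{\cK}_{\min}(A|B)_{P}=-\inf_{\sigma_B\in\Density(B)}D^{\cK}_{\max}(P\|\I_A\otimes\sigma_B)$, so that it suffices to show: for each $\sigma_B$ there is a $\sigma_{B'}\in\Density(B')$ with $D^{\cK}_{\max}(\Phi(P)\|\I_{A'}\otimes\sigma_{B'})\le D^{\cK}_{\max}(P\|\I_A\otimes\sigma_B)$. The natural candidate is $\sigma_{B'}:=\phi(\sigma_B)$, which is a density operator because $\phi$ is positive and trace-preserving. Then for any $\gamma>0$ with $P\preceq_{\cK^{\ast}}\gamma\,\I_A\otimes\sigma_B$ I would chain three steps: apply the first part of the proposition to $\Phi$ to get $\Phi(P)\preceq_{\cK^{\ast}}\gamma\,\Phi(\I_A\otimes\sigma_B)$; invoke the hypothesis $\Phi(\I_A\otimes\sigma_B)\preceq_{\cK^{\ast}}\I_{A'}\otimes\phi(\sigma_B)$ scaled by $\gamma>0$ (which keeps the relation in $\cK^{\ast}$ since $\cK^{\ast}$ is a cone); and use transitivity of $\preceq_{\cK^{\ast}}$ (valid because $\cK^{\ast}$ is closed under addition) to conclude $\Phi(P)\preceq_{\cK^{\ast}}\gamma\,\I_{A'}\otimes\phi(\sigma_B)$. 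Hence $\gamma$ is feasible for $D^{\cK}_{\max}(\Phi(P)\|\I_{A'}\otimes\phi(\sigma_B))$, so that quantity is at most $D^{\cK}_{\max}(P\|\I_A\otimes\sigma_B)$; infimizing over $\sigma_{B'}$ on the left and then over $\sigma_B$ on the right, and negating, gives the claimed entropy inequality.

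I do not expect a genuine obstacle here, as the argument is bookkeeping with conic orderings. The only points warranting care are that feasible $\gamma$ can be taken strictly positive so that scaling a $\cK^{\ast}$-membership by $\gamma$ stays inside $\cK^{\ast}$, and that $\preceq_{\cK^{\ast}}$ is transitive only because $\cK^{\ast}$ is a convex cone. The conceptual content of the extra hypothesis is that $\Phi$ need not be trace-preserving, so $\Phi(\I_A\otimes\sigma_B)$ need not itself have the ``identity-tensor-state'' form demanded by the definition of $H^{\cK}_{\min}$; the auxiliary map $\phi$ is precisely what restores that form, which is why $D^{\cK}_{\max}$-monotonicity alone is not enough for the min-entropy part.
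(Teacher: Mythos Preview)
Your proof is correct and follows essentially the same approach as the paper: both establish $\Phi(\cK^{\ast})\subseteq\cK^{\ast}$ via the adjoint hypothesis and then push feasible $\gamma$'s through $\Phi$, with the min-entropy part handled by showing that a feasible pair $(\gamma,\sigma_B)$ for $P$ yields the feasible pair $(\gamma,\phi(\sigma_B))$ for $\Phi(P)$. The paper's min-entropy argument works directly with the joint feasibility condition rather than routing through the already-proven $D^{\cK}_{\max}$ monotonicity as you do, but this is a cosmetic difference.
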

\begin{proof}
We begin with the $D_{\max}^{\cK}$ case. Recall $D_{\max}^{\cK}(P||Q) = \log \inf \{ \gamma : \gamma Q - P \in \cK^{\ast} \}$. Recall by definition of dual cone, $X \in \cK^{\ast}(A:B)$ iff $\langle X , Y \rangle \geq 0$ for all $Y \in \cK(A:B)$. It follows, if $X \in \cK^{\ast}(A:B)$ and any $Y \in \cK(A':B')$, $\langle \Phi(X) , Y \rangle = \langle X , \Phi^{\ast}(Y) \rangle \geq 0$ where the last inequality is by the assumption $\Phi^{\ast}: \cK(A':B') \to \cK(A:B)$. Thus $\Phi: \cK^{\ast}(A:B) \to \cK^{\ast}(A':B')$. Consider any feasible $\gamma$ for $D_{\max}^{\cK}(P||Q)$. Then we have
\begin{align*}
    0 \preceq_{\cK}^{\ast} \Phi(\gamma Q - P) = \gamma \Phi(Q) - \Phi(P) \ .
\end{align*}
Therefore any feasible $\gamma$ for $D_{\max}^{\cK}(P||Q)$ is feasible for $D_{\max}^{\cK}(\Phi(P)||\Phi(Q))$. As we are considering an infimum, this completes the proof of the $D_{\max}^{\cK}$ case.

For the restricted min-entropy case, recall that 
\begin{align*}
 H^{\cK}_{\min}(A|B)_{\rho}
:= - \min \{\lambda \in \mbb{R}: 2^{\lambda} \I_{A} \otimes \sigma_{B} - \rho \succeq_{\cK^{\ast}} 0 \, \& \, \sigma_{B} \in \Density(B) \} \ .
\end{align*}
Consider any feasible point $(\lambda,\sigma_{B})$ for $H^{\cK}_{\min}(A|B)_{\rho}$, then we have
\begin{align*}
    0 \preceq_{\cK^{\ast}}  \Phi(2^{\lambda}\I_{A} \otimes \sigma_{B} - \rho)
    =  2^{\lambda}\Phi(\I_{A} \otimes \sigma_{B}) - \Phi(\rho) 
    \preceq_{\cK^{\ast}}  2^{\lambda} \I_{A'} \otimes
    \phi(\sigma_{B}) - \Phi(\rho),
\end{align*}
where in the second generalized inequality, we used our assumption on the existence of $\phi$.Thus, for any feasible $(\lambda,\sigma_{B})$ for $H^{\cK}_{\min}(A|B)_{\rho}$, $(\lambda,\phi(\sigma_{B}))$ is feasible $H^{\cK}_{\min}(A'|B')_{\Phi(\rho)}$. Thus the optimal $\lambda$ for determining $H^{\cK}_{\min}(A'|B')_{\Phi(\rho)}$ is at least as small as that for $H^{\cK}_{\min}(A|B)_{\Phi(\rho)}$. Taking into account the minus sign proves $H^{\cK}_{\min}(A'|B')_{\Phi(\rho)} \geq H^{\cK}_{\min}(A|B)_{\rho}$.
\end{proof}

We now want to consider the special case of states with symmetries. As introduced in the background, a `symmetry' of a state $\rho$ means $U\rho U^{\ast} = \rho$ where $U$ is some unitary. If you take all the symmetries of a given state, they form a group. Given the group $G$ of symmetries of a state $\rho$, define the projection/twirling map $\cG(\cdot) = \int U \cdot U^{\ast} dU$ where $dU$ is the Haar measure for the relevant group. Clearly $\cG(\rho) = \rho$. 

Symmetries are useful for two reasons: first, they are relevant throughout physics and determine coherence resource theories. Second, they can take complicated problems and reduce them to be more manageable. In particular, this has been useful for simplifying entanglement measures in certain cases \cite{Vollbrecht-2001a}, and as we are interested in entanglement rank cones, one might expect this to be helpful here as well.

\begin{remark}
We note a more general class of operators of similar form are mixed unitary channels where the difference is the unitaries need not make up a group. In effect, when a mixed unitary channel is not composed of a group of unitaries that are integrated according to the Haar measure, the channel won't be self-adjoint, which means moving it onto the observable doesn't restrict the relevant state space.  Hence we do not consider such channels here.
\end{remark}

We now prove a general abstract theorem and then consider specific cases.
\begin{theorem}\label{thm:optimizing-with-symmetry}
Let $\rho_{AB}$ be invariant under a projection channel $\cG$. Let $\cG$ satisfy data-processing for $H_{\min}^{\cK}(A|B)_{\rho}$ where $\cK \subset \Pos(A \otimes B)$. Then without loss of generality, when determining $H^{\cK,\ve}_{\min}(A|B)_{\rho}$, one can restrict to optimizing over $\{\wt{\rho} \in \cB^{\ve}(\rho): \cG(\wt{\rho}) = \wt{\rho}\}$.
\end{theorem}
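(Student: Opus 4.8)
The plan is a standard symmetrization argument: show that applying the twirl $\cG$ to any candidate smoothed state keeps it inside the smoothing ball while not decreasing the restricted min-entropy, so that an optimal smoothed state may be replaced by a $\cG$-invariant one. Write $\cB^{\ve}(\rho) = \{\wt{\rho} : \Tr[\wt{\rho}] \le 1,\ P(\rho,\wt{\rho}) \le \ve\}$ and $\cB^{\ve}_{\cG}(\rho) := \{\wt{\rho} \in \cB^{\ve}(\rho) : \cG(\wt{\rho}) = \wt{\rho}\}$. Since $\cB^{\ve}_{\cG}(\rho) \subseteq \cB^{\ve}(\rho)$, the inequality $\max_{\wt{\rho} \in \cB^{\ve}_{\cG}(\rho)} H^{\cK}_{\min}(A|B)_{\wt{\rho}} \le H^{\cK,\ve}_{\min}(A|B)_{\rho}$ is immediate, so only the reverse direction requires work.

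First I would verify that $\cG$ maps $\cB^{\ve}(\rho)$ into $\cB^{\ve}_{\cG}(\rho)$. Fix $\wt{\rho} \in \cB^{\ve}(\rho)$. Because $\cG$ is a projection channel, it is an idempotent CPTP map, so $\cG(\cG(\wt{\rho})) = \cG(\wt{\rho})$, i.e.\ $\cG(\wt{\rho})$ is $\cG$-invariant; trace preservation gives $\Tr[\cG(\wt{\rho})] = \Tr[\wt{\rho}] \le 1$; and since the purified distance is contractive under trace-non-increasing CP maps and $\rho$ is $\cG$-invariant by hypothesis, $P(\rho,\cG(\wt{\rho})) = P(\cG(\rho),\cG(\wt{\rho})) \le P(\rho,\wt{\rho}) \le \ve$. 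Hence $\cG(\wt{\rho}) \in \cB^{\ve}_{\cG}(\rho)$.

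Then I would invoke the hypothesis that $\cG$ satisfies data processing for $H^{\cK}_{\min}(A|B)$ — the mechanism being that of Proposition~\ref{prop:DPI}, namely that such processing does not decrease $H^{\cK}_{\min}$ — which gives $H^{\cK}_{\min}(A|B)_{\cG(\wt{\rho})} \ge H^{\cK}_{\min}(A|B)_{\wt{\rho}}$. Taking $\wt{\rho} = \wt{\rho}^{\star}$ to be an optimizer of $H^{\cK,\ve}_{\min}(A|B)_{\rho}$ (which exists as $\cB^{\ve}(\rho)$ is compact and $H^{\cK}_{\min}$ upper semicontinuous, or else passing to a maximizing sequence), the state $\cG(\wt{\rho}^{\star})$ lies in $\cB^{\ve}_{\cG}(\rho)$ and achieves value at least $H^{\cK}_{\min}(A|B)_{\wt{\rho}^{\star}} = H^{\cK,\ve}_{\min}(A|B)_{\rho}$, establishing the reverse inequality and hence equality.

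I do not expect a genuine obstacle here; the argument is short. The only points that need care — and thus the ``hard part'' in spirit — are that the twirl is genuinely an idempotent channel, so that both the contractivity of $P$ and the fixed-point property $\cG(\rho)=\rho$ actually apply, and that the data-processing direction for the min-entropy runs the expected way (processing does not lower $H^{\cK}_{\min}$), which is exactly what the hypothesis supplies.
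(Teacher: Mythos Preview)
Your proposal is correct and essentially identical to the paper's argument: both use contractivity of the purified distance under $\cG$ together with $\cG(\rho)=\rho$ to keep twirled states in the ball, then invoke the assumed data-processing of $H^{\cK}_{\min}$ to conclude the optimizer may be taken $\cG$-invariant. If anything, your write-up is slightly more careful in spelling out idempotence and existence of the optimizer.
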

\begin{proof}
Since $\rho$ is invariant under $\cG$, if $\wt{\rho} \in \cB^{\ve}(\rho)$, then $\cG(\rho) \in \cB^{\ve}(\rho)$ as purified distance monotonically decreases under CPTNI maps. Since, by assumption, $H^{\cK}_{\min}$ only increases under the action of $\cG$, the optimizer must be in the post-twirled set. Noting $\cG$ is a linear (and thus continuous) map and $\cB^{\ve}(\rho)$ is compact, the restricted set is compact. Thus we can restrict ourselves to this set and achieve the same value.
\end{proof}

We now present the less general but primarily relevant setting.

\begin{corollary}\label{corr:local-symmetry-restriction}
Let $\cK \subset \Pos(A \otimes B)$ be physical (closed under local unitaries). Let $\rho_{AB}$ be invariant under $\cG(\cdot) = \int (U_{A} \otimes U_{B}) \cdot (U_{A} \otimes U_{B})^{\ast} dU$, i.e.\ a local symmetry. Then one can restrict to optimizing over $\wt{\rho} \in \cB^{\ve}(\rho)$ such that $\cG(\wt{\rho}) = \wt{\rho}$ when calculating $H^{\cK,\ve}_{\min}(A|B)$. Moreover, the optimizer for the primal problem $\exp(-H^{\cK}_{\min}(A|B))$ will be invariant under $\cG$.
\end{corollary}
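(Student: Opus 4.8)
The plan is to deduce this from Theorem~\ref{thm:optimizing-with-symmetry} by verifying that the local twirl $\cG$ satisfies data processing for $H^{\cK}_{\min}(A|B)$, and then to treat the ``moreover'' statement by directly symmetrizing a primal optimizer. First I would record the basic properties of $\cG(\cdot) = \int (U_{A} \otimes U_{B})(\cdot)(U_{A} \otimes U_{B})^{\ast}\, dU$. Since $U \mapsto U_{A} \otimes U_{B}$ is a homomorphic image of a group and the averaging is with respect to the Haar measure, $\cG$ is CPTP with $\cG \circ \cG = \cG$ (left-invariance of Haar measure), hence a projection channel, and it is self-adjoint, $\cG^{\ast} = \cG$ (relabel the integration variable by its inverse). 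Because $\cK$ is physical it is invariant under each conjugation by $U_{A} \otimes U_{B}$, and because $\cK$ is a closed convex cone the Haar average of its elements again lies in $\cK$; therefore $\cG^{\ast} = \cG$ maps $\cK$ into $\cK$.

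Next I would check the second hypothesis of Proposition~\ref{prop:DPI}. A direct computation gives
\begin{align*}
\cG(\I_{A} \otimes \sigma_{B}) = \int (U_{A}\I_{A}U_{A}^{\ast}) \otimes (U_{B}\sigma_{B}U_{B}^{\ast})\, dU = \I_{A} \otimes \phi(\sigma_{B}) ,
\end{align*}
where $\phi(\cdot) := \int U_{B}(\cdot)U_{B}^{\ast}\, dU$ is a (unital, and in particular) trace-preserving map on $\Pos(B)$. Thus the required relation $\cG(\I_{A} \otimes \sigma_{B}) \preceq_{\cK^{\ast}} \I_{A} \otimes \phi(\sigma_{B})$ holds with equality, and combined with $\cG^{\ast}:\cK\to\cK$, Proposition~\ref{prop:DPI} yields $H^{\cK}_{\min}(A|B)_{P} \le H^{\cK}_{\min}(A|B)_{\cG(P)}$ for all $P$, i.e. $\cG$ satisfies data processing for $H^{\cK}_{\min}(A|B)$. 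Since $\cG(\rho_{AB}) = \rho_{AB}$, Theorem~\ref{thm:optimizing-with-symmetry} applies and gives the first claim: $H^{\cK,\ve}_{\min}(A|B)_{\rho}$ may be computed by optimizing $\wt{\rho}$ only over the compact $\cG$-invariant slice of $\cB^{\ve}(\rho)$.

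For the ``moreover'' part, let $X^{\star}$ attain the primal program \eqref{eqn:HminKPrimal} for $\exp(-H^{\cK}_{\min}(A|B)_{\rho})$ (the maximum is attained, the feasible set being a compact subset of the closed cone $\cK$). Then $\cG(X^{\star}) \in \cK$ by the convexity/physicality argument above; it remains feasible since partial trace is invariant under local unitaries on $A$, so $\Tr_{A}[(U_{A} \otimes U_{B})X^{\star}(U_{A} \otimes U_{B})^{\ast}] = U_{B}\Tr_{A}[X^{\star}]U_{B}^{\ast} = U_{B}\I_{B}U_{B}^{\ast} = \I_{B}$ and hence $\Tr_{A}[\cG(X^{\star})] = \I_{B}$; and it attains the same value, $\langle \rho, \cG(X^{\star}) \rangle = \langle \cG(\rho), X^{\star} \rangle = \langle \rho, X^{\star} \rangle$, using $\cG^{\ast} = \cG$ and the invariance of $\rho$. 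So $\cG(X^{\star})$ is a $\cG$-invariant primal optimizer, as claimed. The only genuinely delicate point is getting both hypotheses of Proposition~\ref{prop:DPI} lined up correctly---especially seeing that $\cG$ restricts to $\id_{A}\otimes\phi$ on operators $\I_{A}\otimes\sigma_{B}$ with $\phi$ manifestly trace preserving, so that the intertwining holds even with equality; the rest is bookkeeping.
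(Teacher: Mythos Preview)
Your proof is correct and follows essentially the same approach as the paper: verify the hypotheses of Proposition~\ref{prop:DPI} (using $\cG^\ast=\cG:\cK\to\cK$ and the identity $\cG(\I_A\otimes\sigma_B)=\I_A\otimes\phi(\sigma_B)$) to invoke Theorem~\ref{thm:optimizing-with-symmetry} for the first claim, and for the ``moreover'' symmetrize a primal optimizer and check feasibility plus equal objective via self-adjointness of $\cG$ and $\cG$-invariance of $\rho$. Your write-up is in fact a bit more explicit than the paper's (you spell out the partial-trace computation and the compactness argument for attainment), but the logic is the same.
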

\begin{proof}
Given the previous theorem, for the first point to hold, it suffices to prove $\cG$ of the promised form satisfies data-processing. Since $\cK$ is closed under local unitaries, $\cG: \cK \to \cK$. Moreover 
\begin{align*}
\cG(\I_{A} \otimes \sigma_{B}) = \int (U_{A}\I_{A}U^{\ast}_{A} \otimes U_{B}\sigma U_{B}^{\ast}) \, dU 
= \I_{A} \otimes \int U_{B}\sigma U_{B}^{\ast} \, dU \ , 
\end{align*}
so the conditions of Proposition \ref{prop:DPI} hold using the map $\phi(\cdot) = \int U_{B} \cdot U_{B}^{\ast} \, dU$.

To prove the second point, just note that, given Eqn. \ref{eqn:HminKPrimal}, one needs to consider 
$$ \max\{ \langle \wt{\rho}, X \rangle : \Tr_{A}(X) = \I_{B}, \, X \in \cK \} \ . $$
As just addressed, without loss of generality, $\wt{\rho}$ is invariant under $\cG$. As $\cG$ is self-adjoint, we can move it onto $X$. Noting that $\Tr_{A} \circ \cG (X) = \I_{B}$ if $\Tr_{A}(X) = \I_{B}$, we can conclude the twirling preserves feasibility, so we can without loss of generality restrict to optimizers of the form $\cG(X)$. This completes the proof.
\end{proof}

Note the above could also be applied for multiple symmetries if $\rho$ is invariant under multiple projection maps, though in this case the commutant of the group is known to in general become more complicated. We also quickly prove that this is not special to cones as something similar holds for Sandwiched R\'{e}nyi divergences \eqref{eqn:SRD-defn}.

\begin{proposition}\label{prop:SRE}
Let $\alpha \in (1/2) \cup (1,\infty)$. Let $\rho$ be invariant under local symmetry $\cG(\cdot) = \int (U_{A} \otimes U_{B}) \cdot (U_{A} \otimes U_{B}) \, dU$ such that $\int U_{B} X U_{B} \,  dU = \int U_{B} X U_{B} dU_{B}$ for all $X \in \Lin(B)$.\footnote{That is, the Haar measure $dU$ acts like the Haar measure $dU_{B}$ when only acting on the $B$ space. 
This is always satisfied for a local symmetry by definition.} Then 
$$ H^{\uparrow}_{\alpha}(A|B)_{\rho} = \sup_{q \in P(\Lambda)} -\wt{D}_{\alpha}(\rho|| \I_{A} \otimes \sum_{\lambda} q(\lambda)\Pi_{\lambda}) \ , $$
where $\{\Pi_{\lambda}\}_{\lambda \in \Lambda}$ is the minimal projectors for the commutant of the symmetry captured by $\int U_{B} \cdot U_{B} \, dU_{B}$.
\end{proposition}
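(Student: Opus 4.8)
The plan is to run, essentially verbatim, the twirling argument of Theorem~\ref{thm:optimizing-with-symmetry} and Corollary~\ref{corr:local-symmetry-restriction}, but with the data-processing inequality (DPI) for $\wt{D}_{\alpha}$ taking the place of DPI for $H^{\cK}_{\min}$. Recall $H^{\uparrow}_{\alpha}(A|B)_{\rho}=\sup_{\sigma_{B}\in\Density(B)}\bigl(-\wt{D}_{\alpha}(\rho_{AB}\,\|\,\I_{A}\otimes\sigma_{B})\bigr)$, so the claim is exactly that this supremum is unchanged when $\sigma_{B}$ is restricted to the states of the commutant algebra of $\{U_{B}\}$, i.e.\ to $\sigma_{B}=\sum_{\lambda}q(\lambda)\Pi_{\lambda}$ with $q\in P(\Lambda)$ (with the $\Pi_{\lambda}$ normalized, or absorbing per-block normalization via $\wt{D}_{\alpha}(\rho\,\|\,c\sigma)=\wt{D}_{\alpha}(\rho\,\|\,\sigma)-\log c$). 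Since the restricted family sits inside $\Density(B)$, the inequality $\sup_{q}(\cdots)\le H^{\uparrow}_{\alpha}(A|B)_{\rho}$ is immediate, and only the reverse inequality requires an argument.

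For the reverse direction, fix an arbitrary $\sigma_{B}\in\Density(B)$ and set $\sigma_{B}':=\cG_{B}(\sigma_{B})$ with $\cG_{B}(\cdot)=\int U_{B}\cdot U_{B}^{\ast}\,dU_{B}$. A one-line computation gives $\cG(\I_{A}\otimes\sigma_{B})=\int U_{A}\I_{A}U_{A}^{\ast}\otimes U_{B}\sigma_{B}U_{B}^{\ast}\,dU=\I_{A}\otimes\int U_{B}\sigma_{B}U_{B}^{\ast}\,dU=\I_{A}\otimes\cG_{B}(\sigma_{B})$, the final step being exactly the standing hypothesis (the footnote) that the joint twirl acts as $\cG_{B}$ on the $B$-factor. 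Combining $\cG(\rho)=\rho$ (invariance) with DPI for $\wt{D}_{\alpha}$ under the CPTP map $\cG$,
\[
\wt{D}_{\alpha}(\rho \,\|\, \I_{A}\otimes\cG_{B}(\sigma_{B}))=\wt{D}_{\alpha}(\cG(\rho) \,\|\, \cG(\I_{A}\otimes\sigma_{B}))\le\wt{D}_{\alpha}(\rho \,\|\, \I_{A}\otimes\sigma_{B}),
\]
so replacing $\sigma_{B}$ by $\cG_{B}(\sigma_{B})$ only increases $-\wt{D}_{\alpha}(\rho\,\|\,\I_{A}\otimes\sigma_{B})$. Since $\cG_{B}$ is the (trace-preserving, unital) projection onto the commutant of $\{U_{B}\}$, and in the present Abelian-symmetry setting that commutant is spanned by its mutually orthogonal minimal projectors $\{\Pi_{\lambda}\}_{\lambda\in\Lambda}$, the state $\cG_{B}(\sigma_{B})$ has the form $\sum_{\lambda}q(\lambda)\Pi_{\lambda}$; taking the supremum over $\sigma_{B}$ yields $H^{\uparrow}_{\alpha}(A|B)_{\rho}\le\sup_{q}\bigl(-\wt{D}_{\alpha}(\rho\,\|\,\I_{A}\otimes\sum_{\lambda}q(\lambda)\Pi_{\lambda})\bigr)$, hence equality.

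No compactness or attainment of the suprema is needed, since the two optimizations are compared term by term through a single use of DPI, so the argument is short. The only real point of care --- and hence the main obstacle --- is invoking DPI for $\wt{D}_{\alpha}$: it holds for all $\alpha\ge 1/2$, which covers the range of $\alpha$ allowed in the statement, whereas for $\alpha<1/2$ the proof genuinely breaks because $\wt{D}_{\alpha}$ is no longer monotone under CPTP maps; secondarily one must know that the fixed-point set of $\cG_{B}$ is an Abelian algebra spanned by the $\{\Pi_{\lambda}\}$, which is precisely where the Abelian nature of the symmetry is used (the analogue of Corollary~\ref{corr:local-symmetry-restriction} using local-unitary invariance of $\cK$).
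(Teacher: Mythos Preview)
Your proof is correct and follows essentially the same approach as the paper: both apply the data-processing inequality for $\wt{D}_{\alpha}$ under the twirling channel $\cG$ to show that replacing $\sigma_{B}$ by its twirl $\cG_{B}(\sigma_{B})$ can only improve the objective, thereby restricting the optimization to commutant states $\sum_{\lambda}q(\lambda)\Pi_{\lambda}$. Your write-up is somewhat more explicit than the paper's---you spell out both inequality directions and flag that DPI for $\wt{D}_{\alpha}$ requires $\alpha\ge 1/2$, which is precisely why the statement imposes that range---but the underlying argument is the same.
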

\begin{proof}
Using data-processing,
\begin{align*}
    -\wt{D}_{\alpha}(\rho||\I_{A} \otimes \sigma) \leq -\wt{D}_{\alpha}(\rho||\I_{A} \otimes \int U_{B}\sigma U_{B}^{\ast} \, dU) \ .
\end{align*}
As $H^{\uparrow}_{\alpha}(A|B)_{\rho} = \max_{\sigma_{B}} - \wt{D}_{\alpha}(\rho||\I_{A} \otimes \sigma)$, the above tells us we can restrict the optimization to $\sigma$ that are invariant under $\int U_{B} \cdot U_{B} \, dU_{B}$. Let $\{\Pi_{\lambda}\}_{\lambda \in \Lambda}$ be the set of minimal projectors for the commutant of the group corresponding to the $\int U_{B} \cdot U_{B} \, dU_{B}$ symmetry. Then optimizing over the restricted set of states is equivalent to optimizing over $q \in P(\Lambda)$ as we can decompose it into these minimal projectors. This completes the proof.
\end{proof}
It is worth noting in the R\'{e}nyi entropy case this doesn't seem to simplify the calculation as we have
\begin{align*}
     \wt{D}_{\alpha}(\rho||\I_{A} \otimes \sum_{\lambda} q_{\lambda} \I_{A} \otimes \Pi_{\lambda})
    = \frac{1}{\alpha-1} \log \left\| \sum_{\lambda,\lambda'} (q_{\lambda}q_{\lambda'})^{\frac{1-\alpha}{\alpha}} \rho_{\lambda,\lambda'} \right\|^{\alpha}_{\alpha} 
\end{align*}
where we've defined $\rho_{\lambda,\lambda'}:=\I_{A} \otimes \Pi_{\lambda} \rho \I_{A} \otimes \Pi_{\lambda'}$, and noting these operators are not mutually orthogonal in general tells us we can't distribute the $\alpha/2$ as we'd hope if we expand the norm.

\subsection{Specific Cases}
We now use the above results in two ways. First, we will emphasize that for $\cK = \Pos$, Corollary \ref{corr:local-symmetry-restriction} generalizes restricting to classical registers as shown in \cite{Tomamichel-2015}. Second, we will use it to show that for CQ states, the separable min-entropy $H^{\Sep}_{\min}$ is the same as $H_{\min}$.

\paragraph*{Symmetries for the Positive Semidefinite Cone}
Corollary \ref{corr:local-symmetry-restriction} applied to $\cK = \Pos$ can be seen as a generalization of the ability to restrict classical registers in optimizations as proven in \cite{Tomamichel-2015}. This is because states that are partially classical are invariant under the pinching channel for some pre-determined computational basis. We will now show that invariance under any pinching channel $\cP$, which corresponds to some symmetry, includes classical registers as a special case. Often the pinching channel is expressed $\cP(\cdot) = \sum_{x \in \Sigma} P_{x} \cdot P_{x}$ where $P_{x}$ are mutually orthogonal projectors that sum to identity. That this is a symmetry is not obvious in this form, so we quickly verify this.
\begin{proposition}Any pinching channel can be written as a mixed unitary channel where the unitaries form a group, i.e.\ a pinching channel is a specific type of symmetry.
\end{proposition}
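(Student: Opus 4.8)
The plan is to produce, for an arbitrary pinching channel $\cP(\cdot)=\sum_{x\in\Sigma}P_x\cdot P_x$ with $\{P_x\}_{x\in\Sigma}$ mutually orthogonal projectors summing to $\I$, an explicit \emph{finite group} of unitaries whose uniform average reproduces $\cP$. Write $n:=|\Sigma|$, fix any bijection identifying $\Sigma$ with $\mbb{Z}_n=\{0,\dots,n-1\}$, and set $\omega:=e^{2\pi i/n}$. For $k\in\mbb{Z}_n$ define the diagonal phase unitary
\[ U_k := \sum_{x\in\Sigma}\omega^{kx}P_x \ . \]
First I would check that $\{U_k\}_{k\in\mbb{Z}_n}$ is a group: orthogonality of the $P_x$ gives $U_k^{\ast}=\sum_x\omega^{-kx}P_x$ and $U_kU_{k'}=\sum_x\omega^{(k+k')x}P_x=U_{(k+k')\bmod n}$, so $U_kU_k^{\ast}=\sum_xP_x=\I$, and $k\mapsto U_k$ is a homomorphism from $\mbb{Z}_n$ onto $\{U_k\}$ with identity $U_0=\I$. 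Hence $\{U_k\}_{k\in\mbb{Z}_n}$ is a cyclic group of unitaries.

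Next I would carry out the mixed-unitary average over this group and invoke character orthogonality $\tfrac1n\sum_{k=0}^{n-1}\omega^{km}=\delta_{m,0}$ on $\mbb{Z}_n$:
\begin{align*}
\frac{1}{n}\sum_{k=0}^{n-1}U_k\rho U_k^{\ast} &= \sum_{x,y\in\Sigma}\Big(\frac{1}{n}\sum_{k=0}^{n-1}\omega^{k(x-y)}\Big)P_x\rho P_y \\
&= \sum_{x,y\in\Sigma}\delta_{x,y}\,P_x\rho P_y = \sum_{x\in\Sigma}P_x\rho P_x = \cP(\rho)\ .
\end{align*}
Since the Haar measure on a finite group is the uniform (counting) measure, this exhibits $\cP(\cdot)=\int U(\cdot)U^{\ast}\,dU$ with the integral over the finite group $\{U_k\}_{k\in\mbb{Z}_n}$, which is exactly the form of the projection/twirling channel $\cG$ attached to a symmetry. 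In particular $\cP$ is self-adjoint, which is what lets one move it onto the observable in the subsequent optimizations.

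There is essentially no obstacle: the only care needed is the bookkeeping in the character sum and noting that the identification $\Sigma\cong\mbb{Z}_n$ is harmless, since relabelling the $P_x$ merely relabels the group elements. One could equally take the sign-flip group $\{U_{\vec s}=\sum_x(-1)^{s_x}P_x : \vec s\in\{0,1\}^{\Sigma}\}\cong\mbb{Z}_2^{\,n}$, whose uniform average also equals $\cP$ by the same orthogonality relation; the cyclic choice above is simply the most economical.
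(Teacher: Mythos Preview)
Your proof is correct and follows essentially the same approach as the paper: both construct the cyclic group of diagonal phase unitaries $U_k=\sum_x\omega^{kx}P_x$ and use the character orthogonality $\tfrac1n\sum_k\omega^{k(x-y)}=\delta_{x,y}$ to recover the pinching. Your version is in fact a bit more careful, since you explicitly verify the group law $U_kU_{k'}=U_{k+k'}$ before doing the averaging, whereas the paper just asserts the group property after the computation.
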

\begin{proof}
Suppose $|\Sigma|=r$.  Define the unitary operators
\[U_k=\sum_{x=0}^{r-1}e^{ixk/(2\pi r)} \Pi_x\qquad \text{for $k=0,\cdots,r-1$}.\]
Then
\begin{align}
   \frac{1}{r} \sum_{k=0}^{r-1}U_k\rho U_k^\dagger&=\frac{1}{r}\sum_{x,x'=0}^{r-1}\sum_{k=0}^{r-1}e^{i(x-x')k/(2\pi r)}\Pi_x\rho \Pi_{x'}\notag\}
    =\sum_{x=0}^{r-1}\Pi_x\rho\Pi_x.
\end{align}
This is clearly a group as $k = 0$ results in the identity operator, $U_{k}U_{k'} = U_{k+k'}$, and $U_{r-k}$ is the inverse of $U_{k}$ for any $k \in \{0,...,r-1\}$.
\end{proof}

\begin{corollary}
If any register of $\rho_{ABCD}$ satisfies a pinching invariance, then we can restrict to smoothed states in its ball that satisfy said invariance.
\end{corollary}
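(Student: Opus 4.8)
The plan is to reduce the statement to Theorem \ref{thm:optimizing-with-symmetry} by exhibiting the pinching as a projection channel that satisfies data-processing for $H^{\cK}_{\min}$. By the preceding proposition, a pinching channel $\cP$ acting on a register $R\in\{A,B,C,D\}$ can be written as a mixed-unitary channel $\cP(\cdot)=\tfrac1r\sum_{k=0}^{r-1}U_k(\cdot)U_k^{\ast}$ whose unitaries $\{U_k\}$ form a finite group on $R$. Let $\cG$ be the extension of $\cP$ by the identity channel on the remaining three registers. Since the $U_k$ form a group, $U_k^{\ast}=U_{k^{-1}}$ ranges over the same set, so $\cG$ is self-adjoint; and by hypothesis $\cG(\rho_{ABCD})=\rho_{ABCD}$, so $\cG$ is a projection channel fixing $\rho$.

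Next I would verify the data-processing hypothesis of Theorem \ref{thm:optimizing-with-symmetry} through Proposition \ref{prop:DPI}, exactly as in the proof of Corollary \ref{corr:local-symmetry-restriction}. Whatever bipartition is used to define $H^{\cK}_{\min}$ on $ABCD$, the pinched register $R$ sits entirely on one side of it, so the $U_k$ (padded with identities) are local unitaries with respect to that bipartition; as $\cK$ is physical this gives $\cG=\cG^{\ast}:\cK\to\cK$. A pinching channel fixes the identity operator of its own register, so if $R$ lies on the numerator side then $\cG(I\otimes\sigma)=I\otimes\sigma$ and one may take $\phi=\id$, while if $R$ lies on the conditioning side then $\cG(I\otimes\sigma)=I\otimes\cP(\sigma)$ and one may take the trace-preserving map $\phi=\cP$. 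In either case the hypotheses of Proposition \ref{prop:DPI} hold (with equality in the required generalized inequality), so $\cG$ satisfies data-processing for $H^{\cK}_{\min}$.

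Finally, applying Theorem \ref{thm:optimizing-with-symmetry} to this $\cG$ yields that the smoothing in $H^{\cK,\ve}_{\min}$ may be restricted to $\{\wt{\rho}\in\cB^{\ve}(\rho):\cG(\wt{\rho})=\wt{\rho}\}$, i.e.\ to states in the ball of $\rho$ that are invariant under the given pinching. I do not expect a serious obstacle; the only point requiring care is matching the pinched register against the relevant bipartition so that Proposition \ref{prop:DPI} applies, which is precisely where the standing assumption that $\cK$ is physical (closed under local unitaries), as in Corollary \ref{corr:local-symmetry-restriction}, is used.
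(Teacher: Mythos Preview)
Your proposal is correct and follows exactly the route the paper intends: the corollary is stated without proof immediately after the proposition that a pinching is a mixed-unitary channel over a group, so it is meant as a direct application of Corollary~\ref{corr:local-symmetry-restriction} (equivalently Theorem~\ref{thm:optimizing-with-symmetry} via Proposition~\ref{prop:DPI}) with the local symmetry taken to be the pinching unitaries tensored with identities. Your only addition is the explicit case split on which side of the bipartition the pinched register lies, which is a harmless elaboration; note also that the paper's surrounding paragraph is specifically discussing $\cK=\Pos$, where the ``physical'' hypothesis you flag is automatic.
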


\paragraph*{AEP for Separable Min-Entropy for CQ states}

We now prove an AEP for CQ states. we start by showing that $H^{\Sep,\ve}_{\min} = H^{\ve}_{\min}$ for CQ states. This basically follows from Corollary \ref{corr:local-symmetry-restriction}. We note a non-smoothed version of this result was observed in \cite{Chitambar-2021a}.
\begin{corollary}\label{corr:restrict-to-cq-smooth-min}
    Let $\rho_{XB} \in \Density(XB)$, $\ve \in [0,1)$. Then $H_{\min}^{\ve,\Sep}(X|B)_{p_{XB}} = H^{\ve}_{\min}(X|B)_{p_{XB}}$. The same thing holds for $\rho_{AY}$.
\end{corollary}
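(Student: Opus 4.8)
The plan is to prove the two inequalities separately; the easy one is $H^{\ve,\Sep}_{\min}\ge H^{\ve}_{\min}$. Since $\Sep(X:B)=\Ent_1(X:B)\subseteq\Pos(X\otimes B)$, the cone-monotonicity of $D^{\cK}_{\max}$ noted right after its definition gives $D^{\Sep}_{\max}\le D_{\max}$, hence $H^{\Sep}_{\min}(X|B)_{\wt\rho}\ge H_{\min}(X|B)_{\wt\rho}$ for every subnormalized $\wt\rho$, and maximizing over $\wt\rho\in\cB^{\ve}(p_{XB})$ yields $H^{\ve,\Sep}_{\min}(X|B)_{p_{XB}}\ge H^{\ve}_{\min}(X|B)_{p_{XB}}$. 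So the work is the reverse inequality.

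First I would prove the single-shot identity $H^{\Sep}_{\min}(X|B)_{\wt\rho}=H_{\min}(X|B)_{\wt\rho}$ for every classical-quantum $\wt\rho_{XB}=\sum_x q(x)\dyad{x}_X\otimes\wt\rho^x_B$ (the unsmoothed statement appears in \cite{Chitambar-2021a}). A CQ state is invariant under the pinching $\cG:=\cP_X\otimes\id_B$ in the classical basis, and by the proposition identifying every pinching channel with a mixed-unitary channel whose unitaries form a group, $\cG$ is a genuine local symmetry of the product-unitary form required by Corollary~\ref{corr:local-symmetry-restriction}. Applying the second part of that corollary with $\cK=\Pos$ (for which $d_X^{-1}\I_{XB}$ lies in the relative interior, so that \eqref{eqn:HminKPrimal} indeed computes $\exp(-H_{\min}(X|B)_{\wt\rho})$), the primal program $\exp(-H_{\min}(X|B)_{\wt\rho})=\max\{\langle\wt\rho,Y\rangle:\Tr_X Y=\I_B,\ Y\in\Pos\}$ admits an optimizer $Y^\star$ with $Y^\star=\cG(Y^\star)=\sum_x\dyad{x}_X\otimes\bra{x}Y^\star\ket{x}$; since each $\bra{x}Y^\star\ket{x}\in\Pos(B)$, this $Y^\star$ is separable, and it still satisfies $\Tr_X Y^\star=\I_B$, so it is feasible for the $\Sep$-restricted primal \eqref{eqn:HminKPrimal}. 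Hence $\exp(-H^{\Sep}_{\min}(X|B)_{\wt\rho})\ge\langle\wt\rho,Y^\star\rangle=\exp(-H_{\min}(X|B)_{\wt\rho})$, i.e.\ $H^{\Sep}_{\min}\le H_{\min}$ on CQ states, which together with the easy direction gives equality.

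Then I would lift this to the smoothed quantity. Because $p_{XB}$ is $\cG$-invariant and $\Sep(X:B)$ is physical, the first part of Corollary~\ref{corr:local-symmetry-restriction} (now with $\cK=\Sep$) lets me restrict the maximization in $H^{\ve,\Sep}_{\min}(X|B)_{p_{XB}}=\max_{\wt\rho\in\cB^{\ve}(p_{XB})}H^{\Sep}_{\min}(X|B)_{\wt\rho}$ to $\cG$-invariant, i.e.\ classical-quantum, $\wt\rho$; on that set the single-shot identity replaces $H^{\Sep}_{\min}$ by $H_{\min}$; and enlarging back to all of $\cB^{\ve}(p_{XB})$ (again by Corollary~\ref{corr:local-symmetry-restriction} with $\cK=\Pos$, or simply because a larger feasible set can only raise the maximum, combined with the easy direction) recovers $H^{\ve}_{\min}(X|B)_{p_{XB}}$. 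The $\rho_{AY}$ statement is proved identically with the pinching performed on the conditioning register: $\cG=\id_A\otimes\cP_Y$ is again a local symmetry, a $\cG$-invariant positive operator has the form $\sum_y M^y_A\otimes\dyad{y}_Y\in\Sep(A:Y)$, and the constraint $\Tr_A Y'=\I_Y$ is preserved by $\cG$.

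I expect the single-shot reduction to be the only place needing care: one must twirl the optimizer of the \emph{unrestricted} $\Pos$ program (not the $\Sep$ program), check that pinching genuinely meets the hypotheses of Corollary~\ref{corr:local-symmetry-restriction}---a product-unitary local symmetry, hence self-adjoint and acting as data processing for $H^{\Pos}_{\min}$---and use that the $\cG$-invariant positive operators are exactly the $X$-block-diagonal ones, which are separable for free. Everything else is bookkeeping, and the argument is uniform in $\ve\in[0,1)$.
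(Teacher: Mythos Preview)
Your proposal is correct and follows essentially the same approach as the paper: both argue the easy inequality from cone containment, then use Corollary~\ref{corr:local-symmetry-restriction} with $\cK=\Pos$ to force the primal optimizer of the unrestricted program to be $\cG$-invariant (hence CQ, hence separable) whenever the state is CQ, and use the first part of that same corollary (with $\cK=\Sep$) to restrict the smoothing ball to CQ states. Your write-up is in fact more explicit than the paper's about which cone is used at each invocation and why pinching qualifies as a local symmetry.
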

\begin{proof}
We start with the unsmoothed case for clarity. In general, $\exp(-H_{\min}(A|B)_{\rho}) \geq \exp(-H_{\min}^{\Sep}(A|B)_{\rho})$ for $\rho \in \Density_{\leq}$ as can be seen from the primal problem (Eqn. \ref{eqn:HminKPrimal}), so if we can show the optimizer of the former's conic program is contained in the feasible set of the latter's conic program, we have equality. Consider $\rho_{XA}$. As it is invariant under pinching on the $X$ register, it is invariant under $\cG(\cdot) = \int (U_{w} \otimes \I_{B}) \cdot (U_{w} \otimes \I_{B})^{\ast} \, , dU $ where $U_{w}$ are the appropriate pinching operators. As shown in Corollary \ref{corr:local-symmetry-restriction}, the optimizer $X$ will be invariant under $\cG$. Therefore, for $\cK = \Pos$, the optimizer is CQ and thus contained in $\Sep$. This holds for any CQ. As Corollary \ref{corr:local-symmetry-restriction} shows if $\rho_{XA}$ is CQ then the optimizer $\wt{\rho}$ is CQ, regardless of the cone, $\wt{\rho}$ is the same as it's optimizer $X$. This completes the proof.
\end{proof}

We then get the following corollary that the separable cone is sufficient for the min-entropy AEP for CQ states.
\begin{corollary}\label{corr:sep-min-ent-AEP}
For $\ve \in [0,1)$, 
$$\lim_{n \to \infty}\left[\frac{1}{n}H_{\min}^{\Sep,\ve}(X^{n}|B^{n})_{\rho_{XB}^{\otimes n}}\right] = H(X|B)_{\rho_{XB}} $$
where $\rho_{XB}$ is a classical-quantum state.
\end{corollary}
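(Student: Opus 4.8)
The plan is to deduce the statement from two facts already established in the paper: the coincidence of the separable and the ordinary smoothed min-entropy on classical--quantum states (Corollary~\ref{corr:restrict-to-cq-smooth-min}), applied at the level of $n$ copies, together with the standard fully quantum AEP \eqref{eqn:FQAEP}. So the whole proof amounts to checking that Corollary~\ref{corr:restrict-to-cq-smooth-min} is applicable to $\rho_{XB}^{\otimes n}$ and then quoting the known AEP for the positive semidefinite cone.

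First I would observe that $\rho_{XB}^{\otimes n}$ is again classical--quantum with respect to the bipartition $X^{n}:B^{n}$. Indeed, writing $\rho_{XB}=\sum_{x}p_{x}\,\op{x}{x}\otimes\sigma_{x}$ and regrouping tensor factors gives
\[
\rho_{XB}^{\otimes n}=\sum_{x_{1},\ldots,x_{n}} p_{x_{1}}\cdots p_{x_{n}}\;\op{x_{1}\cdots x_{n}}{x_{1}\cdots x_{n}}_{X^{n}}\otimes\,\sigma_{x_{1}}\otimes\cdots\otimes\sigma_{x_{n}},
\]
which is diagonal on $X^{n}$ in the product computational basis; in particular it is invariant under the pinching channel on $X^{n}$ in that basis, which is a local symmetry in the sense of Corollary~\ref{corr:local-symmetry-restriction}. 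The separable cone $\Sep(X^{n}:B^{n})$ is physical (closed under local unitaries), and strong duality holds for it by Corollary~\ref{corr:strong-duality-for-resources}. Hence the hypotheses of Corollary~\ref{corr:restrict-to-cq-smooth-min} are met after the relabelling $X\mapsto X^{n}$, $B\mapsto B^{n}$: the argument there only uses that the state is CQ---so that, via Theorem~\ref{thm:optimizing-with-symmetry}/Corollary~\ref{corr:local-symmetry-restriction}, the smoothing ball and the primal optimiser can both be restricted to pinching-invariant, hence CQ, operators, which automatically lie in the separable cone---and that $\Sep$ is physical, and the smoothing is taken with respect to the purified distance on the full space $X^{n}B^{n}$ in both the $\Sep$ and the $\Pos$ cases. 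Therefore
\[
H^{\Sep,\ve}_{\min}(X^{n}|B^{n})_{\rho_{XB}^{\otimes n}} = H^{\ve}_{\min}(X^{n}|B^{n})_{\rho_{XB}^{\otimes n}} .
\]

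Dividing by $n$ and letting $n\to\infty$, the right-hand side converges to $H(X|B)_{\rho_{XB}}$ by the fully quantum AEP \eqref{eqn:FQAEP}, which holds for every state and in particular for the CQ state $\rho_{XB}$; combining this with the previous display yields the claim. The only genuine point requiring care---and it is minor---is the verification in the second paragraph that Corollary~\ref{corr:restrict-to-cq-smooth-min} goes through verbatim at the $n$-copy level, i.e.\ that nothing in its proof degrades when the classical register is $X^{n}$ rather than $X$; once that is granted, the AEP is simply inherited from the positive-semidefinite-cone theory, and all the non-trivial content---namely why the separable cone already suffices for CQ states---is carried by Corollary~\ref{corr:restrict-to-cq-smooth-min} itself. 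It is worth flagging, for contrast with Corollary~\ref{corr:HmaxK-anti-AEP}, that this argument is available precisely because the min-entropy, unlike the restricted max-entropy, does not need to reference the purifying system.
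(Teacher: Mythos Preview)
Your proposal is correct and follows essentially the same approach as the paper's proof: both invoke Corollary~\ref{corr:restrict-to-cq-smooth-min} at the $n$-copy level (using that $\rho_{XB}^{\otimes n}$ remains CQ) to reduce to the ordinary smoothed min-entropy, and then apply the standard fully quantum AEP~\eqref{eqn:FQAEP}. Your version is more explicit about why the $n$-copy reduction is valid, but the logical structure is identical.
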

\begin{proof}
By the previous corollary, $H^{\Sep,\ve}_{\min}(X^{n}|B^{n})_{\rho_{XB}} = H^{\ve}_{\min}(X^{n}|B^{n})_{\rho_{XB}}$ for all $n \in \mbb{N}$, where we have used $\rho_{XB}^{\otimes n}$ is still a CQ state. Therefore the limit holds by the standard min-entropy fully quantum AEP \eqref{eqn:FQAEP}. 
\end{proof}

\section{Separable Cone Sufficient for Classical-Quantum Information Theory}\label{sec:separable-cone-for-CQ-IT}
Corollary \ref{corr:sep-min-ent-AEP} tells us that there exists a partially quantum setting in which the separable cone is sufficient to recover the asymptotic result. We now extend this point to a sufficient number of common entropic quantities to convince ourselves that we have recovered at least point-to-point classical-quantum information theory.  

Corollary \ref{corr:HmaxK-anti-AEP} shows that the separable max-entropy cannot satisfy an AEP even for classical states, we must ask ourselves if there is an entropic quantity that can serve a similar function. Indeed, one might recall the quantum Hartley entropy $H_{0}(A|B)$ which was proposed prior to $H_{\max}$. We now define a restricted Hartley entropy that will do this. 

The Hartley entropy may be written as 
\begin{align*}
\exp(H_{0,\max}(A|B)) = \sup_{\sigma \in \Density(B)} \Tr(\Pi_{\rho}(\I_{A} \otimes \sigma_{B}))
= \|\Tr_{A}(\Pi_{\rho})\|_{\infty}
\ ,
\end{align*}
where $\Pi_{\rho}$ is defined as the projector onto the support of $\rho_{AB}$.
Note that $\Pi_{\rho}$ may be viewed as the projection onto the positive eigenspace of $\rho \succeq 0$. We could then hope we could to replace the L\"{o}wner ordering with the generalized cone ordering $\succeq_{\cK}$ (or the dual cone), i.e.\ the projector onto the support under the generalized cone ordering. This does not seem to work, so we take the minimal idea that does: the minimal projector in the cone $\cK$ that dominates $\Pi_{\rho}$. We define this as the set 
$$\{\Pi^{\cK} \succeq \Pi_{\rho} \} := \{\Pi \in \mrm{Proj} \cap \cK : \Pi \succeq \Pi_{\rho} \} \ . $$
Note in the case $\cK = \Pos$, this set in general contains projectors other than $\Pi_{\rho}$. 

We then use this set to define the conic Hartley entropy.
\begin{align}\label{eqn:restricted-H0max}
\exp(H^{\cK}_{0,\max}(A|B)) := \underset{\Pi \in \{\Pi^{\cK} \succeq \Pi_{\rho} \}}{\min} \| \tr_{A}(\Pi_{AB})\|_{\infty} \ . 
\end{align}
This is the definition we want as we now argue by showing that it satisfies desirable properties.
To do this, we first introduce the $CQ$ cone, which we in effect used in Corollary \ref{corr:restrict-to-cq-smooth-min}.
\begin{definition}\label{def:CQ-Cone}
Fix a basis of $A$, $\{\ket{x}\}_{\cX}$. Define $\cG(\cdot) := \sum_{x \in \cX} \dyad{x} \cdot \dyad{x}$. The CQ cone is
$$ \cK := \{ P \in \Pos(A \otimes B) : (\cG \otimes \id_{B})(P) = P \} \ . $$
\end{definition}
This cone is clearly generated by the CQ quantum states, which is why we call it the CQ cone. Moreover, note that by definition of restricted Hartley entropy it is always an upper bound on $H_{0,\max}^{\Sep}$ as every projector in the CQ cone is in the separable cone.\footnote{More generally if $\cK_{1} \subseteq \cK_{2}, H_{0,\max}^{\cK_{1}} \geq H_{0,\max}^{\cK_{2}}$.} This will be a useful tool in proving an AEP on CQ states over the restricted cone. We now verify we have chosen a good choice of definition for restricted Hartley entropy along with similar properties.

\begin{proposition}\label{prop:Hartley-equivs}
\begin{enumerate}
    \item Given $\rho_{AB}$, then $H^{\Pos}_{0,\max}(A|B)_{\rho} = H_{0,\max}(A|B)_{\rho}$.
    \item Given CQ state $\rho_{XB}$, $$H^{\CQ}_{0,\max}(X|B) = H^{\Sep}_{0,\max}(X|B)_{\rho} = H_{0,\max}(X|B)_{\rho} \ , $$ where the CQ cone is defined in the basis of $\rho_{X}$. 
\end{enumerate}
\end{proposition}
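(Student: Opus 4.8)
The plan is to treat the two parts in turn, with Part 1 carrying the analytic content and Part 2 reducing to it by a sandwiching argument.

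\textbf{Part 1.} First observe that $\Pi_\rho$ is itself a member of the feasible set $\{\Pi^{\Pos} \succeq \Pi_\rho\}$, being a projector that dominates itself; hence the minimum in \eqref{eqn:restricted-H0max} is at most $\|\Tr_A(\Pi_\rho)\|_\infty$, which equals $\exp(H_{0,\max}(A|B)_\rho)$ by the displayed formula $\exp(H_{0,\max}(A|B)) = \|\Tr_A(\Pi_\rho)\|_\infty$. For the reverse inequality I would argue by monotonicity: if $\Pi$ is a projector with $\Pi \succeq \Pi_\rho$, then $\Pi - \Pi_\rho \succeq 0$, and since the partial trace preserves the L\"{o}wner order (it sends positive operators to positive operators), $\Tr_A(\Pi) \succeq \Tr_A(\Pi_\rho) \succeq 0$; for positive semidefinite operators $X \succeq Y$ one has $\|X\|_\infty \geq \|Y\|_\infty$, since the top eigenvalue is the supremum of $\langle v, (\cdot)\, v\rangle$ over unit vectors. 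Hence every feasible $\Pi$ satisfies $\|\Tr_A(\Pi)\|_\infty \geq \|\Tr_A(\Pi_\rho)\|_\infty$, so the minimum equals $\|\Tr_A(\Pi_\rho)\|_\infty = \exp(H_{0,\max}(A|B)_\rho)$, as claimed.

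\textbf{Part 2.} Use first the cone monotonicity of the restricted Hartley entropy (a smaller cone gives a smaller feasible set, hence a larger minimum): since $\CQ \subseteq \Sep \subseteq \Pos$, we obtain $H^{\CQ}_{0,\max}(X|B)_\rho \geq H^{\Sep}_{0,\max}(X|B)_\rho \geq H^{\Pos}_{0,\max}(X|B)_\rho = H_{0,\max}(X|B)_\rho$, the final equality being Part 1. It then suffices to establish the single reverse inequality $H^{\CQ}_{0,\max}(X|B)_\rho \leq H_{0,\max}(X|B)_\rho$. For a CQ state $\rho_{XB} = \sum_x p(x)\, \dyad{x}_X \otimes \rho^x_B$, the support projector is $\Pi_\rho = \sum_{x:\, p(x) > 0} \dyad{x}_X \otimes \Pi_{\rho^x_B}$, which is block-diagonal in the basis $\{\ket{x}\}$. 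Since $\{\ket{x}\}$ is an eigenbasis of $\rho_X$, the CQ cone of Definition \ref{def:CQ-Cone} is built from the pinching $\cG(\cdot) = \sum_x \dyad{x} \cdot \dyad{x}$ in precisely this basis, so $(\cG \otimes \id_B)(\Pi_\rho) = \Pi_\rho$; that is, $\Pi_\rho$ is a projector lying in the CQ cone and hence feasible for the $\CQ$-restricted minimization. Therefore $\exp(H^{\CQ}_{0,\max}(X|B)_\rho) \leq \|\Tr_X(\Pi_\rho)\|_\infty = \exp(H_{0,\max}(X|B)_\rho)$, the last equality being the Hartley-entropy formula (equivalently, Part 1 with $A$ relabelled $X$). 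Combining the two chains collapses all three quantities to $H_{0,\max}(X|B)_\rho$.

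The argument is essentially bookkeeping once one notices that $\Pi_\rho$ is always admissible in the relevant restricted minimization, so I do not anticipate a serious obstacle. The only steps warranting explicit justification are the two monotonicity facts used in Part 1 (L\"{o}wner-monotonicity of the partial trace and of the operator norm on the positive cone) and the elementary observation that the support projector of a CQ state is block-diagonal, hence lies in both the CQ and the separable cones. A minor point to dispatch is the case where $\rho_X$ is degenerate: the basis $\{\ket{x}\}$ is nonetheless an eigenbasis of $\rho_X$, so the CQ cone defined in that basis is unambiguous for the purpose of the statement and the argument goes through unchanged.
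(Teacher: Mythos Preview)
Your proof is correct and follows essentially the same approach as the paper: both arguments hinge on the observation that $\Pi_\rho$ itself lies in the relevant restricted feasible set (trivially for $\cK=\Pos$, and via the block-diagonal structure for a CQ state in the $\CQ$ and $\Sep$ cones), and that any other feasible projector $\Pi\succeq\Pi_\rho$ can only increase $\|\Tr_A(\Pi)\|_\infty$ because partial trace preserves the L\"owner order and the operator norm is monotone on positive operators. Your write-up is simply more explicit about these monotonicity steps and about computing $\Pi_\rho$ for a CQ state; the paper compresses the same reasoning into the phrase ``anything else feasible must decompose as $\Pi_\rho+X$ with $X\in\Pos$, which can only be worse.''
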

\begin{proof}
We group these together because they have the same proof structure. We argue from the exponential of the entropy and then since the logarithm is a monotonic function, it holds for the entropies themselves. For Item 1, let $\cK = \Pos$. Then $\Pi_{\rho}$ is always feasible and strictly satisfies the constraint, so anything else feasible must decompose as $\Pi_{\rho} + X$ where $X \in \Pos$, which can only be worse by linearity of the partial trace and convexity of the norm. For Item 2, let $\rho_{XB}$. Then $\Pi_{\rho} \in \{\Pi^{\CQ} \succeq \Pi_{\rho} \} \subset \{\Pi^{\Sep} \succeq \Pi_{\rho}\}$ and by the same argument as above, it is also optimal.
\end{proof}
Having verified that we recover Hartley entropy when we set $\cK = \Pos$ or when the state is CQ and the cone contains the CQ states, we move on to show that it satisfies an asymptotic equipartition property. The converse will be effectively trivial. The achievability will rely on the fact that for CQ states, $H_{0,\max}^{\CQ}(X|B) = H_{0,\max}(X|B)$, so we can use a preserved duality relation \cite[Duality Relation 3]{Tomamichel-2015} that then allows for an approximate equivalence with the unrestricted max-entropy \cite{Tomamichel-2011a} that satisfies AEP \cite{Tomamichel-2015}. We split this into lemmas for clarity.

We begin by defining an intermediary entropy.
Let 
\begin{align}\label{eq:down-arrow-min-ent}
H^{\downarrow}_{\min}(A|B)_{\rho} := -D_{\max}(\rho_{AB}||\I_{A} \otimes \rho_{B}) \ . 
\end{align}
This entropy is known to satisfy the duality relation \cite{Tomamichel-2015,Tomamichel-2011a}:
\begin{align}\label{eq:petz-renyi-dual} H_{0,\max}(A|B)_{\rho} = -H_{\min}^{\downarrow}(A|C)_{\rho} \ , 
\end{align}
where $\rho_{ABC}$ is pure. Next we define the smoothed restricted Hartley entropy and the smoothed $H^{\downarrow}_{\min}$:
\begin{align} H_{0,\max}^{\cK,\ve}(A|B)_{\rho} &:= \min_{\wt{\rho} \in \Bve(\rho)} H_{0,\max}^{\cK}(A|B)_{\wt{\rho}}  \label{eq:smooth-hartley} \\
H_{\min}^{\downarrow,\ve}(A|B)_{\rho} &:= \max_{\wt{\rho} \in \Bve(\rho)} H_{\min}^{\downarrow,\ve}(A|B)_{\wt{\rho}} \nonumber
\end{align}
It is known that when $\cK = \Pos$, both quantities are invariant under local embeddings even when smoothed \cite[Lemma 21]{Tomamichel-2011a}, which we will use soon. Moreover, it follows $H^{\CQ,\ve}_{0,\max}(X|B)$ is invariant under local embeddings by the previous proposition.

Next we prove an unsurprising but special property of the CQ-Hartley entropy.
\begin{proposition}
Let $\rho_{AB}$ be an arbitrary state. Then
$$ H_{0,\max}^{\CQ}(A|B)_{\rho} =  H_{0,\max}(A|B)_{(\cG \otimes \id)(\rho)} \ ,$$
where $\cG$ is dephasing according to the $\CQ$ cone.
\end{proposition}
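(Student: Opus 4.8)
The plan is to make the minimizer in the definition \eqref{eqn:restricted-H0max} of $H^{\CQ}_{0,\max}(A|B)_{\rho}$ explicit. I would show that $\Pi_{\sigma}$, the projector onto the support of $\sigma := (\cG\otimes\id)(\rho)$, is the \emph{minimum} (in the L\"owner order) of the feasible set $\{\Pi^{\CQ}\succeq\Pi_{\rho}\}$. Granting this, the feasible sets $\{\Pi^{\CQ}\succeq\Pi_{\rho}\}$ and $\{\Pi^{\CQ}\succeq\Pi_{\sigma}\}$ coincide (each is exactly the set of CQ projectors dominating $\Pi_{\sigma}$), so the two optimizations defining $H^{\CQ}_{0,\max}(A|B)_{\rho}$ and $H^{\CQ}_{0,\max}(A|B)_{\sigma}$ are literally identical, giving $H^{\CQ}_{0,\max}(A|B)_{\rho}=H^{\CQ}_{0,\max}(A|B)_{\sigma}$. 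Since $\sigma$ is a CQ state in the basis defining the CQ cone (Definition \ref{def:CQ-Cone}), Item 2 of Proposition \ref{prop:Hartley-equivs} then identifies $H^{\CQ}_{0,\max}(A|B)_{\sigma}$ with $H_{0,\max}(A|B)_{\sigma}$, which is the claim.

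To establish that $\Pi_{\sigma}$ is feasible and minimal I would use two observations. First, writing $\cG$ as a mixed-unitary channel $\tfrac1r\sum_{k}U_{k}(\cdot)U_{k}^{\ast}$ with $U_{0}=I$, as shown earlier, gives $\sigma=(\cG\otimes\id)(\rho)\succeq\tfrac1r\rho$, hence $\supp(\rho)\subseteq\supp(\sigma)$, i.e.\ $\Pi_{\rho}\preceq\Pi_{\sigma}$; moreover $\Pi_{\sigma}$ is block-diagonal in the $A$-basis, hence a projector lying in the CQ cone, so $\Pi_{\sigma}\in\{\Pi^{\CQ}\succeq\Pi_{\rho}\}$. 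Second, $\cG\otimes\id$ is completely positive, so the range of $(\cG\otimes\id)(X)$ depends only on $\mrm{range}(X)$ for $X\succeq0$ (immediate from a Kraus decomposition); since $\rho$ and $\Pi_{\rho}$ have the same range, $M:=(\cG\otimes\id)(\Pi_{\rho})$ has the same range as $\sigma$, i.e.\ $\mrm{range}(M)=\mrm{range}(\Pi_{\sigma})$. Minimality now follows: for any $\Pi\in\{\Pi^{\CQ}\succeq\Pi_{\rho}\}$, applying the positive map $\cG\otimes\id$ to $\Pi\succeq\Pi_{\rho}$ and using $(\cG\otimes\id)(\Pi)=\Pi$ (as $\Pi$ is in the CQ cone) yields $\Pi\succeq M$; since $\Pi$ is a projector and $M\succeq0$ one has $\ker\Pi\subseteq\ker M$, so $\mrm{range}(\Pi)\supseteq\mrm{range}(M)=\mrm{range}(\Pi_{\sigma})$, i.e.\ $\Pi\succeq\Pi_{\sigma}$. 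Conversely every CQ projector above $\Pi_{\sigma}$ is above $\Pi_{\rho}$, so the two feasible sets agree.

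The main obstacle is precisely this minimality step. The naive slice-by-slice approach — decomposing a CQ projector as $\Pi=\sum_{x}\dyad{x}_{A}\otimes Q_{x}$ and arguing $Q_{x}\succeq\Pi_{\rho_{x}}$ for each $x$ — works but is unpleasant bookkeeping; the cleaner route above, pushing the L\"owner inequality through the positive idempotent map $\cG\otimes\id$ and invoking that a projector dominating a PSD operator must contain its range, avoids all of it. The only routine points to check with care are that the range of a completely positive map applied to a PSD operator depends only on that operator's range, and that $\Pi_{\sigma}$ is genuinely block-diagonal hence in the CQ cone.
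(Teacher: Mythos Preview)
Your argument is correct and takes a genuinely different route from the paper. The paper does exactly the slice-by-slice analysis you describe as ``unpleasant bookkeeping'': it introduces the conditional operators $p(x)\rho_B^x = (\bra{x}\otimes I)\rho_{AB}(\ket{x}\otimes I)$, argues block-wise that any feasible CQ projector $\sum_x \dyad{x}\otimes Q_x$ must have $Q_x \succeq \Pi_{\rho^x}$ (via the CP maps $\Phi_{x'}(\cdot) = (\bra{x'}\otimes I)\cdot(\ket{x'}\otimes I)$), and then identifies $\sum_x \dyad{x}\otimes\Pi_{\rho^x}$ with the support projector of $(\cG\otimes\id)(\rho)$. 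Your approach replaces all of this with two structural observations: the mixed-unitary form of $\cG$ gives $\sigma\succeq \tfrac{1}{r}\rho$ and hence feasibility of $\Pi_\sigma$, while idempotence of $\cG\otimes\id$ on the CQ cone together with the range-preservation of CP maps gives minimality in one line. Your argument is cleaner and coordinate-free; the paper's version is more concrete and makes the block structure of the minimizer explicit, which some readers may find more transparent. The final detour through Proposition~\ref{prop:Hartley-equivs} is unnecessary once you have shown $\Pi_\sigma$ is the L\"owner minimum of the feasible set (monotonicity of $\|\Tr_A(\cdot)\|_\infty$ in the L\"owner order gives the value directly), but it is certainly valid.
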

\begin{proof}
Define the map $\Phi_{x'}(\cdot) := (\bra{x'} \otimes I_{B}) \cdot (\ket{x'} \otimes I_{B})$ for all $x' \in \cX$. Define $p(x)\rho_{B}^{x} := \Phi_{x}(\rho_{AB})$, where if $p(x) = 0$ then let $\rho_{B}^{x}$ be the zero matrix. Let $\Pi_{\rho^{x}}$ be the projector onto the support of $\rho_{B}^{x}$. Then $\Pi_{\rho^{x}} - \rho^{x}_{B} \succeq 0$. Define $\wt{\Pi}_{x}$ as any projector defined such that $\Pi_{\rho^{x}} \succ \wt{\Pi}_{x}$. Then by definition of support, $\wt{\Pi}_{x} - \rho_{B}^{x} \not \succeq 0$. As $\Phi_{x'}(\sum_{x} \dyad{x} \otimes \wt{\Pi}_{x} - \rho_{AB}) = \dyad{x'} \otimes (\wt{\Pi}_{x'} - \rho_{B}^{x'}) \not \succeq 0$ and $\Phi_{x'}$ is always a CP map, we can conclude we cannot remove elements from the $\Pi_{x}$. However, $\sum_{x} \dyad{x} \otimes \Pi_{\rho^{x}}$ is feasible. Thus, this is the minimal projector for $\CQ$ cone as removing anything from it will become infeasible. Moreover, $\sum_{x} \dyad{x} \otimes \Pi_{\rho^{x}}$ is the projector onto the support of $\sum_{x} \Phi_{x}(\rho_{AB}) = (\cG \otimes \id)(\rho)$. This completes the proof.
\end{proof}

We now convert this into a lemma about duality. We note we could in principle avoid stating this explicitly, but it makes it clear that why the AEP holds is that there is a duality between the restricted cone and an unrestricted cone.
\begin{lemma}
Let $\rho_{XBC}$ be pure where it is classical on $X$ and $\ve \in [0,1)$. Then
$$ H_{0,\max}^{\CQ,\ve}(X|B)_{\rho} = -H_{\min}^{\downarrow,\ve}(X|C)_{\rho} $$
\end{lemma}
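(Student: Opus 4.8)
I read the statement as: $\rho_{XBC}$ is a purification of a classical--quantum state $\rho_{XB}$, i.e.\ $(\cG\otimes\id_B)(\rho_{XB})=\rho_{XB}$ for the dephasing $\cG$ of Definition~\ref{def:CQ-Cone}, and $\ve\in[0,1)$. The plan is to prove the identity first without smoothing and then lift it by an Uhlmann argument, the only subtlety being that on the left-hand side the restricted Hartley entropy collapses to the ordinary one only on CQ states, so I first arrange that the smoothing there runs over CQ states. For \emph{any} pure $\wt{\rho}_{XBC}$ whose $XB$-marginal is CQ, Proposition~\ref{prop:Hartley-equivs}(2) gives $H^{\CQ}_{0,\max}(X|B)_{\wt{\rho}}=H_{0,\max}(X|B)_{\wt{\rho}}$, and then \eqref{eq:petz-renyi-dual} gives $H_{0,\max}(X|B)_{\wt{\rho}}=-H^{\downarrow}_{\min}(X|C)_{\wt{\rho}}$; call this the \emph{unsmoothed core}. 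Next, by the proposition proved immediately above ($H^{\CQ}_{0,\max}(X|B)_{\wt{\rho}}=H_{0,\max}(X|B)_{(\cG\otimes\id_B)(\wt{\rho})}$) and idempotence of $\cG$, $H^{\CQ}_{0,\max}(X|B)$ is invariant under $\cG\otimes\id_B$; since this channel is trace preserving it contracts the purified distance and it fixes $\rho_{XB}$, so it maps $\Bve(\rho_{XB})$ onto the CQ states of $\Bve(\rho_{XB})$. Hence \[ H^{\CQ,\ve}_{0,\max}(X|B)_\rho=\min\bigl\{\,H_{0,\max}(X|B)_{\wt{\sigma}}\;:\;\wt{\sigma}\in\Bve(\rho_{XB}),\ \wt{\sigma}\ \text{CQ}\,\bigr\}. \]

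For the inequality ``$\ge$'', I would use local-embedding invariance of $H^{\downarrow,\ve}_{\min}$ and $H^{\CQ,\ve}_{0,\max}$, \cite[Lemma~21]{Tomamichel-2011a}, to assume $\dim C$ is large enough that every subnormalized state on $XC$ has a purification in $XBC$. Given a CQ $\wt{\sigma}\in\Bve(\rho_{XB})$, Uhlmann's theorem for the purified distance (Proposition~\ref{prop:Uhlmann+}) supplies a purification $\wt{\sigma}_{XBC}$ of $\wt{\sigma}_{XB}$ with $P(\wt{\sigma}_{XBC},\rho_{XBC})=P(\wt{\sigma}_{XB},\rho_{XB})\le\ve$, so $\wt{\sigma}_{XC}\in\Bve(\rho_{XC})$; by the unsmoothed core $H_{0,\max}(X|B)_{\wt{\sigma}}=-H^{\downarrow}_{\min}(X|C)_{\wt{\sigma}_{XC}}\ge-H^{\downarrow,\ve}_{\min}(X|C)_\rho$, and minimizing over CQ $\wt{\sigma}$ gives $H^{\CQ,\ve}_{0,\max}(X|B)_\rho\ge-H^{\downarrow,\ve}_{\min}(X|C)_\rho$.

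For ``$\le$'', let $\wt{\rho}^*_{XC}\in\Bve(\rho_{XC})$ attain $H^{\downarrow,\ve}_{\min}(X|C)_\rho$, take an Uhlmann purification $\wt{\rho}^*_{XBC}$ of it (so $\wt{\rho}^*_{XB}\in\Bve(\rho_{XB})$), and apply \eqref{eq:petz-renyi-dual} to the pure $\wt{\rho}^*_{XBC}$ to get $H_{0,\max}(X|B)_{\wt{\rho}^*_{XB}}=-H^{\downarrow}_{\min}(X|C)_{\wt{\rho}^*_{XC}}=-H^{\downarrow,\ve}_{\min}(X|C)_\rho$. By the displayed formula it then suffices to exhibit a CQ state $\wt{\sigma}\in\Bve(\rho_{XB})$ with $H_{0,\max}(X|B)_{\wt{\sigma}}\le H_{0,\max}(X|B)_{\wt{\rho}^*_{XB}}$. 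Writing $\Pi:=\Pi_{\wt{\rho}^*_{XB}}$, so that $\exp H_{0,\max}(X|B)_{\wt{\rho}^*_{XB}}=\|\Tr_X\Pi\|_\infty$, the smoothing constraint forces $\Tr[\Pi\,\rho_{XB}]\ge F(\wt{\rho}^*_{XB},\rho_{XB})^2\ge 1-\ve^2$; and since $\rho_{XB}=\sum_x p(x)\dyad{x}_X\otimes\rho^x_B$ is block-diagonal in $X$, the quantity $\Tr[\Pi\,\rho_{XB}]=\sum_x p(x)\Tr[A_x\rho^x_B]$ depends only on the diagonal blocks $A_x:=(\bra{x}_X\otimes\I_B)\Pi(\ket{x}_X\otimes\I_B)$ of $\Pi$, while $\Tr_X\Pi=\sum_x A_x$. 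The plan is to use this to replace $\Pi$ by a \emph{CQ} projector $\Pi'=\sum_x\dyad{x}_X\otimes\Pi'_x$ with $\Tr[\Pi'\rho_{XB}]\ge 1-\ve^2$ and $\|\Tr_X\Pi'\|_\infty\le\|\Tr_X\Pi\|_\infty$, and then to take $\wt{\sigma}$ a CQ subnormalized state supported in $\mrm{range}(\Pi')$ with $F(\wt{\sigma},\rho_{XB})^2=\Tr[\Pi'\rho_{XB}]$ (which exists and is block-diagonal in $X$ because $\Pi'$ and $\rho_{XB}$ are); then $\wt{\sigma}\in\Bve(\rho_{XB})$ is CQ and $H_{0,\max}(X|B)_{\wt{\sigma}}\le\log\|\Tr_X\Pi'\|_\infty\le H_{0,\max}(X|B)_{\wt{\rho}^*_{XB}}$.

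The step I expect to be the real obstacle is constructing $\Pi'$: one cannot simply dephase $\Pi$, since that enlarges its $X$-support and can strictly increase $\|\Tr_X(\cdot)\|_\infty$, so the CQ structure of $\rho_{XB}$ must be exploited to meet the two requirements on $\Pi'$ simultaneously, block by block. A cleaner alternative, matching the route sketched in the text, is to combine the ``$\ge$'' direction with the standard smoothed Petz--R\'enyi duality $H^{\ve}_{0,\max}(X|B)_\rho=-H^{\downarrow,\ve}_{\min}(X|C)_\rho$ for pure $\rho_{XBC}$ \cite{Tomamichel-2015,Tomamichel-2011a}, which reduces ``$\le$'' to the statement that for a CQ state the smoothed ordinary Hartley entropy is attained by CQ states, i.e.\ $H^{\CQ,\ve}_{0,\max}(X|B)_\rho\le H^{\ve}_{0,\max}(X|B)_\rho$; in either formulation that is the sole nontrivial point, the reverse inequality $H^{\CQ,\ve}_{0,\max}\ge H^{\ve}_{0,\max}$ being immediate from $\CQ\subseteq\Pos$ and the monotonicity of $H^{\cK}_{0,\max}$ in $\cK$.
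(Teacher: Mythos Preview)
Your ``$\geq$'' direction is correct and matches the paper's approach (your route---purify a CQ $\wt{\sigma}_{XB}$ via Uhlmann and apply the pure-state duality \eqref{eq:petz-renyi-dual} directly---is in fact cleaner than the paper's display, which routes through an extra dephasing step).

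For the ``$\leq$'' direction you correctly isolate the obstruction: dephasing on $X$ can strictly enlarge the support projector of the $XB$-marginal, so neither a data-processing inequality for $H_{0,\max}(X|B)$ under $\cG\otimes\id_B$ nor your projector construction $\Pi'$ is available, and you leave the direction open. The idea you are missing, which is the paper's one additional ingredient, is to apply the DPI argument on the \emph{other} marginal. Since $H^{\downarrow}_{\min}(X|C)_{\sigma}=-D_{\max}(\sigma_{XC}\,\|\,\I_X\otimes\sigma_C)$, and both $D_{\max}$ and the purified distance contract under the CPTP map $\cG\otimes\id_C$ (with $\rho_{XC}$ itself CQ because $\rho_{XB}$ is), the optimizer for $H^{\downarrow,\ve}_{\min}(X|C)_\rho$ can without loss of generality be taken to be a CQ state. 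This is precisely the mirror image of your own reduction of $H^{\CQ,\ve}_{0,\max}$ to CQ smoothing on $XB$, except that on the $C$-side the relevant monotonicity \emph{does} hold (it is $-D_{\max}$ that is being dephased, not $H_{0,\max}$). With both smoothings restricted to CQ states, the paper then runs the Uhlmann-plus-duality step in reverse, treating the two directions as symmetric. So the single missing move is: dephase on the $C$-side where $H^{\downarrow}_{\min}$ enjoys DPI, rather than on the $B$-side where $H_{0,\max}$ does not.
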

\begin{proof}
We effectively follow the proof of duality for smooth entropies from \cite{Tomamichel-2015} just with slightly altered entropies.

As already noted, in the relevant setting both spaces are invariant under local embeddings, so we may assume the spaces $B$ and $C$ are sufficiently large to hold the purifications of the optimal smoothed states. This means we can restrict to pure states and we will use $\Bve_{\ast}(\rho) := \{\tau \in \Bve(\rho): \rank(\tau) = 1 \}$ to denote such a restriction. Therefore, we have
\begin{align*}
    H_{0,\max}^{\CQ,\ve}(X|B)_{\rho} &= \min_{\wt{\rho} \in \Bve_{\ast}(\rho_{XBC})} H_{0,\max}(X|B)_{(\cG \otimes \id_{BC})(\wt{\rho})} \\
    &=  \min_{\wt{\rho} \in \Bve_{\ast}(\rho_{XBC})} - H_{\min}^{\downarrow}(X|C)_{(\cG \otimes \id_{BC})(\wt{\rho})} \\
    &\geq \min_{\wt{\rho} \in \Bve(\rho_{XC})} - H_{\min}^{\downarrow}(X|C)_{(\cG \otimes \id_{C})(\wt{\rho})} \\
    &= - H_{\min}^{\downarrow,\ve}(X|C)_{\rho} \ ,
\end{align*}
where the second equality uses the duality \eqref{eq:petz-renyi-dual}, the inequality is removing the pure state restriction, and the last equality is pushing the negative through the optimization. 

For the other direction, note by DPI of max-relative entropy and the purified distance, $H^{\downarrow,\ve}_{\min}(X|C)_{\rho}$ is always achieved by a CQ state, so we can apply $\cG$ without loss of generality. Therefore it's the same idea in backwards, but we provide it for completeness.
\begin{align*}
    - H_{\min}^{\downarrow,\ve}(X|C)_{\rho} &= - \max_{\wt{\rho} \in \Bve_{\ast}(\rho_{XBC})} H_{\min}^{\downarrow}(X|C)_{(\cG \otimes \id_{BC})(\wt{\rho})} \\
    &= \min_{\wt{\rho} \in \Bve_{\ast}(\rho_{XBC})} H_{0,\max}(X|B)_{(\cG \otimes \id)(\rho)} \\
    &= \min_{\wt{\rho} \in \Bve_{\ast}(\rho_{XBC})} H_{0,\max}^{\CQ}(X|B)_{(\cG \otimes \id)(\rho)}  \\
    &\geq \min_{\wt{\rho} \in \Bve(\rho_{XB})}
    H_{0,\max}^{\CQ}(X|B)_{(\cG \otimes \id)(\rho)} \ .
\end{align*}
This completes the proof.
\end{proof}

Next we can use this to get bounds on $H_{0,\max}^{\CQ,\ve}$ in terms of $H_{\max}$ via this duality. This is a direct consequence of \cite[Lemma 18]{Tomamichel-2011a}.
\begin{proposition}\label{prop:CQ-Hartley-bound}
Let $\ve > 0$, $\ve' \geq 0$ and $\rho_{XB} \in \Density(X \otimes B)$. Then, 
$$ H_{\max}^{\ve}(X|B)_{\rho} + \log(c) \geq H_{0,\max}^{\CQ,\ve+\ve'}(X|B)_{\rho} \geq H_{\max}^{\ve+\ve'}(X|B)_{\rho} \ , $$
where $c = 2/\ve^{2} + 1/(1-\ve')$.
\end{proposition}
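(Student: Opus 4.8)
The plan is to reduce the claim to the known near-equivalence of the smoothed Hartley and smoothed max-entropies over the positive semidefinite cone, \cite[Lemma 18]{Tomamichel-2011a}, the bridge being that restricting to the CQ cone leaves the smoothed Hartley entropy of a CQ state unchanged. So first I would prove that $H^{\CQ,\ve}_{0,\max}(X|B)_{\rho}=H^{\Pos,\ve}_{0,\max}(X|B)_{\rho}$ for every $\rho_{XB}\in\Density(X\otimes B)$ and $\ve\in[0,1)$, where $H^{\Pos,\ve}_{0,\max}$ is the ordinary smoothed Hartley entropy. Pick a purification $\rho_{XBC}$ of $\rho_{XB}$ that is classical on $X$; enlarging $B$ if necessary changes neither side, since both are invariant under local embeddings (noted above for $H^{\CQ,\ve}_{0,\max}$, and \cite[Lemma 21]{Tomamichel-2011a} for $H^{\Pos,\ve}_{0,\max}$ and $H^{\downarrow,\ve}_{\min}$). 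The preceding lemma gives $H^{\CQ,\ve}_{0,\max}(X|B)_{\rho}=-H^{\downarrow,\ve}_{\min}(X|C)_{\rho}$, while Duality Relation~3 of \cite{Tomamichel-2015}, which is preserved under smoothing (see \cite{Tomamichel-2011a}), gives $-H^{\downarrow,\ve}_{\min}(X|C)_{\rho}=H^{\Pos,\ve}_{0,\max}(X|B)_{\rho}$; composing the two equalities yields the identity. Its unsmoothed instance is Proposition~\ref{prop:Hartley-equivs}.

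With the identity in hand, I would then apply \cite[Lemma 18]{Tomamichel-2011a}, which asserts that for $\ve>0$, $\ve'\ge 0$ and any bipartite state,
\[ H_{\max}^{\ve}(X|B)_{\rho}+\log c \;\ge\; H^{\Pos,\ve+\ve'}_{0,\max}(X|B)_{\rho}\;\ge\; H_{\max}^{\ve+\ve'}(X|B)_{\rho},\qquad c=\tfrac{2}{\ve^{2}}+\tfrac{1}{1-\ve'}. \]
Substituting the first-step identity (with smoothing parameter $\ve+\ve'$) for the middle term turns this directly into the asserted chain of inequalities, and since $\log$ is monotone there is nothing further to check.

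The only part that is not bookkeeping is the reduction $H^{\CQ,\ve}_{0,\max}=H^{\Pos,\ve}_{0,\max}$ on CQ states, and within it the fact that the $H_{0,\max}$--$H^{\downarrow}_{\min}$ duality survives smoothing: one must ensure that the purifications of the optimal smoothed CQ states remain classical on $X$, so that the preceding lemma applies to them, which is exactly what the local-embedding invariance secures. Once this is settled the proposition is, as advertised, an immediate consequence of \cite[Lemma 18]{Tomamichel-2011a}; at worst one might need a short interpolation in the smoothing parameter to match that lemma's constant to $c=2/\ve^{2}+1/(1-\ve')$.
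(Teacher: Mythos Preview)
Your proposal is correct and essentially the same as the paper's own argument: both combine the preceding lemma's duality $H^{\CQ,\ve}_{0,\max}(X|B)_{\rho}=-H^{\downarrow,\ve}_{\min}(X|C)_{\rho}$ with \cite[Lemma 18]{Tomamichel-2011a} and the standard min--max duality. The only difference is packaging: you first isolate the identity $H^{\CQ,\ve}_{0,\max}=H^{\Pos,\ve}_{0,\max}$ on CQ states and then quote Lemma~18 in its dualized $H_{0,\max}/H_{\max}$ form, whereas the paper applies Lemma~18 on the $C$ side in its native $H^{\downarrow}_{\min}/H_{\min}$ form and dualizes afterward---the steps are the same, just reordered.
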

\begin{proof}
By \cite[Lemma 18]{Tomamichel-2011a}, 
\begin{align*}
     -H_{\min}^{\ve}(X|B)_{\rho} + \log(c) \geq -H_{\min}^{\ve}(X|B)_{\rho}  \geq -H_{\min}^{\ve+\ve'}(X|B)_{\rho} \ .
\end{align*}
By the previous lemma, $H_{0,\max}^{\CQ,\ve}(X|B)_{\rho} = -H_{\min}^{\downarrow,\ve}(X|C)_{\rho}$.
Using the duality relation $H_{\min}^{\ve}(A|B) = -H_{\max}^{\ve}(A|C)$ \cite{Konig-2009a,Tomamichel-2015} completes the proof.
\end{proof}

Finally we present the theorem.
\begin{theorem}\label{thm:Sep-Hartley-AEP}
Let $\ve \in (0,1)$. Let $\rho_{XB}$ and let the CQ cone be defined according to $X$. Consider any cone $\CQ \subseteq \cK \subseteq \Pos$.
\begin{align*}
    \lim_{n \to \infty} \left[ \frac{1}{n} H_{0,\max}^{\cK,\ve}(X^{n}|B^{n})_{\rho^{\otimes n}} \right] = H(X|B)_{\rho_{XB}}
\end{align*}
\end{theorem}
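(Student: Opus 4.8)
The plan is to squeeze $H_{0,\max}^{\cK,\ve}(X^{n}|B^{n})_{\rho^{\otimes n}}$ between two quantities that both regularize to $H(X|B)_{\rho}$, so that the asymptotics are borrowed from the ordinary fully quantum AEP rather than proved afresh for the cone-restricted quantity. The first ingredient is that $H_{0,\max}^{\cK}$ is antitone in the cone: if $\cK_{1}\subseteq\cK_{2}$ then $\mrm{Proj}\cap\cK_{1}\subseteq\mrm{Proj}\cap\cK_{2}$, so the feasible set in \eqref{eqn:restricted-H0max} for $\cK_{1}$ is contained in that for $\cK_{2}$ and the minimum can only grow, giving $H_{0,\max}^{\cK_{1}}\ge H_{0,\max}^{\cK_{2}}$ (the footnote in Section~\ref{sec:separable-cone-for-CQ-IT}). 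Applying this at each level $n$ with $\CQ\subseteq\cK\subseteq\Pos$ on $(X\otimes B)^{\otimes n}$, and noting that passing to the smoothed version is a minimization over the fixed ball $\Bve(\rho^{\otimes n})$ — which preserves pointwise inequalities, since $\min f\ge\min g$ whenever $f\ge g$ on the same domain — yields
\[
H_{0,\max}^{\Pos,\ve}(X^{n}|B^{n})_{\rho^{\otimes n}}\;\le\;H_{0,\max}^{\cK,\ve}(X^{n}|B^{n})_{\rho^{\otimes n}}\;\le\;H_{0,\max}^{\CQ,\ve}(X^{n}|B^{n})_{\rho^{\otimes n}}\,.
\]
Because $\rho$ is classical on $X$, $\rho^{\otimes n}$ is classical on $X^{n}$, and Proposition~\ref{prop:Hartley-equivs}(1) identifies the left-hand side with the ordinary smoothed Hartley entropy $H_{0,\max}^{\ve}(X^{n}|B^{n})_{\rho^{\otimes n}}$.

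Next I would sandwich both ends by the smoothed conditional $\alpha=\tfrac12$ entropy $H_{\max}^{\ve}$. For the upper end, apply Proposition~\ref{prop:CQ-Hartley-bound} to the CQ state $\rho^{\otimes n}$ with $\ve'=0$, which gives
\[
H_{0,\max}^{\CQ,\ve}(X^{n}|B^{n})_{\rho^{\otimes n}}\;\le\;H_{\max}^{\ve}(X^{n}|B^{n})_{\rho^{\otimes n}}+\log\!\Big(\tfrac{2}{\ve^{2}}+1\Big),
\]
the additive term being a constant independent of $n$. For the lower end, I would use the pointwise bound $H_{0,\max}(X|B)_{\sigma}\ge H_{\max}(X|B)_{\sigma}$, valid for every $\sigma$: it follows from the dualities $H_{0,\max}(X|B)=-H_{\min}^{\downarrow}(X|C)$ (Eq.~\eqref{eq:petz-renyi-dual}) and $H_{\max}(X|B)=-H_{\min}(X|C)$ (Eq.~\eqref{eqn:min-max-duality}) together with $H_{\min}^{\downarrow}(X|C)\le H_{\min}(X|C)$, the last inequality because $H_{\min}$ optimizes over $\sigma_{C}$ while $H_{\min}^{\downarrow}$ fixes $\sigma_{C}=\rho_{C}$ in \eqref{eq:down-arrow-min-ent}. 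Passing to smoothed versions (again $\min f\ge\min g$) gives $H_{0,\max}^{\ve}\ge H_{\max}^{\ve}$. Chaining this with the previous display produces
\[
H_{\max}^{\ve}(X^{n}|B^{n})_{\rho^{\otimes n}}\;\le\;H_{0,\max}^{\cK,\ve}(X^{n}|B^{n})_{\rho^{\otimes n}}\;\le\;H_{\max}^{\ve}(X^{n}|B^{n})_{\rho^{\otimes n}}+\log\!\Big(\tfrac{2}{\ve^{2}}+1\Big)\,.
\]

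Finally I would divide by $n$, let $n\to\infty$, and invoke the fully quantum AEP for $H_{\max}^{\ve}$ recorded after \eqref{eqn:FQAEP}, which gives $\lim_{n}\tfrac1n H_{\max}^{\ve}(X^{n}|B^{n})_{\rho^{\otimes n}}=H(X|B)_{\rho}$, while $\tfrac1n\log(2/\ve^{2}+1)\to0$; the squeeze theorem then delivers the claim. I do not expect a serious obstacle here — essentially all of the analytic content has already been isolated into Proposition~\ref{prop:CQ-Hartley-bound} (which in turn rests on the duality lemma for $H_{0,\max}^{\CQ}$ and Lemma~18 of \cite{Tomamichel-2011a}) and into the standard $H_{\max}^{\ve}$ AEP — so the points to be careful about are that every cone comparison genuinely survives smoothing (it does, by the trivial $\min f\ge\min g$ observation) and that $\rho^{\otimes n}$ remains CQ on $X^{n}$ so that Propositions~\ref{prop:Hartley-equivs} and~\ref{prop:CQ-Hartley-bound} apply uniformly in $n$; no regularization of the cone-restricted quantity itself is attempted.
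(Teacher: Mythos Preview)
Your proposal is correct and follows essentially the same route as the paper's proof: both sandwich $H_{0,\max}^{\cK,\ve}$ between $H_{\max}^{\ve}$ (below, via $H_{0,\max}^{\Pos,\ve}=H_{0,\max}^{\ve}\ge H_{\max}^{\ve}$) and $H_{\max}^{\ve}+\log c$ (above, via $H_{0,\max}^{\CQ,\ve}$ and Proposition~\ref{prop:CQ-Hartley-bound} with $\ve'=0$), then invoke the standard $H_{\max}^{\ve}$ AEP. The only cosmetic difference is that you derive $H_{0,\max}\ge H_{\max}$ explicitly from the dualities \eqref{eq:petz-renyi-dual} and \eqref{eqn:min-max-duality} rather than citing \cite[Lemma~18]{Tomamichel-2011a}, which is exactly the content of that lemma.
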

\begin{proof}
(Converse) By \cite[Lemma 18]{Tomamichel-2011a}, 
$H_{\max}^{\ve}(X^{n}|B^{n})_{\rho^{\otimes n}} \leq H_{0,\max}^{\ve}(X^{n}|B^{n})_{\rho^{\otimes n}}$ for all $n \in \mbb{N}$. Moreover, for $\cK \subset \Pos$, $H_{0,\max}^{\ve}(X^{n}|B^{n})_{\rho^{\otimes n}} \leq H_{0,\max}^{\cK,\ve}(X^{n}|B^{n})_{\rho^{\otimes n}}$ for all $n \in \mbb{N}$. Then using the converse for max entropy \cite{Tomamichel-2015}, we have 
\begin{align*}
H(X|B) \leq \lim_{n \to \infty}\left[\frac{1}{n} H^{\ve}_{\max}(X^{n}|B^{n})_{\rho^{\otimes n}} \right]
\leq \lim_{n \to \infty}\left[\frac{1}{n}H^{\cK,\ve}_{0,\max}(X^{n}|B^{n})_{\rho^{\otimes n}} \right] \ .
\end{align*}
(Direct) For any cone $\CQ \subseteq \cK \subseteq \Pos$, using Proposition \ref{prop:CQ-Hartley-bound} with the choice of $\ve' = 0$, 
\begin{align*}
H_{0,\max}^{\cK,\ve}(X^{n}|B^{n})_{\rho^{\otimes n}} 
&\leq H_{0,\max}^{\CQ,\ve}(X^{n}|B^{n})_{\rho^{\otimes n}}
\leq H^{\ve}_{\max}(X^{n}|B^{n})_{\rho^{\otimes n}} + \log(c) 
\end{align*}
for any $n \in \mbb{N}$. Then using the direct part of the AEP \cite{Tomamichel-2009a,Tomamichel-2015}, we have 
\begin{align*}
\lim_{n \to \infty} \left[ \frac{1}{n}H_{0,\max}^{\cK,\ve}(X^{n}|B^{n})_{\rho^{\otimes n}} \right]
\leq &  \lim_{n \to \infty} \left[ H_{0,\max}^{\CQ,\ve}(X^{n}|B^{n})_{\rho^{\otimes n}} \right] \\
\leq & \lim_{n \to \infty} \left[ \frac{1}{n}H^{\ve}_{\max}(X^{n}|B^{n})_{\rho^{\otimes n}}\right] + \lim_{n \to \infty} \log(c)/n \\
\leq & H(X|B) \ ,
\end{align*}
where in the last line we used the correction term goes to zero and the direct part of the AEP. This completes the proof.
\end{proof}

\begin{corollary}
As we have chosen to define them so as to recover the proper measure when $\cK = \Pos$, the equivalence of $H_{0,\max}^{\cK}$ and $H_{max}^{\cK}$ asymptotically does not hold when one considers a sequence of cones $\{\cK^{(n)}\}$ such that $\cK^{(n)} \subseteq \Ent_{f(n)}(A^{n}:B^{n})$ where $f(n) \leq O(c^{n})$ where $c < 1$. Moreover, this implies they aren't logarithmically equivalent up to smoothing.
\end{corollary}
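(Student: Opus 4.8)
The plan is to read this off from the two asymptotic results already in hand. Theorem~\ref{thm:Sep-Hartley-AEP} says that for any cone $\CQ\subseteq\cK\subseteq\Pos$ the restricted Hartley entropy still regularizes to the conditional entropy on classical--quantum states, while Corollary~\ref{corr:HmaxK-anti-AEP} says that for the cones in question the restricted max-entropy does \emph{not}. A logarithmic equivalence up to smoothing in the sense of \cite{Tomamichel-2011a} is, as in the case $\cK=\Pos$, a bound whose error term is $O(\log(1/\ve))$ and hence independent of the number of copies $n$; such an error washes out under the regularization $\tfrac1n(\cdot)$, forcing the two regularized limits to agree. So it suffices to exhibit one state and one admissible cone on which these limits disagree, and in fact disagree by an amount that grows with $n$.

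For the state I take the classical $p_{XZ}=\pi_X\otimes\dyad{z}$ with $d_X\ge 2$, for which $H(X|Z)_{p}=\log d_X$ and a purification is $\tau_{XX'}\otimes\dyad{z}$; for the cone I take the constant sequence $\cK=\Sep$, which lies strictly below $\Pos$, contains the $\CQ$ cone (a state classical on one side is separable), is flip-invariant ($J_\Phi\in\Sep\Leftrightarrow J_{\Phi^\ast}\in\Sep$, since the swap and the partial transpose both preserve separability), and satisfies $\supp_{\Sep\cap\Density}(\tau^{\otimes n})\le d_X^{-n}=O((1/d_X)^n)$ with $1/d_X<1$. Then Theorem~\ref{thm:Sep-Hartley-AEP} applies with $\cK=\Sep$ and gives $\lim_n\tfrac1n H_{0,\max}^{\Sep,\ve}(X^n|Z^n)_{p^{\otimes n}}=\log d_X$, whereas Corollary~\ref{corr:HmaxK-anti-AEP} (and the proposition preceding it, whose proof supplies the explicit value) gives $\lim_n\tfrac1n H_{\max}^{\Sep,\ve}(X^n|Z^n)_{p^{\otimes n}}\le\log d_X+\log c$ for some $c<1$, strictly below $\log d_X$; concretely the gap on $p^{\otimes n}$ is at least $n\log(1/c)$, i.e.\ linear in $n$. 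For a genuinely varying sequence $\{\cK^{(n)}\}$ with $\CQ\subseteq\cK^{(n)}\subseteq\Ent_{f(n)}$ satisfying the support hypothesis, the Hartley side survives unchanged: evaluated on a CQ state $\rho^{\otimes n}$, the sandwich $H_{0,\max}^{\ve}\le H_{0,\max}^{\cK^{(n)},\ve}\le H_{0,\max}^{\CQ,\ve}\le H_{\max}^{\ve}+\log c$ (monotonicity $H_{0,\max}^{\cK_1}\ge H_{0,\max}^{\cK_2}$ for $\cK_1\subseteq\cK_2$, together with Proposition~\ref{prop:CQ-Hartley-bound}) squeezes its regularization to $H(X|B)$, while the max side is exactly Corollary~\ref{corr:HmaxK-anti-AEP}.

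The ``moreover'' then follows by contraposition: if $|H_{0,\max}^{\cK^{(n)},\ve}(A|B)_{\sigma}-H_{\max}^{\cK^{(n)},\ve'}(A|B)_{\sigma}|$ were $O(\log(1/\ve))$ uniformly in $\sigma$ and $n$ --- the form of the $\cK=\Pos$ equivalence \cite{Tomamichel-2011a} --- then taking $\sigma=p_{XZ}^{\otimes n}$, dividing by $n$, and letting $n\to\infty$ would equate the two regularized limits just computed, which is impossible. The only step I expect to require care is purely bookkeeping: checking that a single fixed admissible cone (e.g.\ $\Sep$, or $\CQ$ itself) simultaneously meets the flip-invariance and $\Ent_{f(n)}$/support conditions needed by Corollary~\ref{corr:HmaxK-anti-AEP} and the $\CQ\subseteq\cK\subseteq\Pos$ condition of Theorem~\ref{thm:Sep-Hartley-AEP}, keeping track of which bipartition (conditioning register versus purifying register) the cone acts on in each of the two definitions, and reconciling the mild notational slippage in how ``$f(n)$'' is parameterized (entanglement rank versus the value $\supp_{\cK^{(n)}\cap\Density}(\tau^{\otimes n})$). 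Beyond that, no new idea is needed.
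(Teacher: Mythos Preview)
Your proposal is correct and takes essentially the same approach as the paper: combine Corollary~\ref{corr:HmaxK-anti-AEP} with Theorem~\ref{thm:Sep-Hartley-AEP} on the test state $p_{XZ}=\pi_X\otimes\dyad{z}$ to force the two regularized limits to disagree, then contrapose for the ``moreover''. You are in fact more careful than the paper about verifying the side conditions (flip-invariance, $\CQ\subseteq\cK^{(n)}$, which bipartition carries the cone), which the paper's proof leaves implicit.
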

\begin{proof}
Using Corollary \ref{corr:HmaxK-anti-AEP} and Theorem \ref{thm:Sep-Hartley-AEP}, we have for $p_{XZ} = \pi_{X} \otimes \dyad{z}$ 
\begin{align*}
\lim_{n \to \infty}\left[\frac{1}{n} H^{\cK^{(n)}}_{\max}(X^{n}|Z^{n})\right] <& H(X|Z)_{\pi_{X} \otimes \dyad{z}}
= \lim_{n \to \infty}\left[\frac{1}{n} H^{\cK^{(n)}}_{0,\max}(X^{n}|Z^{n})\right] \ ,
\end{align*}
where in the last equality we have used that this is achieved by the separable cone and the positive cone and thus must be achieved by all cones $\Sep \subset \cK \subset \Pos$.
The moreover statement then follows because if they were, then the strict inequality could not hold.
\end{proof}
What the above tells us is there is an interesting thing that happens. In proper quantum information theory it is often sufficient to restrict ourselves to Sandwiched R\'{e}nyi divergences, which include $H_{\max},H_{\min}$ because we can approximate $H_{0,\max}$. In restricted theories, specifically the separable theory, we cannot do this.

In case it is not clear, we convince ourselves the reason this worked out is because $H_{0,\max}^{\CQ}$ can characterize the optimal zero-error compression for classical data with quantum side information \cite{Renes-2012a}. We present the fully classical case for clarity, although with quantum side information must also hold by the equivalence. It suffices to show achievability because a restricted theory can only do worse. Consider $p_{XY}$ where Alice (sender) has $X$ and Bob (receiver) has $Y$. If Bob has $y \in \cY$, he only needs to distinguish between $x \in \cX_{|y}$ where $\cX_{|y}$ is the support of $p(X|Y=y)$, i.e.\ the values $X$ can take conditioned on $Y = y$. 
Then if there is no error, clearly Alice needs an alphabet $\cZ$ such that $|\cZ| = \max_{y} |\cX_{|y}|$ so long as they have a pre-agreed upon codebook. Then Alice only needs to send $\log|\cZ|$ bits. So we verify that this measure captures that. As explained, the optimizer is $\Pi_{p_{XY}}$ if one writes $p_{XY}$ as a density matrix, because for a CQ state, $H_{0,\max}^{\CQ} = H_{0,\max}$. Then a direct calculation determines:
\begin{align*}
\Pi_{p_{XY}} =& \sum_{x,y} \mathds{1}\{p(x,y) \geq 0\} E_{x,x} \otimes E_{y,y} \\
=& \sum_{y} E_{y,y} \otimes \sum_{x} \mathds{1}\{p(x,y) \geq 0\} E_{x,x} \\
=& \sum_{y} E_{y,y} \otimes |\cX_{|y}| E_{x,x} \ ,
\end{align*}
where $\mathds{1}\{A\}$ is the indicator function for event $A$. It follows from this that
$$\|\Tr_{X}(\Pi_{p_{XY}})\|_{\infty} = \|\sum_{y} |\cX_{|y}| E_{y,y}\|_{\infty} = \max_{y} |\cX_{|y}| \ ,$$
which is exactly what we wanted. 

\paragraph*{Generalizing to Other Measures}
To fully justify the claim that a framework for classical-quantum states only requires the CQ or separable cone, we briefly touch on other useful entropic measures in the restricted framework.

The information spectrum method of Han and Verd\'{u} \cite{Han-1993a,Han-2003a} is a common tool for classical information theory and was generalized to the quantum setting in \cite{Tomamichel-2013a}. Much like the quantum Hartley entropy, it is defined in terms of a projector onto a positive eigenspace, albeit a more elaborate one. Given $\rho \in \Density_{\leq}(A)$, $Q \in \Pos(A)$, the standard $\ve$-information spectrum \cite{Tomamichel-2013a} is defined as
$$D^{\ve}_{s}(\rho||Q) := \sup\{\gamma \in \mbb{R}\,:\, \langle \rho , \{\rho \preceq 2^{\gamma}Q\} \rangle \leq \ve \} \ ,$$
where $\{\rho \preceq 2^{\gamma}Q\}$ is the projector onto the positive eigenspace of $(2^{\gamma}Q - \rho)$. Note in general this is distinct from the projector onto its support which is why it has distinct notation.

To extend this to generalized cones, we follow what we did for the Hartley entropy and define the set of projectors contained in the cone that also dominate this relevant projector:
\begin{align*}
 \{\Pi^{\cK} \succeq \{\rho \preceq 2^{\gamma}Q\}\}  := \{\Pi \in \mrm{Proj}\, \cap \, \cK : \Pi \succeq \{\rho \preceq 2^{\gamma}Q\} \} \ .
\end{align*}

The $\ve,\cK$-information spectrum divergence may then be defined as
\begin{align*}
    D_{s}^{\ve,\cK}(\rho||Q) = \sup\{\gamma \in \mbb{R} | \underset{\Pi \in \{\Pi^{\cK} \succeq \{\rho \preceq 2^{\gamma}Q\}\}}{\min} \langle \rho , \Pi \rangle \leq \ve \} \ . 
\end{align*}
This is a min-max and so clearly is not particularly feasible to deal with in general. Just as for the Hartley entropy (Proposition \ref{prop:Hartley-equivs}), in the case $\cK = \Pos$, the optimal $\Pi$ is the projector we want to recover as by definition of the set we are minimizing over, any other projector must be of the form $X + \{2^{\gamma}\sigma \succeq \rho\}$ where $X \geq 0$, so we have
\begin{align*}
\langle \rho , X + \{2^{\gamma}\sigma \succeq \rho\} \rangle = \langle \rho, \{2^{\gamma}\sigma \succeq \rho\} \rangle + \langle \rho, X \rangle 
\geq  \langle \rho, \{2^{\gamma}\sigma \succeq \rho\} \rangle \ . 
\end{align*}
By the same argument as above, if $\cK = \Sep$ and \textit{both} $\rho$ and $Q$ are classical-quantum, then the optimizer is the same as in the unrestricted case. Moreover it is easy to show the unrestricted case reduces to the classical notion of information spectrum if $\rho$ and $Q$ classical. Therefore, the $\ve,\Sep$-information spectrum divergence at least mathematically captures the partially quantum setting.

Given the information spectrum divergence's close relationship to the hypothesis testing divergence, we note the $\ve,\cK$-hypothesis testing divergence may be defined using \eqref{eqn:hyp-test-defn} with the additional restriction that $X \in \cK$. This restricted entropy was studied in great detail in \cite{Matthews-2014a}. For our focus on arguing the separable cone captures partially quantum information theory, we state the following unsurprising proposition that in this setting an AEP holds for the cone-restricted hypothesis-testing divergence, which is one way of proving the quantum Stein lemma holds for CQ states
with the separable cone.
\begin{proposition}\label{prop:sep-one-shot}
$D^{\ve,\Sep}_{h}(P||Q) = D^{\ve}_{h}(P||Q)$ if both $P,Q$ are CQ with respect to the same partition. Moreover, in this setting an AEP holds.
\end{proposition}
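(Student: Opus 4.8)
The plan is to reduce the restricted hypothesis-testing SDP to the unrestricted one by exhibiting that, for classical-quantum $P$ and $Q$ sharing a common classical partition, the optimal test in the unrestricted problem \eqref{eqn:hyp-test-defn} can already be taken to be separable. First I would note that if $P = \sum_x p(x)\dyad{x}\otimes P_B^x$ and $Q = \sum_x q(x)\dyad{x}\otimes Q_B^x$ with the same orthonormal basis $\{\ket{x}\}$ on the classical register $X$, then both $P$ and $Q$ are invariant under the pinching channel $\cG(\cdot) = \sum_x \dyad{x}\cdot\dyad{x}$ on $X$, which by the Proposition just before Corollary~\ref{corr:restrict-to-cq-smooth-min} is a mixed-unitary channel built from a group and is therefore self-adjoint. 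Given any feasible $X$ for the unrestricted program (i.e.\ $0 \le X \le I$ and $\ip{Q}{X} \ge 1-\ve$), the pinched operator $\cG\otimes\id_B(X)$ is still feasible: $0 \le \cG\otimes\id_B(X) \le I$ since $\cG\otimes\id_B$ is CPTP and unital, and $\ip{Q}{\cG\otimes\id_B(X)} = \ip{\cG\otimes\id_B(Q)}{X} = \ip{Q}{X} \ge 1-\ve$ by self-adjointness and $\cG$-invariance of $Q$; likewise the objective value is unchanged, $\ip{P}{\cG\otimes\id_B(X)} = \ip{P}{X}$. But $\cG\otimes\id_B(X) = \sum_x \dyad{x}\otimes X_B^x$ is block-diagonal on $X$, hence manifestly separable across $X{:}B$. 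Therefore the optimal value of the restricted program equals that of the unrestricted one, giving $D^{\ve,\Sep}_h(P\|Q) = D^{\ve}_h(P\|Q)$.

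For the ``moreover'' part, I would observe that CQ-ness is preserved under tensor powers: $P^{\otimes n}$ and $Q^{\otimes n}$ are CQ with respect to the product basis on $X^n$ and share that common partition. Applying the first part to $P^{\otimes n}$ and $Q^{\otimes n}$ gives $D^{\ve,\Sep}_h(P^{\otimes n}\|Q^{\otimes n}) = D^{\ve}_h(P^{\otimes n}\|Q^{\otimes n})$ for every $n$, so dividing by $n$ and taking $n\to\infty$, the restricted regularized quantity inherits the limit of the unrestricted one. By the AEP for the hypothesis-testing divergence quoted after Proposition~\ref{prop:stein-first-order-AEP} (following \cite{Wang-2012a}), $\lim_{n\to\infty}\tfrac1n D^{\ve}_h(\rho^{\otimes n}\|\sigma^{\otimes n}) = D(\rho\|\sigma)$ for all $\ve\in(0,1)$, so the same holds with the separable restriction; equivalently, via Proposition~\ref{prop:stein-first-order-AEP}, the $\Sep$-restricted hypothesis-testing divergence characterizes Stein's lemma to first order on CQ states.

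The only real subtlety — and the step I would be most careful about — is confirming that the restricted program is genuinely the one in \eqref{eqn:hyp-test-defn} with the extra constraint $X \in \Sep(X{:}B)$ intersected with $0 \le X \le I$, and that the pinched operator lands in that feasible set; this is immediate here because $\cG\otimes\id_B(X)$ being block-diagonal on the classical register places it in $\Sep$ regardless of any structure on the $X_B^x$ blocks (each $X_B^x \in \Pos(B)$ with $X_B^x \le I_B$), so no delicate separability argument about the $B$-side operators is needed. One should also double-check the normalization convention in \eqref{eqn:hyp-test-defn} ($\exp(-D_h^\ve)$ being the minimum of $\ip{P}{X}$) so that ``restricting the feasible set can only make the problem worse'' is stated in the correct direction — restricting $X$ shrinks the feasible region, hence increases the minimum, hence decreases $D_h^\ve$; the content of the proposition is that for CQ inputs this potential loss does not occur. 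Everything else is a routine invocation of the cited AEP.
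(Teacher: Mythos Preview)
Your proposal is correct and takes essentially the same approach as the paper: pinch the test $X$ on the classical register, use self-adjointness of the pinching and $\cG$-invariance of $P,Q$ to preserve both feasibility and the objective, and then note that CQ-ness survives tensor powers to inherit the standard AEP. The paper's proof is a one-line pointer to the same symmetry argument (Corollary~\ref{corr:local-symmetry-restriction}); your version simply unpacks the details, including the correct observation that both $P$ and $Q$ must be CQ for the objective and constraint to be simultaneously preserved.
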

\begin{proof}
Following the same idea as in the proof of Corollary  $\ref{corr:local-symmetry-restriction}$, since $P$ or $Q$ is CQ, without loss of generality the optimal decision function $X$ is CQ and thus in the separable cone. Since many copies of a CQ state is still CQ, the AEP will hold. 
\end{proof}

While one cannot prove that all partially quantum information theory is captured by the separable cone without a, presumably never-ending, exhaustive set of proofs for each operational setting, we have shown that the separable cone is sufficient for the common entropic measures on classical-quantum states and even is able to recover the asymptotic results via AEPs.

As a final remark, we note that for both partially quantum and fully classical information theory, one does not even need the separable cone. Rather, under the assumption that classical information theory has a preferred basis, it is clear that the cone that is incoherent in the preferred bases is sufficient for partially quantum and classical information theory as one would expect. This is ultimately formalized in Theorem \ref{thm:information-theory-with-symmetries}. Regardless, neither of these cones will be closed under local unitaries, and so we may conclude that the separable cone is the minimal cone closed under local basis changes. We summarize Sections \ref{sec:anti-stein-lemma}-\ref{sec:separable-cone-for-CQ-IT} in in Fig. \ref{fig:Cone-AEP-Summary}

\begin{figure}[H]
    \centering
    \includegraphics[width=0.6\columnwidth]{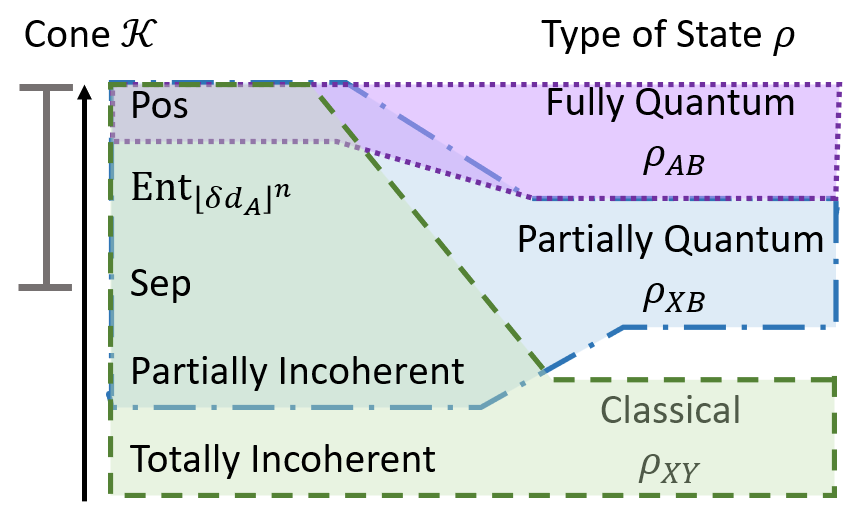}
    \caption{Diagram summarizing under what cones the restricted measures recover the asymptotic results for various types of bipartite quantum states. The grey bar denotes the space of local unitary-invariant cones. $\mrm{Ent}_{\lfloor \delta d_{A} \rfloor^{n}}$ represents that even if the entanglement rank cone grows exponentially in a fraction of $d_{A}$ captured by $\delta < 1$, then the AEP won't hold (Theorem \ref{thm:anti-AEP}).}
    \label{fig:Cone-AEP-Summary}
\end{figure}

\section{An Anti-Standard AEP for Extended Conditional Min-Entropy}\label{sec:extended-min-entropy}
Before moving on to the operational results of this framework, we show that this general approach can be extended for superchannels. Specifically, we consider the recent extension of the min-entropy known as the extended conditional min-entropy \cite{Gour-2019a}. It is defined in such a manner as to mirror \eqref{eqn:min-ent-cptp-support} but for superchannels, where, physically, a superchannel takes one quantum channel to another. To explain this in detail, we now summarize the mathematical structure for such objects largely worked out in \cite{Chiribella-2008a,Gour-2019a}. For the reader unfamiliar with this literature, we suggest noting the strong parallels to the quantum channel acting on a quantum state and a superchannel acting on a channel. For an extended min-entropy, it is this parallel that we ultimately utilize.

Consider the sets of transformations $\Trans(A_{0},A_{1})$ and $\Trans(B_{0},B_{1})$. A supermap is a linear map $\Theta: \Trans(A_{0},A_{1}) \to \Trans(B_{0},B_{1})$. We denote the space of supermaps by $\mbb{L}(A,B)$ where in this section we let $A := (A_{0},A_{1})$ and $B:=(B_{0},B_{1})$ for notational simplicity. Let $\mathds{1}_{R}$ be the identity supermap (i.e.\ $\mathds{1}_{R}(\Psi) = \Psi$ for all $\Psi \in \Trans(R_{0},R_{1})$). A superchannel $\Theta \in \mbb{L}(A,B)$ is a supermap that is completely CP-preserving (CPP), $(\mathds{1}_{R} \otimes \Theta)(\Psi) \in \CP(R_{0} \otimes B_{0},R_{1} \otimes B_{1})$ for all $\Psi \in \CP(R_{0} \otimes A_{0},R_{1} \otimes A_{1})$ and all choices of $(R_{0},R_{1})$, and TP-preserving, $\Theta(\Psi) \in \mrm{TP}(B_{0},B_{1})$ for all $\Psi \in \mrm{TP}(A_{0},A_{1})$. We denote the space of superchannels by $\mbb{S}(A,B)$.

With some work, one can show that the space of supermaps enjoy their own version of the Choi-Jamiolkowski isomorphism in terms of the original. First, without loss of generality, let $A_{0} \cong \mbb{C}^{d}$, $A_{1} \cong \mbb{C}^{d'}$. Consider 
$$ \cE_{a_{0}a_{1}}^{A_{0} \to A_{1}}(\rho):=  \bra{i}\rho\ket{j} \, \ket{k}\bra{l}^{A_{1}} \ ,$$
where $a_{0} \equiv (i,j) \in [d]^{\times 2}, a_{1} \equiv (k,l) \in [d']^{\times 2}$. Then $\{\cE_{a_{0}a_{1}}^{A_{0}\to A_{1}}\}$ form a basis for $\Lin(A_{0},A_{1})$.
Now let $\Theta \in \mbb{L}(A,B)$. Then $\mbf{J}^{AB}_{\Theta} \in \Lin(A_{0}\otimes A_{1} \otimes B_{0} \otimes B_{1})$ is given by
\begin{equation}\label{eqn:supermap-choi}
\begin{aligned}
    \mbf{J}^{AB} :=& \sum_{a_{0},a_{1}} J^{A_{0} \to A_{1}}_{\cE_{a_{0},a_{1}}} \otimes J^{B_{0} \to B_{1}}_{\Theta[\cE_{a_{0}a_{1}}]} \\
    =&(\id_{A_{0}B_{0}} \otimes (\mathds{1}_{A} \otimes \Theta)[\Upsilon^{A\wt{A}}])(\phi^{+}_{A_{0}\wt{A}_{0}} \otimes \phi^{+}_{B_{0}\wt{B}_{0}})\ ,
\end{aligned}
\end{equation}
where $J^{B_{0} \to B_{1}}_{\Theta[\cE_{a_{0}a_{1}}]}$ is the Choi operator of $\Theta[\cE_{a_{0}a_{1}}]$ and $\Upsilon^{A\wt{A}} = \sum_{a_{0},a_{1}} \cE^{A_{0} \to A_{1}}_{a_{0},a_{1}} \otimes \cE^{\wt{A}_{0} \to \wt{A}_{1}}_{a_{0},a_{1}}$, which may be seen as the channel equivalent of the maximally entangled state. Its action is given by $\Upsilon^{A\wt{A}}(\rho_{A_{0}\wt{A_{0}}}) = \Tr(\rho \phi^{+}_{A_{0}\wt{A}_{0}})\phi^{+}_{A_{1}\wt{A}_{1}}$ The equivalence of these two equations just follows from the standard Choi-Jamiolkowski formalism and linearity. The advantage of this second expression is that it shows the map $\Theta \to (\mathds{1}_{A} \otimes \Theta)(\Upsilon^{A\wt{A}})$ is an isomorphism between the space of supermaps and the space of bipartite maps. The other advantage of this Choi-Jamiolkowski isomorphism is that it allows conditions on $\Theta$ to be in terms of $\mbf{J}_{\Theta}$. Specifically,
a supermap $\Theta \in \mbb{L}(A,B)$ is a superchannel if and only if $\mbf{J}_{\Theta}^{AB} \geq 0$ such that $\mbf{J}^{A_{1}B_{0}}_{\Theta} = \I_{A_{1}B_{0}}$ and $\mbf{J}^{AB_{0}}_{\Theta} = \mbf{J}_{\Theta}^{A_{0}B_{0}} \otimes d_{A_{1}}^{-1}\I_{A_{1}}$ \cite{Chiribella-2008a}. Note that all of these constraints fit into the framework of semidefinite programming.

Given this, one may wish to define a min-entropy-like quantity for superchannels.  Indeed, this was done in \cite{Gour-2019a}. We will take one of the equivalent expressions presented in \cite{Gour-2019a} as the definition as it seems more fundamentally motivated for this work. The only change this results in is a change in the SDP's dual program, which will not be relevant here. For a bipartite channel $\Phi \in \Channel(A_{0}B_{0},A_{1}B_{1})$, the extended min-entropy is given by:
\begin{align}\label{eq:superchannel-Choi-overlap}
     d_{B_{0}}\exp(-H^{\ext}_{\min}(B|A)_{\Phi})
    :=& \underset{\Theta \in \mbb{S}^{AB}}{\max} \langle \mbf{J}^{AB}_{\Theta}, J^{AB}_{\Phi} \rangle \ ,
\end{align}
Note that up to the $d_{B_{0}}$, this aligns with \eqref{eqn:min-ent-cptp-support}. Using $\Theta \to (\mathds{1}_{A} \otimes \Theta)(\Upsilon^{A\wt{A}})$ is an isomorphism and their choice of inner product on $\Trans(A,B)$, \cite{Gour-2019a} showed that $d_{B_{0}}\exp(-H^{\ext}_{\min}(B|A)_{\Phi})$ may be expressed as
\begin{align}\label{eqn:entanglement-super-channel}
    d_{B}\underset{\Theta \in \mbb{S}^{AB}}{\max} \langle (\Theta \otimes \mathds{1}_{B})[\Omega^{AB}](\tau_{B_{0}}), \tau_{B_{1}} \rangle \ ,
\end{align}
which we can see as the achievable singlet fraction of passing of a bipartite channel where one half undergoes a super channel. It was already noted in \cite{Gour-2019a} that \eqref{eqn:entanglement-super-channel} is an optimization similar to that of \eqref{eq:min-entropy-singlet-fraction}, but over superchannels.  

Alternatively, using the Choi operator characterization of a superchannel along with \eqref{eq:superchannel-Choi-overlap}, one recovers the SDP for $d_{B_{0}}\exp(-H^{\ext}_{\min}(B|A)_{\Phi})$
\begin{center}
    \begin{miniproblem}{0.6}
      \emph{Primal}\\[-5mm]
      \begin{equation}
      \begin{aligned}\label{eqn:extended-min-entropy-SDP}
        \text{max.:}\quad & \ip{X}{J_{\Phi}^{AB}} \\
        \text{s.t.}\quad & X_{A_{1}B_{0}} = I_{A_{1}B_{0}} \\
        & X_{AB_{0}} = X_{A_{0}B_{0}} \otimes d_{A_{1}}^{-1} I_{A_{1}} \\
        & X \in \Pos(A \otimes B) \ ,
      \end{aligned}
      \end{equation}
    \end{miniproblem}
\end{center}
where we have expressed the marginals of $X$ via labeling the spaces rather than denoting the partial trace maps.
It then seems natural for
$ d_{A_{0}}\exp(-H^{\ext}_{\min}(A|B)_{\Phi})$ to be the above SDP where we keep the same constraints but require $X$ to be restricted to a cone $\cK$. Indeed, we can show this properly parallels the min-entropy case as follows. It has been shown in \cite{Gour-2019a} that the adjoint supermap, referred to as the dual map, satisfies the identity 
\begin{align}\label{eqn:superchannel-dual-map-choi}
\mbf{J}^{BA}_{\Theta^{\ast}} = \overline{\mrm{SWAP}(\mbf{J}^{AB}_{\Theta})}
\end{align}
where the $A$ and $B$ spaces are swapped. Noting the conjugate mapping preserves the positive semidefiniteness, flipping the $A$ and $B$ labels in \eqref{eqn:extended-min-entropy-SDP} means we are optimizing over the set of supermaps $\mbb{L}(A,B)$ that are duals of superchannels $\mbb{S}(B,A)$. We will denote this set as $\wt{\mbb{S}}^{AB}$ in which case we may write
\begin{align}\label{eqn:ext-min-ent-flipped-singlet-frac}
    d_{A_{0}}\exp(-H^{\ext}_{\min}(A|B)_{\Phi}) = \max_{\Theta \in \wt{\mbb{S}}^{AB}} \langle \mbf{J}^{BA}_{\Theta},J^{AB}_{\Phi}\rangle \ ,
\end{align}
which is the equivalent of \eqref{eqn:min-ent-unital-support}. We do note however that this set is not the set of unital-preserving CPP maps in the way that for channels it is the set of CPU maps (See Corollary \ref{corr:set-of-dual-maps} in the appendix). By the same type of argument as for obtaining \eqref{eqn:entanglement-super-channel}, one obtains
\begin{align}\label{eqn:entanglement-super-unital}
    d_{A}\underset{\Theta \in \mbb{S}^{BA}}{\max} \langle \tau_{A_{1}}, (\mathds{1}_{A} \otimes \Theta)[\Phi^{AB}](\tau^{A_{0}}) \rangle \ ,
\end{align}
where we stress the optimization is over the space of superchannels. The proof of this is included in the Appendix.  Note this looks almost identical to \eqref{eq:min-entropy-singlet-fraction}. 

With all of this established, we can discuss the asymptotic equipartition property. To date, an operational interpretation of $\lim_{\ve \to 0} \lim_{n \to \infty} \frac{1}{n} H^{\ext,\ve}_{\min}(B|A)$ is not known, but it is likely to be discovered, so it is worthwhile to show it requires the L\"{o}wner order. It is clear that we could have changed the semidefinite program to be restricted to a cone, so we omit writing this out explicitly. Then all that is necessary is a set of cones which to interpolate between. To do this, we again choose to interpolate in terms of entanglement. For these purposes we first introduce the notion of entanglement rank for CP maps which interpolate between the well-known space of separable CP maps and arbitrary bipartite CP maps.
\begin{definition}\label{Defn:k-ent-gen}
Let $\Phi \in \CP(A_{0}B_{0},A_{1}B_{1})$. We say the map is $k-$Entanglement-Generating $\Ent_{k}\CP(A:B)$ if $J_{\Phi} \in \mrm{Ent}_{k}(A:B)$.
\end{definition}
It is easy to see if $k=1$, then one recovers the space of separable maps as $\Ent_{1}(A:B) = \Sep(A:B)$. Likewise, if $k=r':=\min\{A,B\}$, then one recovers the space of completely positive bipartite maps as $Ent_{r'}(A:B) = \Pos(A \otimes B)$. An operational interpretation for these maps was given in \cite{Johnston-2012a} where it was shown that these maps can only entangle Alice and Bob's environments so much by acting on the target registers if the initial state is separable between Alice and Bob.
\begin{proposition}\label{prop:k-ent-generating}(\cite[Theorem 2.5.7]{Johnston-2012a})
$\Phi \in \Ent_{k}\CP(A:B)$ if and only if $(\id_{\wt{A}\wt{B}} \otimes \Phi^{AB})(P) \in \Ent_{k}(\wt{A}A_{1}:\wt{B}B_{1})$ for all $P \in \Sep(\wt{A}A_{0}:\wt{B}B_{0})$.
\end{proposition}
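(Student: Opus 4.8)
The plan is to follow the template of Proposition~\ref{prop:ent-rank-channel-property} (the $k$-superpositive characterization), transported to the bipartite cut, using only the two structural facts about $\mathrm{Ent}_r$ recorded right after Definition~\ref{defn:ent-rank}: that each $\mathrm{Ent}_r$ is a convex cone, and that it is preserved under separable CP maps $\Xi=\sum_i\Phi_i\otimes\Psi_i$. The ``$\Leftarrow$'' direction is then a substitution: the operator $\phi^{+}_{A_0\wt{A}_0}\otimes\phi^{+}_{B_0\wt{B}_0}$ is a product across the bipartition $(\wt{A}_0 A_0:\wt{B}_0 B_0)$ and hence lies in $\Sep(\wt{A}_0 A_0:\wt{B}_0 B_0)$, so feeding it into the hypothesis and relabelling the reference copies $\wt{A}_0\to A_0$, $\wt{B}_0\to B_0$ yields, by \eqref{eqn:choi-in-kraus-terms}, exactly $J_\Phi^{AB}\in\mathrm{Ent}_k(A:B)$, i.e.\ $\Phi\in\Ent_k\CP(A:B)$.

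For ``$\Rightarrow$'', I would first reduce to pure product inputs: by Proposition~\ref{prop:cone-to-density} and Carath\'{e}odory, every $P\in\Sep(\wt{A}A_0:\wt{B}B_0)$ is a finite sum $P=\sum_i\dyad{\alpha_i}^{\wt{A}A_0}\otimes\dyad{\beta_i}^{\wt{B}B_0}$, and since $\Phi$ is linear and $\mathrm{Ent}_k$ is closed under addition it suffices to handle one term $\dyad{\alpha}^{\wt{A}A_0}\otimes\dyad{\beta}^{\wt{B}B_0}$. Using the vec map \eqref{eqn:vec-map-identity}, write $\ket{\alpha}^{\wt{A}A_0}=(M\otimes I_{A_0})\ket{\phi^{+}}_{\wt{A}_0 A_0}$ and $\ket{\beta}^{\wt{B}B_0}=(N\otimes I_{B_0})\ket{\phi^{+}}_{\wt{B}_0 B_0}$ for suitable linear $M:\wt{A}_0\to\wt{A}$, $N:\wt{B}_0\to\wt{B}$ (every vector arises this way, whatever the dimensions of $\wt{A},\wt{B}$). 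Writing $\Lambda_M$ for the CP map $X\mapsto MXM^{\ast}$, the fact that $M\otimes N$ acts only on the reference systems $\wt{A}_0\wt{B}_0$ while $\Phi$ acts on $A_0B_0$ lets the two be commuted, so by \eqref{eqn:choi-in-kraus-terms},
\[
(\id_{\wt{A}\wt{B}}\otimes\Phi)\big(\dyad{\alpha}^{\wt{A}A_0}\otimes\dyad{\beta}^{\wt{B}B_0}\big)=\big((\Lambda_M\otimes\id_{A_1})\otimes(\Lambda_N\otimes\id_{B_1})\big)\big(J_\Phi^{\wt{A}_0 A_1\wt{B}_0 B_1}\big).
\]
The outer map is of the form $\Phi_A\otimes\Phi_B$ from the cut $(\wt{A}_0 A_1:\wt{B}_0 B_1)$ to $(\wt{A}A_1:\wt{B}B_1)$, so since $J_\Phi\in\mathrm{Ent}_k(A:B)=\mathrm{Ent}_k(\wt{A}_0 A_1:\wt{B}_0 B_1)$, the separable-CP-map monotonicity of $\mathrm{Ent}_k$ gives $(\id_{\wt{A}\wt{B}}\otimes\Phi)(\dyad{\alpha}\otimes\dyad{\beta})\in\mathrm{Ent}_k(\wt{A}A_1:\wt{B}B_1)$, and summing over $i$ finishes the argument.

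The main obstacle is purely bookkeeping: one must keep straight the several copies of the input registers ($A_0$, its Choi reference $\wt{A}_0$, and the external reference $\wt{A}$), track the transpose/conjugate conventions that arise when a bipartite vector is expressed through the vec map, and above all confirm that the ``undo'' map applied to $J_\Phi$ genuinely carries the tensor-product form $\Phi_A\otimes\Phi_B$ needed to invoke the separable-CP-map monotonicity of $\mathrm{Ent}_k$, rather than merely being locally implementable in some weaker sense. None of these steps is conceptually deep --- it is essentially the same manoeuvre used in the proofs of Proposition~\ref{prop:ent-rank-channel-property} and Proposition~\ref{prop:choi-c-forcing} --- but it is the place where an index slip is most likely to hide.
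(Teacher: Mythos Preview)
The paper does not supply its own proof of this proposition; it simply attributes the result to \cite[Theorem~2.5.7]{Johnston-2012a}. Your argument is correct and is essentially the standard one: the backward direction is the substitution $P=\phi^{+}_{A_0\wt{A}_0}\otimes\phi^{+}_{B_0\wt{B}_0}$ (a product operator across the $A{:}B$ cut, hence separable), which recovers $J_\Phi$; the forward direction reduces by convexity to pure product inputs, pulls each such input back to the Choi state via local maps $M,N$ on the reference registers, and then invokes the separable-CP-map monotonicity of $\mathrm{Ent}_k$. The bookkeeping concerns you flag are real but you have handled them correctly --- in particular, the ``undo'' map $(\Lambda_M\otimes\id_{A_1})\otimes(\Lambda_N\otimes\id_{B_1})$ is genuinely of product form across the $A{:}B$ cut, which is exactly what the monotonicity property requires.
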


We now define the set of supermaps that take all joint CP maps to $k-$Entanglement-Generating maps.
\begin{definition}\label{defn:k-ent-break}
A CPP supermap $\Theta \in \mbb{L}(A,B)$ is a $k-$entanglement-breaking supermap if $(\mathds{1}_{R} \otimes \Theta)(\Phi_{RA}) \in \Ent_{k}\CP(R:B)$ for all $\Phi_{RA} \in \CP(R_{0}A_{0},R_{1}A_{1})$. Moreover, if $\Theta$ is a superchannel, we call it a $k-$entanglement-breaking superchannel. We denote the set of $k-$entanglement-breaking superchannels $\EntS_{k}(A,B)$.
\end{definition}
We note that this is a generalization of the definition of entanglement-breaking supermap introduced in \cite{Chen-2020a}. Moreover, we can extend the characterization of EB supermaps in \cite[Theorem 1]{Chen-2020a} to the general case. We note the map $\Delta_{\Theta} \in \Trans(A,B)$ is the map with the same Choi operator as $\mbf{J}^{AB}_{\Theta}$ and $J_{\Theta[\Phi]} = \Delta_{\Theta}(J_{\Phi}^{A})$ (See Appendix for further details).
\begin{theorem}\label{thm:k-ent-breaking-supermaps}
Let $\Theta \in \mbb{L}(A,B)$ be a CPP supermap. The following are equivalent:
\begin{enumerate}
    \item $\Theta_{AB}$ is a $k-$entanglement-breaking supermap
    \item $(\mathds{1}_{\wt{A}} \otimes \Theta^{AB})[\Upsilon_{+}^{\wt{A}A}] \in \Ent_{k}\CP(A:B)$
    \item $\mbf{J}^{AB}_{\Theta} \in \Ent_{k}(A:B)$
    \item $\Delta_{\Theta}$ is $k$-superpositive (Proposition \ref{prop:ent-rank-channel-property})
\end{enumerate}
Moreover, in the special case $\Theta$ is a superchannel, then the supermap may be implemented using $\Gamma_{pre} \in \Channel(B_{0},A_{0}E), \Gamma_{post} \in \Channel(A_{1}E,B_{1})$ whose concatentation is $k-$superpositive.
\end{theorem}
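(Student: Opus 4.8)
The plan is to prove the four conditions equivalent by showing that each of (1), (2), (4) is equivalent to (3), and then to obtain the ``Moreover'' clause from the realization theorem for superchannels together with item (4). The reference point is \cite[Theorem 1]{Chen-2020a}, which is the $k=1$ case, generalized by replacing ``separable'' with ``Schmidt number $\le k$'' (equivalently, the cone $\Ent_{k}$) throughout and invoking Proposition \ref{prop:k-ent-generating} in place of its $k=1$ specialization.

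The easy equivalences come first. Since $\Theta$ is CPP we have $\mbf{J}^{AB}_{\Theta}\succeq 0$, and $\Delta_{\Theta}$ is by construction the channel with $J_{\Delta_{\Theta}}=\mbf{J}^{AB}_{\Theta}$ across the cut $A_{0}A_{1}\,|\,B_{0}B_{1}$; hence (3) $\Leftrightarrow$ (4) is exactly Proposition \ref{prop:ent-rank-channel-property} (the equivalence of $k$-superpositivity with $\mrm{SN}(J_{\Delta_{\Theta}})\le k$). For (2) $\Leftrightarrow$ (3), unwinding the supermap Choi--Jamio{\l}kowski isomorphism \eqref{eqn:supermap-choi} shows that $(\mathds{1}_{\wt{A}}\otimes\Theta)[\Upsilon_{+}^{\wt{A}A}]$ has, up to the identification $\wt{A}\leftrightarrow A$ of the reference copies, Choi operator $\mbf{J}^{AB}_{\Theta}$, and by Definition \ref{Defn:k-ent-gen} a CP map lies in $\Ent_{k}\CP(A:B)$ precisely when its Choi operator lies in $\Ent_{k}(A:B)$. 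Finally (1) $\Rightarrow$ (2) is the trivial specialization $R=\wt{A}$, $\Phi_{RA}=\Upsilon_{+}^{\wt{A}A}$ in Definition \ref{defn:k-ent-break}.

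The substantive direction is (4) $\Rightarrow$ (1). Assume $\Delta_{\Theta}$ is $k$-superpositive, so by Proposition \ref{prop:ent-rank-channel-property} it has Kraus operators $\{M_{i}\}$ with $\rank(M_{i})\le k$. For any reference pair $R=(R_{0},R_{1})$ and any $\Phi_{RA}\in\CP(R_{0}A_{0},R_{1}A_{1})$, the identity $J_{\Theta[\Phi]}=\Delta_{\Theta}(J^{A}_{\Phi})$, using the tensor structure of the supermap Choi correspondence, lifts to $J_{(\mathds{1}_{R}\otimes\Theta)(\Phi_{RA})}=(\id_{R_{0}R_{1}}\otimes\Delta_{\Theta})(J_{\Phi_{RA}})$. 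Writing $J_{\Phi_{RA}}=\sum_{j}\op{v_{j}}{v_{j}}$ (valid since $\Phi_{RA}$ is CP, hence $J_{\Phi_{RA}}\succeq 0$), each vector $(\I_{R}\otimes M_{i})\ket{v_{j}}$ has Schmidt rank at most $\rank(M_{i})\le k$ across $R_{0}R_{1}\,|\,B_{0}B_{1}$, so the image $\sum_{i,j}(\I_{R}\otimes M_{i})\op{v_{j}}{v_{j}}(\I_{R}\otimes M_{i})^{\ast}$ has Schmidt number at most $k$; thus $(\mathds{1}_{R}\otimes\Theta)(\Phi_{RA})\in\Ent_{k}\CP(R:B)$ by Definition \ref{Defn:k-ent-gen}, which is (1). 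This is the Choi-operator analogue of the ``monotonicity under separable CP maps'' property recorded after Definition \ref{defn:ent-rank} and of Proposition \ref{prop:k-ent-generating}.

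For the ``Moreover'' clause, add the hypothesis that $\Theta$ is a superchannel. The realization theorem \cite{Chiribella-2008a} supplies a memory system $E$ and channels $\Gamma_{pre}\in\Channel(B_{0},A_{0}E)$, $\Gamma_{post}\in\Channel(A_{1}E,B_{1})$ with $\Theta[\Psi]=\Gamma_{post}\circ(\Psi\otimes\id_{E})\circ\Gamma_{pre}$, under which $\mbf{J}^{AB}_{\Theta}$ is the link product of $J_{\Gamma_{pre}}$ and $J_{\Gamma_{post}}$ over $E$. The plan is to build such a realization from a rank-bounded Kraus decomposition of $\Delta_{\Theta}$ (item (4)): a Stinespring-type dilation of $\Delta_{\Theta}$ respecting the rank-$\le k$ structure of its Kraus operators is split into a pre- and a post-processing channel whose composite along the memory wire again has Kraus operators of rank $\le k$, i.e.\ is $k$-superpositive by Proposition \ref{prop:ent-rank-channel-property}. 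I expect the main obstacle to be bookkeeping: tracking which tensor factors of $\mbf{J}^{AB}_{\Theta}$ serve as ``input'' versus ``output'' under the several Choi conventions in play (for $\Theta$, for $\Delta_{\Theta}$, for $\Gamma_{pre},\Gamma_{post}$, and for the link product), and verifying that the dilation can be arranged so that \emph{all} of $\Theta$'s quantum coherence is routed through a $k$-superpositive channel and not merely through a memory of bounded dimension; the $k=1$ structure theorem of \cite{Chen-2020a} is the template to generalize.
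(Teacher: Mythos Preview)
Your argument for the equivalence of (1)--(4) is correct and matches the paper's proof in substance: the paper runs a cycle $1\to 2\to 3\to 4\to 1$, while you show $(3)\Leftrightarrow(4)$, $(2)\Leftrightarrow(3)$, $(1)\Rightarrow(2)$, and $(4)\Rightarrow(1)$, invoking the same ingredients (Proposition~\ref{prop:ent-rank-channel-property}, the identity $J_{(\mathds{1}_{R}\otimes\Theta)[\Phi]}=(\id_{R}\otimes\Delta_{\Theta})(J_{\Phi})$, and Definition~\ref{Defn:k-ent-gen}). The only cosmetic difference is that for $(4)\Rightarrow(1)$ you unpack item~4 of Proposition~\ref{prop:ent-rank-channel-property} (rank-bounded Kraus operators) explicitly, whereas the paper cites item~2 directly.

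For the ``Moreover'' clause you are working much harder than necessary, and your expressed uncertainty about the bookkeeping is a symptom of this. You propose to \emph{construct} a realization from a rank-bounded Kraus decomposition of $\Delta_{\Theta}$ via a tailored Stinespring dilation. The paper instead takes \emph{any} realization $(\Gamma_{pre},\Gamma_{post})$ supplied by the structure theorem (Proposition~\ref{prop:superchannel-equivalence}) and observes the identity $\mbf{J}^{AB}_{\Theta}=J_{\Gamma_{post}\circ\Gamma_{pre}}$ (the Choi operator of the concatenation over the memory wire is always the supermap Choi operator). Since item~(3) already gives $\mbf{J}^{AB}_{\Theta}\in\Ent_{k}(A:B)$, Proposition~\ref{prop:ent-rank-channel-property} immediately makes the concatenation $k$-superpositive. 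No special realization is needed, and the worry about ``routing all of $\Theta$'s coherence through a $k$-superpositive channel'' evaporates: every realization automatically has this property because the Choi of the concatenation is invariant across realizations.
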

\begin{proof}
We prove the equivalence in a cycle. ($1\to2$) This follows by definition of $k-$entanglement-breaking supermap and the special choice of input map $\Upsilon_{+}^{\wt{A}A}$. ($2\to3$) $(\mathds{1}_{\wt{A}} \otimes \Theta_{AB})[\Upsilon_{+}^{\wt{A}A}] \in \Ent_{k}\CP(A:B)$ implies $J_{(\mathds{1}\otimes\theta)[\Upsilon]} \in \Ent_{k}(A:B)$ by \eqref{eqn:supermap-choi}. ($3\to4$) Using that $\mbf{J}^{AB}_{\Theta}$ is the Choi operator of $J_{\Delta_{\Theta}}$, $J_{\Delta_{\Theta}} \in \Ent_{k}(A:B)$ so $\Delta_{\Theta}$ is $k-$superpositive by Proposition \ref{prop:ent-rank-channel-property}. ($4-1$) As $\Delta_{\Theta}$ is $k-$superpositive, by Proposition \ref{prop:ent-rank-channel-property} and that $\Delta_{\Theta}$ maps Choi matrices to Choi matrices, $J^{RB}_{(\mathds{1}^{R} \otimes \Theta)[\Phi]} = (\id_{R} \otimes \Delta_{\Theta})(J^{RA}_{\Phi}) \in \Ent_{k}(R:B)$ for any Hilbert space $R$ and any $J^{RA}_{\Phi} \geq 0$. Thus by Definitions \ref{Defn:k-ent-gen} and \ref{defn:k-ent-break}, $\Theta$ is a $k-$entanglement-breaking map.

The moreover statement is as follows. It is well-known that a superchannel is always able to be implemented with pre- and post-processing with memory (Proposition \ref{prop:superchannel-equivalence}) and that $\mbf{J}_{\Theta}^{AB} = J_{\Gamma_{post} \circ \Gamma_{pre}}$. One can therefore conclude that the concatentation of these maps must be $k-$superpositive.
\end{proof}
This gives us the immediate corollary which follows from \eqref{eqn:superchannel-dual-map-choi}.
\begin{corollary}\label{corr:ent-break-map-duality}
Let $\Theta \in \mbb{L}(A,B)$. $\Theta$ is $k-$entanglement-breaking if and only if $\Theta^{\ast}$ is $k-$entanglement-breaking.
\end{corollary}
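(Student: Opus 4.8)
The plan is to reduce the corollary to a single symmetry property of the entanglement-rank cone. Recall from Theorem~\ref{thm:k-ent-breaking-supermaps} (the equivalence of (1) and (3)) that a completely CP-preserving supermap is $k$-entanglement-breaking if and only if its supermap Choi operator lies in the entanglement-rank-$k$ cone taken with respect to the bipartition separating its input pair from its output pair; applied with input and output pairs interchanged, the same statement characterizes when the dual supermap $\Theta^{\ast}\in\mbb{L}(B,A)$ is $k$-entanglement-breaking in terms of $\mbf{J}^{BA}_{\Theta^{\ast}}\in\Ent_{k}(B:A)$. Moreover, since $\Ent_{k}\subseteq\Pos$ and a supermap is CP-preserving precisely when its Choi operator is positive semidefinite \cite{Chiribella-2008a}, membership of the Choi operator in $\Ent_{k}$ already forces CPP, so that hypothesis need not be tracked separately. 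Combining this with the dual-map identity~\eqref{eqn:superchannel-dual-map-choi}, $\mbf{J}^{BA}_{\Theta^{\ast}}=\overline{\mrm{SWAP}(\mbf{J}^{AB}_{\Theta})}$, it therefore suffices to show that the operation $X\mapsto\overline{\mrm{SWAP}(X)}$ maps $\Ent_{k}(A:B)$ onto $\Ent_{k}(B:A)$.

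Next I would prove exactly that. Write a generic $X\in\Ent_{k}(A:B)$ as $X=\sum_{i}\vec(Z_{i})\vec(Z_{i})^{\ast}$ with $Z_{i}\in\Lin(A_{0}A_{1},B_{0}B_{1})$ and $\rank(Z_{i})\leq k$ (Definition~\ref{defn:ent-rank}). The vec conventions behind~\eqref{eqn:vec-map-identity} give $\mrm{SWAP}(\vec(Z_{i}))=\vec(Z_{i}^{\Trans})$ and $\overline{\vec(Z_{i})}=\vec(\overline{Z_{i}})$, so, pushing the swap and the entrywise conjugate through the rank-one terms, $\overline{\mrm{SWAP}(X)}=\sum_{i}\vec(Z_{i}^{\ast})\vec(Z_{i}^{\ast})^{\ast}$ with $Z_{i}^{\ast}\in\Lin(B_{0}B_{1},A_{0}A_{1})$ and $\rank(Z_{i}^{\ast})=\rank(Z_{i})\leq k$; hence $\overline{\mrm{SWAP}(X)}\in\Ent_{k}(B:A)$. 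Since $\mrm{SWAP}$ and entrywise conjugation are commuting involutions, the reverse inclusion follows from the identical argument, so $X\mapsto\overline{\mrm{SWAP}(X)}$ restricts to a bijection $\Ent_{k}(A:B)\to\Ent_{k}(B:A)$.

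Assembling the pieces: for $\Theta\in\mbb{L}(A,B)$ the lemma together with~\eqref{eqn:superchannel-dual-map-choi} gives $\mbf{J}^{AB}_{\Theta}\in\Ent_{k}(A:B)\iff\mbf{J}^{BA}_{\Theta^{\ast}}\in\Ent_{k}(B:A)$, and the reformulation of Theorem~\ref{thm:k-ent-breaking-supermaps} from the first paragraph turns each side into the statement that $\Theta$, respectively $\Theta^{\ast}$, is $k$-entanglement-breaking, which is the claimed equivalence. I do not anticipate a genuine obstacle: once Theorem~\ref{thm:k-ent-breaking-supermaps} is available the corollary is essentially bookkeeping. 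The one point requiring care is the conventions — verifying that $\mrm{SWAP}\circ\vec$ indeed yields $\vec$ of the transpose and that the overline then sends $Z_{i}^{\Trans}$ to $Z_{i}^{\ast}$ (and not to $Z_{i}^{\Trans}$ or $\overline{Z_{i}}$), so that the composite operation coincides exactly with the one in~\eqref{eqn:superchannel-dual-map-choi} — and, relatedly, confirming that the entanglement-rank cone on the four-partite space $A_{0}A_{1}B_{0}B_{1}$ is taken consistently with respect to the $(A_{0}A_{1}):(B_{0}B_{1})$ cut in both Theorem~\ref{thm:k-ent-breaking-supermaps} and~\eqref{eqn:superchannel-dual-map-choi}.
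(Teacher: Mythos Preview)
Your proof is correct and follows essentially the same route as the paper, which simply notes that the corollary is immediate from \eqref{eqn:superchannel-dual-map-choi} together with Theorem~\ref{thm:k-ent-breaking-supermaps}. You have spelled out the step the paper leaves implicit---namely that $X\mapsto\overline{\mrm{SWAP}(X)}$ carries $\Ent_{k}(A:B)$ to $\Ent_{k}(B:A)$---and your vec-mapping computation verifying this is accurate under the paper's conventions.
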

The final point we note is to smooth an extended min-entropy, one considers the diamond norm \cite{Gour-2019a}:
$$ H^{\ext,\ve,\cK}_{\min}(B|A)_{\Phi} := \sup_{\|\Phi - \Psi\|_{\diamond} \leq \ve} H^{\ext,\cK}_{\min}(B|A)_{\Psi} \ , $$
and likewise for the flipped registers.

We are now in a position to prove anti-standard AEPs for $H_{\min}^{\ext}(B|A),H_{\min}^{\ext}(A|B)$. To talk about restricted cones, we will denote 
\begin{align*}
\mbb{S}^{\cK}(A,B) := \{\Theta \in \mbb{S} : \mbf{J}_{\Theta}^{AB}  \in \cK \} \hspace{1cm}
\wt{\mbb{S}}^{\cK}(A,B) := \{\Theta^{\ast} \in \wt{\mbb{S}}: \mbf{J}_{\Theta}^{AB} \in \cK \} \
\end{align*}As in the min-entropy case, the proof methods are distinct at this point, so we introduce the remaining technicalities as we go.
\begin{theorem}\label{thm:anti-AEP-ext-min-ent}
Consider a sequence of cones $\cK^{(n)}((AB)^{\otimes n})$. Then
\begin{align*}
    \lim_{n \to \infty} \left[\frac{1}{n} H^{\ext,\cK,\ve}_{\min}(B|A)_{\id_{AB}^{\otimes n}}\right] \neq \lim_{n \to \infty} \left[\frac{1}{n} H^{\ext,\ve}_{\min}(B|A)_{\id_{AB}^{\otimes n}}\right] 
\end{align*}
if and only if $\supp_{\cK^{(n)} \cap \widehat{D}}(\tau_{A_{0}B_{0}}^{\otimes n}) \leq f(n) = O(c^{n})$ where $c< 1$ and $\widehat{D}((AB)^{\otimes n}) := \{ (d_{A_{1}}d_{B_{0}})^{-n}\mbf{J}_{\Theta}^{A^{n}B^{n}} : \Theta \in \mbb{S}^{\cK} \}$.
\end{theorem}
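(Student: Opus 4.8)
The plan is to mirror the proofs of Theorems~\ref{thm:anti-AEP} and~\ref{thm:anti-AEP-flipped}. The goal is to rewrite $\frac1n H^{\ext,\cK}_{\min}(B|A)_{\id_{AB}^{\otimes n}}$ as a cone-independent constant $\kappa$ minus $\frac1n\log$ of a support function of $\tau_{A_0B_0}^{\otimes n}$ over $\cK^{(n)}\cap\widehat D$, and then read off the biconditional from whether that support decays exponentially.

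First I would start from the cone-restricted Choi-overlap characterization~\eqref{eq:superchannel-Choi-overlap}, which gives $H^{\ext,\cK}_{\min}(B|A)_{\id_{AB}^{\otimes n}} = n\log d_{B_0} - \log\max_{\Theta\in\mbb{S}^{\cK}}\langle\mbf{J}^{A^nB^n}_{\Theta},J^{A^nB^n}_{\id^{\otimes n}}\rangle$, and substitute $J^{AB}_{\id}=\phi^{+}_{A_0A_1}\otimes\phi^{+}_{B_0B_1}$, so the $n$-copy Choi is a tensor power of maximally entangled states. The core step is to evaluate the maximum over $\Theta$. Contracting $\mbf{J}^{A^nB^n}_{\Theta}$ against those maximally entangled states produces (up to relabelling) the Choi operator of the induced channel $\Theta[\id^{\otimes n}_{A_0^n\to A_1^n}]$, and using the affine causality constraints characterizing superchannel Choi operators ($\mbf{J}^{A_1B_0}_{\Theta}=I_{A_1B_0}$, $\mbf{J}^{AB_0}_{\Theta}=\mbf{J}^{A_0B_0}_{\Theta}\otimes d_{A_1}^{-1}I_{A_1}$) together with the pre-/post-processing realization underlying Theorem~\ref{thm:k-ent-breaking-supermaps}, one shows this maximum equals --- up to a fixed dimension-dependent factor raised to the $n$-th power --- the best overlap of the $A_0^nB_0^n$-marginal of $\mbf{J}^{A^nB^n}_{\Theta}$ with $\tau_{A_0B_0}^{\otimes n}$ (the optimal $\Theta$ is the one whose post-processing is as entanglement-faithful as the cone permits, so the induced channel is no better than the pre-processing). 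Since every marginal of $\mbf{J}^{AB}_{\Theta}$ has trace $d_{A_1}d_{B_0}$, the normalized operators $(d_{A_1}d_{B_0})^{-n}\mbf{J}^{A^nB^n}_{\Theta}$ are density matrices --- exactly the set $\widehat D$ --- so this best overlap is $\supp_{\cK^{(n)}\cap\widehat D}(\tau_{A_0B_0}^{\otimes n})$. The upshot is the identity $\frac1n H^{\ext,\cK}_{\min}(B|A)_{\id_{AB}^{\otimes n}} = \kappa - \frac1n\log\supp_{\cK^{(n)}\cap\widehat D}(\tau_{A_0B_0}^{\otimes n})$, entirely parallel to how~\eqref{eqn:min-ent-cptp-support} is used in the proof of Theorem~\ref{thm:anti-AEP-flipped}.

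With this identity the biconditional is routine. For $\cK^{(n)}=\Pos$ the normalized Choi operator of the identity superchannel lies in $\widehat D$ and has overlap $1$ with $\tau_{A_0B_0}^{\otimes n}$, so $\supp_{\Pos\cap\widehat D}(\tau_{A_0B_0}^{\otimes n})=1$ and $\kappa=\lim_n\frac1n H^{\ext,\ve}_{\min}(B|A)_{\id_{AB}^{\otimes n}}$ (smoothing does not move this limit, as in the other AEP-breakdown arguments). If $\supp_{\cK^{(n)}\cap\widehat D}(\tau_{A_0B_0}^{\otimes n})\le f(n)=O(c^{n})$ with $c<1$, then $-\frac1n\log\supp\ge-\frac1n\log f(n)\to-\log c>0$, so $\liminf_n\frac1n H^{\ext,\cK}_{\min}(B|A)_{\id_{AB}^{\otimes n}}\ge\kappa-\log c>\kappa$; as restriction to the cone and smoothing both only raise $H^{\ext,\cK}_{\min}$, the smoothed restricted limit stays bounded away from $\kappa$, so the two limits differ. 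Conversely, by contraposition: the elements of $\widehat D$ are density matrices and $\tau_{A_0B_0}^{\otimes n}$ is a pure state, so $\supp_{\cK^{(n)}\cap\widehat D}(\tau_{A_0B_0}^{\otimes n})\le1$ for every $n$; if this is not $O(c^{n})$ for any $c<1$, the only remaining possibility is $\limsup_n\frac1n\log\supp=0$, whence $\frac1n H^{\ext,\cK}_{\min}(B|A)_{\id_{AB}^{\otimes n}}\to\kappa$, and the sandwich $H^{\ext,\ve}_{\min}\le H^{\ext,\cK,\ve}_{\min}$ together with the fact that smoothing does not move the unrestricted limit forces the two smoothed limits to coincide.

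The main obstacle is the reduction in the second paragraph: it is not a bare partial trace. Contracting $\mbf{J}_{\Theta}$ against $\phi^{+}_{A_0A_1}\otimes\phi^{+}_{B_0B_1}$ yields the ``doubly glued'' operator $J_{\Theta[\id_{A_0\to A_1}]}$ rather than a marginal of $\mbf{J}_{\Theta}$, so one must carefully use the superchannel causality constraints and the pre-/post-processing dilation (Theorem~\ref{thm:k-ent-breaking-supermaps}) to bound the induced channel's entanglement with the reference by that of the pre-processing alone, identify the optimal post-processing, and re-express the optimum through the honest $A_0B_0$-marginal of $\mbf{J}_{\Theta}$; the single-copy statement must then be checked to tensorize cleanly. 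Nailing down the convention that $\tau_{A_0B_0}$ denotes the maximally entangled state between $A_0$ and $B_0$ --- so that $\supp_{\Pos\cap\widehat D}=1$ and the biconditional is non-vacuous at $\cK=\Pos$ --- is part of this step; after it, everything else is the same exponential-versus-subexponential bookkeeping as in Theorems~\ref{thm:anti-AEP} and~\ref{thm:anti-AEP-flipped}.
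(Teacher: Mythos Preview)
Your overall architecture is right and matches the paper: reduce $H^{\ext,\cK}_{\min}(B|A)_{\id_{AB}^{\otimes n}}$ to a constant minus $\frac1n\log$ of a support function, then run the exponential-versus-subexponential dichotomy exactly as in Theorems~\ref{thm:anti-AEP} and~\ref{thm:anti-AEP-flipped}. The closing biconditional paragraph is fine.

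Where you diverge from the paper is the middle step, and the ``main obstacle'' you flag is a phantom created by a misreading of the notation. In the paper, $\tau_{A_{0}B_{0}}$ does \emph{not} denote a maximally entangled state between the registers $A_{0}$ and $B_{0}$, nor does the support function involve the $A_{0}B_{0}$-marginal of $\mbf{J}_{\Theta}$. Rather, $\tau_{A_{0}B_{0}}$ is shorthand for the normalized Choi of $\id_{A_{0}B_{0}\to A_{1}B_{1}}$, i.e.\ the rank-one maximally entangled state between $(A_{0}B_{0})$ and $(A_{1}B_{1})$, of local dimension $d_{A_{0}}d_{B_{0}}$. Concretely $J_{\id_{AB}}=\phi^{+}_{A_{0}A_{1}}\otimes\phi^{+}_{B_{0}B_{1}}=d_{A_{0}}d_{B_{0}}\,\tau_{A_{0}B_{0}}$, and the set $\widehat{D}$ consists of the full normalized operators $(d_{A_{1}}d_{B_{0}})^{-1}\mbf{J}_{\Theta}$ on $A_{0}A_{1}B_{0}B_{1}$, not their marginals. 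With this reading, \eqref{eq:superchannel-Choi-overlap} gives in one line
\[
d_{B_{0}}^{n}\exp\bigl(-H^{\ext,\cK}_{\min}(B^{n}|A^{n})_{\id^{\otimes n}}\bigr)
=\max_{\Theta}\langle\mbf{J}_{\Theta},J_{\id}^{\otimes n}\rangle
=(d_{A_{1}}d_{B_{0}})^{n}(d_{A_{0}}d_{B_{0}})^{n}\max_{\rho\in\widehat{D}}\langle\rho,\tau_{A_{0}B_{0}}^{\otimes n}\rangle,
\]
so $\frac1n H^{\ext,\cK}_{\min}(B^{n}|A^{n})_{\id^{\otimes n}}=-\log(d_{A}d_{B_{0}})-\frac1n\log\supp_{\cK^{(n)}\cap\widehat{D}}(\tau_{A_{0}B_{0}}^{\otimes n})$.

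The upshot: because $J_{\id_{AB}}$ is itself (a scalar multiple of) a rank-one projector, the inner product in \eqref{eq:superchannel-Choi-overlap} is \emph{already} an overlap of a density matrix with a pure state; there is no need to pass to an induced channel, invoke the causality constraints on $\mbf{J}_{\Theta}$, use the pre/post-processing dilation from Theorem~\ref{thm:k-ent-breaking-supermaps}, or reduce to any marginal. Your proposed reduction via the $A_{0}B_{0}$-marginal is both unnecessary and, as you suspected, not straightforward. The paper's proof is literally the display above followed by your closing paragraph.
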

\begin{proof}
We focus on the case $\ve = 0$ as that is sufficient since smoothing only aggravates the problem. Using \eqref{eq:superchannel-Choi-overlap}, 
\begin{align*}
    &\frac{1}{n}\left[H_{\min}^{\ext,\cK}(B^{n}|A^{n})_{\id_{AB}^{\otimes n}}\right] \\
    =& \frac{1}{n}\left[-\log \left((d_{A}d_{B_0})^{n} \max_{\rho \in \widehat{D}^{\cK}((AB)^{\otimes n})} \langle \rho, \tau_{A_{0}B_{0}}^{\otimes n} \rangle \right) \right]\\
    \geq& -\log(d_{A}d_{B_{0}}) -\frac{1}{n}\log f(n)
\end{align*}
It follows if $f(n) = O(c^{n})$ where $c < 1$, then this will be strictly greater than $-\log(d_{A}d_{B_0})$, which is the limit for $\cK = \Pos$ as $\tau_{A_0 B_0}^{\otimes n} \in \widehat{D}^{\Pos}$. As the support function is upper bounded by unity, the only other case is $c =1$ in which case the AEPs agree. Again, the smoothing only aggravates the problem.
\end{proof}
\begin{corollary}
An AEP cannot hold for $H^{\ext,\cK,\ve}_{\min}(B|A)_{\Phi}$ in the case of a sequence of cones $\cK^{(n)} \subseteq \Ent_{f(n)}(A^{n}:B^{n})$ and there exists $n_0 \in \mbb{N}$ such that $f(n) < d^{n}$ for all $n > n_{0}$ where $d = d_{A_{0}B_{0}}$. In particular, if $f(n) = \lfloor (d-\ve)^{n}\rfloor$ where $\ve > 0$, the AEP does not hold in general.
\end{corollary}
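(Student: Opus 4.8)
The plan is to specialize Theorem~\ref{thm:anti-AEP-ext-min-ent} to the entanglement-rank cones, in exact analogy with how Corollary~\ref{corr:ent-rank-anti-AEP} specializes Theorem~\ref{thm:anti-AEP}. So the whole task reduces to verifying the support hypothesis of Theorem~\ref{thm:anti-AEP-ext-min-ent}: that $\supp_{\cK^{(n)} \cap \widehat{D}}(\tau^{\otimes n}_{A_0 B_0}) \le f(n) = O(c^n)$ for some $c<1$, where $\widehat{D}((AB)^{\otimes n}) = \{(d_{A_1}d_{B_0})^{-n}\mbf{J}_\Theta^{A^n B^n} : \Theta \in \mbb{S}^\cK\}$.

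First I would check that every element of $\widehat{D}$ is a genuine density operator: for a superchannel the Choi operator is positive, and the constraint $\mbf{J}^{A_1 B_0}_\Theta = I$ (here on the $n$-fold space $A_1^{\otimes n}\otimes B_0^{\otimes n}$) forces $\Tr[\mbf{J}_\Theta^{A^n B^n}] = (d_{A_1}d_{B_0})^n$, so the stated normalization makes the trace equal to one. Consequently, under the hypothesis $\cK^{(n)} \subseteq \Ent_{f(n)}((A\otimes B)^{\otimes n})$ (bipartition $A^n\mid B^n$, as in Definition~\ref{Defn:k-ent-gen} and Proposition~\ref{prop:ent-rank-channel-property}), we get $\cK^{(n)} \cap \widehat{D} \subseteq \Ent_{f(n)} \cap \Density$. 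Next I would invoke the distance-from-maximally-entangled bound \eqref{eqn:distance-from-max-ent}: for any $X \in \Ent_{f(n)} \cap \Density$ it gives $\langle \tau^{\otimes n}_{A_0 B_0}, X\rangle \le f(n)/d^n$ with $d = d_{A_0 B_0}$, since $\tau^{\otimes n}_{A_0 B_0}$ is local-unitarily equivalent to the maximally entangled state of Schmidt rank $d^n$. Taking the supremum over $X$ yields $\supp_{\cK^{(n)}\cap\widehat{D}}(\tau^{\otimes n}_{A_0 B_0}) \le f(n)/d^n$.

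To conclude, if $f(n) < d^n$ for all $n > n_0$ this ratio is strictly below one, and for the explicit choice $f(n) = \lfloor (d-\ve)^n\rfloor$ it is bounded by $(1-\ve/d)^n =: c^n$ with $c < 1$, so the hypothesis of Theorem~\ref{thm:anti-AEP-ext-min-ent} is satisfied and the AEP for $H^{\ext,\cK,\ve}_{\min}(B|A)$ fails; as noted there, smoothing only widens the gap, so the $\ve = 0$ computation above suffices. The main obstacle I anticipate is the bookkeeping in the step that identifies $\tau^{\otimes n}_{A_0 B_0}$ with an entanglement-rank-$d^n$ state for the $A^n\mid B^n$ cut: one must be careful that the cut across which $\tau_{A_0 B_0}$ is maximally entangled is a sub-cut of $A^n\mid B^n$, so that \eqref{eqn:distance-from-max-ent} applies with the claimed normalizing dimension $d = d_{A_0 B_0}$ rather than a larger one. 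Everything else --- the trace normalization, the passage of the containment $\cK^{(n)}\subseteq\Ent_{f(n)}$ to the intersection with $\widehat{D}$, and the exponential estimate --- is routine and mirrors the proof of Corollary~\ref{corr:ent-rank-anti-AEP}.
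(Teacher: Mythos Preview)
Your proposal is correct and follows essentially the same approach as the paper: verify the support hypothesis of Theorem~\ref{thm:anti-AEP-ext-min-ent} by noting that $\cK^{(n)}\cap\widehat{D}\subseteq \Ent_{f(n)}\cap\Density$ and then applying the distance bound \eqref{eqn:distance-from-max-ent} to obtain $\supp_{\cK^{(n)}\cap\widehat{D}}(\tau_{A_0B_0}^{\otimes n})\le f(n)/d_{A_0B_0}^n$. Your extra verification that elements of $\widehat{D}$ are genuine density operators and your care about the $A^n\mid B^n$ cut are welcome additions that the paper leaves implicit.
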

\begin{proof}
If $\cK^{(n)} = \Ent_{f(n)}$, then one is optimizing over $\Theta \in \EntS_{f(n)}$. This results in the support function being upper bounded by $c_{(n)} := \supp_{\Ent_{f(n)}(A:B) \cap \Density}(\tau_{A_0B_0}^{\otimes n})$ which by \eqref{eqn:distance-from-max-ent}, means that $c_{(n)} \leq \frac{f(n)}{d_{A_{0}B_{0}}^{n}}$. It follows long as $f(n) \leq \lceil (d_{A_0 B_0}-\ve)^{n}\rceil$ for all $n \geq n_0$, then Theorem \ref{thm:anti-AEP-ext-min-ent} holds, which completes the proof.
\end{proof}
We now do the same thing for the flipped extended min-entropy, although in this case for simplicity we go directly to the entanglement rank cones.
\begin{theorem}
Consider any sequence of cones $\cK^{(n)}((AB)^{\otimes n})$ such that $\cK^{(n)} \subseteq \Ent_{f(n)}(A:B)$ and $f(n) < d_{A_{1}}^{n}$ for all $n \geq n_0$. Then
\begin{align*}
    \lim_{n \to \infty} \left[\frac{1}{n} H^{\ext,\cK^{(n)},\ve}_{\min}(A|B)_{\id_{AB}^{\otimes n}}\right]
    \neq \lim_{n \to \infty} \left[\frac{1}{n} H^{\ext,\ve}_{\min}(A|B)_{\id_{AB}^{\otimes n}}\right] 
\end{align*}
In particular, if $f(n) \leq \lfloor (d_{A_{1}} - \ve)^{n} \rfloor$ where $\ve > 0$ for $n \geq n_{0}'$, this AEP can't hold in general.
\end{theorem}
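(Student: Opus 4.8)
The plan is to transcribe the proof of Theorem~\ref{thm:anti-AEP-ext-min-ent} into the flipped picture, replacing the Choi-overlap form \eqref{eq:superchannel-Choi-overlap} by \eqref{eqn:ext-min-ent-flipped-singlet-frac} (equivalently the singlet-fraction form \eqref{eqn:entanglement-super-unital}); the restricted flipped entropy $H^{\ext,\cK}_{\min}(A|B)_\Phi$ is the SDP \eqref{eqn:extended-min-entropy-SDP} with the $A$ and $B$ labels interchanged and the optimizer confined to $\cK$. Specialize to $\Phi = \id_{AB}^{\otimes n}$, whose Choi operator is $(\phi^{+}_{A_0A_1}\otimes\phi^{+}_{B_0B_1})^{\otimes n}$, and note, exactly as before, that it suffices to treat $\ve = 0$ since smoothing only increases $H^{\ext,\cK}_{\min}$ (by \eqref{eq:smoothed-conic-min-entropy} transplanted to superchannels) while the $\cK=\Pos$ reference value is insensitive to smoothing in the limit. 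Using the restricted version of \eqref{eqn:entanglement-super-unital} I would write
$$H^{\ext,\cK^{(n)}}_{\min}(A^{n}|B^{n})_{\id_{AB}^{\otimes n}} = -n\log d_{A_1} - \log\left(\max_{\Theta}\,\langle\tau_{d_{A_1}^{n}},\,(\mathds{1}_{A^{n}}\otimes\Theta)[\id_{AB}^{\otimes n}](\tau_{d_{A_0}^{n}})\rangle\right),$$
where $\tau_{d}$ is the maximally entangled state of dimension $d$, and $\Theta$ ranges over the superchannels whose duals are feasible for the flipped restricted SDP. For $\cK = \Pos$ this maximum is $1$, so the unrestricted limit equals $-\log d_{A_1}$; the task is therefore to show that the restricted maximum is exponentially small in $n$.

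For the restriction step, note that since $\cK^{(n)}\subseteq\Ent_{f(n)}(A^{n}:B^{n})$, any superchannel $\Theta$ with $\mbf{J}^{A^{n}B^{n}}_\Theta\in\cK^{(n)}$ is $f(n)$-entanglement-breaking by the equivalence $(1)\Leftrightarrow(3)$ of Theorem~\ref{thm:k-ent-breaking-supermaps}, and by Corollary~\ref{corr:ent-break-map-duality} so is its dual, so passing between the primal and the flipped pictures never leaves $\Ent_{f(n)}$. Feeding $\id_{AB}^{\otimes n}$ through such a $\Theta$ and invoking Definitions~\ref{Defn:k-ent-gen}--\ref{defn:k-ent-break} (or Proposition~\ref{prop:k-ent-generating}), the state $(\mathds{1}_{A^{n}}\otimes\Theta)[\id_{AB}^{\otimes n}](\tau_{d_{A_0}^{n}})$ is proportional to a density matrix in $\Ent_{f(n)}$ of the relevant bipartition. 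Hence by the distance-from-$\tau_d$ bound \eqref{eqn:distance-from-max-ent},
$$\max_{\Theta}\,\langle\tau_{d_{A_1}^{n}},\,(\mathds{1}_{A^{n}}\otimes\Theta)[\id_{AB}^{\otimes n}](\tau_{d_{A_0}^{n}})\rangle \;\le\; \supp_{\Ent_{f(n)}\cap\Density}(\tau_{d_{A_1}^{n}}) \;\le\; \frac{f(n)}{d_{A_1}^{n}}.$$

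Substituting, $\tfrac{1}{n} H^{\ext,\cK^{(n)}}_{\min}(A^{n}|B^{n})_{\id_{AB}^{\otimes n}} \ge -\tfrac{1}{n}\log f(n)$. When $f(n)\le\lfloor(d_{A_1}-\ve)^{n}\rfloor$ this forces the limit to be $\ge -\log(d_{A_1}-\ve) > -\log d_{A_1}$, so it strictly exceeds the unrestricted value and the two limits differ, which proves the claim; more generally the limits differ precisely when $f(n)/d_{A_1}^{n}=O(c^{n})$ with $c<1$ and otherwise agree ($f(n)=2^{o(n)}$ or $f(n)/d_{A_1}^{n}\to1$ both kill the correction term), paralleling Theorem~\ref{thm:anti-AEP-ext-min-ent}, and as there smoothing only aggravates the gap.

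The main obstacle is the index bookkeeping for the flipped entropy, namely confirming that it is $\tau_{d_{A_1}^{n}}$ --- of dimension $d_{A_1}^{n}$ --- rather than $\tau_{d_{A_0B_0}^{n}}$ that controls the overlap (this is the origin of $d_{A_1}$, not $d_{A_0B_0}$, in the statement): concretely one pairs $\mbf{J}^{B^{n}A^{n}}_\Theta$ against $(\phi^{+}_{A_0A_1}\otimes\phi^{+}_{B_0B_1})^{\otimes n}$ and uses the flipped superchannel marginal constraints $\mbf{J}^{B_1A_0}_\Theta=I$ and $\mbf{J}^{BA_0}_\Theta=\mbf{J}^{B_0A_0}_\Theta\otimes d_{B_1}^{-1}I_{B_1}$ (obtained from \eqref{eqn:superchannel-dual-map-choi} applied to the usual superchannel constraints) to collapse the pairing to an overlap of a normalized $\Ent_{f(n)}$ state with $\tau_{d_{A_1}^{n}}$. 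One also needs that the dual-of-superchannel feasible set $\wt{\mbb{S}}^{\cK}$ inherits the $f(n)$-entanglement-breaking property, which is exactly Corollary~\ref{corr:ent-break-map-duality}. Once these are in place the remainder is a verbatim copy of the $H^{\ext}_{\min}(B|A)$ argument.
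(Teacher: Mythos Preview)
Your proposal is correct and follows essentially the same route as the paper: write $H^{\ext,\cK}_{\min}(A|B)$ via the flipped singlet-fraction form \eqref{eqn:entanglement-super-unital}, invoke Corollary~\ref{corr:ent-break-map-duality} and Theorem~\ref{thm:k-ent-breaking-supermaps} to ensure the output of $(\mathds{1}_{A^{n}}\otimes\Theta)[\id_{AB}^{\otimes n}]$ lands in $\Ent_{f(n)}$, and then apply the distance bound \eqref{eqn:distance-from-max-ent} against $\tau_{d_{A_1}^{n}}$. Your bookkeeping on the normalization (yielding $-\log d_{A_1}$ as the unrestricted baseline) is in fact cleaner than the paper's own, and your added remark on the $f(n)=2^{o(n)}$ case, while not in the paper's proof of this theorem, is correct and mirrors Theorem~\ref{thm:anti-AEP-ext-min-ent}.
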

\begin{proof}
Using \eqref{eqn:ext-min-ent-flipped-singlet-frac}, we have
\begin{align*}
    & \frac{1}{n}H^{\ext,\cK^{(n)},\ve}_{\min}(A^{n}|B^{n})_{\id_{AB}^{\otimes n}}  \\ 
    =& -\log(d_{A}d_{A_{0}}) -\log(\max_{{\Theta} \in \EntS_{f(n)}(B,A)} \langle \tau_{A_{1}}^{\otimes n},(\mathds{1}^{A} \otimes \Theta)[\id_{AB}](\tau_{A_{0}}^{\otimes n})\rangle)\\
    \geq & -\log(d_{A}d_{A_{0}}) - \frac{1}{n}\log(\max_{\rho \in \Ent_{f(n)}(A^{n}:B^{n})} \langle \tau_{A_{1}}^{\otimes n}, \rho \rangle) \\
    \geq& -\log(d_{A}d_{A_0}) - \log(\sqrt[n]{f(n)}/d_{A_{1}}) \ ,
\end{align*}
where the equality uses Corollary \ref{corr:ent-break-map-duality}, the first inequality uses Theorem \ref{thm:k-ent-breaking-supermaps} combined with Proposition \ref{prop:ent-rank-channel-property}, and the second inequality uses \eqref{eqn:distance-from-max-ent}. It follows if $\limsup \sqrt[n]{f(n)} < d_{A_{1}}$, then the limit will not agree with the positive semidefinite case. As smoothing will only aggravate the issue, this completes the proof.
\end{proof}
In effect with this last proof we have now shown that entropic quantities for bipartite states, point-to-point channels, and bipartite channels, can only recover their asymptotic behaviour in general when we consider comparisons with objects over the positive semidefinite cone. Indeed, even an exponential increase in the allowed entanglement in the cones is not sufficient unless it approximates the positive semidefinite cone in the limit.

\section{Entanglement-Assisted Communication Value with Restricted Measurements}

Having done our due dilligence in understanding the framework of conic entropic quantities, we are now in a position to consider operational interpretations of this framework, as there would be little use for a framework that didn't capture any operational notions. 

In this section we give an operational interpretation of the conic min-entropy evaluated on a Choi state for arbitrary closed convex cone $\cK$ that is closed under local CP maps. As the operational interpretation generalizes the communication value of \cite{Chitambar-2021a}, we define it in these terms.

\begin{figure}
    \centering
    \includegraphics[width=0.7\linewidth]{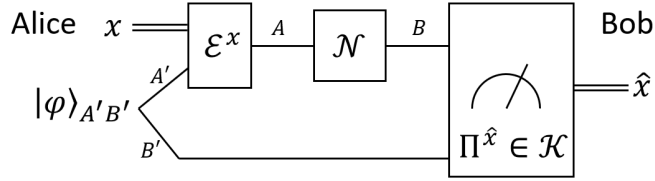}
    \caption{Here we depict the $\cK-$communication value following Definition \ref{def:k-cv}. Note the only restriction is on the power of the measurement in terms of the cone $\cK$ that it lives in.}
    \label{fig:generalized-cv}
\end{figure}

\begin{definition}\label{def:k-cv}
The $\cK-$communication value of a quantum channel $\cN \in \Channel(A,B)$, $\cv^{\cK}(\cN)$, is given by
\begin{align}
    \sup_{\varphi_{A'B'}} \max \left\{\sum_{x \in \Sigma} P(x|x) \right\} 
\end{align}
where $\varphi$ is optimized over all states and all dimension $A',B'$, all finite alphabets $\Sigma$ are optimized over, and 
$$P(x|x) := \Tr\left[\Pi_{BB'}^{x} \left(\cN^{A \to B} \circ \cE^{x}_{A'\to A} \otimes \id_{B'}\right)(\varphi_{A'B'})\right] \ , $$
where $\{\Pi^{x}\}_{x \in \Sigma} \subset \cK$ form a POVM and $\{\cE^{x}\} \subset \Channel(A,A')$ are a set of encoders.
\end{definition}
\begin{remark}
One could more generally allow $x$ and $y$ to index over different sized alphabets, but this was shown to not be uniquely optimal in \cite{Chitambar-2021a}, so we simplify to the minimal error discrimination setting given above.
\end{remark}
We do note that $\cv^{\cK}(\cN)$ does not normalize the minimal error discrimination by the size of the alphabet and so in this sense the $\cK-$communication value is best viewed as a measure of the channel's ability to noiselessly transmit classical information under (restricted) assistance. Regardless, the lack of normalization does not detract from it's ability to capture this zero-error behaviour; it was noted in \cite{Chitambar-2021a} that the $\cv^{\Sep},\cv^{\Pos}$ were related to zero-error channel simulation

We now show the optimal value of $\cv^{\cK}(\cN)$ is proportional to $H_{\min}^{\cK}(A|B)_{J_{\cN}}$ for cones that are invariant under local CP maps. Note that it is straightforward to see that being invariant under local CP maps is equivalent to being invariant under local isometries and local partial trace which arguably are the minimal physical requirements for capturing a notion of locality.
\begin{theorem}\label{thm:generalized-cv}
Let $\cK(A \otimes B)$ be a closed, convex cone such that it is invariant under local CP maps on the $A$ space and $\I \in \relint(\cK)$. Then $\cv^{\cK}(\cN) = \exp(-H_{\min}^{\cK}(A|B)_{J_{\cN}})$.
\end{theorem}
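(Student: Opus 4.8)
The plan is to show the two optimization problems coincide by matching optimizers in both directions, leaning on the singlet-fraction form of the restricted min-entropy (Proposition~\ref{prop:min-entropy-singlet-fraction}) and the fact that a cone invariant under local CP maps on the $A$ space is stable under the natural manipulations appearing in the communication game. First I would recall that, by Proposition~\ref{prop:min-entropy-singlet-fraction}, $\exp(-H_{\min}^{\cK}(A|B)_{J_{\cN}}) = d_A \max\{\langle (\id_A\otimes\Psi)(J_{\cN}),\tau^{AA'}\rangle : \Psi\in\mrm{CPTP}(B,A'),\ J_{\Psi^*}\in\cK\}$. The goal is then to re-express the $\cK$-communication value in this same form.

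For the direction $\cv^{\cK}(\cN)\le \exp(-H_{\min}^{\cK}(A|B)_{J_{\cN}})$, I would take an arbitrary protocol: a state $\varphi_{A'B'}$, encoders $\{\cE^x\}\subset\Channel(A,A')$ — note I should read these as $\Channel(A',A)$ to match the game diagram, i.e.\ the labels in the definition are as drawn in Fig.~\ref{fig:generalized-cv} with Alice encoding an input register — and a POVM $\{\Pi^x_{BB'}\}\subset\cK$. The standard move (as in \cite{Chitambar-2021a} and \cite{Konig-2009a}) is to absorb the shared state, the encoders, and the decoding measurement into a single channel $\Psi$ from the output-side register into a register $A'\cong A$, built so that the success probability $\sum_x P(x|x)$ equals a singlet-fraction overlap with $J_{\cN}$. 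The key point requiring the hypothesis is that the Choi operator of (the adjoint of) this constructed channel must lie in $\cK$: this follows because $\{\Pi^x\}\subset\cK$, the encoders act by local CP maps on the $A$-side, and $\cK$ is invariant under local CP maps on that space, so the composite object never leaves the cone. Conversely, for $\cv^{\cK}(\cN)\ge\exp(-H_{\min}^{\cK}(A|B)_{J_{\cN}})$, I would start from an optimal $\Psi\in\mrm{CPTP}(B,A')$ with $J_{\Psi^*}\in\cK$ achieving the min-entropy, use the maximally entangled state as the shared resource $\varphi_{A'B'}$ and the identity (or a Bell-measurement/teleportation-style) encoder, and turn the Stinespring dilation of $\Psi$ together with the projection onto $\tau^{AA'}$ into a valid POVM $\{\Pi^x\}$; the constraint $J_{\Psi^*}\in\cK$ is exactly what guarantees these POVM elements sit in $\cK$. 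Strong duality / the relative-interior condition $\I\in\relint(\cK)$ is used to ensure the min-entropy primal is attained and the manipulations are lossless.

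The main obstacle I expect is bookkeeping in the first direction: precisely identifying which register the POVM elements and encoders act on, and verifying that when one folds the shared entangled state $\varphi_{A'B'}$, the encoder $\cE^x$, and the measurement operator $\Pi^x$ into a single effective channel, the resulting Choi operator (or its adjoint) genuinely lies in $\cK$ rather than merely in $\Pos$. This is where the ``invariant under local CP maps on the $A$ space'' hypothesis must be invoked carefully — one needs the composition of a local CP map (the encoder, pulled through the Choi isomorphism) with a cone element to stay in the cone, and to check that the $B'$ side, being traced against a fixed state, does not spoil membership. The remaining steps — rewriting success probabilities as inner products with $\tau$, rescaling by $d_A$, and matching the two suprema — are routine once the cone-membership is nailed down, and I would also remark that the optimization over $A',B'$ and $\Sigma$ can be truncated to $A'\cong A$, $|\Sigma|=d_A$ exactly as in \cite{Chitambar-2021a}, which is why the minimal-error-discrimination form in Definition~\ref{def:k-cv} suffices.
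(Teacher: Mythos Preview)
Your overall strategy matches the paper's: both directions come down to showing that the feasible set $\{\Omega\in\cK:\Tr_A\Omega=\I_B\}$ of the primal \eqref{eqn:HminKPrimal} is exactly the set of ``effective measurement operators'' achievable by a $\cK$-restricted protocol, so that $\cv^{\cK}(\cN)=\max\{\langle\Omega,J_\cN\rangle:\Tr_A\Omega=\I_B,\ \Omega\in\cK\}$. Your upper-bound sketch is fine in spirit, though the paper handles the arbitrary shared state $\varphi_{A'B'}$ not by ``tracing against a fixed state'' but by first invoking Nielsen's theorem to reduce to a maximally entangled $\tau_{A'\wt{A}'}$ (the Kraus operators of that LOCC conversion get folded into the POVM and encoders, and cone-invariance under local CP maps keeps the new POVM in $\cK$), and then the transpose trick on $\tau$ pulls the encoders from Alice's side onto the POVM elements to produce the single operator $\Omega$.

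The genuine gap is in your achievability direction. Starting from an optimal $\Omega=J_{\Psi^*}\in\cK$ with $\Tr_A\Omega=\I_B$, you propose to ``turn the Stinespring dilation of $\Psi$ together with the projection onto $\tau^{AA'}$ into a valid POVM.'' But a single projection onto $\tau$ is one measurement effect, not a POVM summing to identity, and Stinespring is a red herring here: it does not by itself furnish the missing POVM elements, nor does it tell you why those elements would lie in $\cK$. The paper's construction is concrete and different: take the discrete Weyl unitaries $\{U_{m,n}\}$ on $A$, let the encoders be $\cU_{m,n}^{\Trans}$, and let the POVM on Bob's side be $\{(\cU_{m,n}\otimes\id_B)(\Omega)\}_{m,n}$. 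These elements are in $\cK$ by local-unitary invariance, and they sum to identity precisely because Weyl twirling is completely depolarizing, so $\sum_{m,n}\cU_{m,n}\otimes\id_B(\Omega)=d_A\,\I_A\otimes\Tr_A\Omega=d_A\,\I_{AB}$ (up to the normalization). Each encoding $U_{m,n}^{\Trans}$ is then undone by exactly one POVM element, yielding score $\Tr[\Omega J_\cN]$. Your ``Bell-measurement/teleportation-style'' remark gestures toward this, but you attached it to the encoder rather than to the POVM construction; you should make the Weyl-twirling step explicit, since it is the mechanism that converts a single feasible $\Omega\in\cK$ into a full $\cK$-valued POVM.
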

\begin{proof}
The proof is largely identical to that of \cite[Theorem 6]{Chitambar-2021a}. For completeness, we present it in full. Let $\cK(B\otimes B')$ be an arbitrary convex cone such that $\I \in \relint(\cK)$ and it is closed under local CP maps on the $B'$ (i.e.\ $\id_{B} \otimes \Psi_{B' \to A'}(X) \in \cK(A' \otimes B')$ if $X \in \cK(B \otimes B')$ and $\Phi,\Psi \in \CP$). We note the $B'$ space will ultimately correspond to the $A$ space of the cone.

Let $\ket{\varphi}_{A'B'}$ be an arbitrary pure state.  Without loss of generality we can take $\ket{\varphi}$ to be maximally entangled as Nielsen's Theorem guarantees the existence of an LOCC transformation such that $\ket{\tau}_{A'\wt{A'}}\to\ket{\varphi}_{A'B'}$, where we recall $\ket{\tau}_{A'\wt{A'}}$ is the maximally entangled state where $A' \cong \wt{A}'$. Specifically, Nielsen's theorem uses a measurement on Bob's side with Kraus operators $\{M^k\}_k$ and correcting unitaries $\{U^k\}_k$ on Alice's side such that $U_{A'}^k\otimes M_{\wt{A'}\to B'}^k\ket{\tau}_{A'\wt{A}'}=\sqrt{p(k)}\ket{\varphi^{k}}_{A'B'}$. Consider entanglement-assisted minimal error discrimination for recovering $\ket{\varphi}$ under the restriction all POVM elements are contained in $\cK(B \otimes B')$. That is,
\begin{equation}
\label{eq:restricted-cv-1}
\begin{aligned}
\sum_{x} P^{\cK}(x|x) 
=\sum_{x,k}\tr\Big[\Pi^x_{BB'}\left(\cN_{A\to B}\circ\cE^x_{A'\to A}\otimes\id^{B'}\right) \left(p(k)\dyad{\varphi^{k}}\right)\Big] \ ,
\end{aligned}
\end{equation}
where by construction
$$ p(k)\dyad{\varphi^{k}} = [(U^k\otimes M^k)\tau_{A'\wt{A}'}(U^k\otimes M^k)^\ast]  \ , $$
and by assumption $\Pi^{x}_{BB'} \in \cK(B \otimes B')$ for all $x$. Note $\{(\mbb{I}\otimes M^k)^\ast \Pi^x(\mbb{I}\otimes M^k)\}_{k,x}$ forms a POVM on $B\wt{A'}$. This follows from the fact that $\{M^k\}_k$ are Kraus operators for a CPTP map, and so the adjoint map, $\sum_k {M^{k}}^\dagger(\cdot)M^k$ is a unital map. Similarly, note that $\cU^k(\cdot):=U^k(\cdot) {U^{k}}^{\ast}$ denotes a unitary channel, the collection $\{\cE^x \circ \cU^k\}_{x,k}$ is a family of encoders.  Therefore, we can re-express Eq. \eqref{eq:restricted-cv-1} as 
\begin{equation}
\label{eq:restricted-cv-2}
\begin{aligned}
\cv^*(\mbf{P}) 
=\sum_z\tr\!\left[\hat{\Pi}^z_{B\wt{A}'}\!\left(\cN_{A\to B}\circ\hat{\cE}^z_{A'\to A}\otimes\id_{\wt{A'}}[\tau_{A'\wt{A}'}]\right)\right],
\end{aligned}
\end{equation}
where the $\hat{\cE}^z$ and $\hat{\Pi}^z$ are the concatenated encoders and decoder. Note that as $\cK(B \otimes \wt{A}')$ is closed under local CP maps, $\hat{\Pi}^{z}_{B\wt{A}'} \in \cK(B \otimes \wt{A}')$ for all $z$. This shows that we can restrict our attention just to a shared maximally entangled state.  Furthermore, without loss of generality, we can assume that $d_{A'}\geq d_{A}$, because the transformation $\ket{\tau}_{A''\wt{A}''}\to\ket{\varphi}_{A'B'}$ is always possible for any $d_{A''}\geq d_{A'}$ because we could have just as well used the same argument with system $A''$ and arrived at $\tau^{A''\wt{A}''}$ in Eq. \eqref{eq:restricted-cv-2}.

We next take Kraus-operator decompositions $\cE^z(\cdot)=\sum_{k}N^{z,k}(\cdot){N^{z,k}}^\dagger$ with each $N^{z,k}:\mbb{C}^{d_{A'}}\to\mbb{C}^{d_A}$.  Since $d_{A'}\geq d_A$, we can use the transpose trick/ricochet property $N^{z,k}\otimes\mbb{I}\ket{\tau_{d_{A}}}^{A'\wt{A}'}=\mbb{I}\otimes {N^{z,k}}^{\Trans}\ket{\tau_{d_{A}}}^{A\wt{A}}$ to obtain
\begin{align}
\label{Eq:Ea-cv-3}
\cv^*(\mbf{P}) = \frac{1}{d_{A'}}\sum_z\sum_k\tr\left[\wt{P}^{z,k}\left(\cN_{A\to B}\otimes\id_{\wt{A}}\right)(\tau_{d_{A}})\right]
=  \tr[\Omega^{AB} J_\cN^{AB}],
\end{align}
where $\wt{P}_{z,k} := (\mbb{I}^B\otimes {N^{z,k}}^{\ast})\hat{\Pi}^z_{B\wt{A'}}(\mbb{I}^B\otimes {N^{z,k}}^{\Trans})$ and we have swapped the ordering of the systems to match earlier notation, and
\begin{align}
\Omega_{AB}&=\frac{1}{d_{A'}}\sum_z\sum^k({N^{z,k}}^{\ast}\otimes\mbb{I}_B )\hat{\Pi}^z_{A'B}({N^{z,k}}^{\Trans}\otimes\mbb{I}_B)\notag\\
&=\frac{1}{d_{A'}}\sum_z {\cE_{z}^{\ast}}^{A' \to A}\otimes\id^B\left(\hat{\Pi}_z^{A'B}\right),
\end{align}
in which $\cE^*_z(\cdot):=\sum_{k}N^*_{z,k}(\cdot)N_{z,k}^T$. Note, ${\cE_{z}^{\ast}}^{A' \to A} \otimes \id_{B}(\hat{\Pi}^{z}_{A'B})\in \cK(A \otimes B)$ as it is again a local CP map $\wt{A}' \to A$ and the cone structure was invariant under CP maps on $\wt{A}'$ by assumption.\footnote{Note we re-ordered the spaces so we before had $\cK(B \otimes \wt{A}')$, which was then $\cK(\wt{A}' \otimes B)$, which is still invariant under the same space, the ordering merely changed.} As $\cK$ is a cone, so sums of its elements are contained, so it follows that $\Omega_{AB} \in \cK(A \otimes B)$. Moreover, since each $\cE_z^*$ is trace-preserving, we have
\begin{align}
\tr_A\Omega_{AB}&=\frac{1}{d_{A'}}\sum_z\tr_A\left[\cE^{*A'\to A}_z\otimes\id^B\left(\hat{\Pi}^z_{A'B}\right)\right]\notag\\
&=\frac{1}{d_{A'}}\sum_z\tr_{A'}\left(\hat{\Pi}^z_{A'B}\right)\notag\\
&=\frac{1}{d_{A'}}\tr_{A'}\left(\mbb{I}_{A'}\otimes\mbb{I}_B\right)=\mbb{I}_B.
\end{align}
Hence $\tr_A\Omega_{AB}=\mbb{I}_B$ and $\Omega \in \cK$ is a necessary condition on the operator $\Omega_{AB}$ such that $\sum_{x} P(x|x) =\tr[\Omega^{AB}J_{\cN}^{AB}]$. Now we make sure it is also sufficient.

Consider any positive operator $\Omega_{AB}$ such that $\tr_A\Omega_{AB}=\mbb{I}_B$ and $\Omega_{AB} \in \cK(A \otimes B)$. Consider the discrete Weyl operators on $A$, explicitly given by $U_{m,n}=\sum_{k=0}^{d_A-1}e^{i mk 2\pi/d_A}\op{m\oplus n}{m}$, where $m,n=0,\dots,d_A-1$ and addition is taken modulo $d_A$. It's well known that twirling the discrete Weyl operators results in the completely depolarizing map, e.g.\ $$\Delta(X):=\frac{1}{d_A}\sum_{m,n}U_{m,n}(X)U_{m,n}^\dagger = \tr[X]\mbb{I} \ . $$
Hence, 
\[\Delta_A\otimes\id_B[\Omega_{AB}]=\mbb{I}_A\otimes \tr_A\Omega_{AB}=\mbb{I}_A\otimes\mbb{I}_B.\]
This implies that the set $\{ \mu(m,n) := \cU^A_{m,n}\otimes\id^B(\Omega^{AB})\}_{m,n}$ forms a valid POVM on $AB$.  Therefore, we can construct an entanglement-assisted protocol as follows.  Let Alice and Bob share a maximally entangled state $\ket{\tau_{d_A}}_{\wt{A}A}$.  Alice then applies the $\cU_{m,n}^T(\cdot):=U_{m,n}^T(\cdot) U_{m,n}^*$ to the $A$ system and sends it through the channel $\cN$. Bob then performs the POVM $\{\mu(m,n)\}_{m,n}$ just described on systems $\wt{A}B$. The obtained score is
\begin{align}
 \sum_{m,n}P(m,n|m,n)
=& \frac{1}{d_A}\sum_{m,n}\tr\bigg[\left(\cU^{\wt{A}}_{m,n}\otimes\id^B\left[\Omega_{\wt{A}B}\right]\right)(\id_{\wt{A}}\otimes\cN)  \left(\id_{\wt{A}}\otimes\cU^{\Trans}_{m,n}\left[\tau_{d_A}\right]\right)\bigg]\notag\\
=& \frac{1}{(d_A)^2}\sum_{m,n}\tr\left[\Omega^{\wt{A}B}\left(\id^{\wt{A}}\otimes\cN\left[\tau_{d_A}\right]\right)\right]\notag\\
=& \tr[\Omega J_{\cN}].
\end{align}
The key idea in these equalities is that the unitary encoding $U_{m,n}$ performed on Alice's side is canceled by exactly one POVM element on Bob's side.

We therefore have necessary and sufficient conditions for $\cv^{\cK}(\cN)$ to be given by $\max\{ \Tr[\Omega_{AB}J_{\cN}] :\, \Tr_{A}[\Omega] = I_{B} \,,\, \Omega_{AB} \in \cK \}$, which is the conic program given in Eqn. \eqref{eqn:HminKPrimal}. Noting that we assumed $I \in \relint(\cK)$ means that strong duality holds and that the min-entropy corresponds to the conic program. This completes the proof.
\end{proof}

One advantage of the above result is it allows us to refine results of \cite{Chitambar-2021a}. Specifically, it allows us to interpolate between the communication value- captured by the separable cone- and the entanglement-assisted communication value- captured by the positive semidefinite cone- by using the entanglement rank cones. Similarly, it resolves an open problem of \cite{Chitambar-2021a} in determining what the operational interpretation of $\cv^{\PPT}$ was. We summarize these in the following corollary.

\begin{corollary}
Let $\cK \in \{\Ent_{k}(A:B)\}_{k \in \mbb{N}} \cup \PPT(A:B)$. Then $\cv^{\cK}(\cN)$ is the optimal unnormalized minimal error discrimination achieved by $\cN$ using entanglement assistance and any POVM whose elements are restricted to $\cK$.  
\end{corollary}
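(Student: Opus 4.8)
The plan is to observe that this is a direct corollary of Theorem \ref{thm:generalized-cv}: for each cone $\cK$ in the list I will verify the three hypotheses of that theorem — that $\cK$ is a closed convex cone, that $\cK$ is invariant under local CP maps on the $A$ space, and that $\I \in \relint(\cK)$ — after which Theorem \ref{thm:generalized-cv} gives $\cv^{\cK}(\cN) = \exp(-H^{\cK}_{\min}(A|B)_{J_{\cN}})$, and Definition \ref{def:k-cv} identifies the left-hand side with the optimal unnormalized entanglement-assisted minimal-error discrimination using POVMs whose elements all lie in $\cK$, which is exactly the asserted interpretation. So the content of the corollary is really just that the hypotheses of Theorem \ref{thm:generalized-cv} are met by these particular cones.

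For closedness and convexity I will cite the background: $\Ent_k(A:B)$ was shown closed in Section \ref{sec:background} via Proposition \ref{prop:cone-to-density} (its intersection with $\Density(A\otimes B)$ is the compact convex hull of the pure states of Schmidt number $\le k$), and $\PPT(A:B) = \Pos(A\otimes B)\cap(\id_A\otimes\Trans_B)^{-1}(\Pos(A\otimes B))$ is visibly an intersection of closed convex cones. For invariance under a local CP map $\Phi\in\CP(A,A')$ on the $A$ space, I will apply the monotonicity of $\Ent_k$ (respectively the preservation property of $\PPT$) under separable CP maps to the single-term separable map $\Xi = \Phi\otimes\id_B$, which gives $(\Phi\otimes\id_B)(X)\in\Ent_k(A':B)$ (resp. $\in\PPT(A':B)$) whenever $X$ lies in the corresponding cone; equivalently, via the remark following Theorem \ref{thm:generalized-cv}, these cones are stable under local isometries and partial traces on $A$. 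Finally, $\I\in\relint(\Ent_k(A:B))$ is Corollary \ref{corr:rel-int-identity}, and $\I\in\relint(\PPT(A:B))$ follows from the same Gurvits-ball argument used in Corollary \ref{corr:strong-duality-for-resources}: there is $\delta>0$ with $\I^{AB}+\Delta\in\Sep(A:B)$ for every $\Delta\in\Herm(A\otimes B)$ of $\|\cdot\|_2$-norm at most $\delta$, so for any $Y\in\PPT(A:B)$ one finds $\varepsilon>0$ with $(1+\varepsilon)\I-\varepsilon Y\in\Sep(A:B)\subseteq\PPT(A:B)$, which by \eqref{eqn:relint-defn} is precisely $\I\in\relint(\PPT(A:B))$.

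With the hypotheses in hand, Theorem \ref{thm:generalized-cv} applies to every $\cK$ in the list and the corollary follows; specializing to $\cK=\PPT(A:B)$ settles the open question of \cite{Chitambar-2021a} on $\cv^{\PPT}$, and letting $k$ range over $\Ent_k(A:B)$ interpolates between $\cv^{\Sep}$ ($k=1$) and $\cv^{\Pos}$ ($k=\min\{d_A,d_B\}$). The only step not already recorded verbatim in the excerpt is $\I\in\relint(\PPT(A:B))$ — the earlier corollaries handled only $\Ent_r$, $\cI_r$, and $\cI_{\mbb{P}}$ — so that is the point I would state carefully; but since it reduces line-for-line to the Gurvits argument already used for the separable cone, I expect no genuine obstacle, and the proof is essentially a bookkeeping exercise in citing properties established earlier.
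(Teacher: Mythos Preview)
Your proposal is correct and takes essentially the same approach as the paper: verify that the cones $\Ent_k(A:B)$ and $\PPT(A:B)$ satisfy the hypotheses of Theorem \ref{thm:generalized-cv} (in particular invariance under local CP maps, which is the only point the paper's two-line proof actually spells out) and then apply that theorem. Your write-up is in fact more careful than the paper's, since you explicitly check $\I\in\relint(\PPT(A:B))$ via the Gurvits-ball inclusion $\Sep\subseteq\PPT$, a detail the paper takes for granted.
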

\begin{proof}
This simply follows from $\rho \in \Ent_{k}(A:B)$ implying $(\Phi \otimes \Psi)(\rho) \in \Ent_{k}(A:B)$ for $\Phi,\Psi \in \Channel$. Likewise for the PPT cone.
\end{proof}

\begin{remark}
Note that Theorem \ref{thm:anti-AEP} establishes that even if we were to consider smoothed versions of the communication value problem and consider using the channel in parallel $n$ times, the asymptotic behaviour would be distinct between using a POVM restricted to separable measurments and a general POVM. In particular, note that up to scaling, Theorem \ref{thm:anti-AEP} is proven using the choice of the identity channel.\footnote{Note that if one wanted to smooth over Choi states, one would want to stay in the space of Choi states, but the proof of Theorem \ref{thm:anti-AEP} holds without even taking the structure of the smoothing into account.} This is good as it aligns with our understanding of (entanglement-assisted) classical capacity of a quantum channel.
\end{remark}

\paragraph*{Relation to the Holevo-Schumacher-Westmoreland Theorem}
Previously it was shown that $\cv^{\Sep}$ may be super-multiplicative \cite{Chitambar-2021a}. This property arose from the fact that $\cv^{\Sep}$ was achieved by preparing a state without entanglement assistance and using a POVM on the output signal. It follows that in such a setting one can gain an advantage using entangled states by sending them over an $n-$fold copy of the channel $\cN$ which results in super-multiplicativity. In contrast, it was shown that $\cv^{\Pos}$ was always multiplicative, which we could have imagined was due to the entanglement assistance already being free. As the $\cK-$communication value allows us to interpolate between these two settings without restricting the shared entanglement, the importance of the entangling power of the measurement is made explicit.

This basic idea mirrors the Holevo-Schumacher-Westmoreland (HSW) theorem \cite{Holevo-1998a,Schumacher-1997a} where the achievability proof uses encodings that are product but relies on the decoder being arbitrary. Furthermore, the HSW theorem requires considering the limit of the $n-$fold channel for the same reason as $\cv^{\Sep}$ but in the entanglement-assisted case becomes multiplicative. However, we note the HSW theorem is recovered from the one-shot setting using
$$ R \approx_{\ve} \sup_{p_{X}} D^{\ve}_{h}(\rho^{XB}||\rho^{X} \otimes \rho^{B}) \ , $$
where $R$ is the rate, $\rho^{XB} = \sum_{x} p(x) \dyad{x} \otimes \rho_{x}^{B}$ and $\cN: \dyad{x} \to \rho^{B}_{x}$ \cite{Wang-2012a}. Noting that both arguments are classical-quantum and this will remain the case when recovering the HSW theorem, by Proposition \ref{prop:sep-one-shot}, one could restrict to the separable-restricted hypothesis testing $D^{\Sep,\ve}_{h}$ and recover the HSW theorem. This difference is in a sense not surprising as $\cv^{\Sep}$
captures (the lack of) entangling power between the signal and an ancillary system, not across the signal itself.

\section{Channel Conversion via Bistochastic Channels}

\begin{figure}[H]
    \centering
    \includegraphics[width=0.7\linewidth]{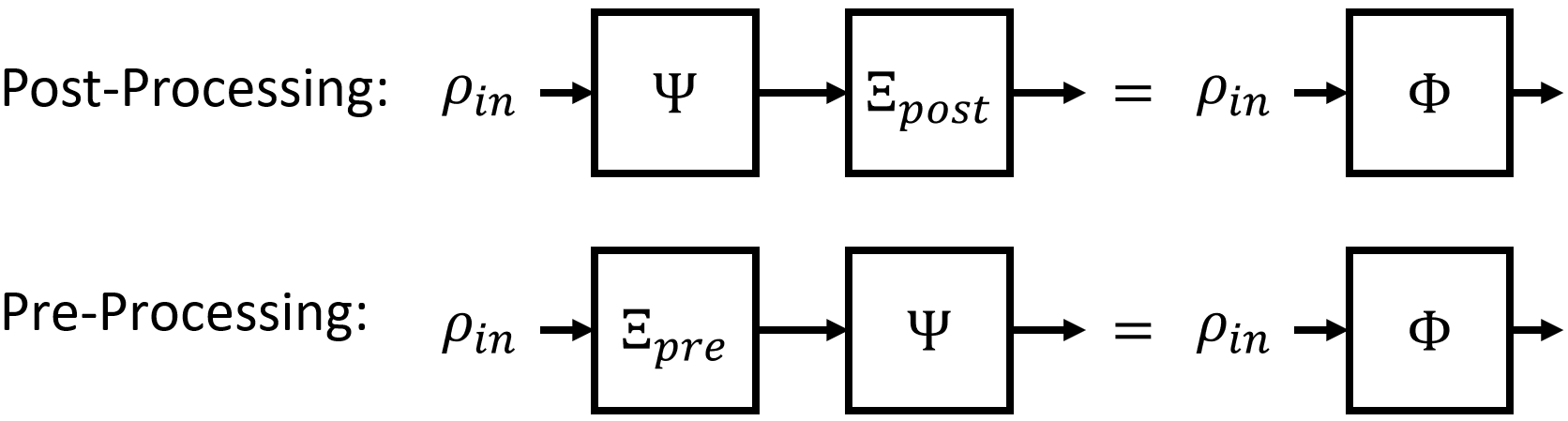}
    \caption{Equivalence of channels $\Phi$ and $\Psi$ via post- and pre-processing. Post-processing convertability has been considered in previous works. In this section we consider bistochastic pre-processing.}
    \label{fig:pre-and-post-proc-pic}
\end{figure}

Min-entropic-like quantities have been fundamental in quantum majorization results, which are partial orders on convertibility between resources \cite{Gour-2018a,Gour-2019a,Gour-2020a,Ji-2021a}. The traditional example of majorization is Blackwell's result on the ability to convert one channel to another using post-processing, known as degradability \cite{Blackwell-1953a,Blackwell-1953b}. The quantum version of degradability was considered in \cite{Buscemi-2016a}. Part of the reason degradability is so important in the quantum regime is because it is useful for bounding capacities of quantum channels \cite{Sutter-2017a}, which are notoriously difficult to determine directly. 

To both exemplify how majorization can be viewed as a function of the cone and stress the relationship between min-entropic terms being support functions of dynamic resources, we study the case of conversion via pre-processing, which has not previously considered (See Fig. \ref{fig:pre-and-post-proc-pic} for distinction). Specifically, we first introduce a new restricted min-entropy which we call the `doubly-restricted min-entropy.' As the definition shows, it in effect measures the support of a Choi matrix in the space of bistochastic channels contained in a cone $\cK$. As such, it in effect optimizes over the intersection of the spaces considered for $H_{\min}^{\cK}(A|B),H_{\min}^{\cK}(B|A)$ via equations \eqref{eqn:min-ent-unital-support} and \eqref{eqn:min-ent-cptp-support} respectively. 
We then show one channel may be converted to another via bistochastic pre-processing in the allowed cone $\cK$ if and only if the doubly restricted min-entropy of the output channel is less than that of the input channel under all post-processings. In other words, we formalize the intuition that a channel $\Phi$ can be simulated by bistochastic pre-processing $\Psi$ if and only if $\Psi$ is a less noisy channel than $\Phi$ under all physical conditions where noisiness is measured in terms of this new min-entropic quantity.

To do this, we introduce a new restricted min-entropy which we call the `doubly'-restricted min-entropy as we restrict both the cone and the set of channels.
\begin{definition}\label{defn:doubly-restricted-min-entropy}
Let $\cK \in (A \otimes B)$ be a closed convex cone with $I \in \relint(\cK)$. Let $A \cong B \cong \mbb{C}^{d}$. Let $P \in \Pos(A \otimes B)$. Then we define the doubly restricted min entropy, $H^{\cK,\uparrow}_{\min}(A|B)_{P}$ by
$$ \exp(-H^{\cK,\uparrow}_{\min}(A|B)_{P}) := \max_{\Psi \in \CPTPU^{\cK}(A,B)} \langle J_{\Psi}, P \rangle  \ , $$
where $\CPTPU^{\cK}(A,B) := \{J_{\Lambda} \in \cK : \Lambda \in \mrm{CPTPU}(A,B) \} $ .
\end{definition}
We remark the need for $A \cong B \cong \mbb{C}^{d}$ simply follows from the fact to be unital, $\Tr_{A}J_{\Psi} = \I_{B}$ and to be trace-preserving, $\Tr_{B}J_{\Psi} = \I_{A}$, which can only be true of $A,B$ have the same dimension. Furthermore, we note this entropy has the following properties of particular interest.
\begin{proposition}\label{prop:doubly-restricted-min-entropy-props}
The doubly restricted min-entropy satisfies the following properties
\begin{enumerate}
    \item (Maximality) 
    \begin{align*}
        H_{\min}^{\cK,\uparrow}(A|B)_{P} \geq \max\{H^{\cK}_{\min}(A|B)_{P},H^{\cK}_{\min}(B|A)_{P}\}
    \end{align*}
    \item (Symmetry) If $X \in \cK(A \otimes B)$ if and only if $\mrm{SWAP}(X)^{\Trans} \in \cK(B \otimes A)$, then $H_{\min}^{\cK,\uparrow}(A|B)_{P} = H_{\min}^{\cK,\uparrow}(B|A)_{P}$.
    \item (Singlet Fraction) $\exp(-H^{\cK,\uparrow}_{\min}(A|B)_{P})$ is equal to
    $$ d_{A} \max_{\substack{\Psi \in \CPTPU(B,A'): \\ J_{\Psi^{\ast}} \in \cK}} \langle (\id_{A} \otimes \Psi)(P), \tau^{AA'} \rangle \ .$$
    \item (Additivity for Positive Cone) 
    \begin{align*}
    & H^{\Pos,\uparrow}_{\min}(A^2_1|B^2_1)_{P \otimes Q} \\
    =& H^{\Pos,\uparrow}_{\min}(A^1|B^1)_{P} + H^{\Pos,\uparrow}_{\min}(A^2|B^2)_{Q} \ .
    \end{align*}
    \item (Data-Processing-Like-Inequality) Let $\cK$ be invariant under local CP maps. Let $\Phi \in \CPTPU(A,A')$, $\Psi \in \CPTPU(B,B')$, then 
    $$H_{\min}^{\cK,\uparrow}(A|B)_{P} \leq H_{\min}^{\cK,\uparrow}(A'|B')_{P}  $$
    \item Does Not Satisfy AEP for Insufficiently Positive Cones
\end{enumerate}
\end{proposition}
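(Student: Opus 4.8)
The plan is to reuse the template of the anti-standard min-entropy AEPs (Theorems~\ref{thm:anti-AEP} and~\ref{thm:anti-AEP-flipped}): evaluate the doubly-restricted min-entropy on tensor powers of the maximally entangled state, turn the defining program into a support function over the cone, and invoke the distance-from-$\tau_{d}$ bound. Concretely, fix $A \cong B \cong \mbb{C}^{d}$, let $\tau_{AB}$ be the maximally entangled state on $A \otimes B$, and take any sequence of closed convex cones $\cK^{(n)}((A\otimes B)^{\otimes n})$ with $\I \in \relint(\cK^{(n)})$ and $\cK^{(n)} \subseteq \mrm{Ent}_{f(n)}((A\otimes B)^{\otimes n})$ where $f(n) = O(c^{n})$ for some $c < d$ --- for instance $f(n) = \lfloor (d-\ve)^{n}\rfloor$ with $\ve > 0$, exactly as in Corollary~\ref{corr:ent-rank-anti-AEP}. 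I will show that for every smoothing parameter $\ve' \in [0,1)$ one has $\lim_{n\to\infty}\tfrac{1}{n}H^{\cK^{(n)},\uparrow,\ve'}_{\min}(A^{n}|B^{n})_{\tau_{AB}^{\otimes n}} \neq H(A|B)_{\tau_{AB}}$, in contrast to the $\cK = \Pos$ case where additivity (Proposition~\ref{prop:doubly-restricted-min-entropy-props}(4)) already pins down the unsmoothed value and the standard AEP holds.

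First I would record two elementary facts: up to a permutation of tensor factors $\tau_{AB}^{\otimes n}$ is the maximally entangled state $\tau_{d^{n}}$ on $A^{n}\otimes B^{n}$, and $H(A|B)_{\tau_{AB}} = H(AB)_{\tau} - H(B)_{\tau} = -\log d$. Then, from Definition~\ref{defn:doubly-restricted-min-entropy},
\[ \exp\!\left(-H^{\cK^{(n)},\uparrow}_{\min}(A^{n}|B^{n})_{\tau_{AB}^{\otimes n}}\right) \;=\; \max_{\Psi \in \CPTPU^{\cK^{(n)}}(A^{n},B^{n})} \langle J_{\Psi}, \tau_{AB}^{\otimes n}\rangle \, . \]
Every $\Psi$ in this set is bistochastic, so $\Tr_{A^{n}}J_{\Psi} = \I_{B^{n}}$ and $\Tr_{B^{n}}J_{\Psi} = \I_{A^{n}}$, hence $\Tr[J_{\Psi}] = d^{n}$; and $J_{\Psi}\in\cK^{(n)}\subseteq\mrm{Ent}_{f(n)}$, so by~\eqref{eqn:distance-from-max-ent} $\langle J_{\Psi},\tau_{d^{n}}\rangle \leq \tfrac{f(n)}{d^{n}}\Tr[J_{\Psi}] = f(n)$. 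Therefore $\exp(-H^{\cK^{(n)},\uparrow}_{\min}(A^{n}|B^{n})_{\tau_{AB}^{\otimes n}}) \leq f(n)$, i.e.\ $\tfrac{1}{n}H^{\cK^{(n)},\uparrow}_{\min}(A^{n}|B^{n})_{\tau_{AB}^{\otimes n}} \geq -\tfrac{1}{n}\log f(n)$, whose $\liminf$ is at least $-\log c$. Since the smoothed doubly-restricted min-entropy is (by analogy with~\eqref{eq:smoothed-conic-min-entropy}) a maximum over $\Bve(\tau^{\otimes n})$ and $\tau^{\otimes n}$ lies in that ball, the same lower bound holds for $H^{\cK^{(n)},\uparrow,\ve'}_{\min}$. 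Hence $\liminf_{n}\tfrac{1}{n}H^{\cK^{(n)},\uparrow,\ve'}_{\min}(A^{n}|B^{n})_{\tau_{AB}^{\otimes n}} \geq -\log c > -\log d = H(A|B)_{\tau_{AB}}$, so the regularized limit cannot equal the conditional von Neumann entropy.

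Two routine points need checking before I would call the proof complete, and one harder point I would deliberately avoid. The routine points are: (i) $\CPTPU^{\cK^{(n)}}(A^{n},B^{n})$ is non-empty so the program is well posed, which is immediate because the completely depolarising channel has Choi operator $d^{-n}\I_{A^{n}}\otimes\I_{B^{n}} \in \Sep \subseteq \mrm{Ent}_{f(n)}$ and is bistochastic; and (ii) consistency with the singlet-fraction form of Proposition~\ref{prop:doubly-restricted-min-entropy-props}(3), which holds because $\Psi$ is bistochastic iff $\Psi^{\ast}$ is, and $\mrm{SN}(J_{\Psi}) = \mrm{SN}(J_{\Psi^{\ast}})$ (as noted before Corollary~\ref{corr:ent-rank-anti-AEP}), so requiring $J_{\Psi} \in \cK$ or $J_{\Psi^{\ast}} \in \cK$ yields the same entanglement-rank bound. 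The harder point I would not pursue here is an ``if and only if'' characterisation of the AEP in terms of a support function in the style of Theorem~\ref{thm:anti-AEP}; its converse direction demands an achievability/upper bound on $H^{\cK,\uparrow}_{\min}$ when the cones do approximate $\Pos$, which is unnecessary for the stated ``does not satisfy'' claim.
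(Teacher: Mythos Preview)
Your proposal addresses only Item~6, and for that item it is correct. The route you take, however, is more elaborate than the paper's. The paper disposes of Item~6 in one line: by Item~1 (Maximality), $H^{\cK,\uparrow}_{\min}(A|B)_{P} \geq H^{\cK}_{\min}(A|B)_{P}$ for every $P$, hence also after smoothing; so whenever Theorem~\ref{thm:anti-AEP} forces $\tfrac{1}{n}H^{\cK^{(n)}}_{\min}(A^{n}|B^{n})_{\tau^{\otimes n}}$ to exceed $H(A|B)_{\tau}$ in the limit, the same is automatic for the doubly-restricted quantity. You instead re-run the support-function computation of Theorem~\ref{thm:anti-AEP} directly on the bistochastic program. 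This is perfectly valid and arguably cleaner in one respect: it avoids the adjoint-cone hypothesis $J_{\Psi^{\ast}}\in\cK^{(n)}\Leftrightarrow J_{\Psi}\in\cK^{(n)}$ that Theorem~\ref{thm:anti-AEP} carries, since you bound $\langle J_{\Psi},\tau^{\otimes n}\rangle$ directly without passing through the singlet-fraction form. The paper's route, on the other hand, makes the structural point transparent --- failure of the AEP for $H^{\cK}_{\min}$ immediately propagates to $H^{\cK,\uparrow}_{\min}$ via the feasible-set inclusion $\CPTPU^{\cK}\subseteq\CPU^{\cK}$ --- and saves you from redoing a calculation that is already on record.
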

\begin{proof}
Item 1 follows from Definition \ref{defn:doubly-restricted-min-entropy} along with the first line in the proof of Proposition \ref{prop:min-entropy-singlet-fraction} and \eqref{eqn:min-ent-cptp-support}. Let $\CU \equiv \CPTPU$. Item 2 follows from 
\begin{align*}
\exp(-H_{\min}^{\cK,\uparrow}(B|A)_{P})
=& \underset{\Psi^{\ast} \in \CU^{\cK}(B,A)}{\max} \langle J_{\Psi^{\ast}}, \mrm{SWAP}(P) \rangle \\
=& \underset{\Psi \in \CU^{\cK}(A,B)}{\max} \langle \mrm{SWAP}(J_{\Psi})^{\Trans} , \mrm{SWAP}(P) \rangle \\
=& \underset{\Psi \in \CU^{\cK}(A,B)}{\max} \langle J_{\Psi}^{\Trans}, P \rangle \\
=& \underset{\Psi \in \CU^{\cK}(A,B)}{\max} \langle J_{\Psi}, P \rangle \\
=& \exp(-H^{\cK,\uparrow}_{\min}(A|B)_{P}) \ ,
\end{align*}
where we used that $\Psi \in \CPTPU^{\cK}(A,B)$ if and only if $\Psi^{\ast} \in \CPTPU^{\cK}(B,A)$. Item 3 follows from the same proof form as for Proposition \ref{prop:min-entropy-singlet-fraction}. Item 4 follows from noting there is a straightforward conic program for $\exp(-H^{\cK,\uparrow}_{\min}(A|B))$ given by the primal
\begin{align*}
\max\{\ip{P}{X}\;:\;X_{A} = \I_{A} \;,\; X_{B} = \I_{B} \;,\; X \in \cK \}
\end{align*}
and the dual
\begin{align*}
\min\{\Tr(Y_1) + \Tr(Y_2):\I_{A} \otimes Y_1 + Y_2 \otimes \I_B \succeq_{\cK^{\ast}} P, 
 Y_1 \in \Herm(B)\;, Y_2 \in \Herm(A) \} \ ,
\end{align*}
which will have strong duality under the assumption $I \in \relint(\cK)$. So using the optimizers in the primal and dual cone for $P,Q$ respectively to construct the optimizers for $P \otimes Q$, one obtains additivity. For Item 5, note we can re-write the dual problem as
\begin{align*}
    \min\{2^{\gamma_{1}} + 2^{\gamma_{2}} : 2^{\gamma_{1}}\I_{A} \otimes \sigma_{1} + \sigma_{2} \otimes 2^{\gamma_{2}}\I_{B}\,, \, \sigma_{i} \in \Density\} \ .
\end{align*}
By the same argument as in the proof of Proposition \ref{prop:DPI}, it follows by assuming $\cK$ is invariant under local CP maps and $\Phi,\Psi \in \CPTPU$, that the optimal choices $(\gamma_{1},\gamma_{2},\sigma_1,\sigma_2)$ for $P$ result in $(\gamma_{1},\gamma_{2},\Psi(\sigma_{1}),\Phi(\sigma_{2}))$ being feasible for $(\Phi \otimes \Psi)(P)$. So $\exp(-H^{\cK,\uparrow}_{\min}(A|B)_{P}) \geq \exp(-H^{\cK,\uparrow}_{\min}(A|B)_{(\Psi \otimes \Phi)(P)})$, which implies Item 5. Item 6 follows from $\exp(-H^{\cK,\uparrow}_{\min}(A|B)_{P}) \leq \exp(-H^{\cK}_{\min}(A|B)_{P})$ along with Theorem \ref{thm:anti-AEP}.
\end{proof}

We now present our pre-processing result. The proof largely follows that of \cite{Gour-2018a} with a few modifications. For notational simplicity, if we talk of (a subset) of linear transformations $\cS \subset \Trans(A,B)$ whose Choi matrix is contained in $\cK(A \otimes B)$, we will denote the set $\cS^{\cK}$. It will also be convenient to restrict ourselves to the set of cones $\cK$ such that $X^{AB} \in \cK(A \otimes B)$ implies $\mrm{SWAP}(X)^{\Trans} \in \cK(B \otimes A)$, as $J_{\Phi} = \mrm{SWAP}(J_{\Phi^{\ast}})^{\Trans}$ (Lemma \ref{lemma:adjoint-choi}). Note however this is not a particularly strong requirement as all standard cones in quantum information theory satisfy this property.
\begin{theorem}
Let $\cK$ be closed, convex, such that $X \in \cK(A \otimes B)$ if and only if $\mrm{SWAP}(X)^{\Trans} \in \cK(B \otimes A)$. Let $D \cong A \cong B$. Given $\Phi \in \Channel(A,C), \Psi \in \Channel(B,C)$ the following are equivalent
\begin{enumerate}
    \item There exists $\Xi \in \CPTPU^{\cK}(A,B)$ such that 
    $$ \Phi = \Psi \circ \Xi \ . $$
    \item $H^{\cK,\uparrow}_{\min}(B|D)_{(\id \otimes \cE)(J_{\Psi})} \leq H^{\cK,\uparrow}_{\min}(A|D)_{(\id \otimes \cE)(J_{\Phi})} $ for all $\cE \in \Channel(C,D)$.
    \item The previous relation holds for any measure-and-prepare channel $\cE(X) := \sum_{j=1}^{d_{A}^{2}} \Tr[M^{C}_{j}X]\omega_{j}$ where $\{\omega_{j}\}$ may be freely varied and $\{M^{C}_{j}\}$ is an arbitrary but fixed informationally complete POVM.
\end{enumerate}
\end{theorem}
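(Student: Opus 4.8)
The plan is to prove the cycle $1\Rightarrow 2\Rightarrow 3\Rightarrow 1$. The step $2\Rightarrow 3$ is immediate, since every measure-and-prepare channel of the stated form lies in $\Channel(C,D)$ (with $D\cong A$, the preparation states $\{\omega_j\}$ being allowed to be linearly dependent).

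For $1\Rightarrow 2$, suppose $\Phi=\Psi\circ\Xi$ with $\Xi\in\CPTPU^{\cK}(A,B)$ and fix a test channel $\cE\in\Channel(C,D)$. Since $\cE\circ\Phi=(\cE\circ\Psi)\circ\Xi$, the composition rule for Choi operators gives $(\id\otimes\cE)(J_\Phi)=(\overline{\Xi^{\ast}}\otimes\id_D)\big((\id\otimes\cE)(J_\Psi)\big)$ with $\overline{\Xi^{\ast}}\colon B\to A$ again a bistochastic channel; that is, $(\id\otimes\cE)(J_\Phi)$ arises from $(\id\otimes\cE)(J_\Psi)$ by a bistochastic channel acting on the conditioning register. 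The data-processing-type inequality for the doubly-restricted min-entropy (Proposition \ref{prop:doubly-restricted-min-entropy-props}, Item 5, for which one uses that the relevant cones are invariant under local completely positive maps) then gives $H^{\cK,\uparrow}_{\min}(B|D)_{(\id\otimes\cE)(J_\Psi)}\le H^{\cK,\uparrow}_{\min}(A|D)_{(\id\otimes\cE)(J_\Phi)}$, which is Item 2.

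For the converse $3\Rightarrow 1$ I would argue the contrapositive, following the Choi-operator majorization template of \cite{Gour-2018a}. The set $\mathcal S:=\{\Psi\circ\Xi:\Xi\in\CPTPU^{\cK}(A,B)\}$, regarded through Choi operators inside $\Pos(A\otimes C)$, is convex and compact, since $\CPTPU^{\cK}(A,B)$ is closed and bounded (the trace-preserving, unital, and $\cK$-membership constraints are closed) and $\Xi\mapsto\Psi\circ\Xi$ is linear. If $1$ fails then $J_\Phi^{AC}\notin\mathcal S$, so Hahn--Banach separation yields a Hermitian $W_{AC}$ with $\ip{W}{J_\Phi^{AC}}>\ip{W}{J_{\Psi\circ\Xi}^{AC}}$ for all $\Xi\in\CPTPU^{\cK}(A,B)$, the right side being a strict maximum by compactness. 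As $\Tr_C J_{\cdot}^{AC}=I_A$ for any channel into $C$, one may add $c\,I_A\otimes I_C$ to $W$ without affecting the separation, rendering it positive; and as $(\Psi\circ\Xi)(I)=\Psi(I_B)$ is independent of the unital $\Xi$, in the generic case $\Phi(I)=\Psi(I_B)$ one may also freely normalise the $C$-marginal of $W$ (the complementary case being separated directly by a functional $I_A\otimes Z_C$ that is realised by a measure-and-prepare channel reading only the output marginal). Expanding $W$ against the fixed informationally complete POVM $\{M_j^C\}$ in the $C$-register and absorbing $I_A\otimes I_C=\sum_j I_A\otimes M_j^C$, one writes $W=\sum_j\beta_j\,\omega_j^D\otimes M_j^C$ with $\omega_j^D\in\Density(D)$, $\beta_j\ge 0$, so that $W=(\id_A\otimes\cE^{\ast})(Y)$ for the measure-and-prepare channel $\cE(X):=\sum_j\Tr[M_j^C X]\,\omega_j^D$ and a suitable $Y$ on $A\otimes D$. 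The identities $\ip{W}{J_{\Psi\circ\Xi}^{AC}}=\ip{Y}{J_{(\cE\circ\Psi)\circ\Xi}^{AD}}$ and $\ip{W}{J_\Phi^{AC}}=\ip{Y}{J_{\cE\circ\Phi}^{AD}}$ then turn the separation into a strict inequality of support functions; taking the supremum over $\Xi\in\CPTPU^{\cK}(A,B)$ and, via the adjoint symmetry $J_\Xi\in\cK\Leftrightarrow J_{\Xi^{\ast}}\in\cK$ and the symmetry of $H^{\cK,\uparrow}_{\min}$ (Proposition \ref{prop:doubly-restricted-min-entropy-props}, Item 2), identifying it with the optimisation over $\CPTPU^{\cK}(B,D)$ computing $\exp(-H^{\cK,\uparrow}_{\min}(B|D)_{(\id\otimes\cE)(J_\Psi)})$, one obtains $H^{\cK,\uparrow}_{\min}(B|D)_{(\id\otimes\cE)(J_\Psi)}>H^{\cK,\uparrow}_{\min}(A|D)_{(\id\otimes\cE)(J_\Phi)}$, so $3$ fails.

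The step I expect to be the main obstacle is the final stretch of $3\Rightarrow 1$: one must check that, after the shift-and-rescale, the separating functional really corresponds to an admissible measure-and-prepare $\cE$ (in particular that the $\omega_j^D$ and auxiliary $Y$ have the required structure and $Y$ is a feasible point for the conic program \eqref{eqn:HminKPrimal}-style defining $H^{\cK,\uparrow}_{\min}(A|D)$), and --- more delicately --- that the supremum of $\ip{W}{J_{\Psi\circ\Xi}^{AC}}$ over bistochastic $\Xi$ with $J_\Xi\in\cK$ \emph{equals}, rather than merely bounds, the support value defining $H^{\cK,\uparrow}_{\min}(B|D)_{(\id\otimes\cE)(J_\Psi)}$. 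This matching is precisely where the hypothesis that $\cK$ is invariant under the $\mrm{SWAP}$--transpose operation is essential, as it is what lets one interchange the roles of the input and output copies between the two optimisations.
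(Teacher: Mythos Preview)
Your overall architecture matches the paper: prove $1\Rightarrow 2$ by data-processing, $2\Rightarrow 3$ trivially, and attack $3\Rightarrow 1$ with a hyperplane separation in the style of \cite{Gour-2018a}. The easy directions are fine. The gap is in $3\Rightarrow 1$, and it is precisely the one you flag at the end but do not close.

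Your contrapositive asks too much. You want the separating $W$ to simultaneously (i) take the form $(\id\otimes\cE^{\ast})(Y)$ for a measure-and-prepare $\cE$ with $Y$ \emph{feasible} for the program defining $H^{\cK,\uparrow}_{\min}(A|D)$, and (ii) have $\sup_{\Xi}\ip{W}{J_{\Psi\circ\Xi}}$ \emph{equal} $\exp(-H^{\cK,\uparrow}_{\min}(B|D)_{(\id\otimes\cE)(J_\Psi)})$. Neither follows from the shift-and-rescale: writing $W=\sum_j\beta_j\omega_j\otimes M_j$ is fine after adding $cI$, but the resulting $\beta_j$ are generically distinct, so there is no single $Y$ in $\CPTPU^{\cK}(A,D)$ with $W=(\id\otimes\cE^{\ast})(Y)$; and even granting (i), the map $\Xi\mapsto$``pair with $Y$ after $\cE$'' does not surject onto $\CPTPU^{\cK}(B,D)$, so you only get an inequality, not (ii). Without both, the strict separation does not yield a strict inequality of doubly-restricted min-entropies.

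The paper sidesteps this entirely. It does \emph{not} try to make the separating functional itself a feasible point. Instead it uses the dual basis of the fixed IC POVM to expand the \emph{Choi operators} $J_\Psi=\sum_k P_k\otimes Q_k$, $J_\Phi=\sum_k R_k\otimes Q_k$, so that Item~1 becomes the finite family of linear constraints $\cE^{\Trans}(P_k)=R_k$. Separation on the vector $(\Tr[\cE^{\Trans}(P_k)X_j])_{k,j}$ then yields, after the shift $Z_i\mapsto\omega_i$, a condition of the form
\[
\forall\{\omega_i\}:\quad \exp\bigl(-H^{\cK,\uparrow}_{\min}(B|C')_{\Omega}\bigr)\ \ge\ (\text{a specific inner product with }\Omega^{AC'}),
\]
which is genuinely \emph{weaker} than the min-entropy inequality in Item~3. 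The equivalence is then completed not by strengthening the separation, but by observing that Item~1 $\Rightarrow$ Item~3 via the data-processing property (Proposition~\ref{prop:doubly-restricted-min-entropy-props}, Item~5), since $\Omega^{AC'}=(\cE^{\Trans}\otimes\id)(\Omega^{BC'})$ with $\cE^{\Trans}$ bistochastic. In other words, the paper proves ``separation condition $\Leftrightarrow 1$'' and ``$1\Rightarrow$ Item~3 $\Rightarrow$ separation condition'', so the equality you are trying to force in (ii) is never needed. If you restructure your $3\Rightarrow 1$ along these lines---separate on the $P_k,R_k$ data rather than on $J_\Phi$ directly, and let DPI carry the other direction---the obstacles you identified disappear.
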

\begin{proof}
Throughout this proof, to simplify notation, we will denote $\CU^{\cK}(A,B) \equiv \CPTPU^{\cK}(A,B)$. By the structure of expressing post- and pre-processing of quantum channels (Prop. \ref{prop:pre-and-post-proc-choi}) along with Lemma \ref{lemma:trans-channel-props}, there exists $\Xi \in \CU^{\cK}(A,B)$ such that $\Phi = \Psi \circ \Xi$ if and only if there exists $\cE^{\Trans} \in \CU^{\cK}(B,A)$ such that $(\cE^{\Trans} \otimes \id)J_{\Psi} = J_{\Phi}$. Let $\{Q^{C'}_{k}\}_{k}$ be the dual basis of our arbitrary but fixed informationally complete POVM $\{M^{C'}_{k}\}_{k}$ where $C' \cong A$. We therefore may write
$$ J_{\Psi} = \sum_{k} P_{k}^{B} \otimes Q^{C'}_{k} \quad J_{\Phi} = \sum_{k} R_{k}^{A} \otimes Q^{C'}_{k} \ , $$
where $P^{B}_{k} = \Tr_{C'}[(\I_{A} \otimes M^{C'}_{k})J_{\Psi}]$ and same idea for $R^{A}_{k}$. We may then conclude there exists such a $\Xi$ if and only if there exists $\cE^{\Trans} \in \CU^{\cK}(B,A)$ such that $\cE^{\Trans} (P_{k}^{B}) = R^{A}_{k}$ for all $k$. Note that $\Tr_{A}[J_{\Phi}] = \Tr_{A}[(\cE^{\Trans} \otimes \id)(J_{\Psi})] = \Tr_{A}[J_{\Psi}]$. As $\{Q^{C'}_{k}\}$ is a basis, we can conclude that $c_{k} := \Tr[P_{k}] = \Tr[R_{k}]$ for all $k$ as otherwise the coefficients would differ. Then for arbitrary $\{\omega_{k}\} \subset \Density(C')$, we can define the operator
\begin{align}\label{eqn:defnofOmega}
\Omega^{ABC'} = P_{k} \otimes R_{k} \otimes \omega_{k}^{C'} \ , 
\end{align}
from which it follows 
\begin{align}\label{eqn:omega-marginals}
\Omega^{AC'} = (\id \otimes \mbb{M})(J_{\Phi}) \quad \Omega^{BC'} = (\id \otimes \mbb{M})(J_{\Psi}) \ , 
\end{align}
where $\mbb{M}(X^{C'}) = \sum_{k} \Tr[M^{C'}_{k} X] \omega_{k}$, where $\{M_{k}^{C'}\}$ is the dual basis to $Q^{C}_{k}$, and thus $\mbb{M}$ is a CPTP map.

Now we start our variation of Lemma 1 of \cite{Gour-2018a}. By introducing self-adjoint basis of $A$, $\{X_{j}\}$, we require $\exists \cE^{\Trans} \in \CU^{\cK}(B,A)$ such that $\Tr[\cE^{\Trans}(P_{k}^{B})X_{j}] = \Tr[R^{A}_{k}X_{j}]$ for all $j,k$. Define $r_{\cF}$ by $r(k,j) = \Tr[\cF(P_{k}^{B})X_{j}]$, and the set $\mathcal{S} = \{r_{\cE^{\Trans}} : \cE^{\Trans} \in \CU^{\cK}(B,A)\}$. As $\cK$ is closed and convex, one can show that $\CU^{\cK}(B,A)$ is a compact and convex set, the set $\cS$ is convex and the separation hyperplane theorem may be applied in the same way as \cite{Gour-2018a}, ultimately resulting in our equivalent of \cite[Supplementary Material,Eqn 18]{Gour-2018a}
\begin{equation}\label{eq:pre-proc-proof-step-1}
\begin{aligned}
    \forall \{Z^{A}_{i}\} \in \Herm(A), \underset{\cE^{\Trans} \in \CU^{\cK}(B,A)}{\max} \sum_{i} \Tr[\cE^{\Trans}(P_{i}^{B})Z_{i}^{A}] \geq \sum_{i} \Tr[R^{A}_{i}Z_{i}] \ .
\end{aligned}
\end{equation}
Now, because $\cE^{\Trans}$ is trace-preserving, we may use the mapping $Z_{i} \mapsto \omega_{i} = [Z_{i} + z_{i}\I]/\kappa$ where $z_{i}$ are chosen so these are states to obtain the following equivalent condition to Eq. \eqref{eq:pre-proc-proof-step-1}
\begin{align*}
    \forall \{\omega^{C'}_{i}\} \in \Density(C'),
    \quad \underset{\cE' \in \CU^{\cK}(C',B)}{\max} \sum_{i} \Tr[P_{i}^{B} \cE'(\omega_{i}^{C'})] \geq \sum_{i} \Tr[R^{A}_{i}\omega_{i}^{C'}] \ ,
\end{align*}
where we have used $C' \cong A$ and $\cE'$ is really the adjoint of $\cE^{\Trans}$, so $\cE' = \mbb{C} \circ \cE \circ \mbb{C}$. As we optimize over all $\{\omega_{i}\} \subset \Density(C')$, the optimization is invariant under applying $\mbb{C}$, so $\cE' = \mbb{C} \circ \cE$. Now we use $\Tr[XY] = \Tr[X \otimes Y^{\Trans} \tau_{d}]$ where $X,Y \in \Lin(\mbb{C}^{d})$. This results in $\Trans \circ \cE' = \Trans \circ \mbb{C} \circ \cE$. Noting $\Trans \circ \mbb{C}$ is just complex conjugate and $\cE$ is Hermitian-preserving and any density matrix is Hermitian, we can drop these too, leaving us with just optimizing over $\cE \in \CU^{\cK}(A,B)$:
\begin{align*}
    \forall \{\omega^{C'}_{i}\} \in \Density(C'), 
    \hspace{3mm} \underset{\cE \in \CU^{\cK}(C',B)}{\max} \sum_{i} \Tr[P_{i}^{B} \otimes \cE(\omega_{i}^{C'}) \tau^{BB'}] \geq \sum_{i} \Tr[R^{A}_{i}\otimes \omega_{i}^{C'} \tau^{AA'}] \ .
\end{align*}
Moving the sums inside so that we are considering $\Tr[(\id^{B} \otimes \cE)(\Omega^{BC'})\tau^{BB'}]$ (and likewise for RHS), and multiplying by $d_{A}$ on either side, recalling $d_{A} = d_{B}$ by assumption, by definition of the doubly restricted min-entropy (Definition \ref{defn:doubly-restricted-min-entropy}), we have
\begin{equation}\label{eqn:pre-proc:nec-and-suff}
\begin{aligned}
\forall \{\omega^{C'}_{i}\} \in \Density(C') , 
 \hspace{5mm} \exp(-H_{\min}^{\cK,\uparrow}(B|C')_{\Omega}) \geq \exp(-H^{\cK,\uparrow}_{\min}(A|C')_{\Omega})
\end{aligned}
\end{equation}
is necessary and sufficient. This completes the equivalence of Items 1 and 2. It follows that
\begin{align*}
\forall \{\omega^{C'}_{i}\} \quad H_{\min}^{\cK}(B|C')_{\Omega} \leq H^{\cK}_{\min}(A|C')_{\Omega}
\end{align*}
is a sufficient condition because it implies \eqref{eqn:pre-proc:nec-and-suff} as can be verified using \eqref{eqn:HminKPrimal}. Moreover, if $\cE^{T}(P^{B}_{k}) = R^{A}_{k}$ for all $k$, then it's also necessary because $\Omega^{BC'} = (\cE^{T} \otimes \id)(\Omega^{AC'})$ satisfies data processing for any cone because $\cE^{\Trans}$ and $\id$ are CPTPU maps (Item 5 of Proposition \ref{prop:doubly-restricted-min-entropy-props}), i.e.\ $H_{\min}^{\cK}(A|C')_{\Omega} = H_{\min}^{\cK}(A|C')_{(\cE^{\Trans} \otimes \id)(\Omega^{BC'})} \geq H_{\min}^{\cK}(B|C')_{\Omega}$. Using Eqn. \eqref{eqn:omega-marginals}, we obtain Item 3 and by relaxing that Item 4 we obtain Item 2. This completes the proof.
\end{proof}

\section{Norms, Additivity, and Restricted Entropies}
At this point we have seen the general framework of conic information theory and its ability to capture different operational tasks as a function of the cone. We have also seen that for cones that in a specific sense don't asymptotically approximate the positive semidefinite cone, the asymptotic equipartition properties, and the relevant operational interpretations, cease to hold. A remaining question would then be under what conditions do the conic measures behave like the standard ($\cK = \Pos$) case. Since we have already seen that any pure state outside of the cone will result in $D_{\max}^{\cK}$ behaving differently than $D_{\max}$ (Proposition \ref{prop:Dkmax-pure-state-outside-cone}), the strongest result we could hope for is that $P,Q \in \cK$, $D_{\max}^{\cK}(P||Q)=D_{\max}(P||Q)$. 

Even less demanding, we might hope that $D_{\max}^{\cK}(P||Q)$ is additive. The additive property is not straightforward in general. This can be seen as follows: Consider the `natural' sequence of cones for the $n-$fold space, in the symmetry case where $\cK^{(1)} = \{X \in \Pos : \cG(X) = X\}$ where $\cG$ is the twirling map, then $\cK^{(n)} = \{X \in \Pos: \cG^{\otimes n}(X) = X\}$, and in the entanglement rank cone case $\cK^{(1)} = \Ent_{k}(A:B)$, so $\cK^{(n)} = \Ent_{n\cdot k}(A^{n}:B^{n})$. In both cases it is clear that the primal problems for $\exp(D^{\cK}_{\max}(P||Q))$ and $\exp(-H^{\cK}_{\min}(A|B))$ (Eqns. \eqref{eqn:DmaxKPrimal} and \eqref{eqn:HminKPrimal} respectively) are multiplicative by taking the $n-$fold optimizer. This implies $D_{\max}^{\cK}(P^{\otimes n}||Q^{\otimes n}) \geq n D_{\max}(P||Q)$, $H_{\min}^{\cK}(A^{n}|B^{n})_{P^{\otimes n}} \geq n H_{\min}^{\cK}(A|B)_{P}$. When $\cK = \Pos$, these are known to hold with equality because one can use the $n-$fold dual optimizer as well \cite{Tomamichel-2012a}. However, it is known for a general $P$, $\exp(-H^{\Sep}_{\min}(A|B)_{P})$ can have the optimizer change. This results in the communication value $\cv^{\Sep}$ being non-multiplicative \cite{Chitambar-2021a}.

In this section, we explore both of these questions by connecting the conic program framework to conic norms. We see that in effect for cones that capture symmetries, all of the properties hold so long as $P,Q \in \cK$. In contrast, for entanglement cones we can prove additivity of $D_{\max}^{\Ent_{k}}$ under significantly restricted conditions on $Q$.

\begin{definition}\label{defn:cone-norms}
Let $X \in \Lin(A)$. Let $\cK \subset \Pos$ be a closed, convex cone such that $\mrm{span}(\cK) =\Lin(A)$. Then the $\cK$-operator norm is
\begin{align*}
    \|X\|_{\cK} := \max_{\rho \in \cK} \{ |\langle X , \rho \rangle | : \Tr(\rho) = 1 \} \ ,
\end{align*}
where the maximization follows from $\cK$ being closed and convex guarantees the $\rho \in \Density(A) \cap \cK$ being a compact set.
\end{definition}
We note these norms were studied in \cite{Johnston-2012a,Skowronek-2012a} as well as in a sense in \cite{Jencova-2021a}. It is easy to see that in the proof of Proposition \ref{prop:Dkmax-pure-state-outside-cone}, we were in fact converting problems about $\cK$-restricted entropic quantities into a problem of $\cK$-norms. We now explain why this is the case.
\begin{proposition}(\cite[Proposition 5.2.4]{Johnston-2012a})
Let $X \in \Pos(A)$. Then $\gamma\I^{A}-X \in \cK^{\ast}$ if and only if $\|X\|_{\cK}\leq \gamma$.
\end{proposition}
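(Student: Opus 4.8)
The plan is to unfold both sides of the claimed equivalence into statements about the pairing $\langle X, \rho\rangle$ ranging over the trace-one elements of $\cK$, and then observe that the two statements are literally the same. First I would use the definition of the dual cone to rewrite $\gamma I^{A} - X \in \cK^{\ast}$ as $\langle \gamma I^{A} - X, k\rangle \geq 0$ for every $k \in \cK$, i.e.\ $\gamma\,\Tr(k) \geq \langle X, k\rangle$ for all $k \in \cK$, using $\langle I^{A}, k\rangle = \Tr(k)$. Since $\cK$ is a cone contained in $\Pos(A)$, every nonzero $k \in \cK$ is a strictly positive multiple of some $\rho \in \cK \cap \Density(A)$ (namely $\rho = k/\Tr(k)$, with $\Tr(k) > 0$ because $0 \neq k \in \Pos(A)$; cf.\ Proposition~\ref{prop:cone-to-density}), and the case $k = 0$ is vacuous. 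Hence the membership condition is equivalent to $\langle X, \rho\rangle \leq \gamma$ for all $\rho \in \cK \cap \Density(A)$.

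Second I would simplify the norm. Because $X \in \Pos(A)$ and $\rho \in \cK \cap \Density(A) \subseteq \Pos(A)$, the pairing $\langle X, \rho\rangle$ is nonnegative, so the absolute value in Definition~\ref{defn:cone-norms} is redundant and $\|X\|_{\cK} = \max\{\langle X, \rho\rangle : \rho \in \cK \cap \Density(A)\}$, the maximum being attained by compactness of $\cK \cap \Density(A)$. Consequently $\|X\|_{\cK} \leq \gamma$ is precisely the statement that $\langle X, \rho\rangle \leq \gamma$ for all $\rho \in \cK \cap \Density(A)$, which is exactly the condition extracted in the previous step. This establishes the equivalence in both directions at once.

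There is essentially no hard step here; the only points deserving a word of care are the homogeneity reduction from a general $k \in \cK$ to a normalized $\rho$ — this is where the hypothesis $\cK \subseteq \Pos(A)$ is used, guaranteeing $\Tr(k) > 0$ and hence $\mathrm{cone}(\cK \cap \Density(A)) = \cK$ by Proposition~\ref{prop:cone-to-density} — and the observation that positivity of both $X$ and the feasible states removes the modulus from the $\cK$-norm. One should also note the degenerate case $\gamma < 0$: then neither side holds, since $\|X\|_{\cK} \geq 0$ while, picking any $\rho \in \cK \cap \Density(A)$, $\langle \gamma I^{A} - X, \rho\rangle = \gamma - \langle X, \rho\rangle < 0$, so the treatment above is consistent and needs no separate argument.
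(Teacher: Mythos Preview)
Your proof is correct and follows essentially the same approach as the paper: unfold the dual-cone membership into inequalities $\langle \gamma I - X, k\rangle \geq 0$, use the cone property to normalize $k$ to a trace-one element, and identify the resulting condition with the definition of $\|X\|_{\cK}$. You are slightly more careful than the paper in noting why the absolute value drops and in handling the $\gamma<0$ edge case, but the argument is the same.
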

\begin{proof}
By definition of dual cone, $\gamma\I^{A}-X \in \cK^{\ast}$ if and only if $\ip{\gamma\I^{A}-X}{K} \geq 0$ for all $K \in \cK$. Note since it is a cone we can restrict just to $K \in \cK$ such that $\Tr[K] = 1$, so it is if and only if $\gamma - \ip{X}{\rho}$ for all $\rho \in \cK \cap \Density(A)$, which is true if and only if $\gamma \geq \|X\|_{\cK}$.
\end{proof}
From this it is straightforward to show the conic norms satisfy the following conic program \cite{Johnston-2012a}.
\begin{proposition}[\cite{Johnston-2012a}]
For $X \in Pos(A)$, $\|X\|_{\cK}$ is given by 
\begin{align}
    \emph{Primal:} &\hspace{1mm} \max\{\ip{\rho}{X}\;:\; \Tr(\rho) \leq 1 \;\&\; X \in \cK \} \label{eqn:KNormPrimal} \\
    \emph{Dual:} &\hspace{1mm} \min\{\gamma: \gamma \I \geq  Y + X \;\&\; Y \in \cK^{\ast} \} \label{eqn:KNormDual}
\end{align}
\end{proposition}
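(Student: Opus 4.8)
The plan is to prove the two characterizations \eqref{eqn:KNormPrimal} and \eqref{eqn:KNormDual} essentially separately, each by a short reduction, and then to observe that since both optimization values coincide with $\|X\|_{\cK}$, the two programs agree and strong duality is automatic. Almost all of the work has already been done in the preceding proposition (that $\gamma\I^{A}-X \in \cK^{\ast}$ iff $\|X\|_{\cK}\leq\gamma$), so no convex-analytic machinery is needed beyond, at worst, Slater's condition, and in fact that can be bypassed entirely.

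For the primal \eqref{eqn:KNormPrimal}: starting from Definition~\ref{defn:cone-norms}, I would first drop the absolute value, since $X \in \Pos(A)$ and $\rho \in \cK \subseteq \Pos(A)$ force $\ip{X}{\rho} \geq 0$. Next I would relax the normalization $\Tr(\rho) = 1$ to $\Tr(\rho) \leq 1$ without changing the optimum: any feasible $\rho$ of the relaxed problem with $\Tr(\rho) > 0$ rescales to $\rho/\Tr(\rho) \in \cK \cap \Density(A)$, which weakly increases the (nonnegative, linear) objective, while $\Tr(\rho) = 0$ forces $\rho = 0$ and objective value $0$. Since $\cK \cap \Density(A)$ is nonempty and compact (as already noted in Definition~\ref{defn:cone-norms}), the relaxed problem attains its maximum and that value is exactly $\|X\|_{\cK}$. (Note the constraint in \eqref{eqn:KNormPrimal} is to be read as $\rho \in \cK$.)

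For the dual \eqref{eqn:KNormDual}: by the preceding proposition, $\|X\|_{\cK} = \min\{\gamma : \gamma\I^{A} - X \in \cK^{\ast}\}$, and this minimum is attained since $\cK^{\ast}$ is closed (and the set is nonempty, as $\gamma\I - X \succeq 0 \in \cK^{\ast}$ for $\gamma$ large). It then remains only to check that the feasibility sets of the two minimizations agree, i.e. that $\gamma\I - X \in \cK^{\ast}$ iff there exists $Y \in \cK^{\ast}$ with $\gamma\I \succeq Y + X$. For the forward implication, take $Y := \gamma\I - X \in \cK^{\ast}$, so $\gamma\I - Y - X = 0 \succeq 0$. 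For the reverse, use that $\cK \subseteq \Pos(A)$ gives $\Pos(A) = \Pos(A)^{\ast} \subseteq \cK^{\ast}$; hence $\gamma\I - X - Y \in \Pos(A) \subseteq \cK^{\ast}$, and since $Y \in \cK^{\ast}$ and $\cK^{\ast}$ is a convex cone closed under addition, $\gamma\I - X = (\gamma\I - X - Y) + Y \in \cK^{\ast}$. Thus the dual value equals $\|X\|_{\cK}$ as well.

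Combining the two computations, \eqref{eqn:KNormPrimal} and \eqref{eqn:KNormDual} have the common value $\|X\|_{\cK}$, which is the assertion; strong duality emerges as a byproduct rather than an input. The only points requiring any care — none of them deep — are the disposal of the absolute value, the harmlessness of relaxing the trace constraint (and nonemptiness/compactness of $\cK \cap \Density(A)$, which is already recorded), and the inclusion $\Pos(A) \subseteq \cK^{\ast}$; I do not anticipate a genuine obstacle. If one instead wished to derive \eqref{eqn:KNormDual} from \eqref{eqn:KNormPrimal} directly by conic duality, one would cast \eqref{eqn:KNormPrimal} in the standard form \eqref{eqn:conicPrimal} with $a = X$, $b = 1$, $\phi(\rho) = \Tr(\rho)$ (relaxed to an inequality and folded into the cone), compute the adjoint, and invoke Slater's condition using an interior point of $\Density(A) \cap \cK$ as a strictly feasible point — but the route above is shorter and self-contained.
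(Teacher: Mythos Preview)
Your argument is correct and is precisely the ``straightforward'' derivation the paper signals: the paper does not spell out a proof here but merely cites \cite{Johnston-2012a} and remarks that the conic programs follow from the preceding proposition (that $\gamma\I - X \in \cK^{\ast}$ iff $\|X\|_{\cK} \leq \gamma$), which is exactly the reduction you carry out. Your observation that the constraint in \eqref{eqn:KNormPrimal} should read $\rho \in \cK$ rather than $X \in \cK$ is also right; this is a typo in the paper.
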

It is worth remarking for a general cone $\cK$ it is not clear that it would be multiplicative over tensor products, unlike the standard operator norm. In particular, for $\cK = \Ent_{k}$, these are known as the Operator Schmidt Norms \cite{Johnston-2012a}, denoted $\|\cdot\|_{S(k)}$. For consistency with the literature, we will use this wording throughout this sections. Moreover, in the case that $\cK = \Sep(A:B)$, the primal problem given above is in fact the intermediary term $\Lambda^{2}(X)$ where the geometric measure of entanglement (GME) is given by $G(X) =-\log(\Lambda^{2}(X))$. It is well known that $G(X)$ is not in general additive, or equivalently that $\Lambda^{2}(X)$ is not in general multiplicative, over tensor products. It also follows that one might expect various properties known to hold for the GME might generalize for the operator schmidt norms. Indeed, we can obtain a significant generalization of the additivity of GME for states that are elementwise non-negative in the computational basis from \cite{Zhu-2011a}.
\begin{theorem}\label{thm:multiplicativity-of-norms}
Let $\rho \in \Density(A_{1} \otimes B_{1})$ be elementwise non-negative (in the computational basis). Let $\sigma \in \Density(A_{2} \otimes B_{2})$ be arbitrary. Then for any $r \in \mathbb{N}$,
$$ \|\rho \otimes \sigma\|_{\Ent_{r}(A^{2}_{1}:B^{2}_{1})} = \|\rho\|_{\Sep(A_{1}:B_{1})} \|\sigma\|_{\Ent_{r}(A_{2}:B_{2})} \ . $$
\end{theorem}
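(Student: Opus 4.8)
The plan is to prove the two inequalities $\ge$ and $\le$ separately; only the second uses the elementwise non-negativity of $\rho$.

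\emph{The inequality $\|\rho\otimes\sigma\|_{\mathrm{Ent}_r}\ge\|\rho\|_{\mathrm{Sep}}\|\sigma\|_{\mathrm{Ent}_r}$.} First I would record the cone inclusion $\mathrm{Sep}(A_1{:}B_1)\otimes\mathrm{Ent}_r(A_2{:}B_2)\subseteq\mathrm{Ent}_r(A_1A_2{:}B_1B_2)$. Checking it on extreme points suffices: if $|a\rangle_{A_1}|b\rangle_{B_1}$ is a product vector and $|\psi\rangle_{A_2B_2}=\sum_{k=1}^{r}\sqrt{\mu_k}\,|e_k\rangle_{A_2}|f_k\rangle_{B_2}$ has Schmidt rank at most $r$, then $|a\rangle_{A_1}|b\rangle_{B_1}|\psi\rangle_{A_2B_2}=\sum_{k=1}^{r}\sqrt{\mu_k}\,(|a\rangle|e_k\rangle)_{A_1A_2}(|b\rangle|f_k\rangle)_{B_1B_2}$ is a Schmidt decomposition across the combined bipartition with at most $r$ terms, so it lies in $\mathrm{Ent}_r$; the general case follows by taking non-negative combinations and using that $\mathrm{Ent}_r$ is a cone. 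Then, choosing (by compactness of $\mathcal K\cap\Density$) optimizers $\tau_1\in\mathrm{Sep}(A_1{:}B_1)\cap\Density$ and $\tau_2\in\mathrm{Ent}_r(A_2{:}B_2)\cap\Density$ for the primal programs \eqref{eqn:KNormPrimal} of $\|\rho\|_{\mathrm{Sep}}$ and $\|\sigma\|_{\mathrm{Ent}_r}$, the product $\tau_1\otimes\tau_2$ is feasible for the primal of $\|\rho\otimes\sigma\|_{\mathrm{Ent}_r}$ and attains $\langle\rho,\tau_1\rangle\langle\sigma,\tau_2\rangle=\|\rho\|_{\mathrm{Sep}}\|\sigma\|_{\mathrm{Ent}_r}$. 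This direction never uses non-negativity.

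\emph{The inequality $\le$.} Since the primal objective is linear and $\mathrm{Ent}_r(A_1A_2{:}B_1B_2)\cap\Density$ is the compact convex hull of pure states of Schmidt rank at most $r$, it suffices to bound $\langle\Psi|\rho\otimes\sigma|\Psi\rangle$ for an arbitrary unit vector $|\Psi\rangle$ of Schmidt rank $\le r$ across $(A_1A_2){:}(B_1B_2)$. Two structural facts, valid for every $r\ge1$, organize the estimate: (a) the reduced operator $\Psi_{A_2B_2}:=\Tr_{A_1B_1}(|\Psi\rangle\langle\Psi|)$ lies in $\mathrm{Ent}_r(A_2{:}B_2)$, by the monotonicity of $\mathrm{Ent}_r$ under separable CP maps (property (2) following Definition~\ref{defn:ent-rank}) applied to $\Tr_{A_1}\otimes\Tr_{B_1}$; (b) each slice $|\Psi_{a_1b_1}\rangle:=(\langle a_1b_1|_{A_1B_1}\otimes I_{A_2B_2})|\Psi\rangle$ has Schmidt rank at most $r$ across $A_2{:}B_2$, by the same regrouping of a Schmidt decomposition as above. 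Writing $|\Psi\rangle=\sum_{a_1b_1}|a_1b_1\rangle_{A_1B_1}|\Psi_{a_1b_1}\rangle_{A_2B_2}$, diagonalizing $\sigma=\sum_\nu|\sigma_\nu\rangle\langle\sigma_\nu|$, and using the elementwise non-negativity of $\rho$ to pass to absolute values, one obtains
$$\langle\Psi|\rho\otimes\sigma|\Psi\rangle=\sum_\nu\langle h^{(\nu)}|\rho|h^{(\nu)}\rangle\ \le\ \sum_\nu\langle\,|h^{(\nu)}|\,,\,\rho\,|h^{(\nu)}|\,\rangle,\qquad h^{(\nu)}_{a_1b_1}:=\langle\sigma_\nu|\Psi_{a_1b_1}\rangle,$$
with $\sum_\nu\big\|\,|h^{(\nu)}|\,\big\|_{2}^{2}=\sum_\nu\langle\sigma_\nu|\Psi_{A_2B_2}|\sigma_\nu\rangle=\langle\sigma,\Psi_{A_2B_2}\rangle\le\|\sigma\|_{\mathrm{Ent}_r}$ by fact~(a) and $\Psi_{A_2B_2}\in\Density$. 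It then remains to show $\sum_\nu\langle\,|h^{(\nu)}|\,,\,\rho\,|h^{(\nu)}|\,\rangle\le\|\rho\|_{\mathrm{Sep}}\sum_\nu\big\|\,|h^{(\nu)}|\,\big\|_{2}^{2}$, and this is the step that must genuinely exploit that $\rho$ is doubly non-negative (positive semidefinite \emph{and} elementwise non-negative) rather than merely elementwise non-negative: for a non-negative but entangled vector $|v\rangle$ one need not have $\langle v|\rho|v\rangle\le\|\rho\|_{\mathrm{Sep}}$, so the $|h^{(\nu)}|$-terms cannot be bounded individually. Here I would import the technique of \cite{Zhu-2011a}, which handles exactly this for the $\mathrm{Sep}$ cone ($r=1$): a doubly non-negative $\rho$ attains its maximal product overlap at a non-negative \emph{product} vector, and a Perron--Frobenius monotonicity lets one split off an $A_1{:}B_1$-product optimizer; facts~(a)--(b) are what allow this to be carried through while the remaining tensor factor still only sees Schmidt-rank-$\le r$ probes. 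Combining the two directions yields the equality.

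\emph{Main obstacle.} The crux, and the only genuinely new step, is making the non-negativity argument of \cite{Zhu-2011a} go through for general $r\ge2$: for $r=1$ the relevant probe states in $\|\rho\otimes\sigma\|_{\mathrm{Sep}}$ are honest product vectors, so the reduction to non-negative product optimizers applies verbatim, whereas for $r\ge2$ the probes are only Schmidt-rank-$\le r$ and one must charge $\rho$ only the rank-$1$ norm $\|\rho\|_{\mathrm{Sep}}$, not $\lambda_{\max}(\rho)$, consistently with the $\mathrm{Ent}_r$ structure on the other side. A possible alternative route is to exhibit an explicit dual certificate in $\mathrm{Ent}_r^{\ast}(A_1A_2{:}B_1B_2)$ of value $\|\rho\|_{\mathrm{Sep}}\|\sigma\|_{\mathrm{Ent}_r}$, but the naive tensored certificate is circular --- its membership in $\mathrm{Ent}_r^{\ast}$ is equivalent to the inequality one wants --- so it too would have to be constructed out of the non-negativity of $\rho$. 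That the hypothesis is essential is already visible from the failure of the identity for a generic entangled $\rho$, e.g.\ a maximally entangled state.
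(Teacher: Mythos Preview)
You have correctly isolated the crux: after peeling off $\|\sigma\|_{\Ent_r}$, one is left with a quadratic form $\langle w|\rho|w\rangle$ in a unit vector $|w\rangle\in A_1\otimes B_1$ with non-negative entries, and one would need $\langle w|\rho|w\rangle\le\|\rho\|_{\Sep}$ rather than merely $\le\lambda_{\max}(\rho)$. This step does not just resist the Perron--Frobenius bookkeeping --- it is false, and the statement fails for every $r\ge 2$. Take $A_1\cong B_1\cong A_2\cong B_2\cong\mathbb{C}^2$, set $\ket{\phi}:=\tfrac{1}{\sqrt2}(\ket{00}+\ket{11})$, and let $\rho=\dyad{\phi}_{A_1B_1}$ (elementwise non-negative), $\sigma=\tfrac14 I_{A_2B_2}$, $r=2$. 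Then $\|\rho\|_{\Sep}=\tfrac12$ and $\|\sigma\|_{\Ent_2}=\tfrac14$, so the right-hand side is $\tfrac18$. But the test vector $\ket{\Psi}=\ket{\phi}_{A_1B_1}\ket{00}_{A_2B_2}$ has Schmidt rank $2$ across $A_1A_2{:}B_1B_2$ (it reads $\tfrac{1}{\sqrt2}\ket{00}_{A_1A_2}\ket{00}_{B_1B_2}+\tfrac{1}{\sqrt2}\ket{10}_{A_1A_2}\ket{10}_{B_1B_2}$) and gives $\langle\Psi|\rho\otimes\sigma|\Psi\rangle=\tfrac14$; since $\langle\Psi'|\rho\otimes\sigma|\Psi'\rangle=\tfrac14\bra{\phi}\Psi'_{A_1B_1}\ket{\phi}\le\tfrac14$ for every unit $\ket{\Psi'}$, the left-hand side is exactly $\tfrac14$. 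Running your own decomposition on this $\ket{\Psi}$ produces $|h^{(\nu)}\rangle\propto\ket{00}+\ket{11}$, an entangled non-negative vector on which $\rho$ attains $1$, not $\tfrac12$.

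The paper's proof breaks at the same place but obscures it: the asserted expansion of $\ket{\psi^\star}$ with coefficients $a_jb_k$ that \emph{factor} across the two indices is precisely what would make the residual vector $v=\sum_{j,k}a_jb_k\ket{j,k}$ a product state and hence deliver $\langle v|\rho|v\rangle\le\|\rho\|_{\Sep}$. No such factorization is available in general once $r\ge 2$. The argument of \cite{Zhu-2011a} works for $r=1$ exactly because the optimizer is then a genuine product vector $\ket{\alpha}_{A_1A_2}\ket{\beta}_{B_1B_2}$, and expanding each tensor factor in its own $A_1$- (resp.\ $B_1$-) basis gives $a_j=\|(\bra{j}_{A_1}\otimes I)\ket{\alpha}\|$ and $b_k=\|(\bra{k}_{B_1}\otimes I)\ket{\beta}\|$, which do separate. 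Your instinct that the $|h^{(\nu)}|$-terms ``cannot be bounded individually'' was exactly right; the obstruction is structural to $r\ge 2$, not a missing trick.
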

The proof is contained in the Appendix. We note that in the original work \cite{Zhu-2011a}, they in effect consider a generalized set of norms that also considers multipartite entanglement. As we are focused on bipartite entanglement throughout, we have don't provide such a generalization, though for the right choice of cone it ought to be straightforward.

It is then also clear by \eqref{eqn:DmaxKDual} that $D_{\max}^{\cK}(P^{A}||I^{A}) = \log \|P\|_{\cK}$ for $P \in \Pos(A)$. It follows that our choice of $\pi$ was fortuitous to make use of this. More generally, we have the following proposition to simplify calculations.
\begin{proposition}\label{prop:K-Dmax-to-cone}
Let $\cK \subset \Pos(A)$ be closed and convex. Let $P,Q \in \Pos(A)$. Consider the CP map $\Phi(\cdot) := Q^{-1/2} \cdot Q^{-1/2}$ where ${\cdot}^{-1/2}$ is the Moore-Penrose psuedo-inverse. If $\Phi: \cK(A) \to \cK(A)$, then $D_{\max}^{\cK}(P||Q) = \log\|Q^{-1/2}PQ^{-1/2}\|_{\cK}$. 
\end{proposition}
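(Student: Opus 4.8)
The plan is to reduce the claim to the already-established identity $D_{\max}^{\cK}(P'\|\I) = \log\|P'\|_{\cK}$ (which follows from the dual program \eqref{eqn:DmaxKDual} together with Definition \ref{defn:cone-norms}) applied to the transformed operator $P' := Q^{-1/2}PQ^{-1/2}$, using the data-processing inequality for $D_{\max}^{\cK}$ (Proposition \ref{prop:DPI}) to move between $D_{\max}^{\cK}(P\|Q)$ and $D_{\max}^{\cK}(P'\|\I)$. First I would record that $\Phi(Q) = Q^{-1/2}QQ^{-1/2} = \Pi_Q$, the projector onto the support of $Q$; on the support of $Q$ this behaves like $\I$, and since $P \ll Q$ is the relevant regime (both the primal and dual programs only see $P$ through its action on the support of $Q$), we may harmlessly treat $\Phi(Q)$ as $\I$ restricted to the appropriate subspace. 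Then $\Phi(P) = Q^{-1/2}PQ^{-1/2} = P'$, so $D_{\max}^{\cK}(P'\|\I) = D_{\max}^{\cK}(\Phi(P)\|\Phi(Q))$.

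Next I would apply Proposition \ref{prop:DPI} in the direction $D_{\max}^{\cK}(\Phi(P)\|\Phi(Q)) \le D_{\max}^{\cK}(P\|Q)$: this requires $\Phi^{\ast}:\cK(A)\to\cK(A)$, and since $\Phi(\cdot) = Q^{-1/2}\cdot Q^{-1/2}$ is self-adjoint with respect to the Hilbert–Schmidt inner product, the hypothesis $\Phi:\cK\to\cK$ is exactly what is needed. For the reverse inequality I would use the CP map $\Psi(\cdot) := Q^{1/2}\cdot Q^{1/2}$, which is a one-sided inverse of $\Phi$ on the support of $Q$: $\Psi(\Phi(X)) = Q^{1/2}Q^{-1/2}XQ^{-1/2}Q^{1/2} = \Pi_Q X\Pi_Q = X$ whenever $X$ is supported on the support of $Q$. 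Applying data processing with $\Psi$ gives $D_{\max}^{\cK}(P\|Q) = D_{\max}^{\cK}(\Psi(P')\|\Psi(\I_{\supp Q})) \le D_{\max}^{\cK}(P'\|\I)$, provided $\Psi$ also maps $\cK$ to $\cK$. Combining the two inequalities yields $D_{\max}^{\cK}(P\|Q) = D_{\max}^{\cK}(P'\|\I) = \log\|Q^{-1/2}PQ^{-1/2}\|_{\cK}$.

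The main obstacle is the bookkeeping around the support of $Q$ when $Q$ is not full rank: I need $\Phi$ and $\Psi$ to genuinely invert each other, which only holds on $\supp Q$, and I need to check that the hypothesis "$\Phi:\cK\to\cK$" also delivers "$\Psi:\cK\to\cK$" — this is not automatic, since $\Psi$ is not literally the inverse map. The cleanest fix is to assume (as is implicit, given $P\ll Q$ is the setting in which $D_{\max}^{\cK}$ is a conic program) that $Q$ is invertible on the relevant space, or to restrict all operators to $\supp Q$ from the outset; then $\Phi$ is invertible with inverse $\Psi$, and $\Phi:\cK\to\cK$ forces $\Psi = \Phi^{-1}:\cK\to\cK$ as well (a bijective linear map taking a cone into itself whose inverse also must, since $\Phi(\Phi^{-1}(K)) = K \in \cK$ only constrains $\Phi$, so one actually needs $\Phi$ to be onto $\cK$; I would handle this by noting $\Phi$ restricted to $\supp Q$ is a linear isomorphism and $\Phi(\cK)\subseteq\cK$ together with $\Phi^{-1}(\cK)\subseteq\cK$ both follow once we observe $\Phi$ and its inverse are of the same algebraic form $R\cdot R$ with $R$ positive). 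Alternatively, and more robustly, I would avoid the reverse direction entirely by instead directly manipulating the dual program \eqref{eqn:DmaxKDual}: $\gamma Q \succeq_{\cK^\ast} P$ iff $\gamma Q - P \in \cK^\ast$ iff $\langle \gamma Q - P, K\rangle \ge 0$ for all $K\in\cK$ iff (substituting $K = Q^{-1/2}K'Q^{-1/2}$, using that $\Phi:\cK\to\cK$ is onto) $\langle \gamma\I - Q^{-1/2}PQ^{-1/2}, K'\rangle \ge 0$ for all $K'\in\cK$, i.e. $\gamma\I - P' \in \cK^\ast$, which by the Johnston proposition is equivalent to $\|P'\|_{\cK}\le\gamma$; taking the infimum over $\gamma$ and then the logarithm finishes the proof. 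This substitution-based argument is the one I would actually write, since it sidesteps the invertibility subtleties by working directly with the cone membership condition.
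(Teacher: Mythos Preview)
Your core approach coincides with the paper's: both use that $\Phi$ is self-adjoint, so $\Phi:\cK\to\cK$ forces $\Phi:\cK^{*}\to\cK^{*}$, and then apply $\Phi$ to the dual feasibility condition $\gamma Q - P \in \cK^{*}$ to get $\gamma I - Q^{-1/2}PQ^{-1/2}\in\cK^{*}$. Minimizing over $\gamma$ gives $\log\|Q^{-1/2}PQ^{-1/2}\|_{\cK}\le D_{\max}^{\cK}(P\|Q)$. The paper's proof in fact stops at exactly this point and asserts the equality.

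You are right to flag the reverse inequality as the nontrivial step. It requires either $\Psi(\cdot)=Q^{1/2}\cdot Q^{1/2}:\cK\to\cK$ (for your DPI route) or, equivalently, $\Phi$ onto $\cK$ (for your substitution route) --- neither of which follows from the stated hypothesis $\Phi:\cK\to\cK$. Your attempted patch, that ``$\Phi$ and its inverse are of the same algebraic form $R\cdot R$ with $R$ positive'' and therefore both preserve $\cK$, is a non-sequitur: nothing about that algebraic form forces preservation of an arbitrary closed convex subcone of $\Pos(A)$. Your substitution argument is honest about needing surjectivity, but that is an extra assumption, not a consequence.

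In short: your one-direction argument is correct and is exactly what the paper does; you have correctly identified a gap that the paper's proof shares; but none of your proposed fixes close it without strengthening the hypothesis. (In the paper's actual applications --- cones cut out by an Abelian symmetry, or $\Ent_k$ under local invertible conjugation --- the inverse map $\Psi$ does preserve $\cK$, so the equality is fine there; it is the general statement that overreaches.)
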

\begin{proof}
Note $\Phi$ is self-dual. Therefore, if $\Phi: \cK \to \cK$, then $\Phi^{\ast}: \cK \to \cK$. As such, if $2^{\gamma} Q - P \in \cK$, then $\cK^{\ast} \ni \Phi(\gamma Q - P) = 2^{\gamma} I - Q^{-1/2}PQ^{-1/2}$. By minimizing over $\gamma$, one obtains $\log\|Q^{-1/2}PQ^{-1/2}\|_{\cK}$.
\end{proof}
We note in the case that $\cK = \Pos$, this is a well known result where it is always the case as $Q^{-1/2}\cdot Q^{-1/2}$ is completely positive and $\|\cdot\|_{\Pos} = \|\cdot\|_{\infty}$. The difficulty is when $\cK \subsetneq \Pos$ as the entanglement properties of neither $Q^{-1/2}$ nor $Q^{-1/2}PQ^{-1/2}$ are straightforward. Regardless, it allows us to see certain cases where the problem is (more) tractable. 

\paragraph*{Cones Pertaining to (Abelian) Symmetry}
The first case shows that if the relevant property and cone is a symmetry, then we recover everything for which we could hope. In effect, this is not surprising as satisfying a symmetry implies commuting operators, but it does provide evidence it could be the case more generally.
\begin{theorem}\label{thm:information-theory-with-symmetries}
Let $\cK \subseteq \Pos(A)$ such that $X \in \cK$ if and only if $\cG(X) = X$ where $\cG$ is a twirling map. Let $P,Q \in \cK$. Then
\begin{enumerate}[itemsep=4pt]
    \item $D_{\max}^{\cK}(P||Q) = D_{\max}(P||Q)$.
    \item If $\cK^{(n)}(A^{\otimes n}) := \{X \in \Pos(A^{\otimes n}): \cG^{\otimes n}(X) = X\}$, $D_{\max}^{\cK^{(n)}}(P^{\otimes n}||Q^{\otimes n}) = nD_{\max}(P||Q)$ 
    \item (AEP) For $\ve \in (0,1)$,
    $$\lim_{n \to \infty} \left[\frac{1}{n} D_{\max}^{\cK,\ve}(P^{\otimes n}||Q^{\otimes n})\right] = D(P||Q)\ . $$
    \item If $A = A \otimes B$ and the cone is defined by a local symmetry, then
    $H_{\min}^{\cK}(A|B)_{P} = H_{\min}(\cK)$ and for $\ve \in (0,1),$
    $$\lim_{n \to \infty} \left[\frac{1}{n}H^{\cK,\ve}_{\min}(A^{n}|B^{n})_{P}\right] = H(A|B)_{P} \ . $$
\end{enumerate}
\end{theorem}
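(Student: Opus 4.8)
The strategy is to reduce every statement to commuting operators, where the restricted conic quantities collapse onto their unrestricted counterparts. The key structural fact is that if $P, Q \in \cK$ with $\cK = \{X \in \Pos : \cG(X) = X\}$ for a twirling map $\cG$, then $P$ and $Q$ both lie in the fixed-point algebra $\cM := \{X : \cG(X) = X\}$, which is a (finite-dimensional) von Neumann algebra; in particular $P$ and $Q$ can be block-diagonalized simultaneously with respect to the minimal central projectors $\{\Pi_\lambda\}$ of $\cM$. For Item 1, I would argue via the primal program \eqref{eqn:DmaxKPrimal}: since $\cG$ is self-adjoint and trace-preserving, for any feasible $X \in \Pos$ with $\langle Q, X\rangle \le 1$ the twirled operator $\cG(X)$ is feasible for the $\cK$-restricted problem and achieves the same objective value $\langle P, X\rangle = \langle P, \cG(X)\rangle$ (using $P \in \cK$ and self-adjointness of $\cG$). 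Hence the $\cK$-restricted primal optimum equals the unrestricted one, i.e. $D^{\cK}_{\max}(P\|Q) = D_{\max}(P\|Q)$. (Alternatively, invoke Proposition \ref{prop:K-Dmax-to-cone}: the map $Q^{-1/2}\cdot Q^{-1/2}$ preserves $\cK$ because $Q^{\pm 1/2} \in \cM$ and $\cM$ is an algebra, so $D^{\cK}_{\max}(P\|Q) = \log\|Q^{-1/2}PQ^{-1/2}\|_{\cK}$, and $Q^{-1/2}PQ^{-1/2} \in \cM$ commutes with all $\Pi_\lambda$, whence $\|\cdot\|_{\cK} = \|\cdot\|_\infty$.)

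For Item 2, note $\cG^{\otimes n}$ is again a twirling map and $P^{\otimes n}, Q^{\otimes n} \in \cK^{(n)}$, so Item 1 applies to give $D^{\cK^{(n)}}_{\max}(P^{\otimes n}\|Q^{\otimes n}) = D_{\max}(P^{\otimes n}\|Q^{\otimes n})$, and then additivity of the \emph{unrestricted} $D_{\max}$ (a standard fact, $D_{\max}(P_1 \otimes P_2 \| Q_1 \otimes Q_2) = D_{\max}(P_1\|Q_1) + D_{\max}(P_2\|Q_2)$) finishes it. Item 3 follows from Item 2 together with the unrestricted max-divergence AEP \eqref{eqn:Dmax-AEP}: one needs $D^{\cK,\ve}_{\max}(P^{\otimes n}\|Q^{\otimes n}) = D^{\ve}_{\max}(P^{\otimes n}\|Q^{\otimes n})$ for all $n$. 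For this I would show the smoothing ball can be taken inside $\cK^{(n)}$: since $P^{\otimes n}$ is $\cG^{\otimes n}$-invariant and the purified distance is contractive under the CPTP map $\cG^{\otimes n}$, for any $\wt\rho \in \Bve(P^{\otimes n})$ the twirled $\cG^{\otimes n}(\wt\rho) \in \Bve(P^{\otimes n}) \cap \cK^{(n)}$ and, by the Item-1 argument applied to $D^{\cK^{(n)}}_{\max}(\cdot \| Q^{\otimes n})$, achieves no larger a value; hence the restricted smoothed quantity equals the unrestricted one, and \eqref{eqn:Dmax-AEP} gives the limit $D(P\|Q)$.

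For Item 4, the cone comes from a \emph{local} symmetry $\cG(\cdot) = \int (U_A \otimes U_B)\cdot(U_A\otimes U_B)^\ast\,dU$, so Corollary \ref{corr:local-symmetry-restriction} applies: $\cG$ satisfies the data-processing hypotheses of Proposition \ref{prop:DPI} with $\phi(\cdot) = \int U_B \cdot U_B^\ast\,dU$, and the primal optimizer of $\exp(-H^{\cK}_{\min}(A|B))$ may be taken $\cG$-invariant. Combined with $P \in \cK$ (so $P$ is $\cG$-invariant), the same "twirl-the-feasible-point" argument shows the $\cK$-restricted primal \eqref{eqn:HminKPrimal} has the same optimum as the $\cK = \Pos$ one, giving $H^{\cK}_{\min}(A|B)_P = H_{\min}(A|B)_P$; for the smoothed version I again use that $\cG$ contracts purified distance (Theorem \ref{thm:optimizing-with-symmetry} / Corollary \ref{corr:local-symmetry-restriction}) to restrict the smoothing ball into $\cK$-invariant states, yielding $H^{\cK,\ve}_{\min} = H^{\ve}_{\min}$ for every $n$, and then the standard fully-quantum min-entropy AEP \eqref{eqn:FQAEP} gives the limit $H(A|B)_P$. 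The main obstacle is the careful bookkeeping in Items 3 and 4 that the \emph{smoothing} operation commutes with the reduction to $\cK$: one must verify the twirled smoothed state stays both in the $\ve$-ball (contractivity of purified distance under the CPTP twirl) and in $\cK$ (invariance), and that twirling does not decrease $H^{\cK}_{\min}$ / does not increase $D^{\cK}_{\max}$ — which is exactly the content of Proposition \ref{prop:DPI} and Corollary \ref{corr:local-symmetry-restriction}, so modulo invoking those the argument is clean.
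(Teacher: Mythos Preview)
Your proposal is correct and follows essentially the same architecture as the paper: reduce to twirl-invariant operators, show the restricted quantities coincide with the unrestricted ones there, and then invoke the standard AEPs. For Items 2--4 your argument matches the paper's almost verbatim (the paper also restricts the smoothing ball via contractivity of purified distance under $\cG^{\otimes n}$, then applies Item 1 inside the invariant set, and for Item 4 simply cites Corollary \ref{corr:local-symmetry-restriction}).

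The one notable difference is your primary route for Item 1. The paper proves Item 1 via Proposition \ref{prop:K-Dmax-to-cone}: since any $X \in \cK$ decomposes as $\sum_k \alpha_k \Pi_k$ over the minimal projectors of the commutant, the map $Q^{-1/2}\cdot Q^{-1/2}$ preserves $\cK$, and then $\|Q^{-1/2}PQ^{-1/2}\|_{\cK} = \|Q^{-1/2}PQ^{-1/2}\|_\infty$ because the top eigenvector lies in one of the $\Pi_k$. Your ``twirl the primal feasible point'' argument --- $\langle P, X\rangle = \langle P, \cG(X)\rangle$ and $\langle Q, \cG(X)\rangle = \langle Q, X\rangle \le 1$ by self-adjointness of $\cG$ and invariance of $P,Q$ --- is a cleaner and more direct alternative that avoids the cone-norm detour entirely; it also does not rely on the Abelian decomposition $X = \sum_k \alpha_k \Pi_k$ that the paper's proof implicitly uses, so it works verbatim for non-Abelian twirls as well. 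You list the paper's approach as your alternative, so both are covered.
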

\begin{proof}
Any $X \in \cK$ may be written as $X = \sum_{k} \alpha_{k} \Pi_{k}$ where $\{\Pi_{k}\}_{k}$ are the minimal projectors, which are the extreme points of the Hermitian matrices under the twirling. It's then clear that $Q^{-1/2} \cdot Q^{-1/2}: \cK \to \cK$. Moreover, the maximum eigenvalue of $Q^{-1/2}PQ^{-1/2}$ will be an eigenvector of one of the projectors. Therefore, $\|Q^{-1/2}PQ^{-1/2}\|_{\cK} = \|Q^{-1/2}PQ^{-1/2}\|_{\infty}$. This completes the first item. The second item is just noting the $n-$fold case uses the $n-$fold minimal projectors. Lastly, the third item is as follows. Recall $D_{\max}^{\cK,\ve}(\rho||\sigma) = \underset{\wt{\rho} \in \Bve(\rho)}{\min}D^{\cK}_{\max}(\wt{\rho}||\sigma)$.  We can convert $P,Q$ to density matrices by the normalization property (Prop. \ref{prop:normalization}). Then note $\cG^{\otimes n}(\wt{\rho}) \in \Bve(\rho^{\otimes n})$ by data processing. By data-processing (Proposition \ref{prop:DPI}), we can conclude the optimal for smoothed is in this twirled set for each $n$ both in $\cK^{(n)}$ and $\cK^{(n)} = \Pos$, so they must be the same by Item 1. As the regularized smoothed max-relative entropy converges to $D(\rho||\sigma)$ \cite{Tomamichel-2015}, this completes the proof of Item 3. Item 4 follows from Corollary \ref{corr:local-symmetry-restriction}.
\end{proof}

\paragraph*{Entanglement Rank Cones - Operator Schmidt Norms}
In contrast to the previous setting, for the entanglement rank cones $\Ent_{k}$ whose norms are referred to as the operator Schmidt norms and denoted as $S(k)$-norms in \cite{Johnston-2012a}, nothing is well behaved. The first issue is that the structure to guarantee a CP map $\Phi$ satisfies $\Phi: \Ent_{k} \to \Ent_{k}$ for a fixed $k$, is unclear (note that Propositions \ref{prop:ent-rank-channel-property} and \ref{prop:k-ent-generating} are stronger and weaker properties respectively). Therefore, we have to rely on the too-strong property and use the rank of $Q$.
\begin{proposition}
Let $P \in \Pos(A \otimes B)$. Let $\rank(Q) = k$. Then for $n \in \mbb{N}$,
$$D_{\max}^{\Ent_{kn}}(P^{\otimes n}||Q^{\otimes n}) = \log\|(Q^{-1/2}PQ^{-1/2})^{\otimes n}\|_{S(kn)} \ . $$
\end{proposition}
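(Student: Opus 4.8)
The plan is to reduce the statement to a single invocation of Proposition~\ref{prop:K-Dmax-to-cone} applied to the $n$-copy setting. Concretely, I would take the cone to be $\cK = \Ent_{kn}\big((A\otimes B)^{\otimes n}\big)$ (with respect to the bipartition $A^{\otimes n}:B^{\otimes n}$), and the two operators to be $P^{\otimes n}$ and $Q^{\otimes n}$. Since $(Q^{\otimes n})^{-1/2} = (Q^{-1/2})^{\otimes n}$ for the Moore--Penrose pseudo-inverse, one has $(Q^{\otimes n})^{-1/2}P^{\otimes n}(Q^{\otimes n})^{-1/2} = (Q^{-1/2}PQ^{-1/2})^{\otimes n}$, so Proposition~\ref{prop:K-Dmax-to-cone} yields $D_{\max}^{\Ent_{kn}}(P^{\otimes n}\|Q^{\otimes n}) = \log\big\|(Q^{-1/2}PQ^{-1/2})^{\otimes n}\big\|_{\Ent_{kn}}$, which is the asserted identity once we recall that the $\Ent_{kn}$-operator norm is by definition the operator Schmidt norm $\|\cdot\|_{S(kn)}$. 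Before this, I would dispose of the (harmless) kernel issue by assuming $P\ll Q$, so that $Q^{1/2}Q^{-1/2}$ is the support projector of $Q$ and $Q^{1/2}(Q^{-1/2}PQ^{-1/2})Q^{1/2}=P$; by the normalization property (Proposition~\ref{prop:normalization}) we may also take $P,Q$ normalized.

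The entire content of the argument is then checking the hypothesis of Proposition~\ref{prop:K-Dmax-to-cone}: that the completely positive map $\Phi_n(\cdot) := (Q^{\otimes n})^{-1/2}(\cdot)(Q^{\otimes n})^{-1/2}$ maps $\Ent_{kn}\big((A\otimes B)^{\otimes n}\big)$ into itself. This is where the rank of $Q$ enters. The intended route is the ``too-strong'' sufficient condition already flagged before the statement: a completely positive map all of whose Kraus operators have rank at most $r$ is $r$-superpositive and hence, by Proposition~\ref{prop:ent-rank-channel-property}, sends any positive operator to one of Schmidt number at most $r$, so in particular preserves the cone $\Ent_{r}$. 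The map $\Phi_n$ has a single Kraus operator, $(Q^{-1/2})^{\otimes n}$, whose rank is governed by $\rank(Q)=k$, and the plan is to match this rank bound to the parameter $kn$ of the target cone while confirming that the conjugation is compatible with the bipartition $A^{\otimes n}:B^{\otimes n}$ (rather than only with the ``global'' system $(A\otimes B)^{\otimes n}$) -- writing $Q^{-1/2}$ through its eigendecomposition / operator-Schmidt decomposition and propagating rank bounds through $(X_1^{\mathrm T}\otimes X_0)\vec(Y) = \vec(X_0 Y X_1)$ is the natural computational device here.

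I expect this cone-preservation step to be the main obstacle, and it should be carried out carefully: conjugation by a general rank-$k$ positive operator on $A\otimes B$ need not send low-Schmidt-number operators to low-Schmidt-number operators, since the support of $Q^{-1/2}$ is only a $k$-dimensional subspace and such subspaces can contain highly entangled vectors, so one cannot simply appeal to the separable-CP-map monotonicity of the $\Ent_r$ cones (Monotonicity under Separable CP Maps). This is precisely the gap acknowledged in the text preceding the proposition, and it is the reason one is forced to use the crude quantity $\rank(Q)$ rather than any finer operator-Schmidt structure of $Q$; the delicate bookkeeping linking $\rank(Q)=k$, the Kraus-rank of $(Q^{-1/2})^{\otimes n}$, and the cone index $kn$ is the part of the proof that requires the most attention (and, in particular, one should sanity-check the bound on small cases such as $Q$ of rank one, where $Q^{-1/2}$ is a scaled rank-one projector).
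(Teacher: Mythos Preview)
Your plan is exactly the paper's: apply Proposition~\ref{prop:K-Dmax-to-cone} with $\cK=\Ent_{kn}$ on the $n$-copy space, reduce to the operator Schmidt norm of $(Q^{-1/2}PQ^{-1/2})^{\otimes n}$, and justify the cone-preservation hypothesis by controlling the Kraus rank of $(Q^{\otimes n})^{-1/2}$ through $\rank(Q)=k$ and invoking Proposition~\ref{prop:ent-rank-channel-property}. The paper's entire proof is two sentences doing precisely this.

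The concern you raise about the bipartition is well placed, and the paper does not resolve it any more explicitly than you do: it simply records that the rank of $(Q^{\otimes n})^{-1/2}$ is controlled by $k$ and $n$ and appeals to Proposition~\ref{prop:ent-rank-channel-property}, without spelling out how the Kraus-rank criterion (which, as you note, is stated for the reference--output cut) transfers to the internal $A^{\otimes n}\!:\!B^{\otimes n}$ cut. Your ``delicate bookkeeping'' paragraph is a more careful statement of exactly the step the paper compresses into a single citation; you have not missed any additional idea that the paper supplies.
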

\begin{proof}
If $\rank(Q) = k$, then $\rank({Q^{\otimes n}}^{-1/2}) = nk$. Therefore, by Proposition \ref{prop:ent-rank-channel-property}, ${Q^{\otimes n}}^{-1/2} P^{\otimes n} {Q^{\otimes n}}^{-1/2} = (Q^{-1/2}PQ^{-1/2})^{\otimes n} \in \Ent_{kn}(A:B)$.
\end{proof}
First note that for the above proof to be non-trivial, $P \ll Q$, which means $\rank(P) \leq k$ as well. Therefore, one can see this is quite restrictive, though one might expect it is significantly more general if $Q \in \Ent_{k}$ were to generally imply $Q^{-1/2}PQ^{-1/2} \in \Ent_{k}$. Beyond this, note we have not been able to prove additivity. This is because we would need, for the $n =2$ case, that $(\gamma^{\star},Y^{\star})$ in \eqref{eqn:KNormDual} to be such that ${Y^{\star}}^{\otimes 2} - Y^{\star} \otimes X - X \otimes Y^{\star} \in \cK^{\ast}$ where $X = Q^{-1/2}PQ^{-1/2}$. This seems unlikely in general. We can however prove additivity holds in the special case of $\cK = \PPT$, where we can write it in terms of a simple SDP.
\begin{proposition}\label{prop:PPT-max-relative-entropy-SDP}
$D^{\PPT}_{\max}(P||Q)$ is given by $\log(\alpha)$ where $\alpha$ is given by
\begin{center}
    \begin{miniproblem}{0.45}
      \emph{Primal}\\[-5mm]
      \begin{equation}
      \begin{aligned}\label{eqn:PPTMaxPrimal}
        \text{maximize:}\quad & \ip{P}{X} \\
        \text{subject to:}\quad & \langle Q,X \rangle \leq 1 \\
        & X, X^{\Gamma} \in \Pos(A) \ ,
      \end{aligned}
      \end{equation}
    \end{miniproblem}
    \begin{miniproblem}{0.45}
      \emph{Dual}\\[-5mm]
      \begin{equation}
      \begin{aligned}\label{eqn:PPTMaxKDual}
        \text{minimize:}\quad & \gamma  \\
        \text{subject to:}\quad & \gamma Q \geq P + Y^{\Gamma} \\
        & Y \geq 0 \ ,
      \end{aligned}
      \end{equation}
    \end{miniproblem}
 \end{center}
where $\cdot^{\Gamma}$ is the partial transpose map.
\end{proposition}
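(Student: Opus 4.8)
The plan is to reduce the statement to a single computation — the explicit description of the dual cone $\PPT(A:B)^{\ast}$ — and then let the general machinery developed earlier do the rest. Recall that for any closed convex cone $\cK$ and $P \ll Q$ it was established that $\exp(D_{\max}^{\cK}(P||Q))$ is computed by the conic pair \eqref{eqn:DmaxKPrimal}--\eqref{eqn:DmaxKDual}, and that these attain a common value whenever $I \in \relint(\cK)$. So the only new content is to turn the abstract constraints ``$X \in \PPT$'' and ``$\gamma Q - P \in \PPT^{\ast}$'' into the concrete semidefinite constraints appearing in \eqref{eqn:PPTMaxPrimal} and \eqref{eqn:PPTMaxKDual}, and to verify the relative-interior hypothesis. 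Throughout I take $P \ll Q$, as in the surrounding results; the case $P \not\ll Q$ is degenerate (both sides are $+\infty$).

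First I would compute $\PPT^{\ast}$. Write $\PPT(A:B) = \Pos(A\otimes B) \cap \cQ$ where $\cQ := \{X \in \Herm(A\otimes B) : X^{\Gamma} \in \Pos(A\otimes B)\}$ and $X^{\Gamma} := (\id_{A}\otimes\Trans_{B})(X)$. The partial transpose is a trace-preserving self-adjoint involution on $\Herm(A\otimes B)$, i.e.\ $\ip{X^{\Gamma}}{Y} = \ip{X}{Y^{\Gamma}}$, so $\cQ$ is self-dual: $X \in \cQ^{\ast}$ iff $\ip{X^{\Gamma}}{Z} \ge 0$ for all $Z \in \Pos$, iff $X^{\Gamma} \in \Pos$, iff $X \in \cQ$. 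Since the dual of an intersection of closed convex cones is the closure of the sum of the duals, $\PPT^{\ast} = \overline{\Pos + \cQ}$, and the one small point to check is that this sum is already closed. That follows from a trace bound: if $A_{n} + B_{n}^{\Gamma} \to X$ with $A_{n}, B_{n} \in \Pos$, then $\Tr A_{n} + \Tr B_{n} = \Tr(A_{n} + B_{n}^{\Gamma})$ is bounded, so $\{A_{n}\}$ and $\{B_{n}\}$ lie in a compact set; passing to a subsequence, $A_{n} \to A \in \Pos$ and $B_{n} \to B \in \Pos$, hence $X = A + B^{\Gamma}$. Therefore $\PPT^{\ast} = \Pos + \cQ = \{A + B^{\Gamma} : A, B \in \Pos(A\otimes B)\}$.

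With $\PPT^{\ast}$ in hand the rest is bookkeeping. In the primal \eqref{eqn:DmaxKPrimal} with $\cK = \PPT$, the constraint $X \in \PPT$ is by definition $X \in \Pos$ and $X^{\Gamma} \in \Pos$, which is exactly \eqref{eqn:PPTMaxPrimal}. In the dual \eqref{eqn:DmaxKDual}, the constraint $\gamma Q \succeq_{\PPT^{\ast}} P$ means $\gamma Q - P \in \PPT^{\ast}$, which by the previous paragraph is equivalent to the existence of $A, Y \in \Pos$ with $\gamma Q - P = A + Y^{\Gamma}$, i.e.\ to the existence of $Y \ge 0$ with $\gamma Q \ge P + Y^{\Gamma}$; this is exactly \eqref{eqn:PPTMaxKDual}. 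Finally, $I^{AB} \in \relint(\PPT)$: the affine hull of $\PPT$ is all of $\Herm(A\otimes B)$, and for any $\Delta \in \Herm(A\otimes B)$ and small enough $\varepsilon > 0$ both $I + \varepsilon\Delta$ and its partial transpose $I + \varepsilon\Delta^{\Gamma}$ are positive semidefinite, so $I^{AB}$ is in fact an interior point. Together with $P \ll Q$ this is precisely the hypothesis of the strong-duality proposition established above, so \eqref{eqn:PPTMaxPrimal} and \eqref{eqn:PPTMaxKDual} have a common optimal value $\alpha = \exp(D_{\max}^{\PPT}(P||Q))$, and taking logarithms finishes the proof.

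The only step requiring care is the closedness of $\Pos + \cQ$: appealing to a generic ``image of a closed cone under a linear map'' criterion would be risky, since such images can fail to be closed, but the trace-boundedness argument above is elementary and does the job. If one wished to avoid computing $\PPT^{\ast}$ exactly, an alternative is to sandwich $\exp(D_{\max}^{\PPT}(P||Q))$ using only the easy inclusion $\Pos + \cQ \subseteq \PPT^{\ast}$ — clear from $\ip{Y^{\Gamma}}{Z} = \ip{Y}{Z^{\Gamma}} \ge 0$ for $Z \in \PPT$, $Y \ge 0$ — together with weak and strong duality between the two explicit SDPs: weak duality is the one-line bound $\ip{P}{X} \le \ip{P + Y^{\Gamma}}{X} \le \gamma\ip{Q}{X} \le \gamma$ for feasible $X$ and $(\gamma,Y)$, using $\ip{Y^{\Gamma}}{X} = \ip{Y}{X^{\Gamma}} \ge 0$, and strong duality follows from Slater's condition applied to the strictly feasible primal point $X = \varepsilon I^{AB}$ with $\varepsilon\Tr Q < 1$, pinning the common SDP value to $\exp(D_{\max}^{\PPT}(P||Q))$ from both sides.
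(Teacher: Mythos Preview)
Your proof is correct and takes a genuinely different route from the paper. The paper proceeds by embedding the PPT constraint into a standard SDP via a slack variable: it writes the primal over $\Pos(A\oplus A)$ with the map $\Phi(\wt{X}) = (\langle Q,X\rangle)\oplus(X^{\Gamma}-Z)$, computes the adjoint $\Phi^{\ast}(\gamma,Y) = (\gamma Q + Y^{\Gamma})\oplus(-Y)$, and reads off the dual from the standard form. You instead stay at the level of the general conic pair \eqref{eqn:DmaxKPrimal}--\eqref{eqn:DmaxKDual} and put all the work into the explicit identification $\PPT^{\ast} = \Pos + \{Y^{\Gamma}: Y\ge 0\}$, after which the dual constraint $\gamma Q - P \in \PPT^{\ast}$ unpacks directly into \eqref{eqn:PPTMaxKDual}. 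Your approach is more conceptual and makes the structure of the dual transparent; the paper's slack-variable derivation is more mechanical but has the advantage of avoiding the dual-cone computation (in particular the closedness of $\Pos + \cQ$, which you rightly flag as the one nontrivial step and handle cleanly with the trace-bound compactness argument). Your alternative sandwich argument at the end is also a nice self-contained verification that sidesteps the exact characterization of $\PPT^{\ast}$ entirely.
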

We refer the reader to the Appendix for the derivation. This then leads to additivity result in this special case.
\begin{theorem}\label{thm:PPTAdditivity}
Let $P,P',Q,Q' \in \PPT(A:B)$. Then,
$$D_{\max}^{\PPT}(P\otimes P' || Q \otimes Q') = D_{\max}^{\PPT}(P||Q) + D_{\max}^{\PPT}(P'||Q') \ . $$
\end{theorem}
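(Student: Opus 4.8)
The plan is to prove the equivalent multiplicative statement
\[
\exp\!\big(D_{\max}^{\PPT}(P\otimes P'\,\|\,Q\otimes Q')\big)\;=\;\exp\!\big(D_{\max}^{\PPT}(P\|Q)\big)\,\exp\!\big(D_{\max}^{\PPT}(P'\|Q')\big)
\]
and then take logarithms. We may assume both single-copy quantities are finite, since otherwise the super-multiplicativity bound established below already forces the left-hand side to be $+\infty$. Throughout we use Proposition~\ref{prop:PPT-max-relative-entropy-SDP}, identifying the dual feasibility constraint $\gamma Q\geq P+Y^{\Gamma}$ ($Y\geq 0$) with membership $\gamma Q-P\in\PPT^{\ast}=\Pos+\Gamma(\Pos)$, where $\Gamma$ is the partial transpose; here $\PPT^{\ast}$ is exactly $\Pos+\Gamma(\Pos)$ (no closure needed) because $I\in\relint(\Sep)\subseteq\relint(\PPT)$, which also gives strong duality so that both the primal \eqref{eqn:PPTMaxPrimal} and the dual \eqref{eqn:PPTMaxKDual} are attained. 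Write $\alpha:=\exp(D_{\max}^{\PPT}(P\|Q))$ and $\alpha':=\exp(D_{\max}^{\PPT}(P'\|Q'))$.

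For the \emph{super-multiplicativity} direction ($\geq$), first I would take primal optimizers $X^{\star}$ and ${X'}^{\star}$ of \eqref{eqn:PPTMaxPrimal} for $(P,Q)$ and $(P',Q')$. Then $X^{\star}\otimes {X'}^{\star}$ is positive semidefinite and its partial transpose factorizes as a tensor product of two positive semidefinite operators, hence $X^{\star}\otimes{X'}^{\star}\in\PPT$; moreover $\langle Q\otimes Q',X^{\star}\otimes{X'}^{\star}\rangle=\langle Q,X^{\star}\rangle\langle Q',{X'}^{\star}\rangle\leq 1$, so it is feasible for the tensor primal with objective $\langle P,X^{\star}\rangle\langle P',{X'}^{\star}\rangle=\alpha\alpha'$. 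This gives $\exp(D_{\max}^{\PPT}(P\otimes P'\|Q\otimes Q'))\geq\alpha\alpha'$.

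For the \emph{sub-multiplicativity} direction ($\leq$), I would work with the dual. Set $M:=\alpha Q-P\in\PPT^{\ast}$ and $M':=\alpha' Q'-P'\in\PPT^{\ast}$ (attained dual points). The crucial algebraic step is the cancellation identity
\[
\alpha\alpha'\,Q\otimes Q'\;-\;P\otimes P'\;=\;\alpha\,(Q\otimes M')\;+\;(M\otimes P'),
\]
obtained by substituting $P=\alpha Q-M$, $P'=\alpha'Q'-M'$, cancelling the $\alpha\alpha'\,Q\otimes Q'$ terms, and absorbing the would-be cross term $-M\otimes M'$ into $M\otimes P'$. It then remains to show each summand lies in $\PPT^{\ast}$. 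For $Q\otimes M'$: writing $M'=R'+{S'}^{\Gamma}$ with $R',S'\geq 0$, the piece $Q\otimes R'$ is positive semidefinite, while the full partial transpose of $Q\otimes {S'}^{\Gamma}$ equals $Q^{\Gamma}\otimes S'$, which is positive semidefinite because $Q^{\Gamma}\geq 0$ (as $Q\in\PPT$) and $S'\geq 0$; hence $Q\otimes{S'}^{\Gamma}\in\Gamma(\Pos)$ and so $Q\otimes M'\in\Pos+\Gamma(\Pos)=\PPT^{\ast}$. Exchanging the roles of the two tensor factors and using $P'\in\PPT$ gives $M\otimes P'\in\PPT^{\ast}$ by the identical argument. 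Consequently $\alpha\alpha'\,Q\otimes Q'-P\otimes P'\in\PPT^{\ast}$, i.e.\ $\gamma=\alpha\alpha'$ is dual-feasible for $D_{\max}^{\PPT}(P\otimes P'\|Q\otimes Q')$, which yields $\exp(D_{\max}^{\PPT}(P\otimes P'\|Q\otimes Q'))\leq\alpha\alpha'$. Combining the two inequalities and taking logarithms finishes the proof.

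The step I expect to be the main obstacle — indeed essentially the only nontrivial point — is the lemma that a $\PPT$ operator tensored with a $\PPT^{\ast}$ operator remains in $\PPT^{\ast}$; making this precise requires carefully tracking which subsystems the partial transpose acts on (the composite partial transpose being the composition of the two factorwise ones, which commute) and invoking $Q^{\Gamma}\geq 0$. The cancellation identity is what makes this lemma sufficient, since the raw expansion leaves the cross term $M\otimes M'$, which need not lie in $\PPT^{\ast}$ on its own.
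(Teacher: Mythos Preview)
Your proof is correct. Both you and the paper establish super-multiplicativity by tensoring primal optimizers and sub-multiplicativity by exhibiting a dual-feasible point at value $\alpha\alpha'$, so the overall strategy is the same. The difference lies in how the dual direction is executed. The paper works with the explicit dual variables $(\gamma,Y)$, $(\gamma',Y')$ of Proposition~\ref{prop:PPT-max-relative-entropy-SDP}, tensors the two inequalities $\gamma Q\succeq P+Y^{\Gamma}$ and $\gamma' Q'\succeq P'+{Y'}^{\Gamma}$, expands $(P+Y^{\Gamma})\otimes(P'+{Y'}^{\Gamma})$, and then observes that the cross terms can be written as $\Gamma^{BB'}$ applied to a positive operator (using $P^{\Gamma},{P'}^{\Gamma}\succeq 0$). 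You instead use the two-term telescoping identity $\alpha\alpha'\,Q\otimes Q'-P\otimes P'=\alpha\,Q\otimes M'+M\otimes P'$ together with the lemma that $\PPT\otimes\PPT^{\ast}\subseteq\PPT^{\ast}$. Your decomposition is arguably cleaner: it bypasses the intermediate operator inequality $\gamma\gamma'\,Q\otimes Q'\succeq(P+Y^{\Gamma})\otimes(P'+{Y'}^{\Gamma})$, which in the paper's presentation is asserted without verifying that $P+Y^{\Gamma}\succeq 0$ (or $P'+{Y'}^{\Gamma}\succeq 0$), a hypothesis one would normally need to tensor two operator inequalities. Your route also isolates the structural fact doing the work---closure of $\PPT^{\ast}$ under tensoring by a $\PPT$ operator---making it transparent why the hypothesis $P,P',Q,Q'\in\PPT$ is exactly what is needed.
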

\begin{proof}
Let $\alpha,\gamma,X,(\gamma,Y)$ and $\alpha',X',(\gamma',Y')$ be the optimal value, primal optimizer, and dual optimizer for $D_{\max}^{\PPT}(P||Q)$ and $D_{\max}^{\PPT}(P'||Q')$ respectively. Then $\langle Q \otimes Q', X \otimes X'\rangle = 1$ and $\langle P \otimes P', X \otimes X' \rangle = \alpha \cdot \alpha'$. Note this SDP has strong duality (given we proved the equivalent conic program did earlier), so it suffices to prove we can get the same optimal value for the dual program.
Now note
\begin{align*}
    \gamma Q \otimes \gamma' Q'
    \succeq & (P + Y^{\Gamma}) \otimes (P' + Y'^{\Gamma}) \\
    = & P \otimes P' + P \otimes Y'^{\Gamma} + Y^{\Gamma} \otimes P' + Y^{\Gamma} \otimes Y'^{\Gamma} \\
    = & P \otimes P' + \Gamma^{BB'}(\ol{P} \otimes \ol{P'} + \ol{P} \otimes Y' + Y \otimes \ol{P'} + Y \otimes Y') \ ,
\end{align*}
where the inequality is by assumption of them being optimizers and the last equality is by defining $\ol{P} := P^{\Gamma}, \ol{P'} := P'^{\Gamma}$ and using if you have fixed bases of $B,B'$ for PPT, then the basis for $BB'$ PPT has also been induced. Now note that by assumption $\ol{P},\ol{P'} \succeq 0$, so the argument of $\Gamma^{BB'}$ is positive semidefinite. Therefore it's a feasible operator for the dual program, which means $\gamma \cdot \gamma'$ is a dual optimal value. As $\gamma \cdot \gamma' = \alpha \cdot \alpha'$, we have proven the SDP is multiplicative and so $D_{\max}^{\PPT}$ is additive (as it is the $\log$ of the SDP).
\end{proof}
As a final remark on the framework of entanglement rank cones, we note it is known $\cv^{\Sep}(\Phi \otimes \Psi) = \cv^{\Sep}(\Phi) \cv^{\Sep}(\Psi)$ whenever $\Phi$ is entanglement-breaking \cite[Theorem 3]{Chitambar-2021a}. It is unlikely this generalizes beyond the separable cone and the Choi matrices as in general we have the following proposition.
\begin{proposition}
Let $P \in \Pos(A \otimes B)$. Then 
$$ H^{\Ent_{k}}_{\min}(A|B) = - \min_{\sigma \in \Density(A \otimes B)} \|\sigma_{B}^{-1/2}P_{AB}\sigma_{B}^{-1/2}\|_{\Ent_{k}} \ . $$
\end{proposition}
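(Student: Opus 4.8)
The plan is to strip off the minimization over $\sigma_{B}$ in the definition of the restricted min-entropy and identify each remaining max-relative entropy with a cone norm via Proposition~\ref{prop:K-Dmax-to-cone}. By \eqref{eqn:K-min-entropy-simple} we have $H^{\Ent_{k}}_{\min}(A|B)_{P} = -\inf_{\sigma_{B}\in\Density(B)} D^{\Ent_{k}}_{\max}(P\,\|\,\I_{A}\otimes\sigma_{B})$ (the objective depends on $\sigma$ only through its marginal $\sigma_{B}=\Tr_{A}\sigma$, which ranges over all of $\Density(B)$), so since $x\mapsto e^{x}$ is continuous and strictly increasing it suffices to prove, for each $\sigma_{B}$,
\[
D^{\Ent_{k}}_{\max}(P\,\|\,\I_{A}\otimes\sigma_{B}) = \log\big\| (\I_{A}\otimes\sigma_{B}^{-1/2})\,P\,(\I_{A}\otimes\sigma_{B}^{-1/2}) \big\|_{\Ent_{k}} ,
\]
and then to pass $\exp$ through the infimum. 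This yields $\exp(-H^{\Ent_{k}}_{\min}(A|B)_{P}) = \inf_{\sigma_{B}}\|\sigma_{B}^{-1/2}P\sigma_{B}^{-1/2}\|_{\Ent_{k}}$, which is the stated identity.

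The per-$\sigma_{B}$ identity is exactly Proposition~\ref{prop:K-Dmax-to-cone} with $Q=\I_{A}\otimes\sigma_{B}$: its hypothesis is that $\Phi(\cdot):=Q^{-1/2}(\cdot)Q^{-1/2}$ preserve $\Ent_{k}$, and here $\Phi=\id_{A}\otimes\Psi$ with $\Psi(\cdot):=\sigma_{B}^{-1/2}(\cdot)\sigma_{B}^{-1/2}\in\CP(B,B)$, which is a separable CP map, so preservation of $\Ent_{k}$ is precisely the monotonicity of the entanglement-rank cone under separable CP maps recorded in Section~\ref{sec:background} (with the $A$-side factor taken to be $\id_{A}$). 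To upgrade the infimum to a genuine minimum I would invoke the conic-program picture of $H^{\Ent_{k}}_{\min}$: strong duality holds for the pair \eqref{eqn:HminKPrimal}--\eqref{eqn:HminKDual} by Slater's condition, since $d_{A}^{-1}\I_{AB}\in\relint(\Ent_{k})$ by Corollary~\ref{corr:rel-int-identity}, and, as in the proof establishing those programs, the dual is attained by a pair $(\gamma,\sigma_{B})$ with $\gamma\ge 0$, $\sigma_{B}\in\Density(B)$, $\gamma\,\I_{A}\otimes\sigma_{B}\succeq_{\Ent_{k}^{\ast}}P$, and $\gamma=\exp(-H^{\Ent_{k}}_{\min}(A|B)_{P})$; then $\sigma_{B}$ attains the outer infimum.

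The only genuinely delicate point — and the one I expect to be the main obstacle — is the Moore--Penrose pseudo-inverse when $\sigma_{B}$ is singular, where the two sides of the per-$\sigma_{B}$ identity could a priori diverge (the left side jumping to $+\infty$ while the right stays finite). I would dispose of it by first restricting the infimum to positive-definite $\sigma_{B}$, which costs nothing: for $\sigma_{B}^{\varepsilon}:=(1-\varepsilon)\sigma_{B}+\varepsilon\,d_{B}^{-1}\I_{B}$ one has $\I_{A}\otimes\sigma_{B}^{\varepsilon}\succeq(1-\varepsilon)\,\I_{A}\otimes\sigma_{B}$ in the L\"{o}wner order — hence also $\succeq_{\Ent_{k}^{\ast}}$, as $\Pos\subseteq\Ent_{k}^{\ast}$ — so by the elementary monotonicity of $D^{\Ent_{k}}_{\max}$ in its second argument (if $Q'-Q\in\Ent_{k}^{\ast}$ then any feasible $\gamma$ for $(P\|Q)$ is feasible for $(P\|Q')$) together with the normalization property (Proposition~\ref{prop:normalization}) we get $D^{\Ent_{k}}_{\max}(P\,\|\,\I_{A}\otimes\sigma_{B}^{\varepsilon})\le D^{\Ent_{k}}_{\max}(P\,\|\,\I_{A}\otimes\sigma_{B})-\log(1-\varepsilon)$, so letting $\varepsilon\to 0$ shows the positive-definite infimum is no larger than the full one (the reverse being trivial). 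On positive-definite $\sigma_{B}$ the pseudo-inverse is the ordinary inverse and Proposition~\ref{prop:K-Dmax-to-cone} applies without ambiguity; attainment then follows as above, passing if necessary to a positive-definite state realizing the optimal value to within any prescribed tolerance. Everything else is routine bookkeeping once these two facts — the cone-norm reduction and the boundary/rank-deficiency check — are in place.
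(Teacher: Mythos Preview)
Your proposal is correct and follows essentially the same route as the paper: apply \eqref{eqn:K-min-entropy-simple} and then Proposition~\ref{prop:K-Dmax-to-cone}, checking its hypothesis via the fact that $(\I_{A}\otimes\sigma_{B})^{-1/2}(\cdot)(\I_{A}\otimes\sigma_{B})^{-1/2}$ is a product of local CP maps and hence preserves $\Ent_{k}$. The paper's proof is a one-line sketch of exactly this; your version is more careful in that you additionally address attainment of the infimum (via strong duality) and the rank-deficient $\sigma_{B}$ case, neither of which the paper discusses.
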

\begin{proof}
This follows from applying \eqref{eqn:K-min-entropy-simple} along with Proposition \ref{prop:K-Dmax-to-cone} using the fact $(\I_{A} \otimes \sigma_{B})^{-1/2} \cdot (\I_{A} \otimes \sigma_{B})^{-1/2}$ factors across the tensor product into local CP maps.
\end{proof}

\subsection{Special Case: Werner States}
As the operator Schmidt norms are difficult to handle in general, in this section we investigate how they behave for a class of operators that are manageable: the Werner states.

The Werner states of fixed local dimension $d$ may be parameterized in the following two ways
\begin{align}
    \rho_{\lambda} &:= \frac{2\lambda}{d(d+1)}\Pi_{+} + \frac{2(1-\lambda)}{d(d-1)}\Pi_{-} \, , \lambda \in [0,1] \label{eq:WH-defn-lambda} \\
    \rho_{\alpha} &:= \frac{1}{d(d-\alpha)}(\I^{AB} - \alpha \mbb{F})  \, , \alpha \in [-1,1] \label{eq:WH-defn-alpha} \ ,
\end{align}
where $\Pi_{+}, \Pi_{-}$ are the symmetric and anti-symmetric projectors respectively, $\mbb{F}$ is the swap operator, and they may be related by $\Pi_{+} = \frac{1}{2}(\I^{AB} + \mbb{F})$. The two parameterizations may be related by the conversions
\begin{align}\label{eq:WH-param-conversion}
    \alpha = \frac{(1-2\lambda)d+1}{1-2\lambda+d} \quad \quad \lambda = \frac{(1+d)(1-\alpha)}{2(d-\alpha)} \ .
\end{align}
These are known to be entangled for $\lambda \in [0,1/2)$, or equivalently $\alpha \in (1/d,1]$.

As one might expect from their ability to be decomposed into symmetric and asymmetric projectors, these are the states that invariant under the local symmetry $\cU(\cdot) := \int (U \otimes U) \cdot (U \otimes U)^{\ast} \, dU$ where $dU$ is the Haar measure. This is often referred to as being $UU$-invariant. These states are known to be excellent for admitting non-additivity and non-multiplicativity results. In \cite{Zhu-2011a} it was shown their geometric measure of entanglement is super-multiplicative for $n=2$ and the same was shown for the communication value, $\cv^{\Sep}$, of the Werner-Holevo channels $\cW_{\lambda}$ whose Choi operators are given by $d\rho_{\lambda}$ in \cite{Chitambar-2021a}. This is equivalent to the non-additivity of $H_{\min}^{\Sep}$ for the Choi operators of the Werner-Holevo channels by \eqref{thm:generalized-cv}. Here we expand on these properties. 

First we show the non-multiplicativity of the Werner-Holevo channels for $\cv^{\Sep}$ is in fact a property of the non-additivity of the non-multiplicativity of the $S(1)$ norm.
\begin{proposition}\label{prop:WH-min-ent-simplification}
Let $k,n \in \mbb{N}$ and $\cK = \Ent_{k}(A^{n}:B^{n})$. Then \begin{align*}
    H_{\min}^{\cK}(A^{n}|B^{n})_{\rho^{\otimes n}_{\alpha}} = & -D^{\cK}_{\max}(\rho_{\alpha}^{\otimes n}||\I_{A^{n}} \otimes \pi_{B^{n}}) \\
    =& -\log \|\rho_{\alpha}^{\otimes n}\|_{\Ent_{k}(A^{n}:B^{n})} +n\log(d) \ .
\end{align*}
\end{proposition}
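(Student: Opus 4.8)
### Proof proposal

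The plan is to chain together two facts already established in the excerpt: the general formula relating the restricted min-entropy to the restricted max-relative entropy against the maximally mixed state, and Proposition~\ref{prop:K-Dmax-to-cone} which reduces $D_{\max}^{\cK}(P\|Q)$ to a $\cK$-norm computation when the symmetrizing map $Q^{-1/2}(\cdot)Q^{-1/2}$ preserves the cone. The first equality in the statement is immediate: by Corollary~\ref{corr:rel-int-identity} and the definition of the $\cK$-min-entropy via Eqn.~\eqref{eqn:K-min-entropy-simple}, one has $H_{\min}^{\cK}(A^n|B^n)_{\rho_\alpha^{\otimes n}} = -\inf_{\sigma}D_{\max}^{\cK}(\rho_\alpha^{\otimes n}\|I_{A^n}\otimes\sigma_{B^n})$, and by the $UU$-invariance of $\rho_\alpha^{\otimes n}$ together with the local-symmetry restriction of Corollary~\ref{corr:local-symmetry-restriction} (valid since $\Ent_k$ is physical), the optimal $\sigma_{B^n}$ can be taken invariant under the $U$-twirl on $B^n$; but the only such state with the right trace normalization that also keeps things manageable is $\pi_{B^n}$. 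Actually the cleanest route is: $\rho_\alpha$ has maximally mixed marginal, so $\rho_{\alpha,B}=\pi_B$, and for states with maximally mixed marginal the optimal $\sigma$ in the min-entropy optimization is always $\pi_B$ (this is standard and follows because $D_{\max}$ is minimized when $\sigma$ matches the marginal in the commuting case; more carefully, use that $-D_{\max}^{\cK}(\rho\|I\otimes\pi)$ is a lower bound by feasibility and the symmetry argument shows it is also optimal). So the first equality holds.

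For the second equality I would apply Proposition~\ref{prop:K-Dmax-to-cone} with $P = \rho_\alpha^{\otimes n}$ and $Q = I_{A^n}\otimes\pi_{B^n} = d^{-n}I_{(A\otimes B)^{\otimes n}}$. Here $Q^{-1/2}(\cdot)Q^{-1/2} = d^n(\cdot)$ is just scaling by the positive scalar $d^n$, which trivially maps $\Ent_k(A^n:B^n)$ to itself. Hence $D_{\max}^{\cK}(\rho_\alpha^{\otimes n}\|I_{A^n}\otimes\pi_{B^n}) = \log\|d^n\rho_\alpha^{\otimes n}\|_{\Ent_k(A^n:B^n)} = \log\|\rho_\alpha^{\otimes n}\|_{\Ent_k(A^n:B^n)} + n\log d$, where the last step uses positive homogeneity of the norm. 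Combining with the first equality and the sign flip gives $H_{\min}^{\cK}(A^n|B^n)_{\rho_\alpha^{\otimes n}} = -\log\|\rho_\alpha^{\otimes n}\|_{\Ent_k(A^n:B^n)} + n\log d$, as claimed.

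The only genuine subtlety — and the step I would be most careful about — is justifying that $\pi_{B^n}$ is the optimal choice of $\sigma_{B^n}$ in the infimum defining the conditional min-entropy, i.e.\ that the inner optimization over $\sigma_{B^n}\in\Density(B^n)$ can be dropped. One clean way: by Corollary~\ref{corr:local-symmetry-restriction} applied to $\cK=\Ent_k$ (which is physical), the primal optimizer $X$ of $\exp(-H_{\min}^{\cK}(A^n|B^n)_{\rho_\alpha^{\otimes n}})$ in Eqn.~\eqref{eqn:HminKPrimal} is invariant under the local $U^{\otimes n}$-twirl; twirling the constraint $\Tr_{A^n}X = I_{B^n}$ is preserved, and twirl-invariance forces $\Tr_{A^n}X$ considerations to be consistent with $\sigma_{B^n}\propto I_{B^n}$ in the dual. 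Alternatively, one notes that feasibility of $\sigma_{B^n}=\pi_{B^n}$ gives $H_{\min}^{\cK}(A^n|B^n)_{\rho_\alpha^{\otimes n}}\le -D_{\max}^{\cK}(\rho_\alpha^{\otimes n}\|I\otimes\pi)$ is actually an equality because any other $\sigma$ can only help if it is not the marginal, but data-processing (Proposition~\ref{prop:DPI}) under the $B$-twirl sends $\sigma\mapsto\pi$ while not increasing $D_{\max}^{\cK}$; since $\rho_\alpha^{\otimes n}$ is itself twirl-invariant, the twirl does not change it, so $\pi$ is optimal. Either argument closes the gap; I would present the data-processing one since Proposition~\ref{prop:DPI} is already available and it is the shortest.
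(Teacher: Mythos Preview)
Your proposal is correct and follows essentially the same route as the paper: data-processing (Proposition~\ref{prop:DPI}) under the $n$-fold $UU$-twirl $\cU^{\otimes n}$ shows $\pi_{B^n}$ is the optimal $\sigma$ in the first equality, and then the second equality follows by scaling (you via Proposition~\ref{prop:K-Dmax-to-cone} with $Q^{-1/2}(\cdot)Q^{-1/2}=d^{n}(\cdot)$, the paper via normalization and $D_{\max}^{\cK}(X\|I)=\log\|X\|_{\cK}$, which are the same computation). Two small clean-ups: the feasibility inequality should read $H_{\min}^{\cK}\geq -D_{\max}^{\cK}(\cdot\|I\otimes\pi)$ (not $\leq$), and the twirl you apply must be the \emph{joint} $UU$-twirl on $A^n\otimes B^n$, not a one-sided ``$B$-twirl'', since $\rho_\alpha^{\otimes n}$ is only invariant under the former.
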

\begin{proof}
As $\Ent_{k}(A^{n}:B^{n})$ is invariant under local maps, by the data-processing inequality \eqref{prop:DPI},
\begin{align*}
    D_{\max}^{\Ent_{k}}(\rho^{\otimes n}_{\alpha}||\I \otimes \sigma) 
    \geq & D_{\max}^{\Ent_{k}}(\cU^{\otimes n}(\rho^{\otimes n}_{\alpha})||\, \cU^{\otimes n}(\I \otimes \sigma)) \\
    = & D^{\Ent_{k}}_{\max}(\rho_{\alpha}^{\otimes n}||\I \otimes \int U \sigma U^{\ast}\, dU) \\
    = & D^{\Ent_{k}}_{\max}(\rho_{\alpha}^{\otimes n}||\I \otimes \pi) \ ,
\end{align*}
where the $\sigma$ was arbitrary. Recalling that
$$H^{\Ent_{k}}_{\min}(A^{n}|B^{n})_{\rho_{\alpha}^{\otimes n}} = \max_{\sigma \in \Density(B^{n})} -D_{\max}^{\Ent_{k}}(\rho^{\otimes n}_{\alpha}||\I \otimes \sigma) \ , $$
it follows that
$$ H^{\Ent_{k}}_{\min}(A^{n}|B^{n})_{\rho_{\alpha}^{\otimes n}} = - D^{\Ent_{k}}_{\max}(\rho^{\otimes n}_{\alpha}||\I \otimes \pi) \ . $$
Using that $\I \otimes \pi = \frac{1}{d^{n}}(\I \otimes \I)$ along with the normalization property (Proposition \ref{prop:normalization}) along with $D_{\max}^{\cK}(X||\I) = \|X\|_{\cK}$ gets the second equality.
\end{proof}
This simplifies things as all we need to do is determine $\|\rho^{\otimes n}_{\alpha}\|_{\Ent_{k}(A^{n}:B^{n})}$. In general, this is not known. However, it is known that the space of operators that are $UU-$invariant is the same as the space of PPT operators with the same invariance \cite{Vollbrecht-2001a}. Therefore, using the primal problem \eqref{eqn:KNormPrimal} we ultimately obtain the following linear program:
\begin{proposition}\label{prop:WH-Ent-norm-LP}
$\|\rho_{\lambda}^{\otimes 2}\|_{\Ent_{1}(A^{2}_{1}:B^{2}_{1})}$ is given by the linear program:
\begin{align*}
    \max. \; & \; \lambda^{2}w + \lambda(1-\lambda)(x+y) + (1-\lambda)^{2}z \\
    \mrm{s.t.} \; & \; d_+ (w+y) \geq d_- (x+z) \\
    &\; d_+(w+x) \geq d_- (y+z) \\
    &\; d_+^{2} w + d_{-}^{2}z \geq d_+ d_- (x+y) \\
    &\; \left[d_+^2 w + d_+ d_- (x+y) + d_-^2 z\right] = \frac{4}{d^2} \\
    &\; w,x,y,z \geq 0 \ ,
\end{align*}
where $d_+ \equiv d+1$, $d_- \equiv d-1$, and we note it can be reparameterized to $\rho_{\alpha}$ using \eqref{eq:WH-param-conversion}.
\end{proposition}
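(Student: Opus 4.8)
The plan is to evaluate the defining primal problem for the cone norm, Eqn.~\eqref{eqn:KNormPrimal} (equivalently Definition~\ref{defn:cone-norms}): $\|\rho_{\lambda}^{\otimes 2}\|_{\Ent_{1}(A_{1}^{2}:B_{1}^{2})} = \max\{\ip{\rho_{\lambda}^{\otimes 2}}{\sigma} : \Tr(\sigma)=1,\ \sigma \in \Ent_{1}(A_{1}^{2}:B_{1}^{2})\}$, where the absolute value in Definition~\ref{defn:cone-norms} drops out because both operators are positive, and to reduce the feasible set to a polytope. First I would apply a symmetry reduction: since $\rho_{\lambda}$ is $UU$-invariant, $\rho_{\lambda}^{\otimes 2}$ is fixed by the local twirl $\cG$ acting with one Haar-random unitary on $A_{1},B_{1}$ and an independent one on $A_{2},B_{2}$. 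Because $\cG$ is built from local unitaries it maps $\Ent_{1}$ to itself by the monotonicity of entanglement rank under separable CP maps, and because $\cG$ is self-adjoint with $\cG(\rho_{\lambda}^{\otimes 2}) = \rho_{\lambda}^{\otimes 2}$ we have $\ip{\rho_{\lambda}^{\otimes 2}}{\sigma} = \ip{\rho_{\lambda}^{\otimes 2}}{\cG(\sigma)}$ (this is the same reasoning as Proposition~\ref{prop:DPI}/Corollary~\ref{corr:local-symmetry-restriction}); hence an optimizer may be taken in the commutant of this group, which is four-dimensional and spanned by $\{\Pi_{\pm}^{(1)}\otimes\Pi_{\pm}^{(2)}\}$, where the superscripts label the $A_{1}B_{1}$ and $A_{2}B_{2}$ factors. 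Thus write $\sigma = w\,\Pi_{+}^{(1)}\!\otimes\Pi_{+}^{(2)} + x\,\Pi_{+}^{(1)}\!\otimes\Pi_{-}^{(2)} + y\,\Pi_{-}^{(1)}\!\otimes\Pi_{+}^{(2)} + z\,\Pi_{-}^{(1)}\!\otimes\Pi_{-}^{(2)}$.

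Next I would translate each constraint into the coordinates $(w,x,y,z)$. Positivity $\sigma\succeq 0$ is equivalent to $w,x,y,z\geq 0$ since the $\Pi_{\pm}^{(i)}$ are mutually orthogonal projectors; normalization $\Tr(\sigma)=1$, using $\Tr\Pi_{+}=dd_{+}/2$ and $\Tr\Pi_{-}=dd_{-}/2$, becomes $d_{+}^{2}w + d_{+}d_{-}(x+y) + d_{-}^{2}z = 4/d^{2}$; and since $\ip{\rho_{\lambda}}{\Pi_{+}} = \lambda$ and $\ip{\rho_{\lambda}}{\Pi_{-}} = 1-\lambda$ (directly from \eqref{eq:WH-defn-lambda}), the objective is $\ip{\rho_{\lambda}^{\otimes 2}}{\sigma} = \lambda^{2}w + \lambda(1-\lambda)(x+y) + (1-\lambda)^{2}z$. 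For the remaining constraint $\sigma\in\Ent_{1}$ I would invoke the Vollbrecht--Werner equivalence \cite{Vollbrecht-2001a} that within this symmetry class separability coincides with being PPT, so that $\sigma\in\Ent_{1}$ iff $\sigma\succeq 0$ and its partial transpose on $B_{1}B_{2}$, denoted $(\cdot)^{\Gamma}$, is positive. To extract the PPT conditions, record $\Pi_{+}^{\Gamma} = \tfrac{1}{2}(d_{+}P_{0}+P_{1})$ and $\Pi_{-}^{\Gamma} = \tfrac{1}{2}(-d_{-}P_{0}+P_{1})$, where $P_{0} = \phi^{+}/d = \tau$ is the rank-one maximally entangled projector and $P_{1} = I - P_{0}$ (and similarly $P_{0}',P_{1}'$ on the second factor). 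Since the two copies sit on disjoint tensor factors, $\sigma^{\Gamma}$ expands in the orthogonal projector basis $\{P_{0}P_{0}',\,P_{0}P_{1}',\,P_{1}P_{0}',\,P_{1}P_{1}'\}$; demanding each of the four coefficients be non-negative produces exactly $d_{+}(w+y)\geq d_{-}(x+z)$, $d_{+}(w+x)\geq d_{-}(y+z)$, $d_{+}^{2}w + d_{-}^{2}z \geq d_{+}d_{-}(x+y)$, together with $w+x+y+z\geq 0$ which is automatic. Collecting the objective, the normalization equality, these three inequalities, and $w,x,y,z\geq 0$ yields precisely the stated linear program, and the reparameterization to $\rho_{\alpha}$ is immediate from \eqref{eq:WH-param-conversion}.

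I expect the one genuinely load-bearing step to be the appeal to $\PPT = \Sep$ for operators in this commutant algebra: Vollbrecht--Werner is usually stated for single-copy $UU$-invariant states on $\mbb{C}^{d}\otimes\mbb{C}^{d}$, so one must verify (or cite the appropriate extension) that it carries over to the algebra relevant to the $A_{1}A_{2}\mid B_{1}B_{2}$ bipartition of the two-copy system --- everything else is linear bookkeeping. A minor point to record is that the optimum of \eqref{eqn:KNormPrimal} is attained at $\Tr(\sigma)=1$ rather than $\Tr(\sigma)\leq 1$, which holds because $\rho_{\lambda}^{\otimes 2}$ is a nonzero positive operator, so rescaling $\sigma$ up to unit trace only increases the objective.
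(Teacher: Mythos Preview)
Your proposal is correct and follows essentially the same route as the paper: reduce the primal problem for the cone norm to the four-dimensional commutant via the $(U\!\otimes\!U)^{\otimes 2}$ twirl, replace $\Ent_1$ by $\PPT$ on this commutant using the Vollbrecht--Werner result, and read off the linear constraints from the partial transpose in the $\{\phi^{+},\phi^{\perp}\}$ (equivalently your $\{P_0,P_1\}$) basis. Your flag about the $\PPT=\Sep$ equivalence being the load-bearing step is apt; the paper simply cites \cite{Vollbrecht-2001a} for it, and indeed that reference treats exactly this two-copy commutant, so the extension you worry about is already covered there.
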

Something in effect equivalent was derived in \cite{Chitambar-2021a}, but we provide our derivation in the Appendix for completeness. In the single copy case it was shown in \cite[Proposition 5.2.10]{Johnston-2012a} that the Werner states have a simple expression for their entanglement rank norm:
\begin{equation}\label{eq:single-copy-werner-norm}
\|\rho_{\alpha}\|_{\Ent_{k}(A:B)} = 
\begin{cases}
    \frac{1 + \min\{\alpha,0\}}{d(d-\alpha)} & k = 1 \\
    \frac{1+|\alpha|}{d(d-\alpha)} & 2 \leq k \leq n \ .
\end{cases} 
\end{equation}
Combining these two propositions, we are able to plot tight bounds of the non-multiplicativity of the $S(1)$ norm (equivalently $\Ent_{1} \cong \Sep$ norm) as well as show that there is no positive gap of $\|\rho_{\lambda}^{\otimes 2}\|_{S(1)} - \| \rho_{\lambda} \|^{2}_{S(k\geq2)}$ (See Fig. \ref{fig:Non-Mult-of-Schmidt-Norm}).

\begin{figure}[H]
    \centering
    \includegraphics[width = 0.7\columnwidth]{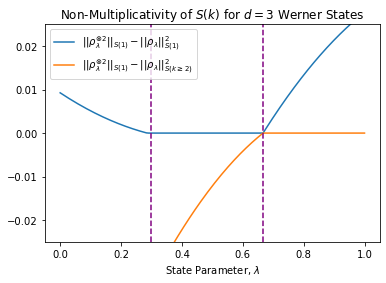}
    \caption{Non-Multiplicativity of the Werner State's Operator Schmidt Norm. The dotted purple lines are at $3/10$ and $2/3$ respectively. The first is when non-multiplicativity dies out the first time. The second is when non-multiplicativity comes back. Note $\rho_{2/3} = \pi$ for $d = 3$.}
    \label{fig:Non-Mult-of-Schmidt-Norm}
\end{figure}

As a final remark, we note that when proving the cone-restricted max-divergence anti-standard AEP, we used that $D_{\max}^{\cK}(\psi||\I) < D(\psi||\I) = D_{\max}(\psi||\I)$. Here we show that such a property holds in great generality if we replace $\psi$ with a Werner state. Specifically, we show that so long as the Werner state is entangled, there exists a dimension such that this misordering for both the $\Sep$-restricted min-entropy and max-relative entropy holds. As the proof is not particularly illuminating, it is presented in the appendix.
\begin{theorem}\label{thm:WH-separation}
(Note that in this setting, the $\PPT$ and $\Sep$ cones are equivalent.)
For $d \in\mathbb{N}$, $\lambda = 0$ is such that $$D^{\PPT}_{\max}(\sigma_{\lambda}||\I_{A} \otimes \pi) < D(\sigma_{\lambda}||\I_{A} \otimes \pi) \ . $$
Moreover this never will hold for separable states, but there always exists sufficiently large $d$ such that 
$D^{\PPT}_{\max}(\sigma_\lambda || \I_{A} \otimes \pi) < D(\sigma_{\lambda}||\I_{A} \otimes \pi)$ for $\lambda < 1/2$ (i.e. any entangled state).
\end{theorem}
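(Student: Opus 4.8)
The plan is to reduce both sides to the operator $\PPT$-norm of the Werner state via its $UU$-symmetry, and then carry out three explicit estimates — one for $\lambda=0$, one for separable states, and one for entangled states in the large-$d$ limit. Throughout write $\sigma_\lambda$ for the Werner state of \eqref{eq:WH-defn-lambda} and $\sigma_\alpha$ for the same state in the $\alpha$-parameterization of \eqref{eq:WH-defn-alpha}, related by \eqref{eq:WH-param-conversion}.

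First I would record two identities valid for any $P\in\Pos(A\otimes B)$ with $A\cong B\cong\mathbb{C}^d$. Since $\I_A\otimes\pi=\tfrac1d\I_{AB}$, the normalization property (Proposition~\ref{prop:normalization}) together with the characterization $\gamma\I-X\in\PPT^{\ast}\iff\|X\|_{\PPT}\le\gamma$ of \cite[Prop.~5.2.4]{Johnston-2012a} gives $D^{\PPT}_{\max}(P\|\I_A\otimes\pi)=\log\bigl(d\,\|P\|_{\PPT}\bigr)$, while $\log(\I_A\otimes\pi)=-(\log d)\I$ gives, for any density matrix $P$, $D(P\|\I_A\otimes\pi)=\log d-H(P)$ with $H$ the von Neumann entropy. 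Next, since $\sigma_\lambda$ is $UU$-invariant and $\cU(\cdot)=\int(U\otimes U)(\cdot)(U\otimes U)^{\ast}\,dU$ is self-adjoint and preserves $\Density\cap\PPT$, twirling the optimizer in the definition of $\|\sigma_\lambda\|_{\PPT}$ reduces it to the one-parameter maximization $\|\sigma_\lambda\|_{\PPT}=\max_{\beta\in[-1,1/d]}\langle\sigma_\alpha,\sigma_\beta\rangle$ over $UU$-invariant PPT states, which I can evaluate in closed form; consistently with \eqref{eq:single-copy-werner-norm} for $k=1$, the maximum for $\alpha\ge0$ is attained at $\beta=1/d$ and equals $\tfrac1{d(d-\alpha)}$, so $D^{\PPT}_{\max}(\sigma_\lambda\|\I_A\otimes\pi)=-\log(d-\alpha)$ whenever $\alpha\ge0$.

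For $\lambda=0$ (equivalently $\alpha=1$), $\sigma_0$ is the normalized projector onto the antisymmetric subspace, of rank $\binom{d}{2}=\tfrac{d(d-1)}{2}$, so $H(\sigma_0)=\log\tfrac{d(d-1)}{2}$ and $D(\sigma_0\|\I_A\otimes\pi)=\log\tfrac{2}{d-1}$, whereas the formula above gives $D^{\PPT}_{\max}(\sigma_0\|\I_A\otimes\pi)=-\log(d-1)=\log\tfrac1{d-1}$; the gap is exactly $\log 2>0$ for every $d\ge2$. For the separable case I would argue abstractly: if $\sigma_\lambda$ is separable ($\lambda\in[1/2,1]$, $\alpha\in[-1,1/d]$) it is itself a feasible point of the maximization defining $\|\cdot\|_{\PPT}$, so $\|\sigma_\lambda\|_{\PPT}\ge\langle\sigma_\lambda,\sigma_\lambda\rangle=\Tr[\sigma_\lambda^2]\ge 2^{-H(\sigma_\lambda)}$, the last step by monotonicity of R\'enyi entropies; combined with the two identities this yields $D^{\PPT}_{\max}(\sigma_\lambda\|\I_A\otimes\pi)=\log(d\,\|\sigma_\lambda\|_{\PPT})\ge\log d-H(\sigma_\lambda)=D(\sigma_\lambda\|\I_A\otimes\pi)$, so the strict inequality can never hold (in fact this works for any PPT state).

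For the last claim, fix $\lambda<1/2$ and let $d\to\infty$. By \eqref{eq:WH-param-conversion}, $\alpha=\alpha(d)\to1-2\lambda>0$, so for large $d$ the state is entangled and $D^{\PPT}_{\max}(\sigma_\lambda\|\I_A\otimes\pi)=-\log(d-\alpha)=-\log d+o(1)$. For the relative entropy I would substitute the eigenvalues of $\sigma_\alpha$ — namely $\tfrac{1-\alpha}{d(d-\alpha)}$ with multiplicity $\tfrac{d(d+1)}{2}$ and $\tfrac{1+\alpha}{d(d-\alpha)}$ with multiplicity $\tfrac{d(d-1)}{2}$ — into $H(\sigma_\lambda)=-\sum_i m_i p_i\log p_i$ and expand: since $m_\pm p_\pm\to\{\,1-\lambda,\ \lambda\,\}$ and $\log p_\pm=-2\log d+O(1)$, this gives $H(\sigma_\lambda)=2\log d-\log2+H_{\mathrm{bin}}(\lambda)+o(1)$, hence $D(\sigma_\lambda\|\I_A\otimes\pi)=-\log d+\log2-H_{\mathrm{bin}}(\lambda)+o(1)$. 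Subtracting, $D(\sigma_\lambda\|\I_A\otimes\pi)-D^{\PPT}_{\max}(\sigma_\lambda\|\I_A\otimes\pi)\to\log2-H_{\mathrm{bin}}(\lambda)$, strictly positive precisely because $\lambda<\tfrac12$ forces $H_{\mathrm{bin}}(\lambda)<\log2$, so the strict inequality holds for all sufficiently large $d$. The main obstacle is the bookkeeping in this final step — propagating the $d$-dependent reparameterization $\lambda\mapsto\alpha(d)$ cleanly through the von Neumann entropy expansion — but the $\lambda=0$ case provides a built-in consistency check, since the limiting gap $\log2-H_{\mathrm{bin}}(0)=\log2$ matches the exact value found there.
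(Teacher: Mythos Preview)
Your proposal is correct, and for two of the three claims it follows essentially the same computational route as the paper: both the $\lambda=0$ case and the large-$d$ entangled case are handled via the explicit Werner-state formulas (the paper uses Propositions~\ref{prop:Umegaki-Ent-WH} and~\ref{prop:sep_max_holevo_simple} directly, you repackage them through $D^{\PPT}_{\max}(\sigma_\lambda\|\I_A\otimes\pi)=\log(d\,\|\sigma_\lambda\|_{\PPT})$ and $D(\sigma_\lambda\|\I_A\otimes\pi)=\log d-H(\sigma_\lambda)$), and both arrive at the same limiting gap $\log 2-H_{\mathrm{bin}}(\lambda)$. The only stylistic difference there is that the paper keeps an exact inequality $\log[(d+1-2\lambda)/(d-1)]<1-h(\lambda)$ and sends the left side to zero, while you pass to the asymptotic expansion; both are fine.

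Where you genuinely diverge is the separable case. The paper only explicitly checks the boundary point $\lambda=1/2$ and leaves the rest of $[1/2,1]$ implicit in the closed-form expressions. Your argument --- that any $\PPT$ state $\rho$ is itself feasible in the $\PPT$-norm primal, so $\|\rho\|_{\PPT}\ge\Tr[\rho^2]=2^{-H_2(\rho)}\ge 2^{-H(\rho)}$ by R\'enyi monotonicity, hence $D^{\PPT}_{\max}(\rho\|\I_A\otimes\pi)\ge D(\rho\|\I_A\otimes\pi)$ --- is both cleaner and strictly more general: it covers every $\PPT$ state at once rather than relying on the Werner parametrization, and it makes transparent \emph{why} the inequality can only go one way on the free cone.
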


\section{Conclusion}
In this work we have investigated the general structure of one-shot information theory measures when we consider the optimization programs for cones other than the positive semidefinite cone. In doing so, we have even further cemented the importance of the positive semidefinite cone and the L\"{o}wner order. We have done this by showing that for restricted cones the asymptotic equipartition properties cease to hold and this effects what tasks they can properly capture, e.g. failing to capture hypothesis testing or successfully capturing minimal-error decoding. Moreover, we have captured this breakdown in terms of resources, in particular coherence and entanglement. We have also shown the separable cone, the minimally cone invariant under local unitaries, is sufficent for information theory over CQ states. This exemplifies the strong distinction between fully quantum and partially quantum tasks that has been observed in the literature.

After we explored the measures behaviour with respect to recovering i.i.d.\ limits, we established operational interpretations of the cone restricted entropic quantities. First, we established that the restricted min-entropy captures a notion of entanglement-assisted  classical communication over a quantum channel under restricted measurements. Second, we introduced a further restricted min-entropic-like quantity that captures the ability to convert one channel to another via pre-processing. This also exemplifies how other majorization results may be generalized to restricted classes of transformations. 

Finally, we introduced the relationship between restricted max-relative entropy and the generalized cone operator norm. In doing so, we showed for an Abelian symmetry, the cone restricted to said Abelian symmetry behaves like the max-relative entropy on states that satisfy the symmetry. We showed for entanglement rank cones, the same property is unclear. It would be interesting and possibly enlightening to know if such a property does hold or not for these entanglement rank cones and why. 

One further way to extend this work would be to extend it into the category-theoretic language of \cite{Jencova-2021a}. It is possible further generality and refinements could be achieved in that framework. Moreover, as noted in the main text, \cite{Jencova-2021a} only considered $H^{\cK}_{\min}(B|A)$, so extending $H_{\min}^{\cK}(A|B)$ into a category-theoretic framework would be new in and of itself.

Beyond this, we note a tangential avenue of research pertaining to superchannels given this work. First, we introduced the flipped extended min-entropy $H^{\ext}_{\min}(A|B)$, but we did not present an operational interpretation beyond it's generalization of singlet fraction. It could be interesting to establish one. We also proved an anti-standard AEP for both forms of extended min-entropy, but what asymptotic measure an extended min-entropy AEP would  converge to has not yet been established. Establishing this would also be of interest. Lastly, in proving our anti-standard AEP for extended min-entropy, we established the notion of $k-$entanglement-breaking supermaps. It is likely that there is more to be said about this general class of supermaps.

\section*{Data Availability} 
The code for generating the data and plot for Fig.\ \ref{fig:Non-Mult-of-Schmidt-Norm} may be found at \href{https://github.com/ChitambarLab/ConeRestrictedInformationTheory}{the Github repository for this project}.

\section*{Acknowledgment}
E.C. acknowledges NSF Award 2112890 for supporting this project. I.G. acknowledges the support of an Illinois Distinguished Fellowship. The authors also thank the anonymous reviewers of BIID10 for making clear previous terminology was confusing.

\section{Appendix}
\subsection{Coherence Resource Theory Proofs}

\begin{proof}[Proof of Prop. \ref{prop:classical-to-classical-Choi}]
In the bipartite case, considering Hilbert spaces, $A \cong \mbb{C}^{d},B \cong \mbb{C}^{d'}$, the preferred basis must be a product basis $\{\ket{ji} = \ket{j}\ket{i}\}_{j \in [d],i \in [d']}$. We are therefore considering $J_{\Phi} = \sum_{i,j} \alpha_{ij} \dyad{i}_{A} \otimes \dyad{j}_{B}$. By the trace preserving condition, $\I^{A} = \Tr_{B}(J_{\Phi}) = \sum_{i,j} \alpha_{i,j} \dyad{i}$. Which implies, $\sum_{j} \alpha_{i,j} = 1$. So we can define $\alpha_{i,j} = p(i|j)$. Using $\vec^{-1}(\sqrt{\alpha_{ij}} \ket{i}\ket{j}) = \sqrt{\alpha_{ij}}\ket{j}\bra{i}$, which is well known to be the structure of the Kraus operators of a classical-to-classical channel. Thus by \eqref{eqn:choi-in-kraus-terms}, we see we have a classical-to-classical channel. The other direction is straightforward from this one. Lastly, note that is also implies the adjoint map's Choi operator is diagonal in the same preferred basis, as swapping the spaces and taking the transpose will preserve this structure.
\end{proof}

\begin{proof}[Proof of Prop. \ref{prop:PIO-Entanglement-Rank}]
First note by the vec mapping identity \eqref{eqn:vec-map-identity},
$$ \vec(U_{i}\Pi_{i}) = \vec(U_{i}\I_{A}\Pi_{i}) = (\Pi_{i} \otimes U_{i})\vec(I_{A}) \, ,$$
where we have assumed the transpose is defined in the incoherent basis so that $\Pi_{i}^{\Trans} = \Pi_{i}$.
By \eqref{eqn:choi-in-kraus-terms} and the definition of a PIO, any $\Phi$ that is a PIO has a Choi operator of the form
\begin{align*}
    J_{\Phi}^{AA} =& \sum_{i} \vec(U_{i}\Pi_{i})\vec(U_{i}\Pi_{i})^{\ast}
    = \sum_{i} (\Pi_{i} \otimes U_{i})\vec(I^{A})\vec(I^{A})^{\ast}(\Pi_{i}^{\ast} \otimes U_{i}^{\ast}) \ .
\end{align*}
Recall that $\vec(I^{A}) = \sqrt{d} \ket{\tau_{d}}$ where we will assume that $\tau_{d}$ is defined in terms of the incoherent basis.\footnote{This is without loss of generality as we are already considering the Choi operator in a basis-dependent fashion.} It follows by the definition of incoherent projectors,
\begin{align*}
(\Pi_{i} \otimes \id^{A})\vec(I^{A})
= &\sqrt{d} \sum_{j \in S_{i}} \op{j}{j} \sum_{j' \in [d]} \ket{j'}\ket{j'}\\ 
=&\sqrt{d} \sum_{j \in S_{i}, j' \in [d]} \delta_{j,j'} \ket{j}\ket{j'} \\
=& \sqrt{d} \sum_{j \in S_{i}} \ket{j}\ket{j} = \sqrt{d} \ket{\tau_{|S_{i}|}} \, .
\end{align*}
As $\mrm{SR}(\ket{\tau_{k}}) = k$, we have that $\mrm{SR}(\Pi_{i} \otimes U_{i})(\vec(\I^{A})) = |S_{i}|$ where we have also used that the local unitary $U_{i}^{A}$ won't change the Schmidt rank. Therefore, the Schmidt rank of $J_{\Phi}$ is upper bounded by $\max_{i} \{\rank(\Pi_{i})\}$. Moreover, note that by definition of the unitary, one has 
$$ (\Pi_{i} \otimes U_{i})(\vec(\I_{A})) = \sum_{j \in S_{i}} e^{i \theta_{j}}\ket{j}\ket{\sigma_{i}(j)} \ . $$
Since $\sigma_{i}$ is a permutation, the product vectors in this sum are mutually orthogonal, so the tensor rank is $|S_{i}|$. Furthermore, as the projectors are mutually orthogonal, $(\Pi_{i} \otimes U_{i})(\vec(\I_{A}))$ and $(\Pi_{i'} \otimes U_{i'})(\vec(\I_{A}))$ share no simple tensors, so the minimal rank needed to construct the Choi matrix is $\max_{i} \{\rank(\Pi_{i})\}$. This proves the equality.

To get the operational interpretation, note that the unitary relabels the basis but does not destroy coherence whereas the projector does. Therefore $\max_{i}\{\rank(\Pi_{i})\}$ is the size of the largest subspace in which coherence may be preserved. This completes the proof.
\end{proof}

\subsection{Extended Min-Entropy Proofs}
\textit{Note: Throughout this section, the spaces of an object are superscripts or subscripts in whatever manner we believe will be cleanest for the reader at the expense of consistency.}\\
We begin with further properties of the space of supermaps, $\mbb{L}(A,B)$ where $A :=(A_0A_1), B:=(B_0B_1)$, that were alluded to or skipped over in the main text but are needed here. Throughout this section, we will drop tensor product symbols when they seem straightforward. Recall from the main text that supermaps satisfy their own Choi-Jamiolkowski isomorphism, $\mbb{L}(A,B) \to \mbf{J}_{\Theta}^{AB} \in \Lin(A_{0} A_{1} B_{0} B_{1})$. Noting the standard Choi-Jamiolkowski isomorphism, this implies there are also isomorphisms from $\mbb{L}(A,B)$ to $\Trans(A,B), \Trans(A_0B_1,A_1B_0),$ and $\Trans(A_1B_0,A_0B_1)$. Define the maps $\Delta_{\Theta} \in \Trans(A,B), \Gamma_{\Theta} \in \Trans(A_1B_0,A_0B_1)$ to be the unique maps such that $J_{\Delta_{\Theta}} = \mbf{J}^{AB}_{\Theta} = J_{\Gamma_{\Theta}}$ where the uniqueness follows from the Choi-Jamiolkowski isomorphism. Moreover, the isomorphism $\Theta \to (\mathds{1}^{A} \otimes \Theta)[\Upsilon^{A\wt{A}}]$ which must also have the same Choi operator, $\mbf{J}^{AB}_{\Theta}$. This leads to the following straightforward proposition recognized in the previous literature.
\begin{proposition}
The following are equivalent
\begin{enumerate}
    \item $\Theta \in \mbb{L}(A,B)$ is CP preserving (CPP)
    \item The dual map (a.k.a.\ adjoint map) $\Theta^{\ast}$ is CPP
    \item $\mbf{J}^{AB}_{\Theta} \geq 0$
    \item $\Delta_{\Theta}$ is CP
    \item $\Gamma_{\Theta}$ is CP
\end{enumerate}
\end{proposition}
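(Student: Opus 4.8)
The plan is to establish the equivalences in a cycle, leaning entirely on the three Choi--Jamiolkowski-type isomorphisms already set up immediately before the statement, namely $\Theta \mapsto \mbf{J}^{AB}_\Theta$, $\Theta \mapsto \Delta_\Theta$ (via $J_{\Delta_\Theta} = \mbf{J}^{AB}_\Theta$), and $\Theta \mapsto \Gamma_\Theta$ (via $J_{\Gamma_\Theta} = \mbf{J}^{AB}_\Theta$), together with the fact that a linear map is CP iff its Choi operator is positive semidefinite. Since $\Delta_\Theta$ and $\Gamma_\Theta$ share the \emph{same} Choi operator $\mbf{J}^{AB}_\Theta$ with $\Theta$, the equivalences $(3)\Leftrightarrow(4)\Leftrightarrow(5)$ are immediate: each of statements $(3),(4),(5)$ literally says $\mbf{J}^{AB}_\Theta \geq 0$ once the standard Choi criterion for complete positivity is invoked. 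So the real content is tying $(1)$ to $(3)$ and $(2)$ to $(3)$.

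First I would prove $(1)\Leftrightarrow(3)$. For $(1)\Rightarrow(3)$: if $\Theta$ is CP-preserving, then in particular $(\mathds{1}_{\wt A}\otimes\Theta)[\Upsilon^{\wt A A}_+]$ is a CP map, since $\Upsilon^{\wt A A}_+$ is (the Choi operator of) a CP map — it is the channel analogue of the maximally entangled state and is manifestly completely positive. But by the supermap Choi formula (\ref{eqn:supermap-choi}), the Choi operator of $(\mathds{1}_A\otimes\Theta)[\Upsilon^{A\wt A}]$ is exactly $\mbf{J}^{AB}_\Theta$, so CP-ness of that map forces $\mbf{J}^{AB}_\Theta\geq 0$. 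For $(3)\Rightarrow(1)$: suppose $\mbf{J}^{AB}_\Theta\geq 0$. Take any CP map $\Psi\in\CP(R_0A_0,R_1A_1)$; we must show $(\mathds{1}_R\otimes\Theta)(\Psi)$ is CP, i.e.\ that its Choi operator is positive semidefinite. Using linearity of $\Theta$ and the Choi--Jamiolkowski correspondence, one writes $J_{(\mathds{1}_R\otimes\Theta)(\Psi)} = (\id_R\otimes\Delta_\Theta)(J_\Psi)$, where $\Delta_\Theta$ is the map with Choi operator $\mbf{J}^{AB}_\Theta$; since $\mbf{J}^{AB}_\Theta\geq 0$ means $\Delta_\Theta$ is CP, and $J_\Psi\geq 0$ because $\Psi$ is CP, the image $(\id_R\otimes\Delta_\Theta)(J_\Psi)$ is positive semidefinite. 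Hence $(\mathds{1}_R\otimes\Theta)(\Psi)$ is CP for every $R$ and every CP input $\Psi$, which is $(1)$.

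Next I would handle $(2)\Leftrightarrow(3)$ using the dual-map identity. The cleanest route is the relation between $\mbf{J}^{BA}_{\Theta^\ast}$ and $\mbf{J}^{AB}_\Theta$ recorded in (\ref{eqn:superchannel-dual-map-choi}), namely $\mbf{J}^{BA}_{\Theta^\ast} = \overline{\mathrm{SWAP}(\mbf{J}^{AB}_\Theta)}$. Both the swap of tensor factors and entrywise complex conjugation preserve positive semidefiniteness, so $\mbf{J}^{AB}_\Theta\geq 0$ if and only if $\mbf{J}^{BA}_{\Theta^\ast}\geq 0$; applying the already-established $(1)\Leftrightarrow(3)$ to $\Theta^\ast$ (whose relevant Choi operator is $\mbf{J}^{BA}_{\Theta^\ast}$) then gives that $\Theta^\ast$ is CPP iff $\mbf{J}^{BA}_{\Theta^\ast}\geq 0$ iff $\mbf{J}^{AB}_\Theta\geq 0$, closing the loop. (If one prefers to avoid invoking (\ref{eqn:superchannel-dual-map-choi}) directly, the same conclusion follows from $\langle Y,\Theta(X)\rangle = \langle \Theta^\ast(Y),X\rangle$ by noting $\CP(R_0A_0,R_1A_1)$ is a self-dual cone under the Hilbert--Schmidt inner product and transferring positivity across the adjoint, but the Choi identity is more direct.)

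The main obstacle is bookkeeping rather than mathematical depth: one must be careful that the map called ``$\Delta_\Theta$'' really does implement $\Theta$ at the level of Choi operators, i.e.\ that $J_{(\mathds{1}_R\otimes\Theta)(\Psi)} = (\id_R\otimes\Delta_\Theta)(J_\Psi)$ holds with the correct identification of spaces and the correct normalization conventions for $\Upsilon$, $\phi^+$, and the supermap Choi matrix. This is exactly the content that is asserted in the main text when it says ``$\Delta_\Theta\in\Trans(A,B)$ is the map with the same Choi operator as $\mbf{J}^{AB}_\Theta$ and $J_{\Theta[\Phi]} = \Delta_\Theta(J^A_\Phi)$,'' so it may be cited; the only thing to verify carefully is that this relation tensors correctly with an arbitrary reference system $R$, which follows because $\mathds{1}_R$ acts as the identity on the reference factor and the Choi isomorphism is natural in the reference system. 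Once that identification is in hand, every implication above is a one-line consequence of ``CP $\Leftrightarrow$ Choi operator $\succeq 0$'' plus the observation that the three maps $\Delta_\Theta$, $\Gamma_\Theta$, and $(\mathds{1}_A\otimes\Theta)[\Upsilon^{A\wt A}]$ all share the single operator $\mbf{J}^{AB}_\Theta$.
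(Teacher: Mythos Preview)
Your proposal is correct and follows essentially the same route as the paper: both obtain $(3)\Leftrightarrow(4)\Leftrightarrow(5)$ immediately from the shared Choi operator, prove $(1)\Rightarrow(3)$ by feeding $\Upsilon^{A\wt A}$ through the supermap, and handle $(2)\Leftrightarrow(3)$ via the swap--conjugate identity \eqref{eqn:superchannel-dual-map-choi}. Your $(3)\Rightarrow(1)$ step through $J_{(\mathds{1}_R\otimes\Theta)(\Psi)} = (\id_R\otimes\Delta_\Theta)(J_\Psi)$ is actually a bit more explicit than the paper's, which simply asserts that ``these relations worked backwards'' close the equivalence.
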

\begin{proof}
($1 \to 3$) Let $\Theta$ be CPP. Then $(\mathds{1}_{A} \otimes \Theta)[\Upsilon^{A\wt{A}}]$ is CP as $\Upsilon^{A\wt{A}}$ is CP. Using that \cite[Eqns. 12,13]{Gour-2019a},
$$\mbf{J}^{AB}_{\Theta} = \id^{A_0B_0} \otimes (\mathds{1}^{A} \otimes \Theta)[\Upsilon^{A\wt{A}}](d_{A_{0}}\tau_{d_{A_{0}}} \otimes d_{B_{0}}\tau_{d_{B_{0}}}) \ , $$
we conclude $\mbf{J}^{AB}_{\Theta} \geq 0$. ($3 \to 4 \& 5$) Any map with $\mbf{J}^{AB}_{\Theta}$ as its Choi operator must be CP. These relations worked backwards prove $1 \leftrightarrow 3 \leftrightarrow 4 \leftrightarrow 5$. ($2 \leftrightarrow 3$) Recalling \eqref{eqn:superchannel-dual-map-choi} and noting that the swapping of spaces and the conjugate map are both positivity preserving, $\mbf{J}^{AB}_{\Theta} \geq 0$ if and only if $\mbf{J}^{BA}_{\Theta^{\ast}} \geq 0$. This completes the proof.
\end{proof}

One can also define the action a superchannel on a channel solely in terms of the Choi operators:
\begin{align}\label{eqn:Action-of-Superchannel-Choi}
    J^{B}_{\Theta[\Psi]} = \Tr_{A}(\mbf{J}^{AB}_{\Theta}((J_{\Psi}^{A})^{T} \otimes I_{B}) \ ,
\end{align}
which recalling the action of a map in terms of its Choi operator, implies $\Delta_{\Theta}(J_{\Psi}^{A}) = J^{B}_{\Theta[\Psi]}$. It also allows us to prove the following about the unital preserving maps, which uses the same proof method as is used to prove the TPP conditions of a supermap (e.g.\ embedded in the proof of \cite[Theorem 1]{Gour-2019a}) and was proven in the special case that $\Theta$ is a superchannel in \cite{Gour-2019a}.
\begin{proposition}\label{prop:unital-preserving}
Let $\Theta \in \mbb{L}(A,B)$. $\Theta$ is unital preserving if and only if $\mbf{J}^{AB}_{\Theta}$ has the marginals 
    $$ \mbf{J}^{A_{0}B_{1}}_{\Theta} = I_{A_{0}B_{1}} \quad \mbf{J}^{AB_{1}}_{\Theta} = \mbf{J}^{A_{1}B_{1}}_{\Theta} \otimes d_{A_{0}}^{-1}I_{A_{0}}$$
\end{proposition}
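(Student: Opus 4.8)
The plan is to push the unital-preservation requirement through the Choi--Jamio\l{}kowski correspondence $\Theta\mapsto\Delta_{\Theta}$ (with $J_{\Delta_{\Theta}}=\mbf{J}^{AB}_{\Theta}$ and $\Delta_{\Theta}(J^{A}_{\Psi})=J^{B}_{\Theta[\Psi]}$ via \eqref{eqn:Action-of-Superchannel-Choi}), mirroring the computation that produces the TP-preservation marginal conditions for a superchannel. First I record two elementary facts in the paper's conventions: a map $\Psi\colon A_{0}\to A_{1}$ is unital iff $\Tr_{A_{0}}(J_{\Psi})=I_{A_{1}}$, and $\Lambda\colon B_{0}\to B_{1}$ is unital iff $\Tr_{B_{0}}(J_{\Lambda})=I_{B_{1}}$. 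Setting $\Delta'_{\Theta}:=\Tr_{B_{0}}\circ\,\Delta_{\Theta}\colon\Lin(A_{0}\otimes A_{1})\to\Lin(B_{1})$, its Choi operator is $\mbf{J}^{AB_{1}}_{\Theta}$ and $\Delta'_{\Theta}(X)=\Tr_{A}\!\big(\mbf{J}^{AB_{1}}_{\Theta}(X^{T}\otimes I_{B_{1}})\big)$. Then $\Theta$ is unital preserving iff $\Delta'_{\Theta}(X)=I_{B_{1}}$ for every $X$ with $\Tr_{A_{0}}(X)=I_{A_{1}}$; since $d_{A_{0}}^{-1}I_{A_{0}A_{1}}$ lies in the relative interior of the CP unital Choi operators and these affinely span $\{X:\Tr_{A_{0}}X=I_{A_{1}}\}$, it is harmless to test on all such $X$ and use linearity.

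For the forward direction I would split this affine condition into its inhomogeneous and homogeneous parts. Evaluating at $X=d_{A_{0}}^{-1}I_{A_{0}A_{1}}$ gives $d_{A_{0}}^{-1}\mbf{J}^{B_{1}}_{\Theta}=I_{B_{1}}$, i.e.\ $\mbf{J}^{B_{1}}_{\Theta}=d_{A_{0}}I_{B_{1}}$. The homogeneous part says $\Delta'_{\Theta}(W)=0$ whenever $\Tr_{A_{0}}(W)=0$, i.e.\ $\ker(\Tr_{A_{0}})\subseteq\ker(\Delta'_{\Theta})$, which by the ``factoring through a partial trace'' criterion is equivalent to $\mbf{J}^{AB_{1}}_{\Theta}=I_{A_{0}}\otimes\widetilde{J}$ for some $\widetilde{J}\in\Lin(A_{1}\otimes B_{1})$; the only content here is that the orthogonal complement of $\ker(\Tr_{A_{0}})$ inside $\Lin(A_{0}\otimes A_{1})$ is $\mrm{span}\{I_{A_{0}}\otimes N\}$, read off each $B_{1}$-matrix element of $\mbf{J}^{AB_{1}}_{\Theta}$. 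Taking $\Tr_{A_{0}}$ then identifies $\widetilde{J}=d_{A_{0}}^{-1}\mbf{J}^{A_{1}B_{1}}_{\Theta}$, giving the second claimed identity $\mbf{J}^{AB_{1}}_{\Theta}=\mbf{J}^{A_{1}B_{1}}_{\Theta}\otimes d_{A_{0}}^{-1}I_{A_{0}}$; tracing out $A_{1}$ and inserting $\mbf{J}^{B_{1}}_{\Theta}=d_{A_{0}}I_{B_{1}}$ yields $\mbf{J}^{A_{0}B_{1}}_{\Theta}=I_{A_{0}B_{1}}$, the first.

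The converse just runs this in reverse: assuming both marginal identities, for any unital $\Psi$ substitute $\mbf{J}^{AB_{1}}_{\Theta}=\mbf{J}^{A_{1}B_{1}}_{\Theta}\otimes d_{A_{0}}^{-1}I_{A_{0}}$ into $J^{B_{1}}_{\Theta[\Psi]}=\Tr_{A}\!\big(\mbf{J}^{AB_{1}}_{\Theta}((J^{A}_{\Psi})^{T}\otimes I_{B_{1}})\big)$, use $\Tr_{A_{0}}\!\big((J^{A}_{\Psi})^{T}\big)=\big(\Tr_{A_{0}}J^{A}_{\Psi}\big)^{T}=I_{A_{1}}$ (unitality of $\Psi$) to carry out the $A_{0}$ trace, and then $\Tr_{A_{1}}(\mbf{J}^{A_{1}B_{1}}_{\Theta})=\mbf{J}^{B_{1}}_{\Theta}=d_{A_{0}}I_{B_{1}}$ (obtained by tracing $A_{0}$ in $\mbf{J}^{A_{0}B_{1}}_{\Theta}=I_{A_{0}B_{1}}$) to conclude $J^{B_{1}}_{\Theta[\Psi]}=I_{B_{1}}$, i.e.\ $\Theta[\Psi]$ is unital.

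The only delicate point, and the one place that genuinely needs care, is the index/transpose bookkeeping: one must match ``unitality of $\Psi$'' to the \emph{input} marginal $\Tr_{A_{0}}$ (not $\Tr_{A_{1}}$) in $\Delta'_{\Theta}(X)=\Tr_{A}(\mbf{J}^{AB_{1}}_{\Theta}(X^{T}\otimes I_{B_{1}}))$ and apply the partial-trace factorization lemma in the correct tensor factor. Conceptually the argument is dual to the TP-preservation half of the superchannel characterization, with the roles of the ``$0$'' and ``$1$'' spaces interchanged, so no new ideas beyond that template are required.
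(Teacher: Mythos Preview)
Your proposal is correct and follows essentially the same approach as the paper's proof: both pass through the Choi action \eqref{eqn:Action-of-Superchannel-Choi}, split the unital-preservation condition into its inhomogeneous part (evaluated at $d_{A_{0}}^{-1}I_{A}$) and its homogeneous part (operators $W$ with $\Tr_{A_{0}}W=0$), and deduce the two marginal identities; the converse direction is computed identically. The only stylistic difference is that the paper probes with an explicit test operator $Z_{B_{1}}\in\Lin(B_{1})$ to reduce to scalar equalities, whereas you phrase the homogeneous step as the kernel containment $\ker(\Tr_{A_{0}})\subseteq\ker(\Delta'_{\Theta})$ and invoke the ``factor through a partial trace'' criterion --- these are equivalent formulations of the same linear-algebra fact.
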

\begin{proof} ($1 \to 2$) Let $\Psi^{\ast} \in \Trans(A_0,A_1)$ be unital. Let $\Phi^{\ast} := \Theta[\Psi^{\ast}]$. Let $Z_{B_{1}} \in \Lin(B_{1})$. Multiplying both sides of \eqref{eqn:Action-of-Superchannel-Choi} by $\I_{B_{0}} \otimes ({Z_{B_{1}}})^{\Trans}$ and taking the trace, 
\begin{align}
    & \Tr[J^{B}_{\Phi^{\ast}}(\I_{B_{0}} \otimes {Z_{B_{1}}})^{\Trans}] 
    =\Tr[\mbf{J}^{AB}_{\Theta}((J^{A}_{\Psi^{\ast}})^{\Trans} \otimes (I_{B_{0}} \otimes {Z_{B_{1}}})^{\Trans}] \nonumber \\
    \Rightarrow & \Tr[Z^{B_{1}}] = \Tr[\mbf{J}^{AB_{1}}_{\Theta^{\ast}}(J^{A}_{\Psi^{\ast}} \otimes {Z_{B_{1}}})^{\Trans}] \label{eqn:unital-equality-1}\ ,
\end{align}
where in the implication we on the L.H.S. we used that $\Tr_{B_{0}}(J^{B}_{\Phi^{\ast}}) = \I_{B_{1}}$, as it is unital by our assumptions on $\Theta$ and $\Psi^{\ast}$, and on the R.H.S. we have used the adjoint map of $(\cdot) \otimes \I_{B_{0}}$ is the partial trace. It follows \eqref{eqn:unital-equality-1} holds for any unital $\Psi^{\ast} \in \Trans(A_{0},A_{1})$ and $Z_{B_{1}} \in \Lin(B_{1})$. It follows that it holds for $J^{A}_{\Psi^{\ast}} = \frac{1}{d_{A_{0}}}I_{A}$, so by linearity
$$ \Tr[\mbf{J}^{AB_{1}}_{\Theta^{\ast}}(W_{A} \otimes {Z_{B_{1}}})^{\Trans}] = 0 \ , $$
for all $W_{A} := J_{\Psi^{\ast}}^{A} - \frac{1}{d_{A_{0}}}I_{A}$ and $Z_{B_{1}} \in \Lin(B_{1})$. Note that $W_{A} \in \Lin(A)$ such that $W_{A_{1}} = 0$ if and only if there exists $\Psi^{\ast}$ such that $W = J_{\Psi^{\ast}}^{A} - \frac{1}{d_{A_{0}}}I_{A}$. That is the, above equation must hold for all $W \in \{W \in \Lin(A) : W_{A_{1}} = 0 \}$ and all $Z_{B} \in \Lin(B_{1})$. We can therefore conclude $\mbf{J}^{AB_{1}}_{\Theta} = \mbf{J}^{A_{1}B_{1}} \otimes \frac{1}{d_{A_{0}}}I_{A_{0}}$. Plugging this into \eqref{eqn:unital-equality-1}, one obtains 
$$ \Tr[Z_{B_{1}}] = \frac{1}{d_{A_{0}}}\Tr[\mbf{J}_{\Theta^{\ast}}^{B_{1}}(Z_{B_{1}})^{\Trans}] \ , $$
where to get the R.H.S we have used the unitality of $\Psi^{\ast}$, i.e.\ that $\Tr_{A_{0}}J_{\Psi^{\ast}}^{A} = \I_{A_{1}}$. As this holds for all $Z_{B_{1}} \in \Lin(B_{1})$, we can conclude $\mbf{J}^{B_{1}} = d_{A_{0}}I_{B_{1}}$, so $\mbf{J}^{A_0 B_1} = \I_{A_0 B_1}$.\\
($2 \to 1$) We start directly from \eqref{eqn:Action-of-Superchannel-Choi} with an extra partial trace:
\begin{equation}\label{eq:converse-unital-supermap}
\begin{aligned}
    \Tr_{B_{0}}J^{B}_{\Phi^{\ast}} =& \Tr_{AB_{0}}[\mbf{J}^{AB}_{\Theta}(J_{\Psi^{\ast}})^{\Trans} \otimes \I_{B}] \\
    =& \Tr_{A}[\mbf{J}^{AB_{1}}_{\Theta}(J^{A}_{\Psi^{\ast}})^{\Trans} \otimes \I_{B_{1}}] \\
    =& d_{A_{0}}^{-1}\Tr_{A}[\mbf{J}^{A_{1}B_{1}}_{\Theta} \otimes \I^{A_{0}} (J_{\Psi^{\ast}})^{\Trans} \otimes \I_{B_{1}}] \\
    =& d_{A_{0}}^{-1}\Tr_{A_{1}}[\mbf{J}^{A_{1}B_{1}}_{\Theta}] \\
    =& d_{A_{0}}^{-1}\mbf{J}^{B_{1}}_{\Theta} \\
    =& \I_{B_{1}} \ , 
\end{aligned}
\end{equation}
where the fourth equality is using that $\Psi^{\ast}$ is unital, and the last equality is using $\mbf{J}^{A_{0}B_{1}}_{\Theta} = I_{A_{0}}I_{B_{1}}$ implies $\mbf{J}^{B_{1}}_{\Theta} = d_{A_{0}}I_{B_{1}}$. As \eqref{eq:converse-unital-supermap} expresses the condition on a Choi operator to correspond with a unital map, this completes the proof.
\end{proof}
Of course by effectively the same proof method one obtains the well-known Choi operator constraints for a supermap to be trace-preserving-preserving.
\begin{proposition}\label{prop:TPP-conditions}
Let $\Theta \in \mbb{L}(A,B)$. $\Theta$ is trace-preserving preserving if and only if $\mbf{J}^{AB}_{\Theta}$ has the marginals
$$ \mbf{J}^{A_{1}B_{0}}_{\Theta} = I_{A_{1}B_{0}} \quad \mbf{J}^{AB_{0}}_{\Theta} = \mbf{J}^{A_{0}B_{0}}_{\Theta} \otimes d_{A_{1}}^{-1}I_{A_{1}} \ . $$
\end{proposition}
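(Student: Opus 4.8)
The plan is to mirror the proof of Proposition~\ref{prop:unital-preserving} almost verbatim, with the trace-preserving condition playing the role that the unital condition played there; the only structural change is that it is now the \emph{output} legs $A_{1}$ and $B_{1}$ that are traced out rather than the input legs $A_{0},B_{0}$. The two tools I would use are the action-of-superchannel identity \eqref{eqn:Action-of-Superchannel-Choi}, $J^{B}_{\Theta[\Psi]} = \Tr_{A}\big(\mbf{J}^{AB}_{\Theta}((J^{A}_{\Psi})^{\Trans}\otimes I_{B})\big)$, together with the Choi characterisation of trace preservation: $\Phi \in \Trans(B_{0},B_{1})$ is trace-preserving iff $\Tr_{B_{1}} J^{B}_{\Phi} = I_{B_{0}}$, and likewise $\Psi \in \Trans(A_{0},A_{1})$ is trace-preserving iff $\Tr_{A_{1}} J^{A}_{\Psi} = I_{A_{0}}$.

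For the forward direction, assuming $\Theta$ is trace-preserving-preserving, I would fix an arbitrary trace-preserving $\Psi$, set $\Phi := \Theta[\Psi]$ (trace-preserving by hypothesis), and apply $\Tr_{B_{1}}$ to \eqref{eqn:Action-of-Superchannel-Choi}. Since $(J^{A}_{\Psi})^{\Trans}\otimes I_{B}$ acts trivially on $B_{1}$, the partial trace passes through the $B_{1}$ factor, giving $I_{B_{0}} = \Tr_{A}\big(\mbf{J}^{AB_{0}}_{\Theta}((J^{A}_{\Psi})^{\Trans}\otimes I_{B_{0}})\big)$ for every trace-preserving $\Psi$. Taking $J^{A}_{\Psi}=d_{A_{1}}^{-1}I_{A}$ (the Choi operator of the completely depolarising channel, which is trace-preserving) yields $\mbf{J}^{B_{0}}_{\Theta}=d_{A_{1}}I_{B_{0}}$, and subtracting this off and invoking linearity shows that $\langle \mbf{J}^{AB_{0}}_{\Theta}, W\otimes Z_{B_{0}}\rangle = 0$ for all $W\in\Lin(A)$ with $\Tr_{A_{1}}W=0$ and all $Z_{B_{0}}\in\Lin(B_{0})$, because every such $W$ is a linear combination of differences $J^{A}_{\Psi}-J^{A}_{\Psi'}$ of trace-preserving Choi operators. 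Since the operators $W\otimes Z_{B_{0}}$ with $\Tr_{A_{1}}W=0$ span the Hilbert--Schmidt orthocomplement of operators of the form $(\cdot)_{A_{0}B_{0}}\otimes I_{A_{1}}$, this forces $\mbf{J}^{AB_{0}}_{\Theta}=\mbf{J}^{A_{0}B_{0}}_{\Theta}\otimes d_{A_{1}}^{-1}I_{A_{1}}$, which is the second marginal condition; taking $\Tr_{A_{0}}$ of it and inserting $\mbf{J}^{B_{0}}_{\Theta}=d_{A_{1}}I_{B_{0}}$ then recovers $\mbf{J}^{A_{1}B_{0}}_{\Theta}=I_{A_{1}B_{0}}$.

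For the converse, assuming the two marginal conditions, I would compute, for any trace-preserving $\Psi$,
\begin{align*}
\Tr_{B_{1}} J^{B}_{\Theta[\Psi]} &= \Tr_{A}\big(\mbf{J}^{AB_{0}}_{\Theta}\,((J^{A}_{\Psi})^{\Trans}\otimes I_{B_{0}})\big) = d_{A_{1}}^{-1}\Tr_{A_{0}}\big(\mbf{J}^{A_{0}B_{0}}_{\Theta}\,((\Tr_{A_{1}}J^{A}_{\Psi})^{\Trans}\otimes I_{B_{0}})\big) \\
&= d_{A_{1}}^{-1}\Tr_{A_{0}}\big(\mbf{J}^{A_{0}B_{0}}_{\Theta}\big) = d_{A_{1}}^{-1}\mbf{J}^{B_{0}}_{\Theta} = I_{B_{0}},
\end{align*}
where the second equality inserts the condition $\mbf{J}^{AB_{0}}_{\Theta}=\mbf{J}^{A_{0}B_{0}}_{\Theta}\otimes d_{A_{1}}^{-1}I_{A_{1}}$ and performs the $A_{1}$-trace, the third uses $\Tr_{A_{1}}J^{A}_{\Psi}=I_{A_{0}}$, and the last uses $\mbf{J}^{B_{0}}_{\Theta}=\Tr_{A_{1}}\mbf{J}^{A_{1}B_{0}}_{\Theta}=\Tr_{A_{1}}I_{A_{1}B_{0}}=d_{A_{1}}I_{B_{0}}$. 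This shows $\Theta[\Psi]$ is trace-preserving, hence $\Theta$ is trace-preserving-preserving; it is Eqn.~\eqref{eq:converse-unital-supermap} with $A_{1},B_{1}$ in place of $A_{0},B_{0}$.

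The step I expect to be the main obstacle --- though in truth it is only bookkeeping --- is the spanning/orthogonality argument in the forward direction: identifying $\{W\in\Lin(A):\Tr_{A_{1}}W=0\}$ with the linear span of differences of Choi operators of trace-preserving maps, and then using Hilbert--Schmidt duality to read off the tensor structure of $\mbf{J}^{AB_{0}}_{\Theta}$. This is the precise analogue of the corresponding step in the proof of Proposition~\ref{prop:unital-preserving}, so no new idea is needed; the care required is only in keeping the input and output legs straight and in checking that the partial traces genuinely commute past $(J^{A}_{\Psi})^{\Trans}\otimes I_{B}$ on the legs where the latter acts as the identity.
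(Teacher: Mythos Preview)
Your proposal is correct and is precisely the approach the paper takes: the paper does not give a separate proof of Proposition~\ref{prop:TPP-conditions} but simply remarks that ``by effectively the same proof method'' as for Proposition~\ref{prop:unital-preserving} one obtains the trace-preserving-preserving conditions, and your write-up supplies exactly that translation with the roles of $(A_{0},B_{0})$ and $(A_{1},B_{1})$ swapped. The spanning argument you flag as the only subtlety is handled just as in the unital case, since every $W\in\Lin(A)$ with $\Tr_{A_{1}}W=0$ can be written as $J^{A}_{\Psi}-d_{A_{1}}^{-1}I_{A}$ for a trace-preserving $\Psi\in\Trans(A_{0},A_{1})$.
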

From these, we are able to get the following corollary.
\begin{corollary}\label{corr:set-of-dual-maps}
The set of supermaps that are duals of superchannels and the set of CPP unital-preserving supermaps are distinct.
\end{corollary}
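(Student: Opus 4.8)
The plan is to read off explicit affine (SDP) descriptions of the two sets from the Choi characterisations already established, and then to verify that these descriptions are inequivalent. First I would recall Proposition~\ref{prop:unital-preserving}: a CPP supermap $\Theta\in\mbb{L}(A,B)$ is unital-preserving exactly when $\mbf{J}^{AB}_{\Theta}\geq 0$, $\mbf{J}^{A_0B_1}_{\Theta}=I_{A_0B_1}$, and $\mbf{J}^{AB_1}_{\Theta}=\mbf{J}^{A_1B_1}_{\Theta}\otimes d_{A_0}^{-1}I_{A_0}$. Next I would describe $\wt{\mbb{S}}^{AB}$ by combining the adjoint--Choi identity~\eqref{eqn:superchannel-dual-map-choi} with Proposition~\ref{prop:TPP-conditions}: $\Theta\in\wt{\mbb{S}}^{AB}$ iff $\mbf{J}^{AB}_{\Theta}=\ol{\mrm{SWAP}(\mbf{J}^{BA}_{\Xi})}$ for some superchannel $\Xi\in\mbb{S}(B,A)$, and pushing the superchannel marginal constraints through the $A\leftrightarrow B$ swap (conjugation leaves identities, tensor factorisations and marginal structure unchanged) yields an affine description of $\mbf{J}^{AB}_{\Theta}$ in which the nontrivial ``causality-type'' constraint is imposed on a marginal over $B_0$ rather than over $B_1$.

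With both descriptions in hand, the key observation is that they are genuinely different constraint systems: they share only the requirement $\mbf{J}^{A_0B_1}=I_{A_0B_1}$, while the second constraint defining a dual superchannel pins down an $A_1$-trivial marginal over $B_0$, whereas the second constraint defining a CPP unital-preserving supermap pins down an $A_0$-trivial marginal over $B_1$. These are not related by any relabelling of the four fixed factors $A_0,A_1,B_0,B_1$. This is precisely the point where the corollary has no analogue for channels: there the single constraint $\Tr_{\mrm{out}}J=I$ defining CPTP maps is carried by the (Choi form of the) adjoint onto exactly the single constraint $\Tr_{\mrm{in}}J=I$ defining CPU maps, so $\{\text{duals of CPTP}\}=\{\text{CPU}\}$; for supermaps the swap-and-conjugate operation does not trade ``TP-preserving'' for ``unital-preserving.''

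To upgrade ``different descriptions'' to ``different sets'' I would count the independent real constraints in each description, i.e.\ compute the affine dimension of each slice of the PSD cone. Since the adjoint is an involution, $\wt{\mbb{S}}^{AB}$ has the same dimension as $\mbb{S}(B,A)$, and a direct count shows this generally differs from the dimension of the CPP unital-preserving set --- for instance already when $A_0\cong\mbb{C}$ and $A_1,B_0,B_1$ are qubits the two dimensions are unequal, so the sets cannot coincide. In the dimension-symmetric cases where the counts happen to agree, one instead exhibits an explicit small-dimensional CPP unital-preserving supermap whose Choi operator violates the $B_0$-marginal constraint and is therefore not the dual of any superchannel.

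The main obstacle is the second step: correctly tracking which of the four tensor factors $A_0,A_1,B_0,B_1$ lands where under the $\mrm{SWAP}$ in~\eqref{eqn:superchannel-dual-map-choi} (and relative to the input/output conventions used in~\eqref{eqn:extended-min-entropy-SDP} and Propositions~\ref{prop:unital-preserving} and~\ref{prop:TPP-conditions}). A slip there would make the two constraint systems look identical and collapse the argument; once the bookkeeping is pinned down, the comparison of constraints and the dimension count are routine.
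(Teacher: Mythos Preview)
Your plan is correct and would work, but it takes a substantially more indirect route than the paper. The paper's proof is a one-line construction: take the superchannel $\Theta\in\mbb{S}(A,B)$ that ignores its input entirely and outputs the fixed channel $\Phi(X)=\Tr[X]\dyad{0}_{B_{1}}$; its Choi operator is $\mbf{J}^{AB}_{\Theta}=d_{A_{0}}^{-1}\,I_{A}\otimes I_{B_{0}}\otimes\dyad{0}_{B_{1}}$, and one checks directly against Proposition~\ref{prop:unital-preserving} that the swapped--conjugated Choi operator $\mbf{J}^{BA}_{\Theta^{\ast}}$ fails the marginal condition $\mbf{J}^{BA_{1}}_{\Theta^{\ast}}=\mbf{J}^{B_{1}A_{1}}_{\Theta^{\ast}}\otimes d_{B_{0}}^{-1}I_{B_{0}}$ (essentially because $\dyad{0}\neq d^{-1}I$). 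That is the entire argument.

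Your approach---writing out both affine slices and separating them by a dimension count---is more structural and makes the phenomenon clearer (the swap-and-conjugate on $\mbf{J}^{AB}$ trades the $A$-side marginal constraint for a $B$-side one, rather than trading TP-preserving for unital-preserving), and your $A_{0}\cong\mbb{C}$ example does give unequal dimensions. But notice that your own step~3 already concedes that in the dimension-symmetric cases you must ``exhibit an explicit small-dimensional CPP unital-preserving supermap whose Choi operator violates the $B_{0}$-marginal constraint''---which is exactly the paper's whole proof, and which works uniformly in all dimensions with $d_{B_{1}}\geq 2$. So the dimension-counting machinery is extra scaffolding around a counterexample you will need to produce anyway; the paper simply goes straight to the counterexample. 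Your observation about why this has no analogue for ordinary channels (where $\{\text{duals of CPTP}\}=\{\text{CPU}\}$ exactly) is a nice piece of context that the paper does not spell out.
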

\begin{proof}
We construct a superchannel whose dual map is not unital-preserving. Consider the channel that traces out the input and prepares the pure state $\dyad{0}$, $\Phi(X) := \Tr[X_{B_{0}}]\dyad{0}_{B_{1}}$. It is straightforward to check its Choi operator is $J_{\Phi}^{B} = \I_{B_{0}} \otimes \dyad{0}_{B_{1}}$. Now consider the linear map $\Delta_{\Theta}(X) := \frac{1}{d_{A_{0}}}\Tr[X]J_{\Phi}^{B}$. Again by straightforward calculation, the Choi operator of $\Delta_{\Theta}$ is $\frac{1}{d_{A_{0}}} I_{AB_{1}} \otimes \dyad{0}_{B_{1}}$, which is also the Choi operator of $\Theta \in \mbb{L}(A,B)$ that ignores the input channel completely and runs $\Phi$ instead. Then one can verify that $\mbf{J}_{\Phi} \geq 0$ and satisfies the constraints in Proposition \ref{prop:TPP-conditions}. Recalling $\mbf{J}_{\Theta^{\ast}}^{BA} = \overline{\mrm{SWAP}(\mbf{J}^{AB}_{\Theta})}$, $\mbf{J}^{BA}_{\Theta^{\ast}} = \frac{1}{d_{A_{0}}}\dyad{0}^{B_{0}} \otimes \I_{B_{1}A}$, which means in effect it hasn't changed. 

Using Proposition \ref{prop:unital-preserving}, $\Theta^{\ast}$ is unital preserving only if $\mbf{J}^{BA_{1}}_{\Theta^{\ast}} = \mbf{J}^{B_{1}A_{1}}_{\Theta^{\ast}} \otimes d_{B_{0}}^{-1}\I_{B_{0}}$, where we note this is because $\Theta^{\ast} \in \mbb{L}(B,A)$, so the spaces have been swapped from in the proposition. Noting that $\mbf{J}^{B_{1}A_{1}}_{\Theta^{\ast}} = \I_{B_{1}A_{1}}$,
\begin{align*}
\mbf{J}^{BA_{1}}_{\Theta^{\ast}} = \I^{B_{0}} \otimes \dyad{0}^{B_{1}} \otimes \I_{A_{1}} 
= \mbf{J}^{B_{1}A_{1}}_{\Theta^{\ast}} \otimes \I_{B_{0}}
\neq \mbf{J}^{B_{1}A_{1}}_{\Theta^{\ast}} \otimes d_{B_{0}}^{-1} \I_{B_{0}} \ .
\end{align*}
Thus we have a superchannel whose dual map is not unital preserving. This completes the proof.
\end{proof}
As our final property of dual maps we prove \eqref{eqn:entanglement-super-unital}.
\begin{proof}[Proof of \eqref{eqn:entanglement-super-unital}]
This is almost identical to \cite[Eqn. 66]{Gour-2019a}. Let $\Phi \in \Channel(A_{0}A_{1},B_{0}B_{1})$. Then,
\begin{align*}
    d_{A_{0}}\exp(-H^{\ext}_{\min}(A|B)_{\Phi})
    =& \underset{\Theta^{\ast} \in \wt{\mbb{S}}^{AB}}{\max} \langle (\mathds{1}_{A} \otimes \Theta^{\ast})[\Upsilon^{A\wt{A}}], \Phi^{AB} \rangle \\
    =& \underset{\Theta^{\ast} \in \wt{\mbb{S}}^{AB}}{\max} \langle \Upsilon^{A\wt{A}}, (\mathds{1}_{A} \otimes \Theta)[\Phi^{AB}] \rangle \\
    =& \underset{\Theta \in \mbb{S}^{BA}}{\max} \langle \Upsilon^{A\wt{A}}, (\mathds{1}_{A} \otimes \Theta)[\Phi^{AB}] \rangle \\ 
    =& \underset{\Theta \in \mbb{S}^{BA}}{\max} \langle \tau_{A_{1}}, (\mathds{1}_{A} \otimes \Theta)[\Phi^{AB}](\tau^{A_{0}})\rangle \ ,
\end{align*}
where the third equality is because the dual maps $\wt{\mbb{S}}^{AB}$ are the dual maps of $\mbb{S}^{AB}$, and the last equality is using that the inner product for channels is defined by $\langle \Phi, \Psi \rangle = \sum_{a} \langle \Phi[X_{a}],\Psi[X_{a}]\rangle$ for an orthonormal basis $\{X_{a}\}_{a}$ and we used the choice $X_{1} = \frac{1}{d_{A_{0}}}(\tau^{A_{0}})$.
\end{proof}

We now move on to considering superchannels specifically. Beyond the algebraic properties of supermaps encoded in their Choi operators, the space of superchannels specifically admits the property that they can always be written in terms of pre- and post-processing with an ancillary memory.
\begin{proposition}[\cite{Chiribella-2008a}]\label{prop:superchannel-equivalence}
$\Theta$ is a superchannel if and only if there exists Hilbert space $E$ with $d_{E} \leq d_{A_{0}}d_{B_{0}}$ and $\Phi_{pre}^{B_{0} \to A_0E} \in \Channel(B^{0},A_0E)$, $\Phi_{post}^{A_1E \to B_1} \in \Channel(A_1E,B_1)$ such that for all $\Psi \in \Trans(A_{0},A_{1}$,
$$ \Theta[\Psi] = \Phi_{post}^{A_1E \to B_1} \circ (\Psi \otimes \id^{E}) \circ \Phi_{pre}^{A_{0} \to B_0E} \ . $$
\end{proposition}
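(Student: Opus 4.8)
The plan is to prove the equivalence in the two directions; the ``if'' direction is routine and the ``only if'' direction, originally due to \cite{Chiribella-2008a}, is the substantive one. \emph{Sufficiency:} suppose $\Theta[\Psi] = \Phi_{post}\circ(\Psi\otimes\id^E)\circ\Phi_{pre}$ for fixed channels $\Phi_{pre}\in\Channel(B_0,A_0E)$ and $\Phi_{post}\in\Channel(A_1E,B_1)$. For any $(R_0,R_1)$ and any $\Psi\in\CP(R_0A_0,R_1A_1)$, the map $(\mathds{1}_R\otimes\Theta)[\Psi]$ equals $(\id^R\otimes\Phi_{post})\circ(\Psi\otimes\id^E)\circ(\id^R\otimes\Phi_{pre})$, a composition of completely positive maps and hence completely positive, so $\Theta$ is CP-preserving; if moreover $\Psi$ is trace-preserving then so are all three factors and hence the composite, giving TP-preservation. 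Thus $\Theta$ is a superchannel.

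\emph{Necessity, the pre-processing:} now let $\Theta\in\mbb{S}(A,B)$. By the Choi characterization (Proposition \ref{prop:TPP-conditions} together with complete positivity) we have $\mbf{J}^{AB}_\Theta\geq 0$, $\mbf{J}^{A_1B_0}_\Theta = I_{A_1B_0}$, and $\mbf{J}^{AB_0}_\Theta = \mbf{J}^{A_0B_0}_\Theta\otimes d_{A_1}^{-1}I_{A_1}$. Taking $\Tr_{A_0}$ of the last identity and comparing with the second yields $\Tr_{A_0}[\mbf{J}^{A_0B_0}_\Theta] = d_{A_1}I_{B_0}$, so $N_{A_0B_0} := d_{A_1}^{-1}\mbf{J}^{A_0B_0}_\Theta\geq 0$ is the Choi operator of a channel $\cN\in\Channel(B_0,A_0)$. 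Fix a Stinespring dilation of $\cN$: an isometry $V\colon B_0\to A_0\otimes E$ with $d_E\leq\rank(N_{A_0B_0})\leq d_{A_0}d_{B_0}$ and $\cN(\cdot)=\Tr_E[V(\cdot)V^\ast]$, and set $\Phi_{pre}(\cdot):=V(\cdot)V^\ast\in\Channel(B_0,A_0E)$, which carries the required dimension bound on $E$.

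\emph{Necessity, the post-processing and verification:} because $\mbf{J}^{A_0B_0}_\Theta$ is the marginal of the positive operator $\mbf{J}^{AB}_\Theta$ on the $A_0B_0$ systems, $\mbf{J}^{AB}_\Theta$ is supported inside the support of $N_{A_0B_0}$ (tensored with the $A_1B_1$ space); hence, writing $N_{A_0B_0}^{-1/2}$ for the Moore--Penrose pseudo-inverse, the operator $W_{A_1EB_1}$ obtained from $\big(N_{A_0B_0}^{-1/2}\otimes I_{A_1B_1}\big)\mbf{J}^{AB}_\Theta\big(N_{A_0B_0}^{-1/2}\otimes I_{A_1B_1}\big)$ by contracting its $A_0B_0$ systems against $\vec(V)\in B_0\otimes A_0\otimes E$ is a well-defined positive operator on $A_1\otimes E\otimes B_1$. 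Using $\mbf{J}^{AB_0}_\Theta = \mbf{J}^{A_0B_0}_\Theta\otimes d_{A_1}^{-1}I_{A_1}$ and $V^\ast V = I_{B_0}$ one checks $\Tr_{B_1}[W_{A_1EB_1}] = I_{A_1E}$, so $W_{A_1EB_1}$ is the Choi operator of a channel $\Phi_{post}\in\Channel(A_1E,B_1)$. Finally, with $\Lambda[\Psi]:=\Phi_{post}\circ(\Psi\otimes\id^E)\circ\Phi_{pre}$, a direct computation using the action-on-Choi formula \eqref{eqn:Action-of-Superchannel-Choi} gives $\mbf{J}^{AB}_\Lambda = \mbf{J}^{AB}_\Theta$, and the supermap Choi--Jamiolkowski isomorphism forces $\Lambda = \Theta$, i.e.\ $\Theta[\Psi] = \Phi_{post}\circ(\Psi\otimes\id^E)\circ\Phi_{pre}$ for all $\Psi\in\Trans(A_0,A_1)$.

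\emph{Main obstacle:} the delicate point is the construction and verification of $\Phi_{post}$, i.e.\ showing that the pseudo-inverse manipulation produces a genuine completely positive, trace-preserving channel. This hinges on pushing the ``quantum comb'' marginal conditions on $\mbf{J}^{AB}_\Theta$ through the isometry $V$, and on correctly handling the case in which $\mbf{J}^{A_0B_0}_\Theta$ is rank-deficient so that all of the operators above are read on the appropriate supports; this is precisely the content of the argument in \cite{Chiribella-2008a}, which we follow.
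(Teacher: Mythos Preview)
The paper does not supply its own proof of this proposition; it is stated in the appendix purely as a citation of \cite{Chiribella-2008a} and is used as a black box (e.g.\ in the ``moreover'' clause of Theorem~\ref{thm:k-ent-breaking-supermaps}). So there is nothing in the paper to compare your argument against beyond the reference itself.

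Your sketch is the standard Chiribella--D'Ariano--Perinotti construction and is correct in outline: extract the pre-processing channel from the $A_0B_0$ marginal of $\mbf{J}^{AB}_\Theta$, dilate it isometrically to obtain the memory $E$ with the stated dimension bound, then use the comb marginal condition $\mbf{J}^{AB_0}_\Theta = \mbf{J}^{A_0B_0}_\Theta\otimes d_{A_1}^{-1}I_{A_1}$ to verify that the residual operator on $A_1EB_1$ is the Choi of a genuine CPTP map. One small point of care in the paper's conventions: since here the Choi of $\Phi\in\Trans(X,Y)$ is written with the input space first and TP reads $\Tr_Y J_\Phi = I_X$, the condition $\Tr_{A_0}N_{A_0B_0}=I_{B_0}$ indeed identifies $N$ as (a relabelling of) the Choi of a channel $B_0\to A_0$, so your $\Phi_{pre}$ is set up correctly. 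The ``main obstacle'' you flag---handling rank deficiency of $\mbf{J}^{A_0B_0}_\Theta$ via the pseudo-inverse and checking that the $\Tr_{B_1}$ marginal of $W_{A_1EB_1}$ really equals $I_{A_1E}$ on the support---is exactly the technical content of \cite{Chiribella-2008a}, and your deferral to that reference is appropriate here.
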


\subsection{Lemmas for Superchannels without Memory}
In the speical case there is no ancillary space $E$, then the description of supermaps simplifies to the following form.

\begin{proposition}\label{prop:pre-and-post-proc-choi}
Given $\Xi \in \Trans(A,B), \Phi \in \Trans(B,C), \Psi \in \Trans(C,D)$.
\begin{align*}
    J_{\Psi \circ \Phi \circ \Xi} =(T \circ \Xi^{\ast} \circ T \otimes \Psi)J_{\Phi}
\end{align*}
Moreover, if $\Xi$ is Hermitian-preserving, $J_{\Psi \circ \Phi \circ \Xi} =(\Xi^{\Trans} \otimes \Psi)J_{\Phi}$ where $\Xi^{\Trans}$ is just the linear transformation defined by using the transpose of the Kraus operators, i.e.\ $\sum_{k}A_{k} X A_{k}^{\ast} \mapsto \sum_{k}A_{k}^{\Trans} X \ol{A_{k}}$.
\end{proposition}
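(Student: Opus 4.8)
The plan is to prove the identity by directly expanding the composite map in Kraus form and applying the vec-map identity \eqref{eqn:vec-map-identity} together with the definition \eqref{eqn:choi-in-kraus-terms} of the Choi operator. First I would fix Kraus decompositions $\Xi(Y) = \sum_i A_i Y B_i^\ast$ for $\Xi \in \Trans(A,B)$, $\Phi(Z) = \sum_j C_j Z D_j^\ast$ for $\Phi \in \Trans(B,C)$, and $\Psi(W) = \sum_k E_k W F_k^\ast$ for $\Psi \in \Trans(C,D)$. The composite $\Psi \circ \Phi \circ \Xi$ then has Kraus operators $\{E_k C_j A_i\}$ on the left and $\{F_k D_j B_i\}$ on the right, so by \eqref{eqn:choi-in-kraus-terms},
$$ J_{\Psi \circ \Phi \circ \Xi} = \sum_{i,j,k} \vec(E_k C_j A_i)\,\vec(F_k D_j B_i)^\ast \ . $$
The key computational step is to peel off the $A_i$ (resp. $B_i$) factors using \eqref{eqn:vec-map-identity}: since $\vec(E_k C_j A_i) = \vec(E_k C_j A_i) = (A_i^\Trans \otimes I)\vec(E_k C_j)$ — more precisely, writing the middle operator with identities in the right slots — one gets $\vec(E_k C_j A_i) = (A_i^\Trans \otimes \mathds{1}) \vec(E_k C_j)$ and likewise $\vec(F_k D_j B_i) = (B_i^\Trans \otimes \mathds{1})\vec(F_k D_j)$. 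Substituting and summing over $i$ first produces the operator $\sum_i (A_i^\Trans \otimes \mathds{1}) (\cdot) (\overline{B_i} \otimes \mathds{1})$ acting on $\sum_{j,k}\vec(E_k C_j)\vec(F_k D_j)^\ast = J_\Phi$, and this is exactly $(T\circ\Xi^\ast\circ T \otimes \Psi)$ applied to $J_\Phi$: the channel $W \mapsto \sum_i A_i^\Trans W \overline{B_i}$ has Kraus operators $\{A_i^\Trans\}$, $\{\overline{B_i}\} = \{B_i^{\ast\Trans}\}$, which is precisely the transpose-conjugate structure of $T\circ\Xi^\ast\circ T$ (recall $\Xi^\ast(Y) = \sum_i A_i^\ast Y B_i$, so $T\circ\Xi^\ast\circ T(Y) = \sum_i (A_i^\ast)^\Trans Y^\Trans{}^\Trans\cdots$; I would verify this bookkeeping carefully). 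The $\Psi$ factor on the $D$-register comes along untouched because its Kraus operators $E_k, F_k$ sit to the left in $\vec(E_k C_j)$ and are never moved.

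The Hermitian-preserving refinement follows by the same computation with one simplification: when $\Xi$ is Hermitian-preserving it admits a Kraus representation with $B_i = A_i$ up to signs (or more carefully, $\Xi$ has a self-adjoint Choi operator and can be written $\Xi(Y) = \sum_i \lambda_i A_i Y A_i^\ast$ with $\lambda_i = \pm 1$), so the left and right Kraus strings collapse and $T\circ\Xi^\ast\circ T$ becomes the map with Kraus operators $\{A_i^\Trans\}$ directly, which is exactly the $\Xi^\Trans$ defined in the statement (the map $\sum_k A_k X A_k^\ast \mapsto \sum_k A_k^\Trans X \overline{A_k}$). I would state this precisely and note that $\Xi^\Trans$ is well-defined independent of the Kraus representation, since it equals the map whose Choi operator is $J_\Xi^\Trans$ (swapping the roles consistently), a point already implicit in the remark before \eqref{eqn:choi-in-kraus-terms} about $J^{\Phi^\ast}$.

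The main obstacle I anticipate is purely notational: getting the transpose/conjugate/adjoint bookkeeping exactly right, in particular tracking which register each transpose acts on and confirming that the combination "$T$ then $\Xi^\ast$ then $T$" is genuinely what emerges rather than $\Xi^\Trans$ or $\overline{\Xi^\ast}$ or some other near-miss. Because the paper's $\vec$ convention satisfies $(X_1^\Trans \otimes X_0)\vec(Y) = \vec(X_0 Y X_1)$, the $A_i$ on the \emph{right} of $E_k C_j$ pulls out as $A_i^\Trans$ on the \emph{left} tensor factor, and I would double-check against the already-stated identity $J^{\Phi^\ast}_{BA} = (\mathds{F} J^\Phi_{AB}\mathds{F})^\Trans$ as a sanity check on signs and placements. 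Once the single-factor pull-out is verified, the full identity is a one-line sum over $i,j,k$.
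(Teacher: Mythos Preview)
Your overall strategy is the paper's: expand in Kraus form, write the Choi of the composite as a sum of $\vec(\cdot)\vec(\cdot)^\ast$, and use the vec identity \eqref{eqn:vec-map-identity} to peel Kraus operators onto the two tensor factors. But there is a concrete slip in your execution. When you write
\[
\sum_{j,k}\vec(E_k C_j)\vec(F_k D_j)^\ast = J_\Phi,
\]
that is false: the left-hand side is $J_{\Psi\circ\Phi}$, not $J_\Phi$. You applied the vec identity once, pulling $A_i$ out as $A_i^{\Trans}$ on the first tensor factor, but to land on $(T\circ\Xi^\ast\circ T\otimes\Psi)J_\Phi$ you must \emph{also} pull $E_k$ out as $I\otimes E_k$ on the second factor (and likewise $F_k$). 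Your remark that ``the $\Psi$ factor comes along untouched because its Kraus operators sit to the left and are never moved'' is precisely the mistake: they do have to be moved, via $\vec(E_k C_j)=(I\otimes E_k)\vec(C_j)$. The paper does both extractions in a single step,
\[
\vec(A_a\widehat A_b\wt A_c)=(\wt A_c^{\Trans}\otimes A_a)\vec(\widehat A_b),
\]
so that the remaining middle factor $\sum_b\vec(\widehat A_b)\vec(\widehat B_b)^\ast$ is exactly $J_\Phi$, and the outer conjugation is then visibly $(T\circ\Xi^\ast\circ T)\otimes\Psi$. Once you make this correction your argument goes through.

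For the Hermitian-preserving refinement your route differs from the paper's but is fine. You invoke the spectral decomposition of the (Hermitian) Choi operator to write $\Xi(Y)=\sum_i\lambda_iA_iYA_i^\ast$ with real $\lambda_i$, whence $T\circ\Xi^\ast\circ T$ has Kraus operators $A_i^{\Trans}$ and equals $\Xi^{\Trans}$ directly. The paper instead checks on the matrix-unit basis that Hermitian preservation of $\Xi^\ast$ forces $T\circ\Xi^\ast\circ T$ to equal the entrywise-conjugate map $C\circ\Xi^\ast$, and then identifies $C\circ\Xi^\ast$ with $\Xi^{\Trans}$ on real matrix units. Both arguments are valid; yours is arguably cleaner but depends on first choosing a Kraus form with $B_i=A_i$, while the paper's is representation-free.
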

\begin{proof}
(Note this proof heavily relies on our choice of vec mapping \eqref{eqn:vec-map-identity}.)
Consider $\Xi \in \Trans(A,B), \Phi \in \Trans(B,C), \Psi \in \Trans(C,D)$. Then 
\begin{align*}
 (\Psi \circ \Phi \circ \Xi)(X) &= \sum_{a,b,c} A_{a}\widehat{A}_{b}\wt{A}_{c}XB_{c}^{\ast}\widehat{B}_{b}^{\ast}\wt{B}_{a}^{\ast} \ , 
 \end{align*}
 which implies
 \begin{align*}
 J_{\Psi \circ \Phi \circ \Xi} 
 =& \sum_{(a,b,c)} \vec(A_a \widehat{A}_{b} \wt{A}_{c})\vec(B_{a}\widehat{B}_{b}\wt{B}_{c})^{\ast} \\
=& \sum_{(a,b,c)} (\wt{A}_{c}^{T} \otimes A_{a})\vec(\widehat{A}_{b})\vec(\widehat{B}_{b})^{\ast}( \ol{\wt{B}}_{c} \otimes B_{a}) 
= (T \circ \Xi^{\ast} \circ T \otimes \Psi)J_{\Phi}
\end{align*}
where the first line is to make clear our choice of expressing the Kraus decompositions, the second line is the relation between Kraus decompositions and Choi matrices, the third is Eq. \eqref{eqn:vec-map-identity}, and the fourth is by noting
$ \sum_{c} \wt{A}^{T}_{a} X \ol{\wt{B}}_{a} = (\sum_{c} \wt{B}^{\ast}_{c} X^{T} \wt{A}_{c})^{T} = (T \circ \Xi^{\ast} \circ T)(X) \ . $ 

Now if $\Xi$ is Hermitian preserving, then $\Xi^{\ast}$ is. Recall that $J_{\Phi} = \sum_{a,b} E_{a,b} \otimes \Phi(E_{a,b})$ and $T(E_{a,b}) = E_{b,a} = E_{a,b}^{\ast}$, so $(\Xi^{\ast} \circ T)(E_{a,b}) = (\Xi^{\ast}(E_{a,b}))^{\ast}$, so in this case, we have $J_{\Psi \circ \Phi \circ \Xi} = (C \circ \Xi^{\ast} \otimes \Psi)J_{\Phi}$ where $C$ applies the entry-wise conjugate on $\Lin(A)$. Noting that $\ol{A_{k}^{\ast} E_{a,b} A_{k}} = \ol{A_{k}^{\ast}} \, \ol{E_{a,b}} \, \ol{A_{k}} = A_{k}^{\Trans} E_{a,b} \ol{A_{k}}$, we just define the transpose of a map as in terms of it's Kraus operators which completes the proof. 
\end{proof}
\begin{lemma}\label{lemma:trans-channel-props}
For $\Phi \in \Channel(A,B)$, $\Phi^{\Trans}(B,A) \in \CPU(B,A)$. Moreover, if $\Phi \in \CPTPU(A,B)$, then $\Phi^{\Trans}(B,A) \in \CPTPU(B,A)$.
\end{lemma}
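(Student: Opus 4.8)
\textbf{Proof plan for Lemma~\ref{lemma:trans-channel-props}.}
The plan is to verify the two claims directly from the definition of the transpose map $\Phi^{\Trans}$ in terms of transposed Kraus operators, as set up in Proposition~\ref{prop:pre-and-post-proc-choi}. Write a Kraus decomposition $\Phi(X) = \sum_{k} A_{k} X A_{k}^{\ast}$ with $\{A_{k}\}_{k} \subset \Lin(A,B)$, so that $\Phi^{\Trans}(Y) = \sum_{k} A_{k}^{\Trans} Y \ol{A_{k}}$ with $A_{k}^{\Trans} \in \Lin(B,A)$. Complete positivity of $\Phi^{\Trans}$ is immediate since it is presented in Kraus form. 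The first task is to translate the trace-preserving condition on $\Phi$, namely $\sum_{k} A_{k}^{\ast} A_{k} = I_{A}$, into a statement about $\Phi^{\Trans}$. Taking the transpose of this identity gives $\sum_{k} A_{k}^{\Trans} \ol{A_{k}} = I_{A}^{\Trans} = I_{A}$. Recognising that $\ol{A_{k}} = (A_{k}^{\Trans})^{\ast}$, this reads $\sum_{k} A_{k}^{\Trans} (A_{k}^{\Trans})^{\ast} = I_{A}$, which is precisely the unitality condition for $\Phi^{\Trans}$. Hence $\Phi^{\Trans} \in \CPU(B,A)$, proving the first claim.

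For the second claim, suppose additionally $\Phi$ is unital, i.e.\ $\sum_{k} A_{k} A_{k}^{\ast} = I_{B}$. By the same manoeuvre, transposing yields $\sum_{k} \ol{A_{k}} A_{k}^{\Trans} = I_{B}$, i.e.\ $\sum_{k} (A_{k}^{\Trans})^{\ast} A_{k}^{\Trans} = I_{B}$, which is the trace-preserving condition for $\Phi^{\Trans}$. Combining this with the first part, $\Phi^{\Trans}$ is CP, trace-preserving, and unital, so $\Phi^{\Trans} \in \CPTPU(B,A)$. Alternatively, one can argue Choi-operator-side: by Proposition~\ref{prop:pre-and-post-proc-choi} applied with $\Phi = \id$ (or directly from the Kraus formula), $J_{\Phi^{\Trans}} = \mrm{SWAP}(J_{\Phi})^{\Trans}$ up to relabelling, and the marginal conditions $\Tr_{B}(J_{\Phi}) = I_{A}$ (TP) and $\Tr_{A}(J_{\Phi}) = I_{B}$ (unital) simply swap roles under $X \mapsto \mrm{SWAP}(X)^{\Trans}$ since the transpose preserves both identity operators and the partial-trace structure.

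The only mild subtlety — and the one point worth stating carefully rather than the main obstacle — is bookkeeping the conjugates: one must consistently use $\ol{A_{k}} = (A_{k}^{\ast})^{\Trans} = (A_{k}^{\Trans})^{\ast}$ and the fact that transposition is an anti-automorphism, $(MN)^{\Trans} = N^{\Trans} M^{\Trans}$, so that sums of the form $\sum_k A_k^\ast A_k$ transpose to $\sum_k A_k^\Trans \ol{A_k}$ with the factors in the order shown. Once that identity is fixed there is essentially no obstacle; the proof is a two-line transposition argument for each of the two implications. I would present it in the Kraus-operator form since it is the most transparent and matches the definition of $\Phi^{\Trans}$ given just above in Proposition~\ref{prop:pre-and-post-proc-choi}.
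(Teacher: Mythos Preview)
Your proof is correct. It differs from the paper's in presentation: the paper factors $\Phi^{\Trans} = C \circ \Phi^{\ast}$, where $C$ is entrywise complex conjugation, and then argues separately that $C$ is CP, unital, and trace-preserving while $\Phi^{\ast}$ inherits unitality from TP of $\Phi$ (and TP from unitality of $\Phi$); composing gives the result. You instead work directly at the Kraus level, transposing the identities $\sum_k A_k^{\ast} A_k = I_A$ and $\sum_k A_k A_k^{\ast} = I_B$ to obtain the unitality and TP conditions for $\Phi^{\Trans}$. Your route is slightly more self-contained since it avoids introducing the auxiliary map $C$ and checking its properties; the paper's route is a bit more conceptual in that it isolates $\Phi^{\Trans}$ as a composition of two well-understood maps. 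Both are two-line arguments once the bookkeeping $\ol{A_k} = (A_k^{\Trans})^{\ast}$ is fixed, which you correctly flag as the only point needing care.
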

\begin{proof}
As shown in the previous proof, $\Phi^{\Trans} = C \circ \Phi^{\ast}$ where $C \in \Trans(A,A)$ is the entry-wise conjugate. $C$ is clearly completely positive, so $\Phi^{\Trans} \in \CP(B,A)$ if $\Phi \in \CP(A,B)$. $\Phi^{\ast} \in \mrm{Unital}(B,A)$ if $\Phi \in \mrm{TP}(A,B)$. The entry-wise conjugate maps the identity operator to the identity operator, so $\Phi^{\Trans}$ is unital. Similarly, if $\Phi$ is unital, then $\Phi^{\ast}$ is trace-preserving. $C$ is clearly trace-preserving for all $H \in \Herm(A)$ and $\mrm{span}(\Herm(A)) = \Lin(A)$, so, by linearity, $C$ is trace-preserving. As the composition of two trace-preserving $\Phi^{\Trans}$ is trace preserving. Combining these points completes the lemma.
\end{proof}

\begin{lemma}\label{lemma:adjoint-choi}
Let $A \cong \mbb{C}^{\Lambda}, B \cong \mbb{C}^{\Sigma}$, $\Phi \in \Trans(A,B)$. Then $J_{\Phi^{\ast}} =  \mrm{SWAP}(J_{\Phi})^{T}$,
where the SWAP is between the $A$ and $B$ spaces.
\end{lemma}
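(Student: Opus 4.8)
\noindent\textit{Proof proposal.} The plan is to prove this by a direct computation at the level of Kraus operators, reducing the statement to an elementary property of the $\vec$ mapping.

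First I would fix any Kraus decomposition $\Phi(X)=\sum_{i}A_{i}XB_{i}^{\ast}$ with $A_{i},B_{i}\in\Lin(A,B)$, so that $J_{\Phi}=\sum_{i}\vec(A_{i})\vec(B_{i})^{\ast}$ by \eqref{eqn:choi-in-kraus-terms}. From the defining relation $\langle Y,\Phi(X)\rangle=\langle\Phi^{\ast}(Y),X\rangle$ together with cyclicity of the trace one reads off $\Phi^{\ast}(Y)=\sum_{i}A_{i}^{\ast}YB_{i}$; hence $\Phi^{\ast}\in\Trans(B,A)$ admits the Kraus operators $\{A_{i}^{\ast}\}_{i}$ and $\{B_{i}^{\ast}\}_{i}$ in $\Lin(B,A)$, and applying \eqref{eqn:choi-in-kraus-terms} to $\Phi^{\ast}$ gives $J_{\Phi^{\ast}}=\sum_{i}\vec(A_{i}^{\ast})\vec(B_{i}^{\ast})^{\ast}$.

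Next I would record the required $\vec$-identity. Directly from $\vec(E_{i,j})=e_{j}\otimes e_{i}$, transposition of a matrix merely interchanges the two tensor legs, so $\vec(M^{\Trans})=\mrm{SWAP}(\vec(M))$; entrywise complex conjugation commutes with $\vec$, so $\vec(\ol{M})=\ol{\vec(M)}$; and combining these with $M^{\ast}=\ol{M}^{\Trans}$ yields $\vec(M^{\ast})=\mrm{SWAP}(\ol{\vec(M)})$ for every $M$. (Alternatively one can obtain the same relation from $J_{\Phi}=(\id_{A}\otimes\Phi)(\phi^{+}_{A\wt A})$ and the ricochet identity \eqref{eqn:vec-map-identity} applied to $\phi^{+}_{A\wt A}$, if one prefers to avoid Kraus operators.) Substituting $\vec(M^{\ast})=\mrm{SWAP}(\ol{\vec(M)})$ into the formula for $J_{\Phi^{\ast}}$ and using that the swap operator is unitary and self-inverse, each summand becomes a swap-conjugation of $\vec(A_{i})\vec(B_{i})^{\ast}$, and summing over $i$ reassembles $J_{\Phi}$ inside the swap, giving $J_{\Phi^{\ast}}=\mrm{SWAP}(J_{\Phi})^{\Trans}$ (equivalently $(\mbb{F}J_{\Phi}\mbb{F})^{\Trans}$) as claimed.

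The only step that needs care — and the one place I would double-check against the fixed conventions — is the bookkeeping in the middle: tracking exactly how the outer conjugate transpose on $\vec(B_{i}^{\ast})$, the interchange of the $A$ and $B$ tensor legs, and the entrywise conjugation compose, so that the reassembled operator is precisely $\mrm{SWAP}(J_{\Phi})^{\Trans}$ and not one of its transpose/conjugate variants. Everything else is routine index manipulation.
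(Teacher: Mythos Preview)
Your approach is exactly the paper's: both compute via a Kraus decomposition and the elementary $\vec$ identities. Your formula $J_{\Phi^{\ast}}=\sum_{i}\vec(A_{i}^{\ast})\vec(B_{i}^{\ast})^{\ast}$ is the correct one (the paper's proof writes $\sum_{k}\vec(B_{k}^{\ast})\vec(A_{k}^{\ast})^{\ast}$, with $A$ and $B$ inadvertently interchanged).

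The bookkeeping you flag at the end is in fact where the argument, done honestly, fails to give the stated identity for general $\Phi\in\Trans(A,B)$. Carrying your identity $\vec(M^{\ast})=\mbb{F}\,\ol{\vec(M)}$ through yields
\[
\vec(A_{i}^{\ast})\vec(B_{i}^{\ast})^{\ast}
=\mbb{F}\,\ol{\vec(A_{i})}\,\vec(B_{i})^{\Trans}\,\mbb{F}
=\mbb{F}\,\ol{\vec(A_{i})\vec(B_{i})^{\ast}}\,\mbb{F},
\]
so summing gives $J_{\Phi^{\ast}}=\mbb{F}\,\ol{J_{\Phi}}\,\mbb{F}=\ol{\mrm{SWAP}(J_{\Phi})}$, the entrywise \emph{conjugate} of $\mrm{SWAP}(J_{\Phi})$, not its transpose. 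A concrete check: for $\Phi(X)=E_{1,2}X$ on $\mbb{C}^{2}$ one has $J_{\Phi}=E_{2,1}\otimes E_{1,1}+E_{2,2}\otimes E_{1,2}$ and $J_{\Phi^{\ast}}=\mrm{SWAP}(J_{\Phi})\neq\mrm{SWAP}(J_{\Phi})^{\Trans}$. The conjugate and transpose versions coincide exactly when $J_{\Phi}$ is Hermitian, i.e.\ when $\Phi$ is Hermiticity-preserving. Since every use of this lemma in the paper is for CP maps, the distinction is harmless there; but your write-up should either add that hypothesis or record the general identity as $J_{\Phi^{\ast}}=\ol{\mrm{SWAP}(J_{\Phi})}$.
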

\begin{proof}
\sloppy First recall that given a Kraus decomposition of the linear map $\Phi \in \Trans(A,B)$, $\{A_{k},B_{k}\}_{k \in \Xi}$, $\Phi(X) = \sum_{k} A_{k} X B_{k}^{\ast}$ and the Choi operator is given by $J_{\Phi}~=~\sum_{k \in \Xi}\vec(A_{k})\vec(B_{k})^{\ast}$. It also follows $J_{\Phi^{\ast}} = \sum_{k} \vec(B_{k}^{\ast})\vec(A_{k}^{\ast})^{\ast}$. Let $A_{k} = \sum_{i,j} \alpha_{i,j}^{k} E_{i,j}$, $B_{k} = \sum_{i',j'} \beta_{i',j'}^{k} E_{i',j'}$, where we will drop the $k$ index for simplicity.
Then, we see
\begin{align*}
     \vec(B^{\ast}_{k})\vec(A_{k}^{\ast})^{\ast}
    = & \sum_{i,j,i',j'} \overline{\beta}_{i',j'} \alpha_{i,j} E_{j',j} \otimes E_{i',i} \\
    \xrightarrow[]{SWAP} & \sum_{i,j,i',j'} \alpha_{i,j}\ol{\beta}_{i',j'} E_{i',i} \otimes E_{j',j} \\
    = & \left(\sum_{i,j,i',j'} \alpha_{i,j}\ol{\beta}_{i',j'} E_{i,i'} \otimes E_{j,j'}\right)^{\Trans} 
    = \left(\vec(A_{k})\vec(B_{k})^{\ast}\right)^{\Trans} \ .
\end{align*}
By linearity of the SWAP and transpose, summing over $k$ completes the proof.
\end{proof}

\subsection{Proofs for Multiplicativity}
\begin{proof}[Proof of Theorem \ref{thm:multiplicativity-of-norms}]
First note that for arbitrary $X \in \Herm(A \otimes B)$, \begin{align*}
\|X\|_{\Ent_{k}} = \sup\{|\bra{w}X\ket{v}| : \mrm{SR}(\ket{w}),\mrm{SR}(\ket{v}) \leq k \} \ , 
\end{align*}
which we will make use of later in the proof.
Now let $\rho \in \Density(A_{1} \otimes B_{1})$ be non-negative (in the computational basis). Let $\sigma \in \Density(A_{2} \otimes B_{2})$ be arbitrary. It follows $\rho \otimes \sigma \in \Pos(A^{2}_{1} \otimes B^{2}_{1})$. Therefore, by the primal problem for the conic norm of a positive semidefinite operator \eqref{eqn:KNormPrimal} and that a convex optimization problem is maximized by an extreme point of the set being optimized over, without loss of generality, let $\ket{\psi^{\star}}$ be the optimizer of $\|\rho \otimes \sigma\|_{\Ent_{r}(A^{2}_{1}:B^{2}_{1})}$. It follows $\mrm{SR}(\ket{\psi^{\star}}) \leq r$. It follows,
\begin{align*}
    \ket{\psi^{\star}} = \sum_{i\in[r]} \alpha_{i} \ket{\varphi_{i}}_{A^{2}_{1}} \otimes \ket{\wt{\varphi}_{i}}_{B^{2}_{1}} 
    = \sum_{j,k} a_{j}b_{k} \ket{j,k}_{A^{2}_{1}} \otimes \left( \sum_{i \in [r]} \alpha_{i} \ket{\psi^{i}_{j}}_{B_{1}}\ket{\phi^{i}_{k}}_{B_{2}}\right) \ ,
\end{align*} 
where $a_{j},b_{k} \geq 0$ for all of their indices and $\sum_{i} |\alpha_{i}|^{2} = \sum_{j} a_{j}^{2} = \sum_{k} b_{k}^{2} = 1$ and $\ket{\psi^{i}_{j}},\ket{\phi^{i}_{k}}$ are normalized states. In the first line we have used the definition of Schmidt rank to decompose the state into product states, and in the second we have decomposed the halves of the product states so that the $A$ spaces are expressed in the computational basis and the coefficients are all real by absorbing any phase into the definitions of $\ket{\psi^{i}_{j}},\ket{\phi^{i}_{k}}$.

Using this decomposition, we obtain
\begin{align*}
    \|\rho \otimes \sigma\|_{\Ent_{r}}
    = & \bra{\psi^{\star}} \rho \otimes \sigma \ket{\psi^{\star}} \\
    = & \Big| \sum_{\substack{j,k,\ol{j},\ol{k}}} a_{j}b_{k} a_{\ol{j}}b_{\ol{j}} \bra{j,k}\rho\ket{\ol{j},\ol{k}} \cdot \sum_{i,\ol{i} \in [r]} \ol{\alpha}_{i}\alpha_{\ol{i}}\bra{\psi^{i}_{j}}\bra{\phi^{i}_{k}}\sigma \ket{\psi^{\ol{i}}_{\ol{j}}}\ket{\phi^{\ol{i}}_{\ol{k}}} \Big| \\
    = & \sum_{\substack{j,k,\ol{j},\ol{k}}} a_{j}b_{k} a_{\ol{j}}b_{\ol{j}} \bra{j,k}\rho\ket{\ol{j},\ol{k}} \cdot \Big|\sum_{i,\ol{i} \in [r]} \ol{\alpha}_{i}\alpha_{\ol{i}}\bra{\psi^{i}_{j}}\bra{\phi^{i}_{k}}\sigma \ket{\psi^{\ol{i}}_{\ol{j}}}\ket{\phi^{\ol{i}}_{\ol{k}}} \Big| \\
    \leq & \left(\sum_{\substack{j,k,\ol{j},\ol{k}}} a_{j}b_{k} a_{\ol{j}}b_{\ol{j}} \bra{j,k}\rho\ket{\ol{j},\ol{k}} \right) \|\sigma\|_{\Ent_{r}(A_{2}:B_{2})} \\
    \leq & \|\rho\|_{\Ent_{1}(A_{1}:B_{1})}\|\sigma\|_{\Ent_{r}(A_{2}:B_{2})} \ ,
\end{align*}
where the third equality uses that $\rho$ is non-negative, the first inequality is noting that $\sum_{i \in [r]} \alpha_{i} \ket{\psi^{i}_{j}}\ket{\phi^{i}_{k}}$ is a quantum state with Schmidt rank $\leq r$, so by the $\|\cdot\|_{\Ent_{r}}$ for Hermitian matrices, it's upper bounded by the norm. The last inequality is noting that $v := \sum_{j,k} a_{j}b_{k} \bra{j,k}$ is also a separable quantum state, so by the conic program for the norm, we have the bound.

Moreover, note 
\begin{align*} 
    \|\rho \otimes \sigma \|_{\Ent_{r}(A^{2}_{1}:B^{2}_{1})}
    \geq& \|\rho\|_{\Ent_{r}(A_{1}:B_{1})} \|\sigma\|_{\Ent_{r}(A_{2}:B_{2})} \\
    \geq& \|\rho\|_{\Ent_{1}(A_{1}:B_{1})} \|\sigma\|_{\Ent_{r}(A_{2}:B_{2})} \ , 
\end{align*}
where the first inequality is because from the primal problem \eqref{eqn:KNormPrimal}, it's clear the entanglement rank norm is always at least multiplicative and the second inequality is because the entanglement rank norms monotonically increase in $r$. Combining the two directions of inequalities forces an inequality and completes the proof.
\end{proof}

\begin{proof}[Proof of Proposition \ref{prop:PPT-max-relative-entropy-SDP}]
We start from (the equality version of) Eqn. \eqref{eqn:conicPrimal} and just use $X \in \PPT$ if and only if $X \in \Pos$ and $X^{\Gamma} \in \Pos$ where $X^{\Gamma}$ is the partial transpose. Then we can trivially relax the equality to inequality but for the consideration of deriving the dual, we stick with the equality. We now derive the dual. Introducing a slack variable, we can write the primal in standard form where 
\begin{align*}
    \wt{X} &= \wt{\text{diag}}(X,Z) \in \Pos(A \oplus A) \\
    B &= 1 \oplus 0 \in \mbb{R} \oplus \Lin(A) \\
    A &= \text{diag}(P,0) \in \Pos(A \oplus A) \\
    \Phi(\wt{X}) &= (\langle Q , X \rangle) \oplus (X^{\Gamma} - Z) \in \mbb{R} \oplus \Lin(A) \ .
\end{align*}
Then it is easy to determine the adjoint map is $\Phi^\ast(\wt{Y}) = (\gamma Q + Y^{\Gamma}) \oplus (-Y)$ where $\wt{Y} = \wt{\text{diag}}(\gamma,Y)$. Plugging all of this into the standard form of the dual, we get 
$$ \min\{\gamma \in \mbb{R} : \gamma Q + Y^{\Gamma} \succeq P \, , \, -Y \succeq 0 \} \ . $$
Redefining $Y := -Y$ and moving it to the other side completes the proof.
\end{proof}

\begin{proof}[Proof of Proposition \ref{prop:WH-Ent-norm-LP}]
First note by \eqref{eqn:KNormPrimal} (relabeling $\rho$ with $\sigma$ to avoid confusion),
$$ \langle \rho_{\lambda}^{\otimes 2} , \sigma \rangle = \langle \rho_{\lambda}^{\otimes 2}, \cU^{\otimes 2}(\sigma) \rangle  \ , $$
where $\cU$ is the twirling map and it's a self-adjoint map. As this twirling can be applied without loss of generality, the optimal $\sigma$ will be invariant under such twirling. Thus we have
$$ \sigma = w \Pi_+^{\otimes 2} + x \Pi_+ \otimes \Pi_- + y \Pi_- \otimes \Pi_+ + z \Pi_{-}^{\otimes 2} \ . $$
For this to be a positive operator, it follows $w,x,y,z \geq 0$. Moreover, to be unit trace, we require
$$ \left[(d+1)^{2}w + (x+y)(d+1)(d-1) + (d-1)^{2}z\right] = \frac{4}{d^2} \ , $$
where we have used $\Tr[\Pi_+] = \binom{n+1}{2}$, $\Tr[\Pi_-] = \binom{n}{2}$. Moreover, for this decomposition and using \eqref{eq:WH-defn-lambda}, 
$$ \ip{\rho_{\lambda}^{\otimes 2}}{\sigma \rangle} = \lambda^{2}w + \lambda(1-\lambda)(x+y) + (1-\lambda)^{2}z \ . $$
As noted in the main text, any such separable operator (i.e. $\Ent_{1}(A^{2}_{1}:B^{2}_{1})$) with this invariance is in fact $\PPT$. Note that 
$$\Pi_+^{\Gamma} = \frac{1}{2d}(\phi^{\perp} + (d+1)\phi^+) \quad \Pi_-^{\Gamma} = \frac{1}{2d}(\phi^{\perp} - (d-1)\phi^+) \ ,$$
where $\phi^{\perp} = d\I - \phi^+$ and $\phi^+$ is the unnormalized maximally entangled state and thus form a basis for the partial transpose space. Then it follows
\begin{align*}
    4d^{2}\sigma^{\Gamma} = & (w+x+y+z) {\phi^{\perp}}^{\otimes 2} \\ 
    & +\left\{(w+y)(d+1)-(x+z)(d-1)\right\}\phi^{\perp}\otimes\phi^{+} \\
    & + \left\{(w+x)(d+1)-(y+z)(d-1)\right\}\phi^{+} \otimes \phi^{\perp} \\
    & + \Big\{w(d+1)^{2} + z(d-1)^{2} - (x+y)(d-1)(d+1)\Big\} {\phi^{+}}^{\otimes 2} \ .
\end{align*}
Noting these are mutually orthogonal subspaces and that the factor of $4d^{2}$ is irrelevant for the positivity of these coefficients results, we have four constraints. Noting that $w+x+y+z \geq 0$ is guaranteed by each being required to be non-negative completes the proof.
\end{proof}

We make use of the following propositions in proving Theorem \ref{thm:WH-separation}.
\begin{proposition}\label{prop:Umegaki-Ent-WH}
 For $\lambda \in [0,1]$, define $\kappa_{\lambda} = \lambda \log\left[ \frac{\lambda}{1-\lambda} \frac{d-1}{d+1} \right] + \log\left[ \frac{2(1-\lambda)}{d(d-1)} \right]$. Then
 $$ \kappa_{\lambda} + \log(d) \geq D(\sigma_{\lambda}||\I_{A} \otimes \omega) \geq \kappa_{\lambda}$$ where the upper bound is saturated if $\omega = \frac{1}{d} \I_{B}$, and the lower bound is saturated if $\omega$ is pure.
\end{proposition}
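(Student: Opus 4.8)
The plan is to evaluate $D(\sigma_\lambda||\I_A\otimes\omega)$ in essentially closed form and then reduce the proposition to an elementary estimate on the $\omega$-dependent piece. Write $D(\rho||\tau)=\Tr[\rho\log\rho]-\Tr[\rho\log\tau]$. Since the Werner state $\sigma_\lambda=\frac{2\lambda}{d(d+1)}\Pi_{+}+\frac{2(1-\lambda)}{d(d-1)}\Pi_{-}$ is already in its spectral form---eigenvalue $\frac{2\lambda}{d(d+1)}$ with multiplicity $\frac{d(d+1)}{2}$ on the symmetric subspace and eigenvalue $\frac{2(1-\lambda)}{d(d-1)}$ with multiplicity $\frac{d(d-1)}{2}$ on the antisymmetric subspace---the entropic term is immediate: $\Tr[\sigma_\lambda\log\sigma_\lambda]=\lambda\log\frac{2\lambda}{d(d+1)}+(1-\lambda)\log\frac{2(1-\lambda)}{d(d-1)}$. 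A single regrouping (pull out $\log\frac{2(1-\lambda)}{d(d-1)}$ and collect the residual $\lambda\log[\cdots]$) shows this equals exactly $\kappa_\lambda$; equivalently, $\Tr[\sigma_\lambda\log\sigma_\lambda]$ is the negative von Neumann entropy of $\sigma_\lambda$, and this identity is precisely what makes $\kappa_\lambda$ the natural normalization.

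For the cross term I would use that $\log(\I_A\otimes\omega)=\I_A\otimes\log\omega$, together with the $UU$-invariance of $\sigma_\lambda$, which forces $\Tr_A[\sigma_\lambda]=\frac1d\I_B$ (a state commuting with every $U$ is maximally mixed; this is the standard symmetry argument underlying Proposition \ref{prop:cone-to-density}). Hence $\Tr[\sigma_\lambda\log(\I_A\otimes\omega)]=\Tr_B\!\big[\frac1d\I_B\log\omega\big]=\frac1d\Tr[\log\omega]$, so that $D(\sigma_\lambda||\I_A\otimes\omega)=\kappa_\lambda-\frac1d\Tr[\log\omega]$. It then remains to bound $-\frac1d\Tr[\log\omega]$ over density operators $\omega$ on $B$. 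Letting $\mu_1,\dots,\mu_d$ be the eigenvalues of $\omega$, each is at most $1$, so $-\frac1d\Tr[\log\omega]=\frac1d\sum_i\log\frac1{\mu_i}\ge 0$, giving the lower bound $D\ge\kappa_\lambda$; and since $-\frac1d\Tr[\log\omega]=\log d+D(\frac1d\I_B||\omega)$, non-negativity of relative entropy (equivalently AM--GM on the $\mu_i$) shows the value $\kappa_\lambda+\log d$ is attained exactly at $\omega=\frac1d\I_B$, which is the choice relevant to Theorem \ref{thm:WH-separation}. Collecting these facts yields the two-sided bound together with the saturating cases.

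I do not expect a substantive obstacle: the content is a computation. The only points needing care are (i) the algebraic identity that the Werner-state entropy collapses to $\kappa_\lambda$, and (ii) the support condition---if $\omega$ does not have full rank then $D(\sigma_\lambda||\I_A\otimes\omega)=+\infty$ since $\sigma_\lambda$ is full rank, so the finite bounds are to be read for $\omega$ of full support, which is all the application requires.
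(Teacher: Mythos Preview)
Your approach—computing $\Tr[\sigma_\lambda\log\sigma_\lambda]=\kappa_\lambda$ from the spectral form and then reducing the cross term via $\Tr_A[\sigma_\lambda]=\tfrac{1}{d}\I_B$ to obtain $D(\sigma_\lambda||\I_A\otimes\omega)=\kappa_\lambda-\tfrac{1}{d}\Tr[\log\omega]$—is exactly what the paper does. The problem is with the final inequality step. Your identity $-\tfrac{1}{d}\Tr[\log\omega]=\log d+D(\tfrac{1}{d}\I_B||\omega)$ is correct, but non-negativity of $D(\tfrac{1}{d}\I_B||\omega)$ yields $-\tfrac{1}{d}\Tr[\log\omega]\ge\log d$ and hence $D(\sigma_\lambda||\I_A\otimes\omega)\ge\kappa_\lambda+\log d$: this is a \emph{lower} bound, the opposite of the claimed upper bound. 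In fact the inequality in the proposition is stated the wrong way round—$\omega=\tfrac{1}{d}\I_B$ is the \emph{minimizer} of $D(\sigma_\lambda||\I_A\otimes\omega)$ over $\omega$ (this is just Gibbs' inequality, since $\Tr_A[\sigma_\lambda]=\tfrac{1}{d}\I_B$), and there is no finite maximum as $\omega$ approaches the boundary of the simplex. Your own support caveat already flags the inconsistency: for $\lambda\in(0,1)$ the Werner state is full rank, so $D=+\infty$ whenever $\omega$ is pure, and the lower bound $\kappa_\lambda$ cannot be saturated there in the usual sense of relative entropy. The paper's proof makes the mirror-image error at exactly this step, asserting that $\min_{p}\sum_i\log p_i$ over the probability simplex sits at the uniform distribution when in fact that is where the maximum sits.

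None of this touches the downstream use in Theorem~\ref{thm:WH-separation}, which only ever invokes the explicit value $D(\sigma_\lambda||\I_A\otimes\tfrac{1}{d}\I_B)=\kappa_\lambda+\log d$, and you compute that correctly. But as a proof of the two-sided bound as written, your argument cannot close—because the stated inequality has the direction reversed—and this defect is shared with the paper's statement and proof rather than being peculiar to your method.
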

\begin{proof}
First note $D(\sigma_{\lambda}||\I_{A} \otimes \omega) = \Tr[\sigma_{\lambda} \log \sigma_{\lambda}] - \Tr[\sigma_{\lambda} \I_{A} \otimes \log \omega]$, where we have used that $\log(\I_{A} \otimes P) = \log(I_{A}) \otimes P + \I_{A} \otimes \log P = \I_{A} \otimes \log P$ for positive definite $P$. We focus on each term separately.
\begin{align*}
    \Tr[\sigma_{\lambda} \log \sigma_{\lambda}]
    = & \lambda \alpha_{0} \Tr[\Pi_{0} \log (\lambda \alpha_{0} \Pi_{0})] + (1-\lambda)\alpha_{1} \Tr[\Pi_{1} \log((1-\lambda) \alpha_{1} \Pi_{1})] \\
    = & \lambda \alpha_0 \log(\lambda \alpha_0) \alpha_{0}^{-1} + (1-\lambda)\alpha_1 \log((1-\lambda)\alpha_1) \alpha_1^{-1} \\
    = & \lambda \log(\lambda \alpha_{0}) + \log((1-\lambda)\alpha_{1})-\lambda \log((1-\lambda)\alpha_{1}) \\
    = & \lambda \log(\frac{\lambda}{1-\lambda} \frac{\alpha_{0}}{\alpha_{1}}) + \log((1-\lambda)(\alpha_{1})) \\
    = & \log\left[\frac{\lambda}{1-\lambda}\frac{d-1}{d+1}\right] + \log\left[\frac{2(1-\lambda)}{d(d-1)}\right] \ ,
\end{align*}
where the first equality is just expanding the definition of $\sigma_{\lambda}$ and throwing out the terms which have disjoint supports, the second is that its the same projector up to some factor so the trace is that factor times the rank of the projector, the third equality is just cancelling factors, the fourth inequality is combining terms, the fifth equality is simplifying the constants.

Now we consider the second term:
\begin{align*}
    \Tr[\sigma_{\lambda} \I_{A} \otimes \log \omega] = & 1/d \Tr[\I_{B} \log \omega] = 1/d \Tr[\log \omega] \ ,
\end{align*}
where we have used the partial trace adjoint and that $\Tr_{A}(\sigma_{\lambda}) = \frac{1}{n}\I_{B}$ for the first equality. Now note
\begin{align*}
    \max_{\omega \in \Density(B)} D(\sigma_{\lambda}||\I_{A} \otimes \omega) &= \text{term 1} + \frac{1}{d} \max_{\omega \in \Density(B)} - \Tr[\log \omega] \\
    &= \text{term 1} - \frac{1}{d} \min_{p \in \mathcal{P}(|B|)} \sum_{i} \log(p_{i}) \ ,
\end{align*}
where that last optimization problem is achieved by $p_{i} = |B|^{-1} \, \forall i$, i.e. the maximally mixed state. Note $-\frac{1}{d}\Tr[\log 1/d \I_B] = \log d$. Note if we consider minimizing $D$, then we maximize over that last term, which must be non-positive, so the optimal is achieved by $p_{i} = 1$ for some choice of $i$ which is equivalent to $\omega$ being pure, which completes the proof. 
\end{proof}

\begin{proposition}\label{prop:sep_max_holevo_simple}
For all $\omega \in \Density(B)$,
\begin{align}\label{eqn:sep_max_holevo_simple}
D^{\Sep}_{\max}(\sigma_{\lambda}||\I_{A} \otimes \omega) =
\begin{cases}
    \log\left[\frac{2\lambda}{d+1}\right] & \lambda \geq \frac{d+1}{2d} \\[2mm]
    \log\left[ \frac{d+1-2\lambda}{d^{2}-1} \right]  & \text{ otherwise } \ .
\end{cases}
\end{align}
\end{proposition}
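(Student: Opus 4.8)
The plan is to read $D^{\Sep}_{\max}$ off its defining conic program \eqref{eqn:DmaxKDual} and to reduce the computation, via the $UU$-symmetry of the Werner state, to a separable operator norm, which is a one-parameter optimization. Explicitly, $\exp\!\big(D^{\Sep}_{\max}(\sigma_\lambda\|I_A\otimes\omega)\big)$ is the least $\gamma$ with $\gamma\,I_A\otimes\omega-\sigma_\lambda\in\Sep(A:B)^{\ast}$, and since $\Sep(A:B)^{\ast}$ is the cone of block-positive operators this is the least $\gamma$ with $\gamma\,\langle\phi|\omega|\phi\rangle\ge\langle\psi\otimes\phi|\sigma_\lambda|\psi\otimes\phi\rangle$ for all unit vectors $\psi\in A$ and $\phi\in B$. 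First I would reduce to $\omega=\pi_B:=d^{-1}I_B$: the $UU$-twirl $\cU(\cdot)=\int(U\otimes U)(\cdot)(U\otimes U)^{\ast}\,dU$ is self-adjoint, preserves $\Sep(A:B)$, fixes $\sigma_\lambda$ (which is $UU$-invariant by \eqref{eq:WH-defn-lambda}), and sends $I_A\otimes\omega$ to $I_A\otimes\pi_B$, so Proposition~\ref{prop:DPI} applies just as in the proof of Proposition~\ref{prop:WH-min-ent-simplification}; equivalently, by Proposition~\ref{prop:normalization} together with the identity $D^{\cK}_{\max}(X\|I)=\log\|X\|_{\cK}$ that follows from \eqref{eqn:DmaxKDual}, $D^{\Sep}_{\max}(\sigma_\lambda\|I_A\otimes\pi_B)=\log\big(d\,\|\sigma_\lambda\|_{\Sep(A:B)}\big)$.

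The computation is then short. By Definition~\ref{defn:cone-norms}, $\|\sigma_\lambda\|_{\Sep}=\max\{\langle\sigma_\lambda,\rho\rangle:\rho\in\Sep(A:B)\cap\Density\}$, and since $\Sep(A:B)\cap\Density$ is the convex hull of product pure states this equals $\max_{\psi,\phi}\langle\psi\otimes\phi|\sigma_\lambda|\psi\otimes\phi\rangle$. Writing $\Pi_{\pm}=\tfrac12(I\pm\mbb F)$ and $t:=|\langle\psi|\phi\rangle|^{2}=\langle\psi\otimes\phi|\mbb F|\psi\otimes\phi\rangle\in[0,1]$, the objective becomes $\tfrac{\lambda(1+t)}{d(d+1)}+\tfrac{(1-\lambda)(1-t)}{d(d-1)}$, which is affine in $t$ with $t$ otherwise unconstrained, so its maximum is attained at $t=0$ or $t=1$. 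The slope in $t$ equals $\tfrac{2\lambda d-(d+1)}{d(d^{2}-1)}$, which is nonnegative precisely when $\lambda\ge\tfrac{d+1}{2d}$; in that regime the maximum is the $t=1$ value $\tfrac{2\lambda}{d(d+1)}$, and otherwise it is the $t=0$ value $\tfrac{d+1-2\lambda}{d(d^{2}-1)}$. Multiplying by $d$ and taking logarithms yields $\log\tfrac{2\lambda}{d+1}$ and $\log\tfrac{d+1-2\lambda}{d^{2}-1}$ respectively (these agree at the threshold $\lambda=\tfrac{d+1}{2d}$, both equal to $-\log d$), which are the two claimed formulas; by Corollary~\ref{corr:strong-duality-for-resources} the same value is also attained by the primal \eqref{eqn:DmaxKPrimal}.

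Two remarks. Because $\sigma_\lambda$ is $UU$-invariant, its separable and PPT norms coincide \cite{Vollbrecht-2001a}, so the identical argument yields $D^{\PPT}_{\max}$ as well — this is why Theorem~\ref{thm:WH-separation} may work with the PPT cone — and one could alternatively run the whole computation through the PPT SDP of Proposition~\ref{prop:PPT-max-relative-entropy-SDP}. The step I expect to require the most care is the reduction to $\omega=\pi_B$: data processing under $\cU$ only gives $D^{\Sep}_{\max}(\sigma_\lambda\|I_A\otimes\pi_B)\le D^{\Sep}_{\max}(\sigma_\lambda\|I_A\otimes\omega)$, so a matching upper bound must be supplied. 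It is available because $\max_{\psi}\langle\psi\otimes\phi|\sigma_\lambda|\psi\otimes\phi\rangle$ turns out to be independent of $\phi$ (it is exactly the endpoint value computed above), so the block-positivity constraint collapses to $\gamma\,\lambda_{\min}(\omega)\ge\|\sigma_\lambda\|_{\Sep}$, which for $\omega=\pi_B$ (where $\lambda_{\min}(\pi_B)=d^{-1}$) is met with equality at $\gamma=d\,\|\sigma_\lambda\|_{\Sep}$; this $\phi$-independence is the structural fact that makes a closed form possible. Finally, converting to the $\rho_\alpha$ parameterization via \eqref{eq:WH-param-conversion} provides a consistency check against \eqref{eq:single-copy-werner-norm}.
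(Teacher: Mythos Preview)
Your core computation for $\omega=\pi_B$ is correct and takes a genuinely different route from the paper. The paper works from the primal \eqref{eqn:DmaxKPrimal}: it twirls the optimizer to the $UU$-invariant form $X=x\,\I+y\,\mbb F$, invokes $\Sep\cap\text{Inv}(UU)=\PPT\cap\text{Inv}(UU)$ to convert separability of $X$ into linear inequalities on $(x,y)$, and solves the resulting one-dimensional LP. You instead go through the dual \eqref{eqn:DmaxKDual} and the block-positivity description of $\Sep^\ast$, reducing to $\|\sigma_\lambda\|_{\Sep}$ and a one-parameter affine optimization in $t=|\langle\psi|\phi\rangle|^{2}$. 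Your route is more elementary (no PPT detour, no external citation for the constraint set) and makes the threshold $\lambda=\tfrac{d+1}{2d}$ transparent; the paper's route ties in with the LP machinery used later for Proposition~\ref{prop:WH-Ent-norm-LP}.

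However, your closing remark about general $\omega$ actually refutes rather than rescues the ``for all $\omega$'' claim. You correctly derive that the block-positivity constraint collapses to $\gamma\,\lambda_{\min}(\omega)\ge\|\sigma_\lambda\|_{\Sep}$, hence the optimal $\gamma$ equals $\|\sigma_\lambda\|_{\Sep}/\lambda_{\min}(\omega)$. This \emph{depends on} $\omega$ and equals $d\,\|\sigma_\lambda\|_{\Sep}$ only when $\lambda_{\min}(\omega)=d^{-1}$, i.e.\ $\omega=\pi_B$. So the ``matching upper bound'' you are looking for does not exist: both DPI and your collapsed constraint give the \emph{same} inequality $D^{\Sep}_{\max}(\sigma_\lambda\|\I_A\otimes\pi_B)\le D^{\Sep}_{\max}(\sigma_\lambda\|\I_A\otimes\omega)$, strict whenever $\omega\ne\pi_B$. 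The paper's own proof has the analogous gap at ``we can twirl the optimizer $X$'': for $\omega\ne\pi_B$ the constraint $\langle\I_A\otimes\omega,X\rangle\le1$ is not preserved under $X\mapsto\cU(X)$, since $\cU(\I_A\otimes\omega)=\I_A\otimes\pi_B$. The proposition should be read with $\omega=\pi_B$ (equivalently, as the minimum over $\omega$), which is the only case invoked in Theorem~\ref{thm:WH-separation}; for that case both arguments are sound.
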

\begin{proof}
Recall that the Werner states are invariant under $\int dU (U \otimes U) \cdot (U \otimes U)^{\ast}$, so using \eqref{eqn:conicPrimal}, we can twirl the optimizer $X$ such that $X = x \I + y \mbb{F} = (x+y)\Pi_0 + (x-y)\Pi_1$. It follows the objective function is 
\begin{equation}
    \lambda \alpha_{0}(x+y) \langle \Pi_0 \rangle + (1-\lambda)\alpha_{1}(x-y) \langle \Pi_{1} \rangle =  x + (2\lambda-1) y \ , \label{DsepWH:obj}
\end{equation} 
where the $\alpha_{0},\alpha_{1}$ cancel due to the trace.
We can simplify the constraint as
\begin{align*}
\langle x\I_{AB} + y \mbb{F}, \I_{A} \otimes \omega \rangle = 1 \Rightarrow dx + y = 1\ ,
\end{align*}
where we have used $\Tr_{A}[\mbb{F}] = \I_{B}$. Therefore, without loss of generality, $y = 1-dx$.
Note this makes the objective function singular variate. 

We also know that the space of $UU$ invariant operators, $\text{Inv}(UU)$, is such that $\Sep \cap \text{Inv}(UU) = \PPT \cap \text{Inv}(UU)$. Therefore we can just get the postiviity and PPT constraints. The positivity constraints are $x \pm y \geq 0, x \geq 0, x+yd \geq 0$ as derived in \cite[Proposition 6]{Chitambar-2021a}. By the same arguments as in that derivation, a bunch of the constraints are redundant and we reduce to
\begin{align*}
    \max \left\{ x + (2\lambda - 1)(1-dx) : \frac{1}{d+1} \leq x \leq \frac{d}{d^{2}-1} \right \}
\end{align*}
As the derivative of the objective function is non-negative for $\lambda \leq \frac{d+1}{2d}$, we have
$$ x^{*} = \begin{cases} d/(d^{2}-1) & \lambda \leq \frac{d-1}{2d} \\
1/(d+1) & \text{ otherwise } \end{cases} \ . $$
Plugging this into the objective function and simplifying completes the proof.
\end{proof}

\begin{proof}[Proof of Theorem \ref{thm:WH-separation}]
Let $\lambda =0$ and $\omega = \pi_{B}$. Then 
\begin{align*}
D(\sigma_{\lambda}||\I_{A} \otimes \pi) =& -\log(d-1) + 1
> -\log(d) = D^{\PPT}_{\max}(\sigma_{\lambda}||\I_{A} \otimes \pi) \, ,
\end{align*}
where the first step follows from Proposition \ref{prop:Umegaki-Ent-WH} as $\kappa_{0} + \log(d) = \log(2) - \log(d-1)$ and the last step is by Proposition \ref{prop:sep_max_holevo_simple} by plugging in $\lambda = 0$ to find $\log(d+1 /(d^{2}-1)) = \log(d+1/((d+1)(d-1))) = -\log(d-1)$. Clearly this holds for all $d \in \mathbb{N}$.

To prove the second point, for $\lambda = 1/2$ note by Proposition \ref{prop:Umegaki-Ent-WH}, $D(\sigma_{\lambda}||\I_{A} \otimes \omega) = -\log(\sqrt{d^{2}-1})$ and by Proposition \ref{prop:sep_max_holevo_simple}, $D^{\Sep}_{\max}(\sigma_{\lambda}||\I_{A} \otimes \omega) = \log(d/(d^{2} -1))$. Now noting that $\frac{1}{\sqrt{x^{2}-1}} = \frac{\sqrt{x^{2}-1}}{x^{2}-1} < \frac{\sqrt{x^{2}}}{x^{2}-1} = \frac{x}{x^{2} - 1}$ so by monotonicity of the logarithm, the two values we are considering will always have an arbitrarily small gap. Now we just want to show that any smaller value can be achieved. Note that we can expand the max-relative entropy and standard relative entropy:
\begin{align*}
 D_{\max}(\sigma_{\lambda}||\I \otimes \pi) &= \log[d+1-2\lambda] - \log[d+1] - \log[d-1] \\
D(\sigma_{\lambda}||\I \otimes \pi)
& = \lambda \log[\lambda/(1-\lambda)] - \log[d+1] + \log[2-2\lambda] \ ,
\end{align*}
where in the first we just used $d^{2}-1 = (d+1)(d-1)$ and the second we expanded and cancelled some terms. Now note both terms have a ``$-\log[d+1]$" term, which we will be able. The other term for the standard relative entropy may be simplified in the following manner
\begin{align*}
     \lambda \log[\lambda/(1-\lambda)] + \log[2-2\lambda]
    =& \lambda \log[\lambda] - \lambda \log[1-\lambda] + \log[2] +\log[1-\lambda] \\
    =&  1 + \lambda \log[\lambda] + (1-\lambda)\log[1-\lambda] \\
    = & 1 - h(\lambda) \ ,
\end{align*}
where $h(\cdot)$ is the binary entropy. Therefore, 
\begin{align*}
 D_{\max}^{\Sep}(\sigma_{\lambda}||\I \otimes \pi) < D(\sigma_{\lambda}||\I \otimes \pi)
\Leftrightarrow  \log[(d+1-2\lambda)/(d-1)] < 1 - h(\lambda) \ . 
\end{align*}
For $\lambda < 1/2$, $1-h(\lambda) = x > 0$ and $0 < \ve := 1-2\lambda$. Therefore, we just need $\frac{d+\ve}{(d-1)} < \exp(x)$. But such a $d$ clearly exists as $\exp(x) > 1$ and $\frac{d + \ve}{(d-1)} \to 1$ as $d \to \infty$ by L'H\^{o}pital's rule.
\end{proof}

\bibliographystyle{IEEEtran}
\bibliography{cone-restricted-bib}{}
\end{document}